\pgfplotsset{compat=1.12} %
\def\addlegendimage{\csname pgfplots@addlegendimage\endcsname}
\mathchardef\dash="2D
\theoremstyle{definition}
\newtheorem{theorem}{Theorem}[section]
\newtheorem{lemma}{Lemma}[section]
\newtheorem{definition}{Definition}[section]
\newtheorem{corollary}{Corollary}[section]
\newtheorem{proposition}{Proposition}[section]
\newtheorem{example}{Example}[section]
\newtheorem{assumption}{Assumption}
\newtheorem{remark}{Remark}[section]
\newcommand{\customfootnotetext}[2]{{%
  \renewcommand{\thefootnote}{#1}%
  \footnotetext[0]{#2}}}%
\DeclareMathOperator*{\rank}{rank}
\DeclareMathOperator*{\argmin}{argmin}
\DeclareMathOperator*{\supp}{supp}
\DeclareMathOperator*{\sign}{sign}
\DeclareMathOperator*{\diag}{diag}
\DeclareMathOperator*{\MTE}{MTE}
\DeclareMathOperator{\cov}{cov}
\DeclareMathOperator{\var}{var}
\newcommand{\R}{\ensuremath{\mathbb{R}}}
\newcommand{\Exp}{\ensuremath{\mathbb{E}}} 
\newcommand{\Prob}{\ensuremath{\mathbb{P}}} 
\newcommand{\indicator}{\ensuremath{\mathbbm{1}}}
\newcommand{\normal}{\ensuremath{\mathcal{N}}}
\newcommand*{\indep}{%
  \mathbin{%
    \mathpalette{\@indep}{}%
  }%
}
\newcommand*{\nindep}{%
  \mathbin{
    \mathpalette{\@indep}{\not}
  }%
}
\newcommand*{\@indep}[2]{%
  \sbox0{$#1\perp\m@th$}
  \sbox2{$#1=$}
  \sbox4{$#1\vcenter{}$}
  \rlap{\copy0}
  \dimen@=\dimexpr\ht2-\ht4-.2pt\relax
  \kern\dimen@
  {#2}%
  \kern\dimen@
  \copy0 
} 
\author{Muyang Ren\thanks{Department of Economics, University of Tennessee, \href{mailto:mren7@utk.edu}{\texttt{mren7@utk.edu}}}}
\title{\textbf{Extrapolating LATE with Weak IVs}\thanks{I am very grateful for my advisor, Matt Masten, for his extensive guidance and encouragement. I am also grateful for my committee members, Federico Bugni, Adam Rosen, and Arnaud Maurel for their generous advice and feedback. I am also thankful to Jason Baron, Patrick Bayer, Xinyue Bei, Anna Bykhovskaya, Désiré Kédagni, Michael Pollmann, Daniel Yi Xu, and auidences of the Duke microeconometrics breakfast, Triangle Econometrics Conference, Hong Kong University, UC Santa Barbara, Rice University, Southern Methodist University, Aarhus University, University of Tennessee, and Queen's University for their useful comments and suggestions.}}
\date{\today}
\begin{document}
\begin{bibunit}   %

\maketitle

\begin{abstract}
     To evaluate the effectiveness of a counterfactual policy, it is often necessary to extrapolate treatment effects on compliers to broader populations. This extrapolation relies on exogenous variation in instruments, which is often weak in practice. This limited variation leads to invalid confidence intervals that are typically too short and cannot be accurately detected by classical methods. For instance, the $F$-test may falsely conclude that the instruments are strong. Consequently, I develop inference results that are valid even with limited variation in the instruments. These results lead to asymptotically valid confidence sets for various linear functionals of marginal treatment effects, including LATE, ATE, ATT, and policy-relevant treatment effects, regardless of identification strength. This is the first paper to provide weak instrument robust inference results for this class of parameters. Finally, I illustrate my results using data from \cite{agan/doleac/harvey:2023} to analyze counterfactual policies of changing prosecutors' leniency and their effects on reducing recidivism.
    \begin{description}
    \medskip
    \item \textit{Keywords}:
    Identification-robust Inference, Subvector Inference, Marginal Treatment Effects, Examiner Design
    \medskip
    \item \textit{JEL classification}: C12, C21, C26.
    \end{description}
\end{abstract}

\doparttoc %
\faketableofcontents

\part{} %

\newpage

\section{Introduction}
This paper provides the first formal treatment of weak identification analysis in the marginal treatment effect (MTE) model. Originally developed in the seminal works of \cite{bjorklund/moffitt:1987} and \cite{heckman/vytlacil:1999, heckman/vytlacil:2001, heckman/vytlacil:2005}, the MTE model has been widely adopted in various studies for extrapolating treatment effects, such as returns to schooling \citep{moffitt:2008, carneiro/heckman/vytlacil:2011}, analysis of recidivism effects \citep{bhuller/dahl/loken/mogstad:2020, agan/doleac/harvey:2023}, and evaluation of social insurance programs \citep{maestas/mullen/strand:2013, aizawa/mommaerts/corina/rennane:2023} (see Table 6 of \cite{mogstad/torgovitsky:2024} for a broad survey of MTE applications). Despite its widespread use in applied economic research, traditional confidence sets for the extrapolated causal effects within this model are often too short when the probability of receiving treatment (i.e., the propensity score) exhibits limited variation across the instrument’s support. Moreover, this variation can be very weak even if the usual $F$-test statistic is very large.

To achieve valid coverage of causal effects, this paper establishes the first set of inference results that are robust against weak IV variation in MTE models. For linear MTE models, I develop an asymptotically similar conditional Wald test that delivers uniformly valid confidence sets with exact coverage. For a more general class of polynomial MTE models, I propose a modified linear combination (MLC) test that produces uniformly valid confidence sets while achieving approximate asymptotic efficiency under strong identification. Additionally, I highlight limitations of the additive separability assumption, a functional form often used to address weak variation of propensity scores, by deriving explicit formulas for the bias of estimands when this specification is incorrectly specified.

\subsection*{Intuition for weak IVs in MTE models}
To demonstrate the consequences of weak IVs in MTE models and to illustrate why the $F$-statistic fails, I plot regression estimates of the outcome variable on propensity scores for treated samples in Figure \ref{fig:intuition_plot}. These estimators are computed from independent simulations based on a cubic MTE model with a discrete IV that takes four values. As will be shown in section \ref{sec:model}, the MTE can be directly recovered from the conditional regression, allowing us to study the weak IV problem by examining the finite-sample behavior of the estimates in Figure \ref{fig:intuition_plot}.

Figures \ref{fig:strong} through \ref{fig:weak} demonstrate that as propensity score variation diminishes, the regression estimates (shown in gray) become increasingly volatile and diverge from the true population quantity (shown in blue). This behavior implies that the MTE estimator loses consistency when propensity scores exhibit limited variation, which is a common consequence with instruments that weakly influence treatment selection. As a result, the traditional Wald confidence interval, which build on these compromised estimates, may fail to achieve its desired coverage probability. These estimation and inference problems arise not only when propensity scores cluster around one value but also when they approximate binary variation (Figure \ref{fig:partial}) under the cubic MTE design. In such cases, classical confidence intervals become unreliable, but the $F$-statistic can be misleadingly large because it detects the deviation from the null where all propensity scores are equal.

\bigskip

\begin{figure}[h!]
    \centering
    \caption{Estimators under Different Types of IV Strengths}
    \begin{subfigure}[t]{0.49\textwidth}
        \caption{Strong Variation}
        \label{fig:strong}
        \centering
        \includegraphics[width=\textwidth]{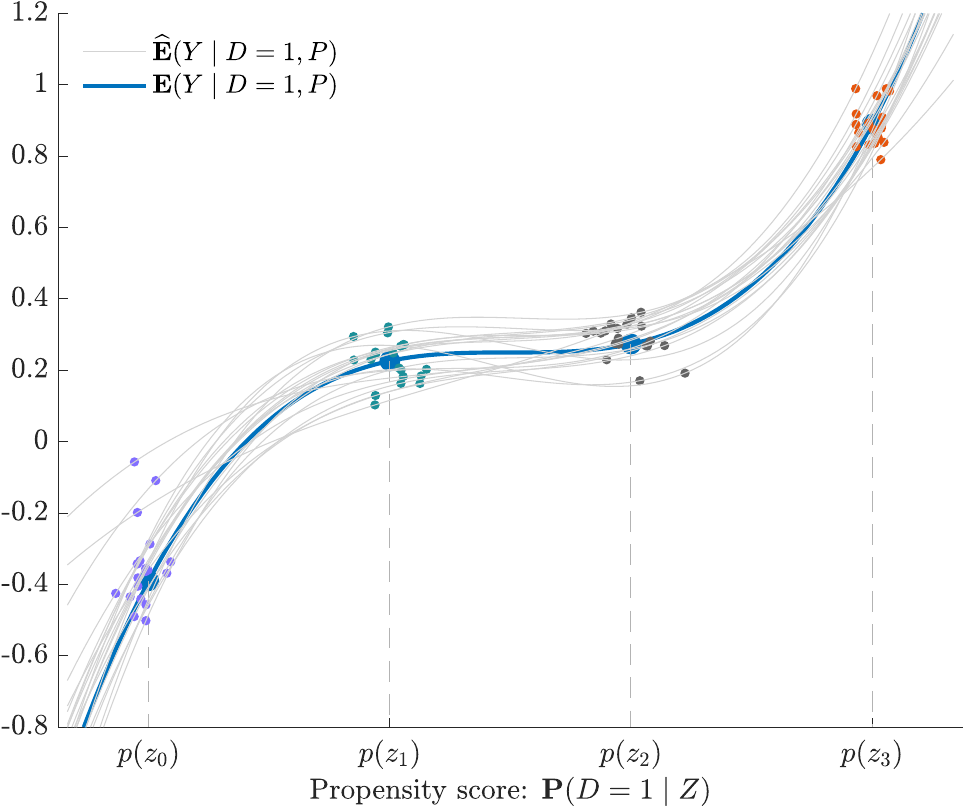}
    \end{subfigure}
    \hfill
    \begin{subfigure}[t]{0.49\textwidth}
        \caption{Intermediate Variation}
        \label{fig:semistrong}
        \centering
        \includegraphics[width=\textwidth]{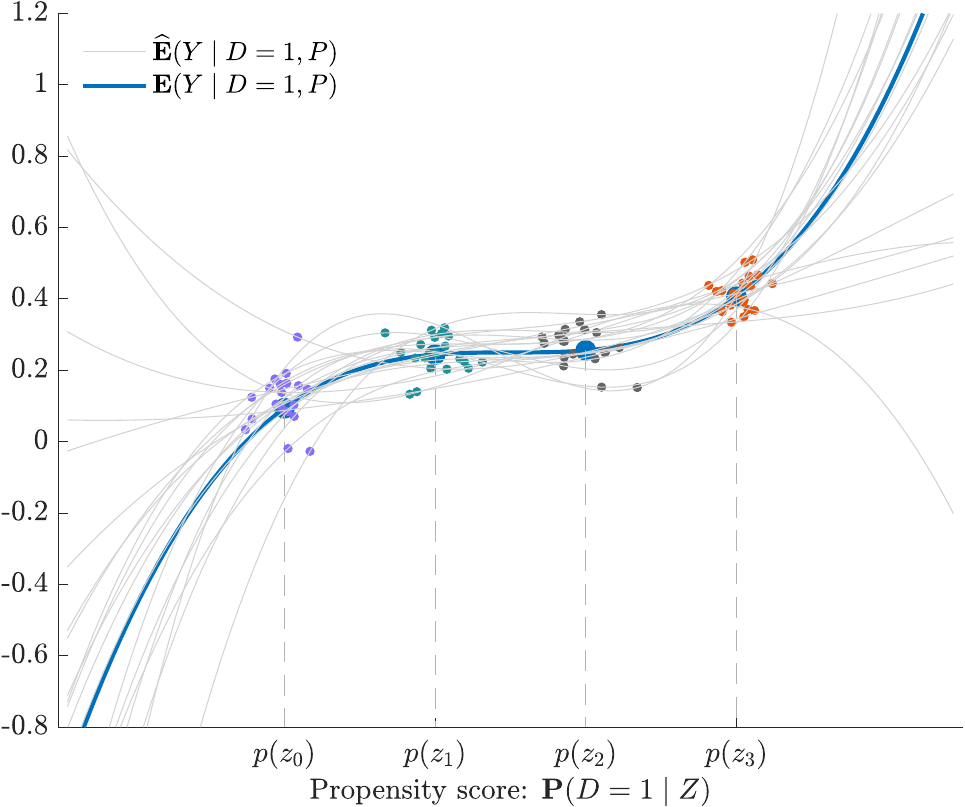}
    \end{subfigure}
    \vskip\baselineskip
    \begin{subfigure}[t]{0.49\textwidth}
        \caption{Limited Variation}
        \label{fig:weak}
        \centering
        \includegraphics[width=\textwidth]{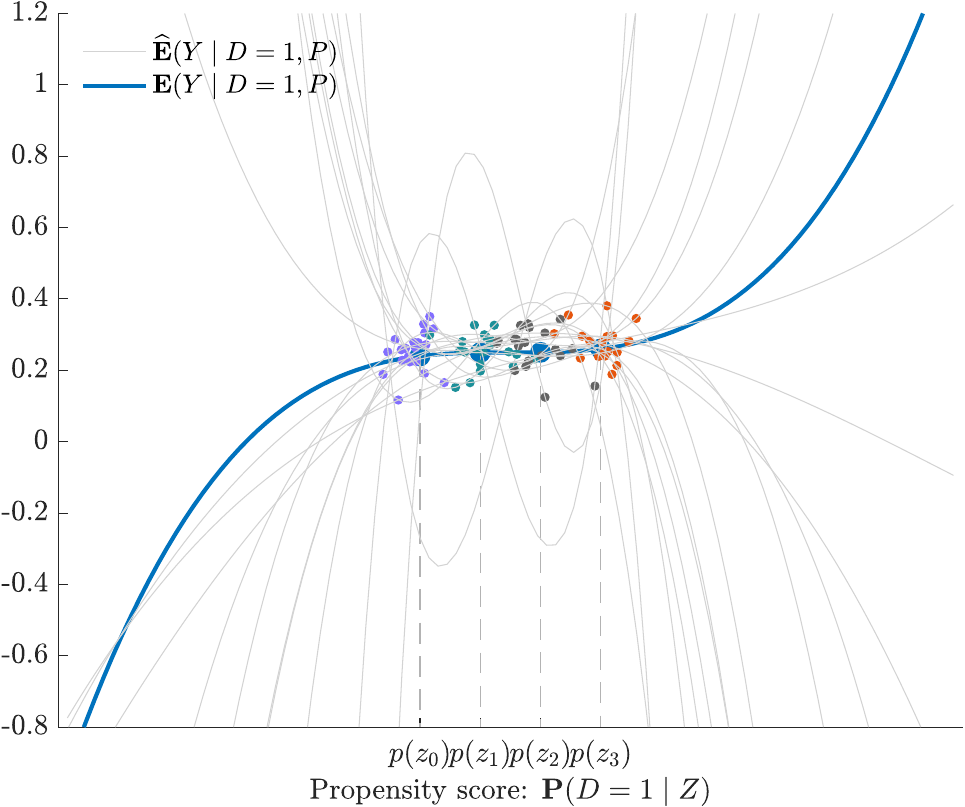}
    \end{subfigure}
    \hfill
    \begin{subfigure}[t]{0.49\textwidth}
        \caption{Binary Variation}
        \label{fig:partial}
        \centering
        \includegraphics[width=\textwidth]{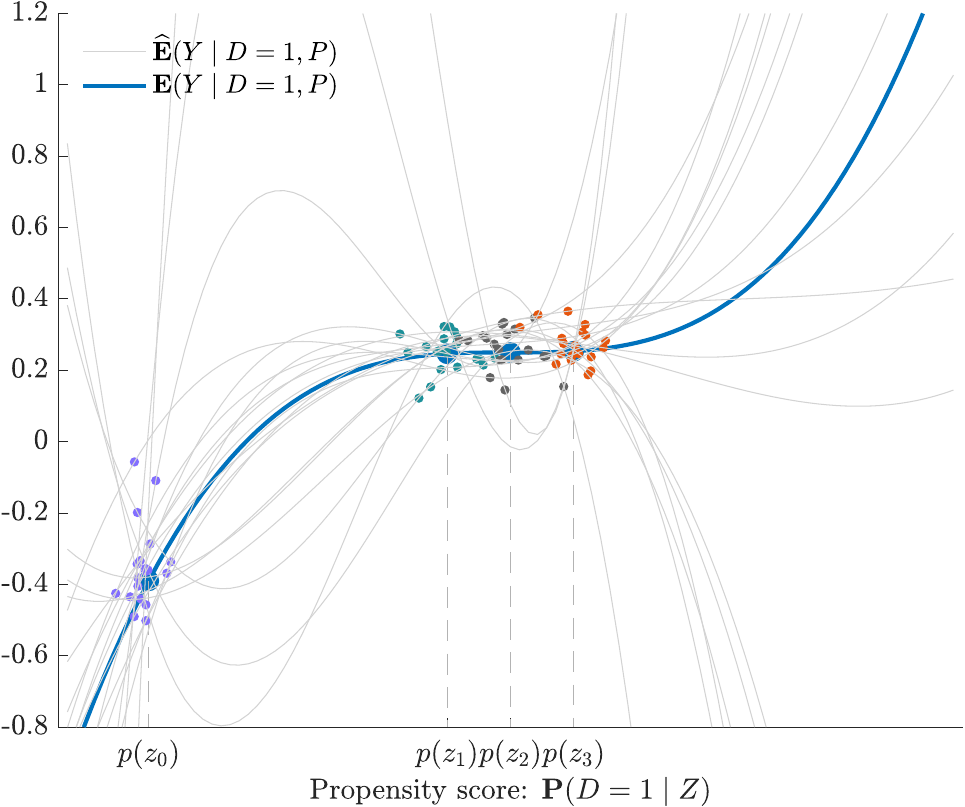}
    \end{subfigure}
    \label{fig:intuition_plot}

    \smallskip

    \begin{tablenotes}
    \footnotesize \raggedright
    Note: The figure shows the estimated expected outcomes for treated units conditional on propensity scores under different designs of propensity score variation, based on 20 independent simulations. All designs and simulations use a cubic specification for the MTE curve with a discrete IV that takes four values in its support. The sample size is 2,000.
    \end{tablenotes}
    
\end{figure}

\subsection*{Organization of this paper}

In section \ref{sec:model}, I describe the setup in \cite{brinch/mogstad/wiswall:2017} and \cite{kline/walters:2019} and present the MTE model with a discrete instrument. By imposing a parametric structure on the marginal treatment response (MTR) functions, the MTE is fully characterized by a finite-dimensional parameter, which can be point identified using discrete variation from the instrument. In finite samples, the MTE can be estimated by running separate regressions for both treated and control groups. However, these separate regression estimators, along with the corresponding Wald confidence sets, are highly vulnerable to limited variation in propensity scores. In light of this potential identification failure, this paper constructs uniformly valid confidence sets for causal parameters that are linear functionals of the MTE function, which cover a broad range of causal effects of interest including the MTE function itself, the average treatment effect (ATE), the average treatment effect on the treated (ATT), the local average treatment effect (LATE), and the policy-relevant treatment effect (PRTE).

In section \ref{sec:linearMTE_inference}, I develop a simple inference method for treatment effects in the linear MTE model, where the MTR functions vary linearly with the selection unobservable \citep{brinch/mogstad/wiswall:2017}.  This special structure offers a novel moment condition for constructing estimators of treatment effects of interest, bypassing the necessity for estimating the full model.  Building on this new moment condition, I construct a simple conditional Wald test and demonstrate its uniform validity under weak identification. This approach circumvents challenges posed by weakly identified nuisance parameters in the MTE, thus leading to confidence sets that achieve asymptotic similarity, a key advantage over existing approaches to subvector inference with weak identification as discussed in Online Appendix \ref{appendix:lit_subvector_inference}.

For a generic MTE model with a discrete instrument, it is infeasible to directly estimate causal effects of interest without relying on other primitive parameters in the MTE function. In such cases, these weakly identified primitive parameters hamper the ability to perform valid inference on linear functions of them. In section \ref{sec:polyMTE_inference}, I build on the improved projection approach developed in I. \citet{andrews.i:2018} and propose a MLC test that achieves uniform validity regardless of identification strength (see page 7 for a detailed comparison of my work with the literature). The inverted confidence set from this test has correct coverage under weak identification, is straightforward to compute, and is shown to be approximately as efficient as the Wald confidence set under strong identification.

In section \ref{sec:covariates}, I discuss the issue of incorporating covariates into the MTE model. First, I demonstrate that the commonly used additive separability condition can substantially bias causal effect estimands when it fails. Such bias does not vanish unless (1) unobserved treatment effect heterogeneity does not vary across covariates, or (2) individuals' treatment decisions do not depend on those covariates. If neither of these two assumptions hold, I show that causal effect estimands under additive separability differ from the true effects and potentially have the opposite sign. To deal with the bias induced by imposing additive separability, researchers may use the proposed methods in this paper to conduct inference by conditioning on covariates, while also achieving robustness against limited variation of propensity scores from conditioning. In Online Appendix \ref{sec:sidak_correction}, I consider a Bonferroni-type size correction for valid inference on aggregated effects in the absence of additive separability. For researchers interested in performing inference under additive separability, I also present an extension of the proposed inference methods to accommodate this framework in Online Appendix \ref{sec:inference_addsep}.

In section \ref{sec:simulation}, I examine the performance of MLC tests via a sequence of Monte-Carlo simulations in a quadratic MTE model with varying degrees of identification strength. The simulation results show that the MLC test is asymptotically size-correct whenever the model is strongly identified, weakly identified, or partially identified. In contrast, the classical Wald test reports over-rejections of the null value at 14\% frequency versus the 5\% significance level under partial identification and exhibits trivial power under weak identification. When the instrument strength is sufficiently strong, the proposed MLC test has power close to the asymptotically efficient Wald test. Theoretical justification is provided in Online Appendix \ref{appendix:power_analysis_MLC}.

In section \ref{sec:empirical}, I illustrate the proposed methods by using data from \cite{agan/doleac/harvey:2023}, who show that not prosecuting defendants with misdemeanor offenses reduces recidivism. While their analysis uses LATE estimates and ATE/ATT estimates under additive separability, I complement their findings by examining recidivism effects under different counterfactual prosecution policies. Specifically, I construct confidence sets for the reduction in recidivism under two scenarios: (1) a homogeneous marginal increase in nonprosecution rates across prosecutors, and (2) implementation of a minimum nonprosecution rate threshold. The results highlight that weak identification is a significant concern in higher-order polynomial models, and empirical conclusions differ substantially between robust and classical inference methods. Section \ref{sec:conclusion} concludes with a discussion of potential future extensions.

\subsection*{Related literature}

This paper contributes to three strands of the literature: marginal treatment effects, subvector inference in weakly identified models, and judge/examiner designs. For the rest of the introduction, I review the related literature. 

Although the MTE model has been used to extrapolate treatment effects in a variety of fields, there are relatively few studies on the theory of estimation and inference in MTE models. In a semiparametric MTE model with continuous propensity scores, \cite{heckman/urzua/vytlacil:2006a} use bootstrap methods to construct confidence bands for MTE functions. This approach is now common practice\footnote{The Stata implementation of MTE estimation by \cite{brave/walstrum:2014} and \cite{andresen:2018} uses bootstrap methods to produce confidence sets.}, but its theoretical validity has not yet been established. \cite{carneiro/lee:2009} analyze the pointwise limit distributions of MTE estimators using a separate regression approach. Building on this work, \cite{sasaki/ura:2023} further derive asymptotic theory for the PRTE using an orthogonalized score for double debiased estimation. Both papers assume a semiparametric MTE model with additive separability and strong IV/covariate variation to achieve $\sqrt{n}$-consistent estimation.  \cite{mogstad/santos/torgovitsky:2017, mogstad/santos/torgovitsky:2018} propose a bootstrap procedure for conducting inference on the PRTE when the MTE is partially identified. None of these papers study weak identification problems. Thus, compared to this existing literature, the inference approach I propose is the first to achieve robustness against weak instruments in MTE models.

To estimate the MTE over a large support of propensity scores, empirical researchers usually assume that the MTR functions---and their difference, the MTE---are additively separable in covariates and unobserved costs of treatment selection \cite[Assumption 2]{brinch/mogstad/wiswall:2017}. This assumption allows the MTE to be estimated on the unconditional support of propensity scores, by pooling variation in the propensity scores across different covariate values.  Despite its increasing popularity in empirical work, there is little theoretical justification for additive separability. In fact, this assumption can be strong enough to point identify the MTE without exogenous variation from instruments \citep{pan/wang/zhang/zhou:2024}.  In addition, if the MTE is misspecified as linearly additively separable, \cite{devereux:2022} provides numerical evidence indicating that omitting higher-order covariate terms can introduce significant bias in the estimated MTE slopes. I contribute to this knowledge by providing the first analytical bias formula for the causal parameters (e.g., ATE, conditional ATE, and MTE slope) when additive separability is misspecified in a widely used class of latent threshold crossing models \citep{kline/walters:2019}. To avoid this bias, researchers should extrapolate treatment effects conditional on covariates instead of relying on additive separability. Moreover, the robust inference procedures proposed in this paper can help address potential weak IV variation that may arise after conditioning on covariates.

The inference problem in this paper is also related to the literature on subvector inference in weakly identified models, where a subvector is a subset of the structural parameters. For inference on subvectors, \cite{dufour/mohamed:2005} suggest projecting the robust confidence sets of the full vector onto the subvector of interest. However, this procedure can be very conservative especially when the dimension of the full vector is much larger than that of the target subvector. To this end, there is a sequence of studies trying to reduce the conservativeness of projection inference. \cite{chaudhuri/zivot:2011} consider modifying the (Robust) Lagrangian Multiplier statistic \citep{kleibergen:2005} such that it is locally equivalent to asymptotic efficient subvector tests under strong identification and propose a Bonferroni method to improve the power of their tests at distant alternatives. Building on this idea, D. \cite{andrews.d:2017} improves the Bonferroni method such that the refined tests are asymptotically non-conservative and uniformly valid. However, D. \citet[page 2]{andrews.d:2017} acknowledges six key limitations, including computational challenges and the need for additional tuning parameters to categorize the identification strength, neither of which are required for my proposed MLC test. Moreover, his method does not directly address inference on a linear function of parameters, which is the primary focus of this paper. While his method could conceptually extend to inference on a linear function through model reparametrization, finding a universal reparametrization rule that works for all linear hypotheses of interest while maintaining tractable asymptotic analysis remains challenging.

For inference on a function of parameters in a weakly identified model, I. \cite{andrews.i:2018} generalizes the results in \cite{chaudhuri/zivot:2011} and proposes a two-step confidence set that achieves sequential validity with controlled coverage distortions under a set of high-level conditions imposed on a sequence of data generating processes (DGPs). The MLC test considered in this paper builds on the idea from I. \cite{andrews.i:2018} but differs in a few ways: (1) Most importantly, Andrews’ paper only provided \textit{sequential} validity results, concluding that ``conditions for \textit{uniform} asymptotic validity are an interesting open question’’ in Section V (page 347). I demonstrate that, with a minor modification to the test statistics, the robust test achieves uniform validity for inference on a scalar function. (2) While Andrews derived results for a general class of models using high-level conditions, my focus on the MTE model allows for more specific, primitive conditions for validity. (3) Instead of employing Andrews' two-step approach that alternates between non-robust Wald and robust confidence sets based on identification strength, I focus exclusively on his robust confidence sets. (4) I adapt his robust confidence sets, originally developed for GMM models, to the minimum distance framework arising from separate regressions in the MTE setting.  (5) By using the robust confidence set, this approach does not involve pretesting distortion in the two-step approach and maintains the desired asymptotic coverage level $1-\alpha$. (6) I prove that the local power difference from the asymptotically efficient Wald test can be arbitrarily small under strong identification (see Online Appendix \ref{appendix:power_analysis_MLC}). Applied to MTE models, this approach allows for valid and powerful inference on treatment effects even under limited variation of propensity scores. In Online Appendix \ref{appendix:lit_subvector_inference}, I also discuss other approaches to inference on functions (or subvectors) of parameters in weakly identified models and explain their inapplicability to the MTE model studied here.

Finally, the empirical analysis of this paper also connects to the judge/examiner design problems (see the survey by \cite{chyn/frandsen/leslie:2024}). By using the quasi-random assignment of examiners, researchers can identify the causal effects of judicial decisions for defendants at the margin of being treated, and then extrapolate these effects to the broader population under the assumption of pairwise monotonicity \citep{frandsen/lefgren/leslie:2023}. While the MTE model is commonly employed for such extrapolation, researchers often report LATE, ATE, ATUT, and ATT to inform potential policy decisions \citep{bhuller/dahl/loken/mogstad:2020, agan/doleac/harvey:2023, baron/gross:2023}. My paper provides methods for robust inference on all of these parameters, as well as the MTE function itself, and the policy counterfactual parameters studied in \cite{heckman/vytlacil:2001, heckman/vytlacil:2005}, and \cite{carneiro/heckman/vytlacil:2010}.

\clearpage 

\section{MTE Model with Discrete IVs}
\label{sec:model}
In this section, I describe the MTE model and the related identification result following \cite{brinch/mogstad/wiswall:2017}. Based on this result, the weakness of IVs can be characterized by limited variation of propensity scores. Then I introduce the relevant parameter space on which we achieve uniform validity of the proposed inference procedures.

\subsection{Setup}
Let $Y_1$ be the potential outcome of an individual who receive a binary treatment ($D = 1$) and $Y_0$ denote her potential outcome in the untreated state ($D = 0$). The observed outcome $Y$ is realized through
\begin{equation}
\label{eq:outcome}
    Y = (1-D)Y_0 + DY_1.  
\end{equation}
We further specify
\[
    Y_d = \mu_d + V_d, \quad d = 0,1,
\]
where $\mu_d \equiv \Exp[Y_d]$ is the mean of potential outcome. For simplicity, I leave out additional covariates and discuss them in section \ref{sec:covariates}. The treatment is determined by a weakly separable selection equation
\begin{equation}
\label{eq:selection}
    D = \indicator[U \leq \nu(Z)],
\end{equation}
where $\nu(\cdot)$ is an unknown function, $U$ is a continuous random variable representing the unobserved cost of selection into treatment, and $Z \in \supp(Z) = \{z_0, z_1, \ldots, z_K\}$ is the excluded discrete instrument. Researchers observe the outcome $Y$, the binary treatment $D$, and the excluded instrument $Z$ from data, while the unobservables are the potential outcomes $(Y_0, Y_1)$ and the variable $U$ in the selection equation. The MTE model allows individuals to be selected into treatment based on their information on potential outcomes, which leads to potential dependence between $(V_1, V_0)$ and $U$. 

The key identifying assumption from \citet[proposition 1.ii]{brinch/mogstad/wiswall:2017} is as follows:
\begin{assumption}[MTE model with discrete IVs] 
\label{asp:identification}  \hfill \vspace{-1em}
    \begin{enumerate}
    \item $Z \indep U$.
    \item $\Exp[Y_d\mid Z, U] = \Exp[Y_d\mid U]$ and $\Exp|Y_d| < \infty$ for $d \in \{0,1\}$.
    \item $U$ is continuously distributed.
    \item $0 < \Prob(D=1\mid Z=z) <1$ for all $z \in \supp(Z)$.
    \item Let $\{h_m(\cdot)\}_{m=1}^{M}$ be a set of known continuous functions defined on $(0,1)$. For $d \in \{0,1\}$, the MTR function is given by
    \[
        \Exp[Y_d\mid F_U(U) = u] = \mu_d + \sum_{m=1}^M \rho_{dm} h_m(u)\quad \text{for } u \in (0,1),
    \]
    where $F_U(\cdot)$ denotes the distribution function of $U$.
    \item Let $\lambda_{00}(\cdot) = \lambda_{10}(\cdot) \equiv 1$, and define
    \[
        \lambda_{1m}(p) \equiv \frac{1}{p}\int_{0}^p h_m(u) du \quad \text{and} \quad \lambda_{0m}(p) \equiv \frac{1}{1-p}\int_{p}^1 h_m(u) du \quad \text{for $m = 1,\ldots,M$}.
    \]
    for $p \in [0,1]$. 
    Then $\{\lambda_{1m}(\cdot)\}_{m=0}^M$ and $\{\lambda_{0m}(\cdot)\}_{m=0}^M$ are unisolvent\footnote{A set of $n$ functions $f_1, f_2, \ldots, f_n$ is unisolvent on domain $\Omega$ if the matrix $F \in \R^{n\times n}$ with entries $f_i(x_j)$ has nonzero determinant for any choice of $n$ distinct points $x_1, x_2, \ldots, x_n$ in $\Omega$.} on $(0,1)$.
    \item $\overline{K} \equiv |\{\Prob(D=1\mid Z=z): z=z_0, z_1, \ldots, z_K\}|\geq M+1$.
    \end{enumerate}
\end{assumption}

Assumptions \ref{asp:identification}.1 and \ref{asp:identification}.2 require the excluded instrument $Z$ to be exogenous to both the selection and outcome processes. Assumption \ref{asp:identification}.3 allows us to normalize the marginal distribution of $U$ to be uniformly distributed over $[0,1]$. That is, we can transform $U$ to a uniformly distributed variable $\widetilde{U} = F_{U}(U)$. Under the exogeneity of $Z$ in Assumption \ref{asp:identification}.1, the function $F_U(\nu(Z))$ can then be interpreted as propensity score:
\begin{align*}
	p(z) 	&\equiv \Prob(D=1\mid Z = z) \\
			&= \Prob(U\leq \nu(Z)\mid Z = z) \\
			&= \Prob(U\leq \nu(z)) \\ 
			&= F_U(\nu(z))
\end{align*}
where the third line uses the Assumption \ref{asp:identification}.1. Define $\widetilde{\nu}(z) \equiv F_U(\nu(Z))$, and then we can work with $(\widetilde{U}, \widetilde{\nu}(z))$ in place of $(U, \nu(z))$ without affecting the empirical content of the selection model. For simplicity, we drop out the tilde and assume $U$ is uniformly distributed throughout our analysis, and therefore $\nu(z) = p(z)$. Assumption \ref{asp:identification}.4 validates the overlap condition, so that we observe both treated and untreated individuals for each group defined by the value of the instrument. Assumption \ref{asp:identification}.5 imposes a parametric restriction on the MTR function $\Exp[Y_d\mid F_U(U)=u]$ so that the MTE can be extrapolated outside the discrete support of the instrument. Assumption \ref{asp:identification}.6 is a weak condition that rules out redundant specifications in $\{h_m(\cdot)\}_{m=0}^M$ that cause multicollinearity in $\{\lambda_{dm}(\cdot)\}_{m=0}^M$. In particular, the usual polynomial specification $h_m(u) = u^m - \frac{1}{m+1}$ satisfies this condition. Assumption \ref{asp:identification}.7 requires sufficient variation of the exogenous instrument to point identify the structural parameters.  While $\overline{K} \leq K+1$, with strict inequality when multiple instrument values yield identical propensity scores, point identification requires the number of distinct propensity scores, not instrument values, to exceed the order of the MTE model. 

Define $\theta_d \equiv (\mu_d, \rho_{d1}, \ldots, \rho_{dM})'$ for $d = 0,1$, and let $\theta \equiv (\theta_1', \theta_0')'$. Then we have the following identification result:

\clearpage

\begin{theorem}[Identification]
\label{thm:identification}
Suppose Assumption \ref{asp:identification} holds. Then $\theta$ is point identified.
\end{theorem}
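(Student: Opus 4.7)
The plan is to identify $\theta_1$ and $\theta_0$ separately from the conditional mean of $Y$ given the treatment status and the instrument, and then assemble them into $\theta$.

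First, I would derive the closed-form expression for $\Exp[Y\mid D=d, Z=z]$ in terms of $\theta_d$. For the treated subpopulation, using the selection equation $D = \indicator[U \leq p(z)]$ (after the normalization from Assumption \ref{asp:identification}.3 so that $U$ is uniform on $[0,1]$), together with $Z \indep U$ (Assumption \ref{asp:identification}.1) and the conditional mean independence from Assumption \ref{asp:identification}.2, I would write
\[
    \Exp[Y\mid D=1, Z=z] = \frac{1}{p(z)}\int_0^{p(z)} \Exp[Y_1\mid U=u]\,du.
\]
Plugging in the parametric MTR from Assumption \ref{asp:identification}.5 and using the definition of $\lambda_{1m}$ gives $\Exp[Y\mid D=1, Z=z] = \mu_1 + \sum_{m=1}^M \rho_{1m}\lambda_{1m}(p(z))$. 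By the symmetric argument over $[p(z),1]$, I would obtain $\Exp[Y\mid D=0, Z=z] = \mu_0 + \sum_{m=1}^M \rho_{0m}\lambda_{0m}(p(z))$. The overlap condition in Assumption \ref{asp:identification}.4 ensures both conditional means are well defined.

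Next, I would argue that $\theta_1$ is uniquely pinned down by these moment equations. Let $p_1,\ldots,p_{\overline{K}}$ be the distinct values in $\{p(z_k)\}_{k=0}^K$. Stack the $\overline{K}$ moment equations for $d=1$ into a linear system $\bm m_1 = \boldsymbol{\Lambda}_1 \theta_1$, where $\boldsymbol{\Lambda}_1$ is the $\overline{K}\times(M+1)$ matrix with $(k,m)$-entry $\lambda_{1m}(p_k)$ (with the convention $\lambda_{10}\equiv 1$). By Assumption \ref{asp:identification}.7 we have $\overline{K}\geq M+1$, and by the unisolvency in Assumption \ref{asp:identification}.6, any $(M+1)\times(M+1)$ submatrix of $\boldsymbol{\Lambda}_1$ formed by $M+1$ distinct values $p_{k_1},\ldots,p_{k_{M+1}} \in (0,1)$ has nonzero determinant. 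Hence $\boldsymbol{\Lambda}_1$ has full column rank $M+1$, so $\theta_1$ is uniquely recovered by any left inverse. The same argument applied to $\boldsymbol{\Lambda}_0$ identifies $\theta_0$, and stacking yields identification of $\theta$.

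The propensity scores $p(z)$ themselves are identified directly from the distribution of $(D,Z)$, so there is no circularity in treating them as known for the linear inversion above. The one step that deserves the most care is checking that the parametric MTR representation in Assumption \ref{asp:identification}.5 yields the integrals $\lambda_{1m}(p(z))$ and $\lambda_{0m}(p(z))$ \emph{after} the uniform normalization of $U$, since the MTR is specified in terms of $F_U(U)$ rather than $U$ itself; this just requires a change of variables in the integrals, but it is the only place where a subtle confusion could arise. The remaining work is a standard rank argument, so I would not expect any substantive obstacle beyond invoking unisolvency to rule out degenerate configurations of propensity scores.
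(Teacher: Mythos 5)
Your proposal is correct and follows essentially the same route as the paper's proof: derive the conditional means $\Exp[Y\mid D=d,Z=z]$ as linear functions of $\theta_d$ via the $\lambda_{dm}$ integrals, stack them into a linear system, and use unisolvency (Assumption 1.6) together with $\overline{K}\geq M+1$ (Assumption 1.7) to extract a nonsingular $(M+1)\times(M+1)$ submatrix giving full column rank. The only cosmetic difference is that you stack just the distinct propensity scores while the paper stacks all $K+1$ instrument values and then selects the full-rank submatrix, which is logically equivalent.
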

\begin{proof}
	Based on the model \eqref{eq:outcome}--\eqref{eq:selection} and Assumption \ref{asp:identification}, we have
	\begin{align}
	\Exp[Y\mid D=1, Z=z] 
		&= \Exp[Y_1\mid U\leq p(z), Z=z] \notag \\
		&= \Exp[Y_1\mid U\leq p(z)] \notag \\
		&= \frac{1}{p(z)}\int_{0}^{p(z)} \Exp[Y_1\mid U=u] du \notag \\
		&= \mu_1 + \sum_{m=1}^M \frac{\rho_{1m}}{p(z)}\int_{0}^{p(z)} h_m(u) du. \notag \\
		&= (\lambda_{10}(p(z)), \lambda_{11}(p(z)), \ldots, \lambda_{1M}(p(z)))\theta_1. 	\label{eq:cond_exp_treated}
	\end{align}
	The first line holds by equations \eqref{eq:outcome}, \eqref{eq:selection}, and the normalization that $\nu(z) = p(z)$, which is jointly implied by Assumption \ref{asp:identification}.1 and \ref{asp:identification}.3. The second line holds by the exogeneity condition in Assumption \ref{asp:identification}.2. The third line follows by the overlap condition in Assumption \ref{asp:identification}.4 and the normalization that $U$ is uniformly distributed over $[0,1]$. The fourth line holds by parametric restriction in Assumption \ref{asp:identification}.5. Likewise, we have 
	\begin{equation}
	\label{eq:cond_exp_untreated}
	\Exp[Y\mid D=0, Z=z] = \mu_0 + \sum_{m=1}^M \frac{\rho_{0m}}{1-p(z)}\int_{p(z)}^{1} h_m(u) du = (\lambda_{00}(p(z)), \lambda_{01}(p(z)), \ldots, \lambda_{0M}(p(z)))\theta_0.
	\end{equation}
	Taking $z = z_0,z_1,\ldots,z_K$ in equations \eqref{eq:cond_exp_treated} and \eqref{eq:cond_exp_untreated} then yields two matrix equalities:
	\begin{equation}
	\label{eq:mateq}
		\beta_d = A_d\theta_d \quad \text{for } d = 0,1,
	\end{equation}
	where 
	\[
	A_d = 
	\begin{bmatrix}
		\lambda_{d0}(p(z_0)) & \lambda_{d1}(p(z_0)) & \ldots & \lambda_{dM}(p(z_0)) \\
		\lambda_{d0}(p(z_1)) & \lambda_{d1}(p(z_1)) & \ldots & \lambda_{dM}(p(z_1)) \\
		\vdots & \vdots & \ddots & \vdots \\
		\lambda_{d0}(p(z_K)) & \lambda_{d1}(p(z_K)) & \hdots & \lambda_{dM}(p(z_K))
	\end{bmatrix}
	\]
	and 
        \[
            \beta_d = (\Exp[Y\mid D=d, Z=z_0], \ldots, \Exp[Y\mid D=d, Z=z_K])'.
        \]
	Note that Assumption \ref{asp:identification}.6 and \ref{asp:identification}.7 guarantees that there exists a full-rank submatrix of $A_d$ with $\overline{K}$ rows and $M+1$ columns for each $d = 0,1$. Therefore, both $A_1$ and $A_0$ have full column rank. Then $\theta_1$ and $\theta_0$ are point identified by equation \eqref{eq:mateq}.
\end{proof}
\begin{remark}
    This identification result closely aligns with the proof of \citet[Proposition 1.ii]{brinch/mogstad/wiswall:2017}, who shows that Assumption \ref{asp:identification}.7 is a necessary condition for identifying $\theta$ in a MTE model satisfying Assumption \ref{asp:identification}.1--\ref{asp:identification}.5. Building on their result, I add Assumption \ref{asp:identification}.6 to establish sufficient conditions for point identification.
\end{remark}

Based on the parametric restriction in Assumption \ref{asp:identification}.5, treatment effects can often be written as linear functions of the primitive parameter $\theta$. For example, $\text{ATE} = \mu_1 - \mu_0$ and 
\begin{align*}
	\MTE(u) 
    &= \Exp[Y_1 - Y_0\mid U=u] \\
    &= c(u)'\theta_1 - c(u)'\theta_0,
\end{align*}
where $c(u) = (1, h_1(u),\ldots,h_M(u))$. 
Moreover, suppose we are interested in the parameters that inform potential policy decisions such as average treatment effects on the (un)treated groups or policy-relevant treatment effects. In that case, the weight $c$ is often unknown and depends on the underlying DGP (see Table 1 and 2 in \cite{mogstad/torgovitsky:2018}). If we write $c = (c_1', c_0')'$ with $c_{d} = (c_{d,0},\ldots,c_{d,M})'$ for $d = 0,1$ and denote $h_0(u) \equiv 1$, Table \ref{tab:treatment-effects-weight} summarizes the weights of various treatment effects under Assumption \ref{asp:identification}. In this paper, I consider inference on treatment effects of the form $c'\theta$ where $c_1 = -c_0$ since many causal effects of interest including those in Table \ref{tab:treatment-effects-weight} have symmetric weights: $c_{1,m} = -c_{0,m}$ for all $m = 1,\ldots,M$.
\begin{table}[ht]
\centering 
\caption{Weights for Treatment Effects}
\label{tab:treatment-effects-weight}
\begin{tabular}{ccl}
	\toprule
	Target parameter	& Expression 	&  Weights \\
	\midrule
	ATE &	 $\Exp[Y_1 - Y_0]$ & $c_{1,m} = \indicator[m=0]$ \\
	MTE & $\Exp[Y_1 - Y_0\mid U=u]$ & $c_{1,m} = h_m(u)$\\
	ATT & $\Exp[Y_1 - Y_0\mid D=1]$ & $c_{1,m} = \frac{\Exp(\int_0^{p(Z)}h_m(u)du)}{\Prob(D=1)}$ \\
	ATU & $\Exp[Y_1 - Y_0\mid D=0]$ & $c_{1,m} = \frac{\Exp(\int_{p(Z)}^1 h_m(u)du)}{\Prob(D=0)}$ \\
	LATE & $\Exp[Y_1 - Y_0\mid p(z_0) < U < p(z_k)]$ & $c_{1,m} = \frac{\int_{p(z_0)}^{p(z_k)} h_m(u)du}{p(z_k) - p(z_0)}$  \\
    Additive PRTE & PRTE with $p^\epsilon(z) = p(z) + \epsilon$  &   \multirow{3}{*}{$c_{1,m} =  \frac{\Exp[\int_{p(Z)}^{p^\epsilon(Z)} h_m(u) du]}{\Exp[p^\epsilon(Z) - p(Z)]}$}  \\
    Proportional PRTE & PRTE with $p^\epsilon(z) = (1+\epsilon) p(z)$ & \\    
    Quota & PRTE with $p^\epsilon(z) = \min\{p(z),\epsilon\}$ & \\
    \bottomrule
\end{tabular}
\end{table} 

Define $q(z_\ell) = \Prob(Z=z_\ell)$, i.e., the probability mass function of the discrete IV. For the commonly used polynomial MTE model in empirical studies where $h_m = u^{m} - \frac{1}{m+1}$, the weights of ATT, LATE, additive PRTE, and proportional PRTE become
{
\begin{align*}
	&c^{\text{ATT}}_{1,m} = \frac{1}{m+1} \left(\frac{\sum_{\ell=0}^K p(z_\ell)^{m+1} q(z_\ell)}{\sum_{\ell=0}^K p(z_\ell)q(z_\ell)} - 1\right)\\
	&c^{\text{LATE}}_{1,m} = \frac{1}{m+1} \left(\sum_{j=0}^m p(z_k)^j p(z_0)^{m-j} - 1\right)\\
        &c^{\text{A-PRTE}}_{1,m} = \frac{1}{m+1} \left(\sum_{\ell = 0}^K q(z_\ell) \sum_{j = 0}^m \left[p(z_\ell)^{j} (p(z_\ell) + \epsilon)^{m-j}\right] - 1 \right) \\
        &c^{\text{P-PRTE}}_{1,m} = \frac{1}{m+1} \left[\frac{(1+\epsilon)^{m+1} - 1}{\epsilon} \cdot \frac{\sum_{\ell = 0}^K p(z_\ell)^{m+1} q(z_\ell)}{\sum_{\ell = 0}^K p(z_\ell) q(z_\ell)} - 1\right]
\end{align*}
}
for $m = 1,\ldots,M$, and the first element $c_{1,0} = 1$ for all the above causal effects.

\subsection{Weak identification}
The identification strategy relies on the conditions in Assumption \ref{asp:identification}, particularly the existence of sufficiently many distinct propensity scores to point identify the structural parameter $\theta$ through Assumption \ref{asp:identification}.7. However, variation in propensity scores is often limited. This limited variation may not be enough to guarantee correct asymptotic approximation in the construction of classical confidence sets. In section \ref{sec:linearMTE_inference} and \ref{sec:polyMTE_inference}, I develop robust inference procedures for treatment effects of the form $c'\theta$ without requiring the potentially restrictive Assumptions \ref{asp:identification}.7. My results have two implications. Along a sequence of DGPs: (1) When Assumption \ref{asp:identification}.7 holds but is close to fail, the parameters are point identified, and the robust confidence set achieves correct coverage for the causal parameter of interest; (2) When Assumption \ref{asp:identification}.7 fails, the parameters are partially identified, and the proposed robust confidence set mantains correct coverage of the causal parameter $c'\theta$, where $\theta$ lies in the set defined by the linear system \eqref{eq:mateq}. Additionally, this uniform validity result does not rely on Assumption \ref{asp:identification}.6, though this assumption is typically satisfied under researchers' common specifications of the MTR functions.

\subsection{Parameter space restriction}
\label{sec:parameter_space}

This section introduces the parameter space for the joint distribution of $(Y,D,Z)$, on which the proposed robust inference procedures are uniformly valid (without covariates). First, consider the usual i.i.d. sampling distribution of the observables.
\begin{assumption}
\label{asp:iid}
	The random vectors $(Y_i, D_i, Z_i)$ for $i = 1,\ldots,n$ are i.i.d. with distribution $F$.
\end{assumption}
Next, I introduce a set of regularity conditions to be imposed on the joint distribution $F$. 
\begin{definition}[Parameter Space]
\label{def:parameter_space}
    For some $\delta, \zeta > 0$ and $\epsilon \in (0,1/2)$, define the parameter space $\mathcal{P}$ as the set of pairs $(\theta,F)$ satisfying the following properties:
\begin{enumerate}
	\item Equation \eqref{eq:mateq} is satisfied with $K \geq M$, where $\theta = (\theta_1', \theta_0')' \in \operatorname{int}(\Theta) \subseteq \R^{2(M+1)}$ for some compact set $\Theta$ with nonempty interior;
	\item $\sup_{d=0,1}\sup_{z\in \supp(Z)}\Exp_F[|Y|^{2+\delta}\mid D=d, Z=z] \leq \zeta$;
	\item $\epsilon \leq \inf_{z \in \supp(Z)}\Prob_F(D=1\mid Z=z) \leq \sup_{z \in \supp(Z)}\Prob_F(D=1\mid Z=z) \leq 1-\epsilon$;
	\item $\epsilon \leq \inf_{z\in\supp(Z)}\Prob_F(Z=z) \leq \sup_{z\in\supp(Z)} \Prob_{F}(Z=z) \leq 1-\epsilon$;
	\item $\epsilon \leq \inf_{d=0,1}\inf_{z\in\supp(Z)}\var_F(Y\mid D=d, Z=z)$.
\end{enumerate}
\end{definition}
The first condition assumes the correct model specification derived from  Assumptions \ref{asp:identification}.1-\ref{asp:identification}.5, and the order of the MTE model does not exceed the support of the instrument.\footnote{The condition $K\geq M$ is necessary but not sufficient for point identification, particularly when Assumption \ref{asp:identification}.7 does not hold.} The second condition is a mild restriction on the existence of moments of $Y$ conditional on $D$ and $Z$, which is essential for establishing the uniform version of the law of large numbers and central limit theorems. 
The third and fourth conditions, also known as the strong overlap conditions, require observing units for each value of the treatment and the instrument. Together with the final condition regarding sufficient variation in outcomes, these conditions rule out the singularity or near-singularity of the asymptotic variance of the moment equation \eqref{eq:mateq}. Since our parameter of interest has the form of $c'\theta$, let $\mathcal{P}_0 = \{(\lambda,F): \lambda = c'\theta, (\theta,F)\in\mathcal{P}\}$ denote the null parameter space for this linear function, in which the weight $c$ could possibly depend on underlying DGP $F$. It is important to emphasize that our parameter space does not require the relevance Assumptions \ref{asp:identification}.6 and \ref{asp:identification}.7, the latter of which is likely to fail. For a given  $\lambda \in \R$, our goal is to develop uniformly valid tests for assessing the hypothesis $H_0: c'\theta = \lambda$ that is robust to the limited variation of propensity scores.

\subsection{Notation and preliminaries}
\label{sec:notation}
The asymptotic behavior of test statistics to be proposed will be unified by the asymptotic distribution of estimators on marginal distribution of instrument $q(z_\ell) \equiv \Prob(Z=z_\ell)$, propensity score $p(z_\ell) = \Prob(D=1\mid Z=z_\ell)$, and the expected outcome conditional on the treatment and instrument $\beta_{d\ell} = \Exp[Y\mid D=d, Z=z_\ell]$ for $d=0,1$ and $\ell = 0,1,\ldots,K$. Under the parameter space restriction outlined in section \ref{sec:parameter_space}, we have the following sample-analog estimators for these quantities:
\begin{align*}
		&\hat{q}(z_\ell) = \frac{1}{n}\sum_{i=1}^n \indicator[Z_i =z_\ell] \\
		&\hat{p}(z_\ell) = \frac{1}{n}\sum_{i=1}^n \frac{\indicator[D_i = 1, Z_i = z_\ell]}{\hat{q}(z_\ell)} \\
		&\hat{\beta}_{d\ell} = \frac{1}{n}\sum_{i=1}^n\frac{Y_i\indicator[D_i = d, Z_i = z_\ell]}{\hat{q}(d,z_\ell)},
\end{align*}
where $\hat{q}(1,z_\ell) = \hat{p}(z_\ell)\hat{q}(z_\ell)$ and $\hat{q}(0,z_\ell) = (1-\hat{p}(z_\ell))\hat{q}(z_\ell)$. 

Collecting these estimators into vectors, we have $\hat{p} = (\hat{p}(z_\ell), \ldots, \hat{p}(z_K))'$, $\hat{q} = (\hat{q}(z_0),\ldots,\hat{q}(z_K))'$, $\hat{\beta}_d = (\hat{\beta}_{d0}, \ldots, \hat{\beta}_{dK})'$ for $d = 0,1$, and $\hat{\beta} = (\hat{\beta}_1', \hat{\beta}_0')'$. Lemma \ref{lem:convergence_p_beta}(a) provides the asymptotic distribution of the estimators $\hat{p}, \hat{q}$, and $\hat{\beta}$ under a drifting sequence in the parameter space $\mathcal{P}$. Moreover, Lemma \ref{lem:convergence_p_beta}(b) also shows that their asymptotic variances can be consistently estimated by 
\begin{align*}
	&\hat{\Sigma}_p = \diag\left\{\frac{\hat{p}(z_\ell)(1-\hat{p}(z_\ell))}{\hat{q}(z_\ell)}: \ell = 0,1,\ldots,K\right\} \\
	&\hat{\Sigma}_q = \{\hat{\Sigma}_q[i,j]\}_{i,j = 0,1,\ldots,K} \\
	&\hat{\Sigma}_{\beta_d} = \diag\left\{\frac{\hat{\sigma}_{d\ell}^2}{\hat{q}(d,z_\ell)}: \ell = 0,1,\ldots,K\right\} \\
        &\hat{\Sigma}_\beta = \diag\{\hat{\Sigma}_{\beta_1}, \hat{\Sigma}_{\beta_0}\},
\end{align*}
where 
\begin{align*}
    & \hat{\sigma}_{d\ell}^2 \equiv \frac{1}{n} \sum_{i=1}^n \frac{(Y_i - \hat{\beta}_{d\ell})^2\indicator[D_i=d, Z_i=z_\ell]}{\hat{q}(d,z_\ell)} \\
    & \hat{\Sigma}_q[i,j] 
        \equiv
        \begin{cases}
        \hat{p}(z_i) (1-\hat{p}(z_i))   & \text{ if } i = j\\
        -\hat{p}(z_i) \hat{p}(z_j)      & \text{ if } i \neq j.
        \end{cases}
\end{align*}

Throughout the paper, I use $\partial_x f \in \R^{k}$ to denote the gradient of a scalar function $f$ with respect to its argument $x\in \R^k$. If $f$ maps to a vector in $\R^l$, then $\partial_x f \in \R^{l\times k}$ denotes the Jacobian of $f$ with respect to $x\in\R^k$. Let $P_A = A(A'A)^{-1}A'$ denote the projection matrix onto the column spaces of matrix $A$ and define $M_A = I - P_A$ as the corresponding annihilator matrix. Let $\supp(X)$ denote the support of a random element $X$.

\section{Robust Inference in Linear MTE Models}
\label{sec:linearMTE_inference}
In this section, I start with the simplest functional specification and impose linearity on the MTR function. 
\begin{assumption}[Linear MTE model]
\label{asp:linear_MTE}
Assumption \ref{asp:identification}.5 holds with 
\[
	h_m(u) = u - \frac{1}{2}
\]
and $M=1$.
\end{assumption}
This assumption was first introduced by \cite{olsen:1980} to characterize sample selection bias and later generalized by \cite{brinch/mogstad/wiswall:2017} to model MTE functions. This linear MTE specification was recently adopted by \cite{kowalski:2023} to extrapolate treatment effects between two experimental studies. The parameters $\theta = (\mu_1, \rho_{11}, \mu_0, \rho_{01})$ can be identified using any pair of propensity scores $(p(z_0), p(z_k))$ for $k = 1,\ldots,K$ that differ from each other. Point identification fails if and only if all propensity scores ${p(z_0),\ldots,p(z_K)}$ are identical, i.e., when there is no variation in propensity scores.

The linear MTE specification allows us to derive closed-form estimands for $\theta$, in which case equation \eqref{eq:mateq} implies
\begin{align*}
	\begin{pmatrix}
	\mu_1 \\
	\rho_{11} 
	\end{pmatrix}
	&=
	\frac{1}{p(z_k) - p(z_0)}
	\begin{pmatrix}
	p(z_k) - 1 & -(p(z_0) - 1) \\
	-2 & 2
	\end{pmatrix}
	\begin{pmatrix}
	\Exp[Y\mid D=1, Z=z_0] \\
	\Exp[Y\mid D=1, Z=z_k]
	\end{pmatrix} 
\end{align*}
and
\begin{align*}
	\begin{pmatrix}
	\mu_0 \\
	\rho_{01} 
	\end{pmatrix}
	&=
	\frac{1}{p(z_k) - p(z_0)}
	\begin{pmatrix}
	p(z_k) & -p(z_0) \\
	-2 & 2
	\end{pmatrix}
	\begin{pmatrix}
	\Exp[Y\mid D=0, Z=z_0] \\
	\Exp[Y\mid D=0, Z=z_k]
	\end{pmatrix}.
\end{align*}
Recall $\beta_{dk} = \Exp[Y\mid D=d, Z=z_k]$. Then the ATE and the slope of MTE can be written as 
\[
	\begin{pmatrix}
	\mu_1 - \mu_0 \\
	\rho_{11} - \rho_{01} 
	\end{pmatrix}
 	=
	\frac{1}{p(z_k) - p(z_0)}
	\begin{pmatrix}
	p(z_k)\left[\beta_{10} - \beta_{00}\right] - p(z_0)\left[\beta_{1k} - \beta_{0k}\right] + \beta_{1k} - \beta_{10} \\
	2\left(\beta_{00} - \beta_{10} + \beta_{1k} - \beta_{0k}\right)
	\end{pmatrix}.
\]
Denote the numerator of the right-hand-side equation by
\[
	\Delta_\mu(z_0, z_k) \equiv p(z_k)\left[\beta_{10} - \beta_{00}\right] - p(z_0)\left[\beta_{1k} - \beta_{0k}\right] + \beta_{1k} - \beta_{10}
\]
and
\[
	\Delta_\rho(z_0, z_k) \equiv 2\left(\beta_{00} - \beta_{10} + \beta_{1k} - \beta_{0k}\right).
\]
Since these quantities can be directly identified from data, we can express the treatment effects parameters $\lambda = c'\theta$ with $c = (c_\mu, c_\rho, -c_\mu, -c_\rho)'$ as the solution of the following moment function:
\begin{equation}
\label{eq:moment_causal_effects}
	g_k(\lambda) = \left[p(z_k) - p(z_0)\right]\lambda - c_\mu\Delta_\mu(z_0,z_k) - c_\rho \Delta_\rho(z_0,z_k) = 0.
\end{equation}
If the instrument $Z$ is binary, then $c'\theta$ is just identified by this linear moment restriction, otherwise, we can construct a vector of moment functions:
\begin{equation}
\label{eq:moment_causal_effects_vector}
	g(\lambda) = (g_1(\lambda), \cdots, g_K(\lambda))' = 0_{K\times 1}
\end{equation}
to improve the efficiency for inference on $\lambda = c'\theta$. It is worth noting that other components of primitive parameter $\theta$ do not enter into \eqref{eq:moment_causal_effects_vector}. As a result, their nonstandard asymptotic behavior (under weak identification) is irrelevant in this context. To this end, this linear moment function $g(\lambda)$ will be used to construct asymptotically similar tests for treatment effects in linear MTE models. 

To fix ideas, I first consider inference for the parameter $c'\theta$ with a known weight $c$ and exploit the binary variation $\{z_0, z_k\}$ for $k = 1,\ldots,K$. This known-weight setting allows for inference on parameters such as the ATE and the MTE function. Online Appendix \ref{appendix:estimated_weights} discusses the extension to cases where $c$ needs to be estimated.

Equation \eqref{eq:moment_causal_effects} shows that $g_k(\lambda)$ suffices to estimate $\lambda$, the causal effects of interest. Let $\hat{g}_k(\lambda)$ be the sample analog of this moment function after plugging in the estimators $\{\hat{p}(z_\ell), \hat{\beta}_{0\ell}, \hat{\beta}_{1\ell}\}_{\ell = 0}^{K}$, i.e., 
\[
	\hat{g}_k(\lambda) = \left[\hat{p}(z_k) - \hat{p}(z_0)\right]\lambda - c_\mu\hat{\Delta}_\mu(z_0,z_k) - c_\rho \hat{\Delta}_\rho(z_0,z_k)
\]
where
\begin{align*}
	\hat{\Delta}_\mu(z_0,z_k) &= \hat{p}(z_k)[\hat{\beta}_{10} - \hat{\beta}_{00}] - \hat{p}(z_0)[\hat{\beta}_{1k} - \hat{\beta}_{0k}] + \hat{\beta}_{1k} - \hat{\beta}_{10} \\
	\hat{\Delta}_\rho(z_0, z_k) &= 2(\hat{\beta}_{00} - \hat{\beta}_{10} + \hat{\beta}_{1k} - \hat{\beta}_{0k}).
\end{align*}
Based on this moment condition, one might consider an Anderson-Rubin (AR) test statistic to assess the null hypothesis $H_0: c'\theta = \lambda$ as follows
\begin{equation}
\label{eq:linearMTE_AR_stat}
    \text{AR}_{n,k}(\lambda) = \left|\frac{\sqrt{n}\hat{g}_k(\lambda)}{\hat{s}_{k}(\lambda)}\right|^2
\end{equation}
where $\hat{s}_{k}^2(\lambda)$ consistently estimates the asymptotic variance of the sample moment $\hat{g}_k(\lambda)$. To construct this variance estimator, note that an asymptotic linear expansion of $\hat{g}_k(\lambda) - g_k(\lambda)$ gives
\[
	\hat{g}_k(\lambda) - g_k(\lambda) = 
	\begin{pmatrix}
	\partial_{p'} {g}_k(\lambda)[\hat{p} - p] \\
	+ ~\partial_{\beta_1'} {g}_k(\lambda)[\hat{\beta}_1 - \beta_1] \\
	+ ~\partial_{\beta_0'} {g}_k(\lambda)[\hat{\beta}_0 - \beta_0]
	\end{pmatrix}
	+ o_p(n^{-1/2}).
\]
with coefficients $\partial_{p} {g}_k(\lambda)$, $\partial_{\beta_1} {g}_k(\lambda)$, and $\partial_{\beta_0}{g}_k(\lambda)$ defined as the gradient of ${g}_k(\lambda)$ with respect to $p$, $\beta_1$, and $\beta_0$, respectively. It is natural to estimate these coefficients by plugging in sample analog estimators below:
\begin{equation}
\label{eq:fixed_weight_expansion}
\begin{aligned}
	& \partial_{p} \hat{g}_k(\lambda) = (-\lambda + (\hat{\beta}_{1k} - \hat{\beta}_{0k})c_\mu, 0, \ldots, 0, \lambda - (\hat{\beta}_{10} - \hat{\beta}_{00})c_\mu, 0, \ldots, 0)' \\
	& \partial_{\beta_1} \hat{g}_k(\lambda) = ((1-\hat{p}(z_k))c_\mu + 2c_\rho, 0, \ldots, 0, -(1-\hat{p}(z_0))c_\mu - 2c_\rho, 0, \ldots, 0)' \\
	& \partial_{\beta_0} \hat{g}_k(\lambda) = (\hat{p}(z_k)c_\mu - 2c_\rho, 0, \ldots, 0, -\hat{p}(z_0)c_\mu + 2c_\rho, 0, \ldots, 0)',
\end{aligned}
\end{equation}
where nonzero elements appear on the first and the $(k+1)$'th elements of vectors.

This leads to a consistent variance estimator
\[
    \hat{s}_{k}^2(\lambda) = \partial_{p'}\hat{g}_k(\lambda) \hat{\Sigma}_p \partial_{p}\hat{g}_k(\lambda) +  \partial_{\beta_1'}\hat{g}_k(\lambda) \hat{\Sigma}_{\beta_1} \partial_{\beta_1}\hat{g}_k(\lambda) +  \partial_{\beta_0'}\hat{g}_k(\lambda) \hat{\Sigma}_{\beta_0} \partial_{\beta_0}\hat{g}_k(\lambda).
\]
Let $\alpha \in (0,1)$ be the significant level. The proposition below establishes the uniform validity and asymptotic similarity of the AR test for $H_0:c'\theta=\lambda$ with $\lambda \in \R$. 

\begin{proposition}
\label{prop:AR_validity}
Let Assumption \ref{asp:iid} and \ref{asp:linear_MTE} hold, and suppose that the weight $c = (c_\mu, c_\rho, -c_\mu, -c_\rho)'$ is a nonzero fixed vector, then 
\[
	\liminf_{n\to\infty}\inf_{(\lambda,F)\in \mathcal{P}_0} \Prob_F\left(\text{AR}_{n,k}(\lambda) > q_{\chi_1^2}(1-\alpha)\right) = \limsup_{n\to\infty}\sup_{(\lambda,F)\in \mathcal{P}_0} \Prob_F\left(\text{AR}_{n,k}(\lambda) > q_{\chi_1^2}(1-\alpha)\right) = \alpha,
\] 
where $q_{\chi_1^2}(1-\alpha)$ denotes the $(1-\alpha)$-quantile of the $\chi_1^2$ distribution.
\end{proposition}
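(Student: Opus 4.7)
My strategy is to show that along every drifting sequence $(\lambda_n, F_n) \in \mathcal{P}_0$, the test statistic satisfies $\mathrm{AR}_{n,k}(\lambda_n) \Rightarrow \chi^2_1$ under $F_n$. The stated uniform conclusion then follows from the standard subsequence characterization of uniform convergence: both the $\liminf$ and $\limsup$ over $\mathcal{P}_0$ coincide with $\alpha = \Prob(\chi^2_1 > q_{\chi^2_1}(1-\alpha))$.

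Fix such a sequence. Under the null, $g_k^{(F_n)}(\lambda_n) = 0$, so a Taylor expansion — exact up to a quadratic remainder, since $g_k$ is a polynomial of total degree two in $(p, \beta_1, \beta_0)$ — yields
\[
\sqrt{n}\,\hat{g}_k(\lambda_n) = \partial_{p'} g_k^{(F_n)}(\lambda_n)\sqrt{n}(\hat{p}-p^{(F_n)}) + \partial_{\beta_1'} g_k^{(F_n)}(\lambda_n)\sqrt{n}(\hat{\beta}_1-\beta_1^{(F_n)}) + \partial_{\beta_0'} g_k^{(F_n)}(\lambda_n)\sqrt{n}(\hat{\beta}_0-\beta_0^{(F_n)}) + o_p(1),
\]
where the remainder is $o_p(1)$ because its quadratic terms are $O_p(n^{-1})$ before multiplication by $\sqrt{n}$. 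Lemma \ref{lem:convergence_p_beta}(a), applied along $F_n$, gives joint asymptotic normality of $\sqrt{n}(\hat{p}-p^{(F_n)}, \hat{\beta}-\beta^{(F_n)})$. Because the coefficient vectors above are uniformly bounded by Definition \ref{def:parameter_space}, a Slutsky argument delivers $\sqrt{n}\hat{g}_k(\lambda_n)/s_k(F_n) \Rightarrow N(0,1)$, where $s_k^2(F_n)$ denotes the population variance of the linearized moment. Lemma \ref{lem:convergence_p_beta}(b), combined with continuity of the partial-derivative formulas in \eqref{eq:fixed_weight_expansion}, gives $\hat{s}_k^2(\lambda_n)/s_k^2(F_n) \to 1$ in probability, provided $s_k^2(F_n)$ stays bounded away from zero.

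The crucial step is proving uniform positivity $\inf_{F \in \mathcal{P}_0} s_k^2(F) > 0$. Inspecting the population-level analogs of \eqref{eq:fixed_weight_expansion}, the contribution to $s_k^2(F_n)$ from the outcome-mean terms at the first index alone is
\[
[(1-p(z_k))c_\mu + 2c_\rho]^2 \cdot [\Sigma_{\beta_1}]_{00} + [p(z_k)c_\mu - 2c_\rho]^2 \cdot [\Sigma_{\beta_0}]_{00}.
\]
Conditions (3)--(5) of the parameter space imply $[\Sigma_{\beta_d}]_{00} \geq \epsilon$, and the algebraic bound $a^2 + b^2 \geq (a+b)^2/2$ applied with $a+b = c_\mu$ lower-bounds the display by $\epsilon c_\mu^2/2$. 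When $c_\mu = 0$, the nonzero-weight hypothesis forces $c_\rho \neq 0$, in which case $\partial_{\beta_1} g_k$ contains the entry $2c_\rho$ and yields $s_k^2(F_n) \geq 4c_\rho^2 \epsilon$. In either case $s_k^2(F_n) \geq \kappa(c,\epsilon) > 0$ uniformly over $\mathcal{P}_0$. Combining this with the previous step via Slutsky and the continuous mapping theorem gives $\mathrm{AR}_{n,k}(\lambda_n) \Rightarrow \chi^2_1$, closing the proof.

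The main obstacle is exactly this uniform lower bound on $s_k^2$. One might worry that when $p(z_k) - p(z_0) \to 0$ (weak IV) the linearized moment's variance collapses and the limit becomes degenerate. The key observation is that the variance contributed by estimating the conditional outcome means $\beta_{d\ell}$ does not depend on propensity-score separation and remains bounded below; this anchors the denominator and keeps the AR statistic asymptotically pivotal regardless of identification strength, which is precisely why this test achieves the \emph{exact} asymptotic size $\alpha$ required by the proposition.
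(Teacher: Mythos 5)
Your proposal is correct and follows essentially the same route as the paper: reduction to drifting subsequences, linearization of $\hat{g}_k$ combined with the triangular-array CLT of Lemma \ref{lem:convergence_p_beta}, and---the decisive step---a uniform positive lower bound on the asymptotic variance extracted from the $\beta$-block alone via $a^2+b^2\ge (a+b)^2/2$ with $a+b=c_\mu$, which is exactly the argument of the paper's Lemma \ref{lem:pd_variance}. Your closing observation that the outcome-mean variance contribution is insensitive to propensity-score separation is precisely the paper's reason the AR statistic remains pivotal under weak identification.
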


\begin{remark}
The above result shows that the proposed AR test is valid and (uniformly) asymptotically similar in the sense of \citet[eq.(2.6)]{andrews/cheng/guggenberger:2020}. Constructing an asymptotically valid subvector test is already challenging in models with weakly identified nuisance parameters\footnote{Since $c'\theta$ is the parameter of interest rather than $\theta$ itself, the vector of primitive parameters $\theta = (\mu_1, \rho_{11}, \mu_0, \rho_{01})$ becomes the weakly identified nuisance parameter under limited variation of propensity scores. Although we can show some functions of $\theta$ are strongly identified using the reparametrization technique proposed by \cite{han/mccloskey:2019}, two parameters $(\rho_{11}, \rho_{01})$ remain weakly identified in the transformed model (see Online Appendix \ref{appendix:control_function}).}, because the asymptotic distributions of many test statistics depend on those unknown nuisance parameters that cannot be consistently estimated \citep[page 1595]{andrews/mikusheva:2016b}.  However, I introduce a novel moment function that isolates the causal effects of interest while being independent of $\theta$, thereby making the simple AR test feasible in this context. Unlike other existing approaches to subvector inference with weak identification (discussed in Online Appendix \ref{appendix:lit_subvector_inference}), this method achieves asymptotic similarity, ensuring the inverted confidence set  maintains $1-\alpha$ coverage asymptotically under both strong and weak identification. %
\end{remark}

If $Z$ is a binary instrument taking values in $\{z_0, z_1\}$, there exists a unique AR test statistic $\text{AR}_{n,1}(\lambda)$ for robust inference on the causal parameter $c'\theta$. When the instrument is discrete with multiple values $Z\in\{z_0, \ldots, z_K\}$ where $K > 1$, we obtain multiple AR statistics $\{\text{AR}_{n,k}(\lambda)\}_{k=1}^K$. The informativeness of each statistic depends on the variation in propensity scores $p(z_0) - p(z_k)$. In this case, combining tests based on different propensity score pairs yields greater statistical power. Let 
\[
    \hat{\pi} = (\hat{p}(z_1) - \hat{p}(z_0), \ldots, \hat{p}(z_K) - \hat{p}(z_0))'
\]
and
\[
   \hat{\gamma} = (c_\mu \hat{\Delta}_\mu(z_0, z_1) + c_\rho \hat{\Delta}_\rho (z_0, z_1), \ldots, c_\mu \hat{\Delta}_\mu(z_0, z_K) + c_\rho \hat{\Delta}_\rho (z_0, z_K))'.
\]
Then our goal is to obtain a valid inference procedure based on a vector of linear moment functions: 
\[
    \hat{g}(\lambda)  = \hat{\pi}\lambda - \hat{\gamma} = (\hat{g}_1(\lambda), \ldots, \hat{g}_K(\lambda))' \in \R^K.
\]
Under strong identification where $\hat{\pi}$ converges to a nonzero limit, a Wald statistic based on efficient weighting matrix $\hat{S}(\lambda)^{-1}$ for testing $H_0: c'\theta = \lambda$ can be constructed below:
\[
    W_n(\lambda) = \frac{n\hat{g}(\lambda)'\hat{S}(\lambda)^{-1}\hat{\pi}\hat{\pi}'\hat{S}(\lambda)^{-1}\hat{g}(\lambda)}{\hat{\pi}'\hat{S}(\lambda)^{-1}\hat{\pi}},
\]
where 
\[
    \hat{S}(\lambda) = \partial_{p}\hat{g}(\lambda) \hat{\Sigma}_p \partial_{p'}\hat{g}(\lambda) + \partial_{\beta_1}\hat{g}(\lambda) \hat{\Sigma}_{\beta_1}\partial_{\beta_1'}\hat{g}(\lambda) +  
    \partial_{\beta_0}\hat{g}(\lambda) \hat{\Sigma}_{\beta_0}\partial_{\beta_0'}\hat{g}(\lambda)
\]
is a consistent estimator of the asymptotic covariance matrix of $\hat{\pi}\lambda - \hat{\gamma}$ under null hypothesis.

Under weak identification, the propensity score differences $\hat{\pi}$ may converge in probability to a zero vector, resulting in a nonstandard asymptotic distribution for $W_n(\lambda)$. To illusrate this problem, consider a sequence of DGPs where $\sqrt{n}\pi$ converges to a constant vector $\pi_0$. In this case, $\sqrt{n}\hat{\pi}$ converges in distribution to a multivariate normal distribution with mean $\pi_0$, which cannot be consistently estimated from the data. Consequently, the asymptotic distribution of $W_n(\lambda)$ contains the non-estimable nuisance parameter $\pi_0$, making it infeasible to consistently estimate the unconditional quantiles of $W_n(\lambda)$'s limiting distribution. However, similar to the arguments of \cite{moreira:2003} for linear IV models and \cite{andrews/mikusheva:2016b} for quasi-likelihood ratio tests in GMM models, the asymptotic distribution of $W_n(\lambda)$ becomes independent of $\pi_0$ when conditioned on a sufficient statistic, making it feasible to approximate the conditional distribution of $W_n(\lambda)$.

Define 
\[
    \hat{h}(\lambda) = \sqrt{n}\hat{\pi} - [\partial_{p} {\pi}]\hat{\Sigma}_p [\partial_{p'} \hat{g}(\lambda)] \hat{S}(\lambda)^{-1} \sqrt{n}\hat{g}(\lambda).
\]
The statistic $\hat{h}(\lambda)$ is asymptotically independent of $\sqrt{n}\hat{g}(\lambda)$ and contains sufficient information about $\pi_0$ such that the (asymptotic) distribution of $W_n(\lambda)$ conditional on $\hat{h}(\lambda)$ is free of $\pi_0$. To simulate this conditional distribution, I construct a simulated counterpart of $\sqrt{n}\hat{\pi}$, denoted as ${\pi}_s$, which is given by
\[
	{\pi}_s = \hat{h}(\lambda) + [\partial_{p} {\pi}]\hat{\Sigma}_p [\partial_{p'} \hat{g}(\lambda)] \hat{S}(\lambda)^{-1/2}\eta^*
\]
where $\eta^* \sim \normal(0_{K\times 1},I_{K\times K})$ is drawn independently of the data. We then approximate the distribution of $W_n(\lambda)$ by replacing $\sqrt{n}\hat{\pi}$ with ${\pi}_s$ and $\sqrt{n}\hat{g}(\lambda)$ with $\hat{S}(\lambda)^{1/2}\eta^*$, which yields
\[
	W_n^*(\lambda) = \frac{(\eta^*)'\hat{S}(\lambda)^{-1/2}\pi_s\pi_s'\hat{S}(\lambda)^{-1/2}(\eta^*)}{\pi_s'\hat{S}(\lambda)^{-1}\pi_s}.
\]
It follows that $W_n^*(\lambda)$ has the same asymptotic distribution as $W_n(\lambda)$ conditional on the realization of $\hat{h}(\lambda)$. Let $\hat{q}_{W^*}(1-\alpha)$ denote the $(1-\alpha)$-quantile of the distribution of $W_n^*(\lambda)$ conditional on the data. This critical value can be used to construct valid conditional Wald test:
\begin{theorem}
\label{thm:validity_cond_wald}
Let Assumption \ref{asp:iid} and \ref{asp:linear_MTE} hold, and suppose that the weight $c = (c_\mu, c_\rho, -c_\mu, -c_\rho)'$ is a nonzero fixed vector, then 
\[
	\liminf_{n\to\infty}\inf_{(\lambda,F)\in\mathcal{P}_0} \Prob_{F}\left(W_n(\lambda) > \hat{q}_{W^*}(1-\alpha)\right) = \limsup_{n\to\infty}\sup_{(\lambda,F)\in\mathcal{P}_0} \Prob_{F}\left(W_n(\lambda) > \hat{q}_{W^*}(1-\alpha)\right) = \alpha.
\]
\end{theorem}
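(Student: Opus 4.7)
My plan is to establish uniform asymptotic size equality by passing to drifting subsequences $(\lambda_n, F_n) \in \mathcal{P}_0$, characterizing the joint asymptotic distribution of $(W_n(\lambda_n), \hat{h}(\lambda_n), W_n^*(\lambda_n))$ along each convergent subsequence, and verifying that the data-driven critical value $\hat{q}_{W^*}(1-\alpha)$ correctly approximates the $(1-\alpha)$-quantile of the conditional limit of $W_n(\lambda_n)$ given $\hat{h}(\lambda_n)$. By the standard subsequencing device used throughout the identification-robust literature, it suffices to prove that the rejection probability tends to $\alpha$ along every subsequence for which the relevant nuisance parameters converge. I would split into a weak-identification regime in which $\sqrt{n}\pi_n \to \pi_0 \in \mathbb{R}^K$ and a strong-identification regime in which $\|\sqrt{n}\pi_n\| \to \infty$, and verify both limits equal $\alpha$.

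First, apply Lemma \ref{lem:convergence_p_beta} and the delta method to the linear moment $\hat{g}(\lambda) = \hat{\pi}\lambda - \hat{\gamma}$ to obtain $\sqrt{n}(\hat{g}(\lambda_n), \hat{\pi} - \pi_n) \Rightarrow (G, \Pi)$, a jointly Gaussian limit whose block covariance is consistently estimated by $\hat{S}(\lambda_n)$, $\hat{\Sigma}_p$, and their cross terms. Under $H_0$, $g(\lambda_n) = 0$, so $\sqrt{n}\hat{g}(\lambda_n)$ is centered, and by direct algebra
\[
\hat{h}(\lambda_n) = \sqrt{n}\pi_n + (\Pi - A_n G) + o_p(1),
\]
where $A_n = [\partial_p \pi]\Sigma_p [\partial_{p'}g(\lambda_n)] S(\lambda_n)^{-1}$ is the population limit of its sample analog. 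A direct covariance calculation gives $\cov(G, \Pi - A_n G) = 0$, and joint Gaussianity delivers asymptotic independence of $G$ and $\hat{h}(\lambda_n) - \sqrt{n}\pi_n$, which is the crux of the conditional argument. Substituting $\sqrt{n}\hat{\pi} = \hat{h}(\lambda_n) + \hat{A}_n \sqrt{n}\hat{g}(\lambda_n) + o_p(1)$ into $W_n(\lambda_n)$ expresses the Wald statistic as a functional of $(\hat{h}(\lambda_n), \sqrt{n}\hat{g}(\lambda_n), \hat{S}(\lambda_n))$; conditional on $\hat{h}(\lambda_n)$, its limiting distribution depends only on the realized conditioning value and the estimable matrices $(S(\lambda_n), A_n)$, so the unobservable drift $\pi_0$ enters only through $\hat{h}(\lambda_n)$.

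The simulated statistic $W_n^*(\lambda_n)$, by construction, replaces $\sqrt{n}\hat{g}(\lambda_n)$ by $\hat{S}(\lambda_n)^{1/2}\eta^*$ and $\sqrt{n}\hat{\pi}$ by $\hat{h}(\lambda_n) + \hat{A}_n \hat{S}(\lambda_n)^{1/2}\eta^*$, so conditional on the data it has the same limiting law as $W_n(\lambda_n)$ conditional on $\hat{h}(\lambda_n)$. A conditional continuous mapping argument combined with consistency of $\hat{S}(\lambda_n)$ and $\hat{A}_n$ then delivers $\hat{q}_{W^*}(1-\alpha) \to$ the true conditional $(1-\alpha)$-quantile in probability, and the law of iterated expectations yields $\Prob_F(W_n(\lambda_n) > \hat{q}_{W^*}(1-\alpha)) \to \alpha$ in the weak regime. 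Under strong identification the dominant term in $\hat{h}(\lambda_n) + \hat{A}_n \sqrt{n}\hat{g}(\lambda_n)$ is $\sqrt{n}\pi_n$, which after normalization collapses $W_n(\lambda_n) \Rightarrow \chi_1^2$ and $\hat{q}_{W^*}(1-\alpha) \to q_{\chi_1^2}(1-\alpha)$, again giving rejection probability $\alpha$.

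The main obstacle will be verifying that the conditional-quantile map is jointly continuous in all its nuisance inputs uniformly over $\mathcal{P}_0$---in particular, that the limiting conditional CDF of $W_n$ is continuous at its $(1-\alpha)$-quantile for every admissible limit of $(\hat{h}, S, A)$. Near-degeneracies, such as $\hat{h}(\lambda_n) + \hat{A}_n \sqrt{n}\hat{g}(\lambda_n)$ becoming nearly collinear with eigenvectors of $\hat{S}(\lambda_n)^{-1}$ corresponding to vanishing eigenvalues, must be ruled out using the strong-overlap and conditional-variance bounds built into Definition \ref{def:parameter_space}, which keep $S(\lambda_n)$ and $\Sigma_p$ uniformly bounded and uniformly bounded away from singular. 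The transition between regimes should also be handled carefully: along sequences with $\|\hat{h}(\lambda_n) + \hat{A}_n\sqrt{n}\hat{g}(\lambda_n)\| \to \infty$, one must verify that the conditional law of $W_n^*$ given the data concentrates on $\chi_1^2$ in probability, so that no discontinuity arises at intermediate identification strengths and the displayed equalities hold uniformly.
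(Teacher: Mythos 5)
Your proposal is correct and follows essentially the same route as the paper: reduction to drifting subsequences (the paper formalizes this via Assumption B\textsuperscript{*} of Andrews, Cheng, and Guggenberger (2020) and Proposition \ref{prop:subsequence_AR_validity}(b)), a case split between $\sqrt{n}\|\pi_n\|$ converging to a finite limit versus diverging, asymptotic independence of $\hat h(\lambda_n)$ from the centered moment via the zero-covariance/joint-Gaussianity argument, continuity of the conditional quantile map on a set excluding degenerate conditioning values (the paper's set $\mathcal{U}$ and Lemma \ref{lem:NS_cond}), and collapse to the $\chi_1^2$ quantile under strong identification. The obstacles you flag at the end are exactly the ones the paper resolves with Lemmas \ref{lem:pd_variance} and \ref{lem:NS_cond}.
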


\begin{remark}
    The statistic $W_n(\lambda)$ can be interpreted as a linear combination of AR statistics in equation \eqref{eq:linearMTE_AR_stat}, with weight 
    $\hat{S}^{-1}\hat{\pi}$. Under strong identification, this weighting vector can be shown to be optimal among all linear combinations of AR statistics, yielding the highest local asymptotic power. %
\end{remark}

\begin{remark}
In Lemma \ref{lem:pd_variance}, I show that the asymptotic variance of the moment function is uniformly positive definite as long as the weight $c = (c_\mu, c_\rho, -c_\mu, -c_\rho)$ is nonzero. If we incorporate additional moment conditions induced by the difference of propensity scores of the form $p(z_k) - p(z_j)$ for $k, j \neq 0$, then the asymptotic variance may become singular for some nonzero weight $c$. For example, if we set $c_\mu = 0$ and $c_\rho = 1$, it follows that 
\[
    [\hat{p}(z_k) - \hat{p}(z_j)]\lambda - c_\rho\hat{\Delta}_\rho(z_j,z_k) = \hat{g}_k(\lambda) - \hat{g}_j(\lambda), 
\]
implying that the moment condition constructed with the variation between $z_k$ and $z_j$ (on the left-hand side) is the difference of the moments constructed by using $(z_k, z_0)$ and $(z_j, z_0)$ (on the right-hand side).
\end{remark}

\begin{remark}
    An alternative statistic one could consider is the quasi-likelihood ratio statistic
    \[
        \text{QLR}_n(\lambda) = n\left[\hat{g}(\lambda)'\hat{S}(\lambda)^{-1}\hat{g}(\lambda) - \inf_{\lambda \in \R} \hat{g}(\lambda)'\hat{S}(\lambda)^{-1}\hat{g}(\lambda)\right]
    \]
    and its corresponding conditional approach discussed in \cite{andrews/mikusheva:2016b}. Here I focus on the conditional Wald approach because it is simpler to implement in practice and is first-order asymptotic equivalent to QLR test under strong identification.
\end{remark}

\section{Robust Inference in General MTE Models}
\label{sec:polyMTE_inference}
In this section, I relax the linearity Assumption \ref{asp:linear_MTE} and extend the analysis to allow for any functional forms specified in Assumption \ref{asp:identification}.5. This includes the linear MTE model and the polynomial specification $h_m(\cdot) = u^{m} - \frac{1}{m+1}$ that is commonly used in empirical studies. 

For this broader class of models, constructing moment conditions that only involve the target causal parameter is less obvious. Therefore, I develop an alternative \textit{improved projection} approach that builds on I. \cite{andrews.i:2018} to achieve valid inference. When compared to the conditional Wald test in section \ref{sec:linearMTE_inference}, this new approach applies to a wider range of MTE models but is more computationally intensive and is potentially more conservative. On the theoretical front, I argue that the high-level conditions imposed by I. \cite{andrews.i:2018} cannot be directly verified in the MTE setting. Therefore, I extend his sequential validity result (that based on high-level conditions) by establishing the uniform validity under primitive conditions on the parameter space outlined in Definition \ref{def:parameter_space}.

\subsection{Improved projection inference}
First, I describe how to adapt the improved projection test developed by I. \cite{andrews.i:2018} for conducting inference on $c'\theta$ using the following linear system of equations:
\begin{equation}
\label{eq:linsystem}
	A\theta = \beta	
\end{equation}
where 
	\begin{align*}
	A = 
	\begin{pmatrix}
	A_1 & 0_{(K+1)\times (M+1)} \\
	0_{(K+1)\times (M+1)}		& A_0
	\end{pmatrix}
	\qquad
	\theta =
	\begin{pmatrix}
	\theta_1 \\
	\theta_0
	\end{pmatrix}
	\qquad 
	\beta = 
	\begin{pmatrix}
	\beta_1 \\
	\beta_0
	\end{pmatrix},
	\end{align*}
as defined below equation \eqref{eq:mateq}. Note that $A$ is a matrix of transformed propensity scores and $\beta$ is a vector of conditional expectations, both of which can be consistently estimated by their sample analogs $\hat{A}$ and $\hat{\beta}$ under appropriate assumptions. Since the matrix $A$ captures the variation in propensity scores, it determines the strength of identification.

The conventional Wald test for conducting inference on $c'\theta$ uses the first-order asymptotic efficient estimator obtained by minimizing the following minimum-distance objective function:
\[
        \hat{\theta}^{\text{eff}} \in \argmin_{\theta\in \Theta} n(\hat{A}\theta - \hat{\beta})'{\hat{\Omega}(\theta)^{-1}}(\hat{A}\theta - \hat{\beta})
\]
where $\hat{\Omega}(\theta)$ is a consistent estimator for the asymptotic variance of $\sqrt{n}(\hat{A}\theta - \hat{\beta})$, formally defined below in equation \eqref{eq:asy_var_est}. %
Then the classical Wald statistic for assessing $H_0: c'\theta = \lambda$ is 
\begin{equation}
\label{eq:wald_stat}
	\text{Wald}_n(\lambda) = n(c'\hat{\theta}^{\text{eff}} - \lambda)' (c'(\hat{A}'\hat{\Omega}(\hat{\theta}^{\text{eff}})^{-1}\hat{A})^{-1}c)^{-1}(c'\hat{\theta}^{\text{eff}} - \lambda).
\end{equation}

When Assumption \ref{asp:identification}.7 is nearly violated, $\hat{A}$ may converge in probability to a matrix with deficient rank. For example, consider the following simple example on the linear MTE model with a binary IV.
\begin{example}[Linear MTE model with a binary IV]
    Suppose $K = M = 1$ and set $h_1(u) = u-\frac{1}{2}$. Along a sequence of DGPs $\{F_n\}$, let $p_{F_n}(z_0) = p \in (0,1)$ and $p_{F_n}(z_1) = p + \upsilon_{n} \in (0,1)$ with $\upsilon_n \to 0$. In such case, the probability limit of $\hat{p}(z_0)$ and $\hat{p}(z_1)$ are equal to $p$, and we have
    \[
        \hat{A}
        =
        \begin{pmatrix}
	\begin{matrix}
	1 & \frac{1}{2} (\hat{p}(z_0) - 1) \\
	1 & \frac{1}{2} (\hat{p}(z_1) - 1)
	\end{matrix} 
	&
	0_{2\times 2} \\
	0_{2\times 2} & 
	\begin{matrix}
	1 & \frac{1}{2}\hat{p}(z_0) \\
	1 & \frac{1}{2}\hat{p}(z_1)
	\end{matrix}
	\end{pmatrix}
        ~~
        \xrightarrow{p}
        ~~
        \begin{pmatrix}
	\begin{matrix}
	1 & \frac{1}{2} (p - 1) \\
	1 & \frac{1}{2} (p - 1)
	\end{matrix} 
	&
	0_{2\times 2} \\
	0_{2\times 2} & 
	\begin{matrix}
	1 & \frac{1}{2}p \\
	1 & \frac{1}{2}p
	\end{matrix}
	\end{pmatrix}
    \]
    Note that the probability limit of $\hat{A}$ becomes a singular matrix.
\end{example}
This singularity may result in multiple minimizers of the limiting objective function when defining $\hat{\theta}^{\text{eff}}$, suggesting that the efficient estimator $\hat{\theta}^{\text{eff}}$ may not exhibit the usual properties of consistency and asymptotic normality. To address this issue, one can derive inference results based on the moment condition:
\[
    {m}(\theta) \equiv {A}\theta - {\beta} = 0_{2(K+1)\times 1}
\]
instead of using the estimator $\hat{\theta}^{\text{eff}}$. Let $\hat{m}(\theta) \equiv \hat{A}\theta - \hat{\beta}$ denote the sample moment function. By substituting the true (or hypothesized) value $\theta$ for $\hat{\theta}^{\text{eff}}$ in $\hat{\Omega}(\hat{\theta}^{\text{eff}})$ and plugging the following first-order asymptotic expansion:
\[
	\sqrt{n}(\hat{\theta}^{\text{eff}} - \theta) = -(\hat{A}'\hat{\Omega}(\theta)^{-1}\hat{A})^{-1}\hat{A}'\hat{\Omega}(\theta)^{-1}\sqrt{n}(\hat{A}\theta - \hat{\beta}) + o_p(1)
\] 
into equation \eqref{eq:wald_stat}, this gives a locally equivalent Lagrangian Multiplier (LM) test statistic:
\[
	\text{LM}_n(\theta) \equiv n(\hat{A}\theta - \hat{\beta})'\hat{\Omega}(\theta)^{-1/2} P_{\hat{\Omega}(\theta)^{-1/2}\hat{A}(\hat{A}'\hat{\Omega}(\theta)^{-1}\hat{A})^{-1}c}\hat{\Omega}(\theta)^{-1/2}(\hat{A}\theta - \hat{\beta})
\]
which does not require $\hat{\theta}^{\text{eff}}$ to be consistent.

However, the singularity of $\hat{A}'\hat{\Omega}(\theta)^{-1}\hat{A}$ persists under the projection operator inside the LM statistic.  Following the insights of \cite{kleibergen:2005}, one can orthogonalize columns of $\hat{A} = (\hat{a}_1,\ldots, \hat{a}_{2(M+1)})$ with respect to the variation in the moment condition $\hat{m}(\theta)$ to obtain a new gradient estimator 
\[
	\hat{D}(\theta) = (\hat{d}_1(\theta), \hat{d}_2(\theta), \ldots, \hat{d}_{2(M+1)}(\theta))
\]
where 
\[
	\hat{d}_j(\theta) \equiv \hat{a}_j - \hat{\Gamma}_j(\theta)\hat{\Omega}(\theta)^{-1}(\hat{A}\theta - \hat{\beta}) \quad \text{for } j = 1,2,\ldots, 2(M+1).
\]
Here $\hat{\Gamma}_j(\theta)$ is a consistent estimator of the asymptotic covariance between $\sqrt{n}\hat{m}(\theta)$ and the $j$-th column in $\sqrt{n}(\hat{A} - A)$, formally defined below in equation \eqref{eq:asy_cov_est}. 

By this orthogonalization, $\sqrt{n}(\hat{D}(\theta) - A)$ is asymptotically independent of the moment condition $\sqrt{n}\hat{m}(\theta)$ in large samples. Replacing $\hat{A}$ with $\hat{D}(\theta)$ in the LM statistic then leads to the ``subvector'' version of Robust LM (RLM) statistics for inference on $c'\theta$:
\begin{equation}
\label{eq:KLM}
	\text{RLM}_n(\theta) 
        =  n(\hat{A}\theta - \hat{\beta})'\hat{\Omega}(\theta)^{-1/2} P_{\hat{\Omega}(\theta)^{-1/2}\hat{D}(\theta)(\hat{D}(\theta)'\hat{\Omega}(\theta)^{-1}\hat{D}(\theta))^{-1}c}\hat{\Omega}(\theta)^{-1/2}(\hat{A}\theta - \hat{\beta}).
\end{equation}
If $\sqrt{n}A$ converges to a fixed matrix $\mathcal{A}$ as in \citet[page 1108]{kleibergen:2005}, so that $A$ has a zero limit, then $\sqrt{n}\hat{D}(\theta)$ converges to a Gaussian matrix with mean $\mathcal{A}$ that is asymptotically independent of the moment vector. Consequently, the projection matrix in the RLM statistic becomes asymptotically independent of the moments on both sides, implying that $\text{RLM}_n(\theta)$ converges to the standard $\chi_1^2$ limiting distribution for a sequence of DGPs that induces a zero limit of $\hat{A}$ at a rate $n^{-1/2}$.

Compared to the classical RLM statistic for full vector inference in \cite{kleibergen:2005}, this subvector RLM statistic \eqref{eq:KLM} only attains power for deviations in the linear function $c'\theta$ rather than the full vector $\theta$.  This feature has two important implications. On the one hand, when identification is sufficiently strong, the subvector RLM statistic is locally equivalent to the efficient subvector Wald statistic \eqref{eq:wald_stat} when $\theta$ approaches its true value; On the other hand, this equivalence may fail since the subvector RLM statistic cannot distinguish alternative values of $\theta$ from the true value when they yield the same value of $c'\theta$. To overcome this  limitation, I. \cite{andrews.i:2018} introduces a linear combination (LC) statistic that combines the RLM and AR statistics:
\begin{equation}
\label{eq:lc_stat}
	\text{LC}_n(\theta) = \text{RLM}_n(\theta) + a \cdot \text{AR}_n(\theta)
\end{equation}
where
\[
	\text{AR}_n(\theta) = n(\hat{A}\theta - \hat{\beta})'\hat{\Omega}(\theta)^{-1}(\hat{A}\theta - \hat{\beta})
\]
and $a > 0$ is a tuning parameter on the weights attached to the AR statistic. By incorporating the AR term, the LC statistic diverges to infinity outside the $n^{-1/2}$ neighborhood of the true parameter $\theta$ under strong identification. This property guarantees power against any deviations from the true value. Moreover, when $a$ is sufficiently small and the model is strongly identified, the projection test based on the LC statistic becomes approximately equivalent to the subvector Wald test.

\subsection{Contributions and Modifications}
In a GMM model, I. \cite{andrews.i:2018} shows that $\text{LC}_n(\theta)$ converges to a mixture of two independent chi-squared distributions $(1+a)\chi^2_{1} + a\chi^2_{2K+1}$ in a sequence of DGPs $\{F_n\}_{n\geq 1}$ satisfying certain high-level conditions. As Andrews notes in his conclusion, the uniform validity of his result under more primitive conditions remains an open question. In this section, I make two key contributions: I show that his high-level conditions are not trivially satisfied in the MTE framework, and I establish the uniform validity of the LC test through a simple modification.

First, consider Assumption 4 of I. \citet{andrews.i:2018}. This assumption requires two convergence conditions by the existence of normalizing sequences: a sequence of full-rank matrices $\{\Lambda_{1,n}\}\subseteq \R^{2(M+1)\times 2(M+1)}$ and a sequence of nonzero constants $\{\Lambda_{2,n}\} \subseteq \R$. Under these sequences, the normalized matrix $\hat{D}(\theta)\Lambda_{1,n}$ must converge in distribution to a Gaussian matrix of full rank almost surely, and the normalized weight $\Lambda_{1,n}'c\Lambda_{2,n}$ must converge to a nonzero vector.

The convergence part of this assumption can be verified straightforwardly in the context of \cite{kleibergen:2005} as discussed above, who considers the sequence
\begin{equation}
\label{eq:A_shrink}
	\sqrt{n}A_{F_n} \to \mathcal{A} \in \R^{2(K+1)\times 2(M+1)}
\end{equation}
in which case $\Lambda_{1,n} = \sqrt{n}I$ and $\Lambda_{2,n} = \frac{1}{\sqrt{n}}$. More generally, \cite{stock/wright:2000} and \cite{chaudhuri/zivot:2011} consider a sequence
\begin{equation}
\label{eq:A_shrink_2}
	A_{F_n} = (0_{2(K+1)\times q}, A_{\text{full}}) + A_{\text{sing},F_n} \quad \text{and} \quad \sqrt{n}A_{\text{sing},F_n} \to \mathcal{A}
\end{equation}
in which case $A_{\text{full}}$ has full column rank, representing those parameters that are strongly identified. We can set $\Lambda_{1,n} = \diag\{\sqrt{n}I_{q},I_{2(M+1) - q}\}$ and $\Lambda_{2,n} = \frac{1}{\sqrt{n}}$ such that Assumption 4 still holds.

However, the sequences \eqref{eq:A_shrink} and \eqref{eq:A_shrink_2} are inappropriate for the MTE setup. To see this, consider again the example on a linear MTE model with a binary IV as discussed above:
\begin{example}[Linear MTE model with a binary IV]
Suppose $K=M=1$ and set $h_{1}(u) = u - \frac{1}{2}$. Along a sequence of DGPs $\{F_n\}$, let $p_{F_n}(z_0) = p_{F_n}(z_1) = p \in (0,1)$ for each $n\geq 1$. Since $\overline{K} = |\{p_{F_n}(z_0), p_{F_n}(z_1)\}| = 1 < M+1 = 2$, Assumption \ref{asp:identification}.7 fails in this example. Note that the matrix $A_{F_n}$ becomes
\[
	A_{F_n} = 
	\begin{pmatrix}
	\begin{matrix}
	1 & \frac{1}{2} (p - 1) \\
	1 & \frac{1}{2} (p - 1)
	\end{matrix} 
	&
	0_{2\times 2} \\
	0_{2\times 2} & 
	\begin{matrix}
	1 & \frac{1}{2}p \\
	1 & \frac{1}{2}p
	\end{matrix}
	\end{pmatrix}
\]
In this case, neither \eqref{eq:A_shrink} nor \eqref{eq:A_shrink_2} holds here.
\end{example}
More generally, a singular but nonzero limit of $A_{F_n}$ would not satisfy conditions \eqref{eq:A_shrink} or \eqref{eq:A_shrink_2}. Similar examples of weakly identified models that fail to meet these conditions are discussed in \cite{andrews/guggenberger:2017}, where a singular value decomposition (SVD) technique is introduced to establish uniform validity of the classical RLM test for the \textit{full} vector. To achieve the same goal of uniform validity, I generalize their SVD approach to address inference on parameter \textit{functionals} in the MTE framework. My approach proceeds with a SVD of $A_{F_n}$:
\[
	A_{F_n} = C_{F_n}
	\underbrace{
	\begin{bmatrix}
		\begin{pmatrix}
			\tau_{F_n,1} & 0 & \ldots & 0 \\
			0 & \tau_{F_n,2} & \ldots & 0 \\
			\vdots & \vdots & \ddots & \vdots \\
			0 & 0 & \ldots & \tau_{F_n, 2(M+1)}
	\end{pmatrix} \\
	0_{2(K-M)\times 2(M+1)}
	\end{bmatrix}}_{\Pi_{F_n}}
	B_{F_n}',
\]
where $C_{F_n} \in \R^{2(K+1)\times 2(K+1)}$ and $B_{F_n} \in \R^{2(M+1)\times 2(M+1)}$ are orthogonal matrices, and $\infty \geq \tau_{F_n,1} \geq \tau_{F_n,2} \geq  \ldots \geq \tau_{F_n,2(M+1)} \geq 0$ are singular values of $A_{F_n}$ in a descending order. I show that the following normalizing matrices establish convergence of $\hat{D}(\theta)\Lambda_{1,n}$ and $\Lambda_{1,n}'c\Lambda_{2,n}$:
\begin{align*}
	&\Lambda_{1,n} = B_{F_n} \diag\{(\tau_{F_n,1})^{-1},\ldots, (\tau_{F_n,q})^{-1}, \sqrt{n},\ldots, \sqrt{n}\} \\
	&\Lambda_{2,n} = \|\Lambda_{1,n}'c\|^{-1},
\end{align*}
where $q$ is the number of scaled singular values $\{\sqrt{n}\tau_{F_n,j}\}_{j=1}^{2(M+1)}$ that diverge to infinity.

Despite the convergence of $\hat{D}\Lambda_{1,n}$ under the constructed normalizing matrices, its asymptotic limit may be rank deficient. This issue prevents the use of the efficiently weighted projection matrix in the subvector RLM statistic. A similar issue was also noticed by \cite{andrews/guggenberger:2017}, who impose additional parameter space restrictions to avoid this problem.  To address this problem, I modify the matrix $\hat{D}$ by adding a small noise of size $n^{-1/2}$:
\begin{equation}
\label{eq:D_tilde}
	\widetilde{D}(\theta) = \hat{D}(\theta) + \kappa n^{-1/2}\xi,
\end{equation}
where $\kappa > 0$ is a tuning parameter and $\xi \in \R^{2(K+1)\times 2(M+1)}$ is a matrix of i.i.d. standard normal random variables independent of data. This perturbation ensures that $\widetilde{D}(\theta)\Lambda_{1,n}$ achieves full rank almost surely while having asymptotically negligible effects on the test statistic under strong identification.
This perturbation technique draws inspiration from the AR/LM test\footnote{In addition to the modification in \eqref{eq:D_tilde}, D. \citet[eq. (7.11)]{andrews.d:2017} introduces two tuning parameters, $(K_L^*, K_U^*)$, to categorize identification strength. However, this categorization is not required for our inference procedure, and there is no clear guidance for choosing these parameters.}\citep{andrews.d:2017}. While AR/LM test focuses specifically on inference for a subset of parameters, my paper addresses a different problem of inference on a linear functional of parameters.

Define the new matrix under projection as follows:
\[
    \hat{Q}(\theta) = \hat{\Omega}(\theta)^{-1/2}\widetilde{D}(\theta)(\widetilde{D}(\theta)'\hat{\Omega}(\theta)^{-1}\widetilde{D}(\theta))^{-1}c,
\]
and the corresponding modified RLM (MRLM) statistic becomes
\[
    \text{MRLM}_n(\theta) = n(\hat{A}\theta - \hat{\beta})' \hat{\Omega}(\theta)^{-1/2} P_{\hat{Q}(\theta)} \hat{\Omega}(\theta)^{-1/2}(\hat{A}\theta - \hat{\beta}).
\]
Then the modified LC (MLC) statistic is given by
\begin{align}
	\text{MLC}_n(\theta) 
		= \text{MRLM}_n(\theta) + a \cdot \text{AR}_n(\theta). \label{eq:modified_lc}
\end{align}
In section \ref{sec:unifom_size}, I establish the uniform validity of using \eqref{eq:modified_lc} for conducting inference on the linear function $c'\theta$ even if Assumption \ref{asp:identification}.7 might fail or be close to failing. 

\subsection{Implementation of the MLC test}
\label{sec:construct_MLC}
In this section, I describe the implementation of the projection test based on the MLC statistic as below:
\begin{enumerate}
\item Construct the key quantities for the test statistics
	\begin{enumerate}
	\item Construct the estimators of $\hat{p}$ and $\hat{\beta}$, as well as their asymptotic variances estimators $\hat{\Sigma}_p$ and $\hat{\Sigma}_\beta$, as in section \ref{sec:notation}. Plugging in the estimator $\hat{p}$ into the matrix $A$ then obtains the estimator $\hat{A}$.
	\item Define the asymptotic variance estimator of the moment function
	\begin{equation}
	\label{eq:asy_var_est}
	\hat{\Omega}(\theta) = H(\hat{p}, \theta)\hat{\Sigma}_p H(\hat{p}, \theta)' + \hat{\Sigma}_\beta
	\end{equation}
	where
	\[
	H(p,\theta) = 
	\begin{bmatrix}
		\diag\left\{\sum_{m=0}^M \theta_{1m} \lambda_{1m}'({p}(z_\ell)): \ell = 0,1,\ldots, K\right\} \\
		\diag\left\{\sum_{m=0}^M \theta_{0m} \lambda_{0m}'({p}(z_\ell)): \ell = 0,1,\ldots, K\right\}
	\end{bmatrix}.
	\]
	\item For each $m=0,1,\ldots,M$, let 
	\[
		L_m(\hat{p}) = \diag\{\lambda_{1m}'(\hat{p}(z_0)), \ldots, \lambda_{1m}'(\hat{p}(z_K))\}
	\] and  
	\[
		R_m(\hat{p}) = \diag\{\lambda_{0m}'(\hat{p}(z_0)), \ldots, \lambda_{0m}'(\hat{p}(z_K))\}.
	\] 
        For each $j=1,\ldots,2(M+1)$, define a column vector       $\hat{d}_j(\theta)$ and the covariance estimator $\hat{\Gamma}_j(\theta)$ as below
        \begin{align}
	& \hat{d}_j(\theta) \equiv \hat{a}_j - \hat{\Gamma}_j(\theta)\hat{\Omega}(\theta)^{-1}(\hat{A}\theta - \hat{\beta}) \notag \\
	& \hat{\Gamma}_{j}(\theta) = 
			M_{j} (\hat{p})
			\hat{\Sigma}_p 
			H(\hat{p},\theta)' \label{eq:asy_cov_est}
        \end{align}
        where
        \begin{equation}
            \label{eq:defn_M_mat}
            M_j(\hat{p}) = 
        	\begin{cases}
        		\begin{bmatrix}
        			L_{j-1}(\hat{p}) \\ 0_{(K+1)\times (K+1)} 
        		\end{bmatrix}
        		 & \text{if } j \leq M+1 \\
        		\begin{bmatrix}
        			0_{(K+1)\times (K+1)}  \\ R_{j-M-2}(\hat{p})
        		\end{bmatrix}
        		 & \text{if } j > M+1 \\
        	\end{cases}.
        \end{equation}
    Let $\hat{D}(\theta) = (\hat{d}_1(\theta),\ldots,\hat{d}_{2(M+1)}(\theta))$ and $\widetilde{D}(\theta) = \hat{D}(\theta) + \kappa n^{-1/2}\xi$, where $\xi\in \R^{2(K+1)\times 2(M+1)}$ is a matrix of standard normal random variables independent of data. $\kappa$ is a positive tuning parameter chosen as $10^{-6}$ following D. \cite{andrews.d:2017}. 
    \end{enumerate}
    
\item Given the estimators $\hat{A}$, $\hat{\beta}$, $\hat{\Omega}(\theta)$, and $\widetilde{D}(\theta)$ constructed from step 1, compute the modified LC statistics as in equation \eqref{eq:modified_lc} for $\theta \in \Theta$. For the empirical application and simulation studies, I set the AR weight coefficient as 0.05, which is close to the suggested weight $a(\gamma)$ introduced by I. \citet[page 343]{andrews.i:2018} by setting $\alpha = 5\%$, $\gamma = 10\%$, and $K = 15$, consistent with the application considered in section \ref{sec:empirical}.

\item Let $\alpha \in (0,1)$ be the significance level. The modified LC test rejects the null hypothesis $H_0: c'\theta = \lambda$ if the profiled test statistic over the linear manifold $c'\theta = \lambda$ is larger than their respective critical values, i.e.,
	\begin{align*}
		\hat{\phi}_{\text{MLC}}(\lambda) = \indicator\left[\inf_{c'\theta = \lambda} \text{MLC}_n(\theta) > q_{(1+a)\chi_1^2 + a\chi_{2K+1}^2}(1-\alpha)\right]
	\end{align*}
where $q_{(1+a)\chi_1^2 + a\chi_{2K+1}^2}(1-\alpha)$ denotes the $(1-\alpha)$-quantile of the mixture chi-square distribution $(1+a)\chi_1^2 + a\chi_{2K+1}^2$ with $\chi_1^2 \indep \chi_{2K+1}^2$. Additionally, define $\hat{\phi}_{\text{MLC}}(\lambda) = 1$ for $\lambda \not\in \{c'\theta: \theta \in \Theta\}$.
\end{enumerate}
A robust confidence set can be obtained by inverting the test, that is,
\[
	\mathcal{C}_{\text{MLC}} = \{\lambda\in\R: \hat{\phi}_{\text{MLC}}(\lambda) = 0\}.
\]
In practice, I report the convex hull of the confidence set so that it becomes an interval.

\subsection{Uniform validity}
\label{sec:unifom_size}
The following theorem establishes the uniform validity of the MLC test based on the profiling the MLC statistic.
\begin{theorem}
\label{thm:uniform_validity}
Let Assumption \ref{asp:iid} hold, and suppose that the weight $c$ is a nonzero fixed vector. Then we have
\[
	\limsup_{n\to\infty}\sup_{(\lambda,F)\in\mathcal{P}_0} \Exp_F[\hat{\phi}_{\text{MLC}}(\lambda)] \leq \alpha.
\]
\end{theorem}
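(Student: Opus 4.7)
The plan is to establish uniform validity via the standard subsequencing device used throughout the weak-identification literature (see, e.g., \cite{andrews.d:2017} and \cite{andrews/guggenberger:2017}). Take a near-extremal sequence $(\lambda_n, F_n)\in\mathcal{P}_0$ approaching the target limsup; because $\hat\phi_{\text{MLC}}(\lambda)\in\{0,1\}$, it suffices to show that along every further subsequence the rejection probability is at most $\alpha$ asymptotically. Using the compactness of $\Theta$, the uniform bounds in Definition \ref{def:parameter_space}, and the extended-real compactness of the scaled singular values of $A_{F_n}$, I can extract a subsequence along which (i) a true $\theta_n \in \Theta$ with $c'\theta_n = \lambda_n$ converges, (ii) the SVD factors $C_{F_n}, B_{F_n}$ converge, and (iii) each $\sqrt{n}\,\tau_{F_n,j}$ converges in $[0,+\infty]$. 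Let $q$ denote the number of singular values with $\sqrt{n}\,\tau_{F_n,j}\to\infty$.

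With the normalizing matrices $\Lambda_{1,n} = B_{F_n}\diag\{\tau_{F_n,1}^{-1},\ldots,\tau_{F_n,q}^{-1},\sqrt{n},\ldots,\sqrt{n}\}$ and $\Lambda_{2,n}=\|\Lambda_{1,n}'c\|^{-1}$, I would characterize the joint weak limit of $\bigl(\widetilde{D}(\theta_n)\Lambda_{1,n},\ \hat\Omega(\theta_n)^{-1/2}\sqrt{n}(\hat A\theta_n-\hat\beta)\bigr)$ using Lemma \ref{lem:convergence_p_beta} and the delta method applied to $(p,\theta)\mapsto A(p)\theta$. The first $q$ columns of $\widetilde{D}(\theta_n)\Lambda_{1,n}$ concentrate at a deterministic matrix of full column rank coming from the strongly identified directions of $A_{F_n}$; the remaining $2(M+1)-q$ columns, which combine $\sqrt{n}(\hat A - A_{F_n})$ with the perturbation $\kappa\,\xi$, converge jointly to a Gaussian matrix with density, so the limit has almost surely full column rank. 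The orthogonalization in $\hat d_j(\theta) = \hat a_j - \hat\Gamma_j(\theta)\hat\Omega(\theta)^{-1}(\hat A\theta-\hat\beta)$, together with the independence of $\xi$ from the data, makes $\widetilde{D}(\theta_n)\Lambda_{1,n}$ asymptotically independent of the normalized moment vector, which under the null converges to $Z\sim\normal(0,I_{2(K+1)})$.

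Since the rank-one projector $P_{\hat Q(\theta_n)}$ depends on $\hat Q(\theta_n)$ only up to scalar multiplication, multiplying $\hat Q$ by $\Lambda_{2,n}$ leaves the projector unchanged, and the resulting expression involves only the normalized objects $\widetilde D(\theta_n)\Lambda_{1,n}$ and $\Lambda_{1,n}'c\Lambda_{2,n}$ (unit norm by construction). Hence $P_{\hat Q(\theta_n)}$ converges in distribution to a rank-one projector $P_Q$ that is independent of $Z$, and the isotropy of $Z$ yields $\text{MRLM}_n(\theta_n) \Rightarrow Z'P_Q Z \sim \chi_1^2$ and $\text{AR}_n(\theta_n) - \text{MRLM}_n(\theta_n) \Rightarrow Z'(I-P_Q)Z \sim \chi_{2K+1}^2$ independently. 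Writing $\text{MLC}_n(\theta_n) = (1+a)\text{MRLM}_n(\theta_n) + a\bigl[\text{AR}_n(\theta_n)-\text{MRLM}_n(\theta_n)\bigr]$ then gives $\text{MLC}_n(\theta_n)\Rightarrow (1+a)\chi_1^2 + a\chi_{2K+1}^2$ under the null. Combining this limit with $\inf_{c'\theta=\lambda_n}\text{MLC}_n(\theta) \leq \text{MLC}_n(\theta_n)$ and the portmanteau lemma yields the desired upper bound $\alpha$ on the rejection probability along the chosen subsequence.

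The hardest part will be controlling the SVD structure when the singular values $\tau_{F_n,j}$ decay at rates intermediate between $O(1)$ and $O(n^{-1/2})$, and in particular showing that the perturbation $\kappa n^{-1/2}\xi$ simultaneously (i) leaves the $\chi_1^2$ limit of MRLM intact under strong identification, and (ii) restores a full-rank Gaussian block when some $\tau_{F_n,j}$ is exactly zero. This demands delicate rate comparisons between the ``strong'' and ``weak'' blocks of $(\widetilde D'\hat\Omega^{-1}\widetilde D)^{-1}$ after the asymmetric normalization $\Lambda_{1,n}$, extending the SVD arguments of \cite{andrews/guggenberger:2017} from full-vector inference to inference on a scalar linear functional $c'\theta$. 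A secondary difficulty is that $\hat D$ is a generated estimator whose asymptotic independence from the moments must hold uniformly over $\mathcal{P}_0$; this follows from the minimum-distance structure but requires the uniform CLT embedded in Lemma \ref{lem:convergence_p_beta} and the variance bounds in Definition \ref{def:parameter_space}.
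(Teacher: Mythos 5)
Your proposal is correct and follows essentially the same route as the paper's proof: a subsequencing reduction, the SVD normalization $\Lambda_{1,n}=\sqrt{n}B_{F_n}\Psi_{F_n}$ with the unit-normalized weight, the a.s. full-rank Gaussian limit of $\widetilde{D}\Lambda_{1,n}$ secured by the $\kappa n^{-1/2}\xi$ perturbation and its asymptotic independence from the standardized moment, the decomposition $\text{MLC}=(1+a)\text{MRLM}+a(\text{AR}-\text{MRLM})\Rightarrow(1+a)\chi_1^2+a\chi^2_{2K+1}$, and the bound $\inf_{c'\theta=\lambda_n}\text{MLC}_n(\theta)\le\text{MLC}_n(\theta_{F_n})$. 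The technical hurdles you flag (rate comparisons across the strong/weak singular-value blocks and the uniform independence of the generated $\hat{D}$) are exactly what the paper resolves via its Lemmas on the conditioning statistic and the uniform CLT.
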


By this theorem, the MLC test has an asymptotic size less than or equal to the nominal level $\alpha \in (0,1)$ for the parameter space $\mathcal{P}_0$. In other words, the MLC test is uniformly valid regardless of the model's identification status. In Online Appendix \ref{appendix:power_analysis_MLC}, I further demonstrate that the MLC test is consistent for distant (or fixed) alternatives and has comparable power to the asymptotic efficient Wald test under strong identification when the AR weight $a$ is sufficiently small. These findings highlight the usefulness of the MLC test for researchers seeking robust inference against weak identification.

\section{Incorporating Covariates}
\label{sec:covariates}

In this section, I consider a set of covariates $W \in \R^L$ included in the MTE model. I first establish an augmented set of assumptions under which the MTE function can be point identified conditional on $W=w$ for each $w \in \supp(W)$. Then I argue that a commonly used additive separability condition can introduce large bias in the estimand of target causal effects when unobserved heterogeneity interacts with the covariates. To this end, I recommend conducting the robust inference procedure by conditioning on covariates rather than imposing additive separability.

\subsection{Identifying assumptions with covariates}
Consider the following MTE model with an additional set of covariates $W$:
\begin{align*}
    & Y_d = \mu_d(W) + V_d \\
    & D = \indicator[U \leq \nu(W,Z)],
\end{align*}
where $\mu_d(W) \equiv \Exp[Y_d\mid W]$ denotes the conditional expectation of potential outcomes. Similar to Theorem \ref{thm:identification}, we can identify the MTE function $\Exp[Y_1-Y_0\mid F_{U|W}(U|W)=u, W=w]$ under the following assumptions.
\begin{assumption}[MTE model with covariates] 
\hfill \vspace{-1em}
\label{asp:identification_cov}
    \begin{enumerate}
    \item $Z \indep U\mid W$.
    \item $\Exp[Y_d\mid Z, W, U] = \Exp[Y_d\mid W, U]$ and $\Exp|Y_d| < \infty$ for $d \in \{0,1\}$.
    \item $U\mid W = w$ is continuously distributed for all $w \in \supp(W)$.
    \item $0 < \Prob(D=1\mid W=w, Z=z) <1$ for all $(w,z) \in \supp(W, Z)$.
    \item $\Exp[Y_d\mid F_{U\mid W}(U|W) = u, W=w] = \mu_d(w) + \sum_{m=1}^M \rho_{dm}(w) h_m(u)$ for some known continuous functions $\{h_m(\cdot)\}_{m=1}^{M}$ for each $d = 0,1$ and $u\in(0,1)$, where $F_{U\mid W}(\cdot|w)$ denotes the distribution function of $U$ conditional on $W=w$.
    \item $\{\lambda_{1m}(\cdot)\}_{m=0}^{M}$ and $\{\lambda_{0m}(\cdot)\}_{m=0}^M$ are unisolvent on $(0,1)$, where $\lambda_{00}(\cdot) = \lambda_{10}(\cdot) \equiv 1$, and 
    \[
    	\lambda_{1m}(p) \equiv \frac{1}{p}\int_{0}^p h_m(u) du \quad \text{and} \quad \lambda_{0m}(p) \equiv \frac{1}{1-p}\int_{p}^1 h_m(u) du \quad \text{for $m = 1,\ldots,M$}.
    \]
    \item $|\{\Prob(D=1\mid Z=z, W=w): z \in \supp(Z\mid W=w)\}|\geq M+1$ for all $w \in \supp(W)$.
    \end{enumerate}
\end{assumption}
Under the above assumption, we can normalize the distribution of $U$ conditional on the random covariate $W$ to be uniformly distributed over the unit interval (i.e., $F_{U\mid W}(u|W) = u$) and thus $\nu(w,z) = p(w,z)\equiv \Prob(D=1\mid W=w, Z=z)$. The structural parameter 
\[
    \theta(w) = (\mu_1(w), \{\rho_{1m}(w)\}_{m=1}^M, \mu_0(w),\{\rho_{0m}(w)\}_{m=1}^M)
\]
can be identified for all $w\in\supp(W)$ by the following separate regressions:
\begin{equation}
\label{eq:separate_reg}
\begin{aligned}
	\Exp[Y\mid D=1, Z=z, W=w]  
	& = \mu_1(w) + \sum_{m=1}^M \rho_{1m}(w) {\int_0^{p(w,z)} \frac{h_m(u)}{p(w,z)} du} \\
	& =  \mu_1(w) + \sum_{m=1}^M \rho_{1m}(w) \lambda_{1m}(p(w,z))\\
	\Exp[Y\mid D=0, Z=z, W=w] 
	&= \mu_0(w) + \sum_{m=1}^M \rho_{0m}(w) {\int_{p(w,z)}^1 \frac{h_m(u)}{1-p(w,z)} du} \\
	& = \mu_0(w) + \sum_{m=1}^M \rho_{0m}(w) \lambda_{0m}(p(w,z)). 
\end{aligned}
\end{equation}

For some $w\in \supp(W)$, the variation of $\{p(w,z): z\in\supp(Z\mid W=w)\}$ can be weak in practice, especially when conditioning on certain groups of individuals who have very high (or low) probability of being treated. 
To address this problem, researchers commonly impose the additive separability condition as follows.
\begin{assumption}[Additive separability]
    \label{asp:add_sep}
    The MTR function is linear and additively separable in covariates and selection unobservable. That is, for $d = 0,1,$
    \[
        \Exp[Y_d\mid W, U] = \mu_d + W'\tau_d + \Exp[V_d\mid U].
    \]
\end{assumption}
This assumption eliminates heterogeneous effects of covariates $W$ interacted with selection unobservable $U$ on potential outcomes, i.e., $\rho_{dm}(w)$ does not vary with $w$. Up to a linear term $W'\tau_d$, the covariate $W$ is mean independent of $Y_d$ conditional on $U$. This enables point identification of MTEs on the unconditional support of propensity scores $p(W,Z)$ by using variation from covariates in addition to exogenous variation from discrete instruments \citep[Section I.B]{carneiro/heckman/vytlacil:2011}. Next, I show that the failure of Assumption \ref{asp:add_sep} can lead to a biased estimator of treatment effects unless $W$ is uncorrelated with the treatment and propensity score. Given this result, I consider Assumption \ref{asp:add_sep} to be a strong assumption and implement inference in the empirical analysis by conditioning on covariates rather than relying on additive separability.

\subsection{Bias from additive separability}
\label{sec:bias_add_sep}
I show the population bias of treatment effect estimands for a specific class of DGP as follows:
\begin{equation}
\label{eq:DGP_for_bias}
    \Exp[Y_d \mid W = w, U = u] = \mu_d + w'\tau_d + \rho_d(w) h(u),
\end{equation}
where $\rho_d(w) = \rho_d + w'\eta_d$ for $d = 0,1$, and $h(\cdot)$ is strictly increasing and integrates to zero over $[0,1]$. This model can be viewed as an MTE model satisfying Assumption \ref{asp:identification_cov} with $M=1$ and with $\mu_d(w)$ linear in $w$. A related specification of this form has appeared in \citet[equation (12)]{kline/walters:2019} under additive separability, where $\rho_d(w)$ is treated as constant in $w$. So the model \eqref{eq:DGP_for_bias} can be considered as a direct generalization of their setting by allowing the effect of $W$ to vary with the selection unobservable $U$, therefore potentially violating the additive separability Assumption \ref{asp:add_sep}. Moreover, the bias derived under the specific model class \eqref{eq:DGP_for_bias} indicates that the worst-case bias becomes even larger when extending to a broader class of models where additive separability does not hold.

To obtain correct estimates on the model parameters, one should implement separate regressions as follows:
\begin{equation}
\label{eq:longreg}
\begin{aligned}
    &\Exp[Y\mid D=1, W=w, P=p] = \mu_1 + w'\tau_1 + \rho_1 \lambda_1(p) + [w\lambda_1(p)]'\eta_1 \\
    &\Exp[Y\mid D=0, W=w, P=p] = \mu_0 + w'\tau_0 + \rho_0 \lambda_0(p) + [w\lambda_0(p)]'\eta_0
\end{aligned}
\end{equation}
where $P = \Prob(D=1\mid Z, W)$ is the propensity score, and
\[
    \lambda_1(p) = \frac{1}{p} \int_0^p h(u) du
    \quad \text{and} \quad
    \lambda_0(p) = \frac{1}{1-p} \int_p^1 h(u) du.
\]

However, suppose a researcher mistakenly imposes additive separability. Thus $\rho_d(w)$ is considered as a constant in the model with $\eta_0 = \eta_1 = 0_{L\times 1}$. As a result, shorter regressions with omitted interaction terms $W\lambda_d(P)$ will be implemented: 
\begin{equation}
\label{eq:shortreg}
\begin{aligned}
    &Y = \tilde{\mu}_1 +  W'\tilde{\tau}_1 + \tilde{\rho}_1 \lambda_1(P) + Y^{\perp W, \lambda_1(P)\mid D=1}  &\text{conditional on } D = 1\\
    &Y = \tilde{\mu}_0 +  W'\tilde{\tau}_0 + \tilde{\rho}_0 \lambda_0(P) + Y^{\perp W, \lambda_0(P)\mid D=0} &\text{conditional on } D = 0
\end{aligned}
\end{equation}
where $Y^{\perp X\mid D=d}$ denotes the OLS residual of $Y$ from regressing $Y$ on $(1,X)$ conditional on subsamples $D=d$:
\[
    Y^{\perp X\mid D=d} \equiv Y - \tilde{X}'\Exp[\tilde{X}\tilde{X}'\mid D=d]^{-1}\Exp[\tilde{X}Y\mid D=d].
\]
where $\tilde{X} = (1,X')'$.

The next lemma compares the true parameters $(\rho_d, \tau_d)$, obtained as estimands from the correctly specified model in \eqref{eq:longreg}, with those from the misspecified regressions in \eqref{eq:shortreg}. 

\begin{lemma}
    \label{lem:bias_formula_general}
    Under the DGP \eqref{eq:DGP_for_bias}, the bias on $\rho_d$ equals 
    \begin{equation}
    \label{eq:bias_rho_general}
        \tilde{\rho}_d - \rho_d = \frac{\cov(W'\lambda_d(P), \lambda_d(P)^{\perp W\mid D=d} \mid D=d)}{\var(\lambda_d(P)^{\perp W\mid D=d}\mid D=d)} ~~\eta_d
    \end{equation}
    and the bias on $\tau_d$ equals 
    \begin{equation}
    \label{eq:bias_tau_general}
        \tilde{\tau}_d - \tau_d = \Exp[(W^{\perp \lambda_d(P)\mid D=d})(W^{\perp \lambda_1(P)\mid D=d})'\mid D=d]^{-1}\Exp[(W^{\perp \lambda_d(P)\mid D=d})(W'\lambda_d(P))\mid D=d]~\eta_d
    \end{equation}
\end{lemma}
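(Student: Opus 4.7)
The plan is to treat this as a standard omitted-variable-bias calculation and obtain both formulas via two applications of the Frisch–Waugh–Lovell (FWL) theorem, once for $\tilde\rho_d$ and once for $\tilde\tau_d$. Under the DGP \eqref{eq:DGP_for_bias}, taking conditional expectations in the $D=d$ subsample shows that the correctly specified long regression is exactly \eqref{eq:longreg}, i.e.
\begin{equation*}
\Exp[Y\mid D=d, W=w, P=p] = \mu_d + w'\tau_d + \rho_d\lambda_d(p) + (w'\eta_d)\lambda_d(p).
\end{equation*}
Thus the short regression \eqref{eq:shortreg} is the long regression with the interaction vector $W\lambda_d(P)$ (entering with coefficient $\eta_d$) omitted, so the bias is precisely the projection of $W'\eta_d\lambda_d(P)$ on the retained regressors in the $D=d$ subsample.

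First I would derive \eqref{eq:bias_rho_general}. By FWL applied to the short regression \eqref{eq:shortreg}, the population coefficient on $\lambda_d(P)$ is
\begin{equation*}
\tilde\rho_d = \frac{\cov\bigl(Y,\ \lambda_d(P)^{\perp W\mid D=d}\mid D=d\bigr)}{\var\bigl(\lambda_d(P)^{\perp W\mid D=d}\mid D=d\bigr)}.
\end{equation*}
Substituting the long-regression decomposition for $Y$ and using the defining orthogonality $\lambda_d(P)^{\perp W\mid D=d}\perp(1,W)$ in the $D=d$ subsample kills the $\mu_d$ and $W'\tau_d$ contributions, while $\cov(\lambda_d(P),\lambda_d(P)^{\perp W\mid D=d}\mid D=d)=\var(\lambda_d(P)^{\perp W\mid D=d}\mid D=d)$ by the projection identity. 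This leaves $\tilde\rho_d = \rho_d + \cov(W'\lambda_d(P)\eta_d,\lambda_d(P)^{\perp W\mid D=d}\mid D=d)/\var(\lambda_d(P)^{\perp W\mid D=d}\mid D=d)$, which after pulling $\eta_d$ outside the covariance yields \eqref{eq:bias_rho_general}.

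Next I would derive \eqref{eq:bias_tau_general} by the symmetric FWL argument, now partialling out $(1,\lambda_d(P))$ from $W$ within the $D=d$ subsample. The short-regression coefficient on $W$ equals
\begin{equation*}
\tilde\tau_d = \Exp\bigl[W^{\perp\lambda_d(P)\mid D=d}(W^{\perp\lambda_d(P)\mid D=d})'\mid D=d\bigr]^{-1}\Exp\bigl[W^{\perp\lambda_d(P)\mid D=d}\,Y\mid D=d\bigr].
\end{equation*}
Substituting the long-regression expression for $Y$ and using that $W^{\perp\lambda_d(P)\mid D=d}$ is orthogonal to $(1,\lambda_d(P))$ in the $D=d$ subsample eliminates the $\mu_d$, $\rho_d\lambda_d(P)$, and residual pieces, while $\Exp[W^{\perp\lambda_d(P)\mid D=d}W'\mid D=d]\tau_d = \Exp[W^{\perp\lambda_d(P)\mid D=d}(W^{\perp\lambda_d(P)\mid D=d})'\mid D=d]\tau_d$ by the same projection identity, yielding $\tau_d$ plus the stated bias term involving $W'\lambda_d(P)\eta_d$.

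I do not expect any serious obstacle: the whole argument is algebraic bookkeeping around two FWL partiallings. The only care points are (i) doing all projections and moments conditional on $D=d$, since the samples for the two separate regressions are different, and (ii) keeping the interaction term $W\lambda_d(P)$ treated as the omitted regressor (a vector) so that $\eta_d$ can be factored cleanly out of the covariance. Everything else reduces to applying $\Exp[\epsilon\mid D=d,W,P]=0$ for the regression residual and the standard orthogonality of partialled residuals to the partialling-out regressors.
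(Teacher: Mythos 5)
Your proposal is correct and follows essentially the same route as the paper: two applications of the Frisch--Waugh--Lovell theorem within the $D=d$ subsample, one partialling out $W$ to read off $\tilde\rho_d$ and one partialling out $\lambda_d(P)$ to read off $\tilde\tau_d$, then substituting the long-regression decomposition of $Y$ and invoking the orthogonality of partialled residuals together with $\Exp[\epsilon_d\mid D=d,W,P]=0$. The only cosmetic difference is that you write the numerator of $\tilde\rho_d$ as $\cov(Y,\lambda_d(P)^{\perp W\mid D=d}\mid D=d)$ where the paper uses $\cov(Y^{\perp W\mid D=d},\lambda_d(P)^{\perp W\mid D=d}\mid D=d)$; these coincide by the same orthogonality, so nothing is lost.
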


The bias formulas are direct consequences of Frisch–Waugh–Lovell (FWL) theorem. It shows that the degree of treatment effects heterogeneity of the covariate $W$, measured by $\eta_d$, has a nontrivial impact on the bias of estimands. 

The bias formulas \eqref{eq:bias_rho_general} and \eqref{eq:bias_tau_general} simplify further under the following additional assumptions:
\begin{assumption}
    \label{asp:weak_exogeneous_covariates}
    Under the DGP \eqref{eq:DGP_for_bias}, the following conditions hold:
    \begin{enumerate}
        \item $W$ is uncorrelated with the control functions of propensity score conditional on treatment status: For $d = 0,1,$
        \begin{align*}
        & \cov(W, \lambda_d(P)\mid D=d) = 0_{L\times 1} \\  &\cov(WW',\lambda_d(P)\mid D=d) = 0_{L\times L} \\
        &\cov(W, \lambda_d(P)^2\mid D=d) = 0_{L\times 1}.
        \end{align*}
        \item $W$ is uncorrelated with treatment: $\Exp[W\mid D=1] = \Exp[W\mid D=0].$
    \end{enumerate}
\end{assumption}
Assumptions \ref{asp:weak_exogeneous_covariates}.1 posits that the covariate $W$ and the control function $\lambda_d(P)$ are uncorrelated up to their second moment, given the treatment or control groups. This assumption is notably strong, suggesting that the covariate $W$ cannot cause significant variation on the propensity scores within both groups. However, the bias of treatment effect estimands is shown to persist even under this stringent assumption. Only if researchers are willing to assume that $W$ is completely ``irrelavant''  under an additional Assumption \ref{asp:weak_exogeneous_covariates}.2, the misspecified model \eqref{eq:shortreg} would yield correct estimands of average treatment effects and the slope of MTEs as the ones obtained from the true model \eqref{eq:longreg}.

I focus on the bias for ATE and slope of MTE curve because the shape of the unconditional MTE, $\Exp[Y_1-Y_0\mid U=u]$, is uniquely determined by these quantities. I also consider the conditional ATE that captures  treatment effect heterogeneity induced by covariates. These parameters are defined below:
\begin{align*}
    & \text{ATE} = \Exp[Y_1 - Y_0] = \mu_1 - \mu_0 + \Exp[W'(\tau_1 - \tau_0)] \\
    & \text{CATE} = \Exp[Y_1 - Y_0\mid W=w] = \mu_1 - \mu_0 + w'(\tau_1 - \tau_0) \\
    & \text{Slope} = \Exp[\rho_1(W) - \rho_0(W)] = \rho_1 - \rho_0 + \Exp[W'(\eta_1 - \eta_0)].
\end{align*}
Under (misspecified) additive separability, the researcher may estimate these effects by 
\begin{align*}
    & \widetilde{\text{ATE}} = \tilde{\mu}_1 - \tilde{\mu}_0 + \Exp[W'(\tilde{\tau}_1 - \tilde{\tau}_0)] \\
    & \widetilde{\text{CATE}} = \tilde{\mu}_1 - \tilde{\mu}_0 + w'(\tilde{\tau}_1 - \tilde{\tau}_0) \\
    & \widetilde{\text{Slope}} = \tilde{\rho}_1 - \tilde{\rho}_0.
\end{align*}
The next theorem shows the difference between the causal parameters and their corresponding estimands under Assumption \ref{asp:weak_exogeneous_covariates}.
\clearpage
\begin{theorem}
Under the model \eqref{eq:DGP_for_bias}, 
    \label{thm:bias_formula_specific}
    \begin{enumerate}
        \item Suppose Assumption \ref{asp:weak_exogeneous_covariates}.1 holds, then 
        \begin{align*}
            \widetilde{\text{ATE}} - \text{ATE} 
            &= (\Exp[W] - \Exp[W\mid D=1])'\eta_1\times \Exp[\lambda_1(P)\mid D=1] \\
                &\quad - (\Exp[W] - \Exp[W\mid D=0])'\eta_0 \times \Exp[\lambda_0(P)\mid D=0] \\
            \widetilde{\text{CATE}} - \text{CATE}
            &= (w - \Exp[W\mid D=1])'\eta_1\times \Exp[\lambda_1(P)\mid D=1] \\
                &\quad - (w - \Exp[W\mid D=0])'\eta_0 \times \Exp[\lambda_0(P)\mid D=0]  \\
            \widetilde{\text{Slope}} - \text{Slope}
            &= \left(\Exp[W\mid D=1] - \Exp[W\mid D=0]\right)'(\Prob(D=0) ~ \eta_1 + \Prob(D=1) ~ \eta_0).
        \end{align*}
        \item Suppose Assumption \ref{asp:weak_exogeneous_covariates}.1 and \ref{asp:weak_exogeneous_covariates}.2 hold, then $\widetilde{\text{ATE}} = \text{ATE}$ and 
        $\widetilde{\text{Slope}} = \text{Slope}$.
    \end{enumerate}
\end{theorem}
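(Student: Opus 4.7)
The plan is to pipe Lemma \ref{lem:bias_formula_general} through the restrictions in Assumption \ref{asp:weak_exogeneous_covariates}, derive the induced bias on the intercept $\tilde\mu_d$ that does not appear in Lemma \ref{lem:bias_formula_general}, and then aggregate the component biases into the target formulas for ATE, CATE, and Slope. First, Assumption \ref{asp:weak_exogeneous_covariates}.1 simplifies the projection residuals appearing in the lemma. Since $\cov(W,\lambda_d(P)\mid D=d)=0_{L\times 1}$, the population OLS residuals collapse to mean-centered quantities, $W^{\perp \lambda_d(P)\mid D=d}=W-\Exp[W\mid D=d]$ and $\lambda_d(P)^{\perp W\mid D=d}=\lambda_d(P)-\Exp[\lambda_d(P)\mid D=d]$. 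Expanding $\cov(W'\lambda_d(P),\lambda_d(P)\mid D=d)$ and invoking additionally $\cov(W,\lambda_d(P)^2\mid D=d)=0$ yields $\cov(W'\lambda_d(P),\lambda_d(P)\mid D=d)=\Exp[W\mid D=d]'\var(\lambda_d(P)\mid D=d)$, which when plugged into \eqref{eq:bias_rho_general} produces $\tilde\rho_d-\rho_d=\Exp[W\mid D=d]'\eta_d$. A parallel manipulation of \eqref{eq:bias_tau_general}, now using $\cov(WW',\lambda_d(P)\mid D=d)=0_{L\times L}$ to reduce the numerator to $\var(W\mid D=d)\Exp[\lambda_d(P)\mid D=d]$, cancels the denominator and delivers $\tilde\tau_d-\tau_d=\Exp[\lambda_d(P)\mid D=d]\eta_d$.

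Second, I recover the bias on the intercept from the OLS identity $\tilde\mu_d=\Exp[Y\mid D=d]-\Exp[W\mid D=d]'\tilde\tau_d-\Exp[\lambda_d(P)\mid D=d]\tilde\rho_d$. Computing $\Exp[Y\mid D=d]$ under the DGP \eqref{eq:DGP_for_bias} and using the first covariance restriction to replace $\Exp[W'\lambda_d(P)\mid D=d]$ by $\Exp[W\mid D=d]'\Exp[\lambda_d(P)\mid D=d]$, then substituting the biases on $\tilde\tau_d$ and $\tilde\rho_d$ derived above, the $W$-centric and $\lambda_d(P)$-centric cross terms combine to give $\tilde\mu_d-\mu_d=-\Exp[W\mid D=d]'\Exp[\lambda_d(P)\mid D=d]\eta_d$.

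Third, the three component biases are assembled into the target formulas. For the ATE bias, writing $\widetilde{\text{ATE}}-\text{ATE}=(\tilde\mu_1-\mu_1)-(\tilde\mu_0-\mu_0)+\Exp[W]'[(\tilde\tau_1-\tau_1)-(\tilde\tau_0-\tau_0)]$ and substituting, the contribution of each treatment state consolidates into the common factor $(\Exp[W]-\Exp[W\mid D=d])'\eta_d\,\Exp[\lambda_d(P)\mid D=d]$, matching the advertised expression; CATE is identical with $\Exp[W]$ replaced by $w$. For the Slope bias, $\widetilde{\text{Slope}}-\text{Slope}=(\tilde\rho_1-\rho_1)-(\tilde\rho_0-\rho_0)-\Exp[W]'(\eta_1-\eta_0)$, and applying $\Exp[W]=\Prob(D=1)\Exp[W\mid D=1]+\Prob(D=0)\Exp[W\mid D=0]$ reorganizes the difference into $(\Exp[W\mid D=1]-\Exp[W\mid D=0])'(\Prob(D=0)\eta_1+\Prob(D=1)\eta_0)$. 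Part 2 is then immediate: Assumption \ref{asp:weak_exogeneous_covariates}.2 forces $\Exp[W\mid D=1]=\Exp[W\mid D=0]=\Exp[W]$, which zeroes out the common factor in the ATE bias and kills the difference in the Slope bias.

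The main obstacle I anticipate is the intercept-bias step combined with the algebraic reduction of $\cov(W'\lambda_d(P),\lambda_d(P)\mid D=d)$: it is precisely the third covariance condition in Assumption \ref{asp:weak_exogeneous_covariates}.1, about the second moment of the control function, that makes this reduction exact rather than approximate, and it is precisely the intercept bias $-\Exp[W\mid D=d]'\Exp[\lambda_d(P)\mid D=d]\eta_d$ that cancels the slope-bias contribution in just the right way to produce the clean centering structure $(\Exp[W]-\Exp[W\mid D=d])$ in ATE and $(w-\Exp[W\mid D=d])$ in CATE. The remaining bookkeeping is routine FWL algebra.
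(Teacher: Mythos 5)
Your proposal is correct and follows essentially the same route as the paper's proof: specialize the residuals in Lemma \ref{lem:bias_formula_general} using the covariance restrictions to get $\tilde\tau_d-\tau_d=\Exp[\lambda_d(P)\mid D=d]\,\eta_d$ and $\tilde\rho_d-\rho_d=\Exp[W\mid D=d]'\eta_d$, back out the intercept bias $\tilde\mu_d-\mu_d=-\Exp[W\mid D=d]'\eta_d\,\Exp[\lambda_d(P)\mid D=d]$ from the OLS mean identity, and aggregate. The component biases and the final bookkeeping all match the paper's eq.\ \eqref{eq:bias_tau}--\eqref{eq:bias_mu} exactly.
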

\begin{remark}
    While the estimand of the ATE may remain unbiased under both conditions outlined in Assumption \ref{asp:weak_exogeneous_covariates}, it is important to note that the bias on CATE does not necessarily disappear under such assumption. Consequently, researchers should be cautious about interpreting CATE estimates in short regressions \eqref{eq:shortreg} even if they have justified the validity of Assumption \ref{asp:weak_exogeneous_covariates}.
\end{remark}
From this theorem, it becomes evident that the bias on ATE and the slope of MTE are driven by two main factors: the magnitude of $\eta_d$, which measures the heterogeneous effects of $W$, and the extent to which the covariate $W$ is unbalanced between the treatment and control groups. Therefore, the bias on those estimands can be quite significant if we omit heterogeneous effects of covariates that vary with unobserved heterogeneity $U$. In Online Appendix \ref{appendix:numerical_example}, I provide a numerical example illustrating that such bias can even alter the sign of the ATE estimand when additive separability is mistakenly imposed.

Due to potential bias from misspecified additive separability, Online Appendix \ref{sec:sidak_correction} shows how to apply the proposed methodology when conditioning on a set of discrete covariates using the Šidák–Bonferroni correction. For researchers who wish to impose additive separability in MTE models, Online Appendix \ref{sec:inference_addsep} provides guidelines on how to extend the robust inference procedure to that setting.

\section{Monte Carlo Simulation}
\label{sec:simulation}
In this section, I compare the finite-sample performance of the proposed MLC test with the classical Wald test using simulated data from a simple quadratic MTE model with a three-valued instrument. The power comparison between conditional Wald tests, MLC tests, and classical Wald tests in a linear MTE model can be found in Online Appendix \ref{appendix:add_simulation}.

Consider a DGP specified as follows: For each $d = 0,1,$
\begin{align*}
	& Y_d = \mu_d + V_d \\
	& D = \indicator[U \leq p(Z)] \\
	& V_d = \rho_{d1}\left(U - \frac{1}{2}\right) + \rho_{d2}\left(U^2 - \frac{1}{3}\right) + e_d,
\end{align*}
where $Z$ is uniformly distributed over three points $\{z_0,z_1,z_2\}$ and is independent of $(U, e_1, e_0)$. The error terms $(e_1, e_0)$ follow a joint normal distribution with zero mean and covariance matrix $\Sigma_e = 0.5 \cdot I_{2\times 2}$. For simplicity, I set $\mu_1 = \mu_0 = 0$ and impose strong endogeneity by specifying
\[
	\rho_{11} = \rho_{12} = -5 \quad \text{ and } \quad \rho_{01} = \rho_{02} = 5.
\] 
Regarding the specification of the propensity score $p(z)$ for $z \in \{z_0,z_1,z_2\}$, I fix $p(z_0) = 0.5$ and let $(p(z_1), p(z_2))$ vary across $(0.05,0.95)^2$ to generate a variety of degrees/directions of weak identification. By drawing 2,000 i.i.d. simulation samples, I implement the Wald test and the MLC test with AR weight $a = 0.05$ to assess the null hypothesis on testing ATE $H_0: \mu_1 - \mu_0 = 0$. The average null rejection rates based on 5\% significance level are displayed in Figure \ref{fig:3dsize}.

\begin{figure}[h!]

  \centering	
  
  \caption{3D Plot of Empirical Size for MLC and Wald Tests}
  
\includegraphics[width=0.95\textwidth]{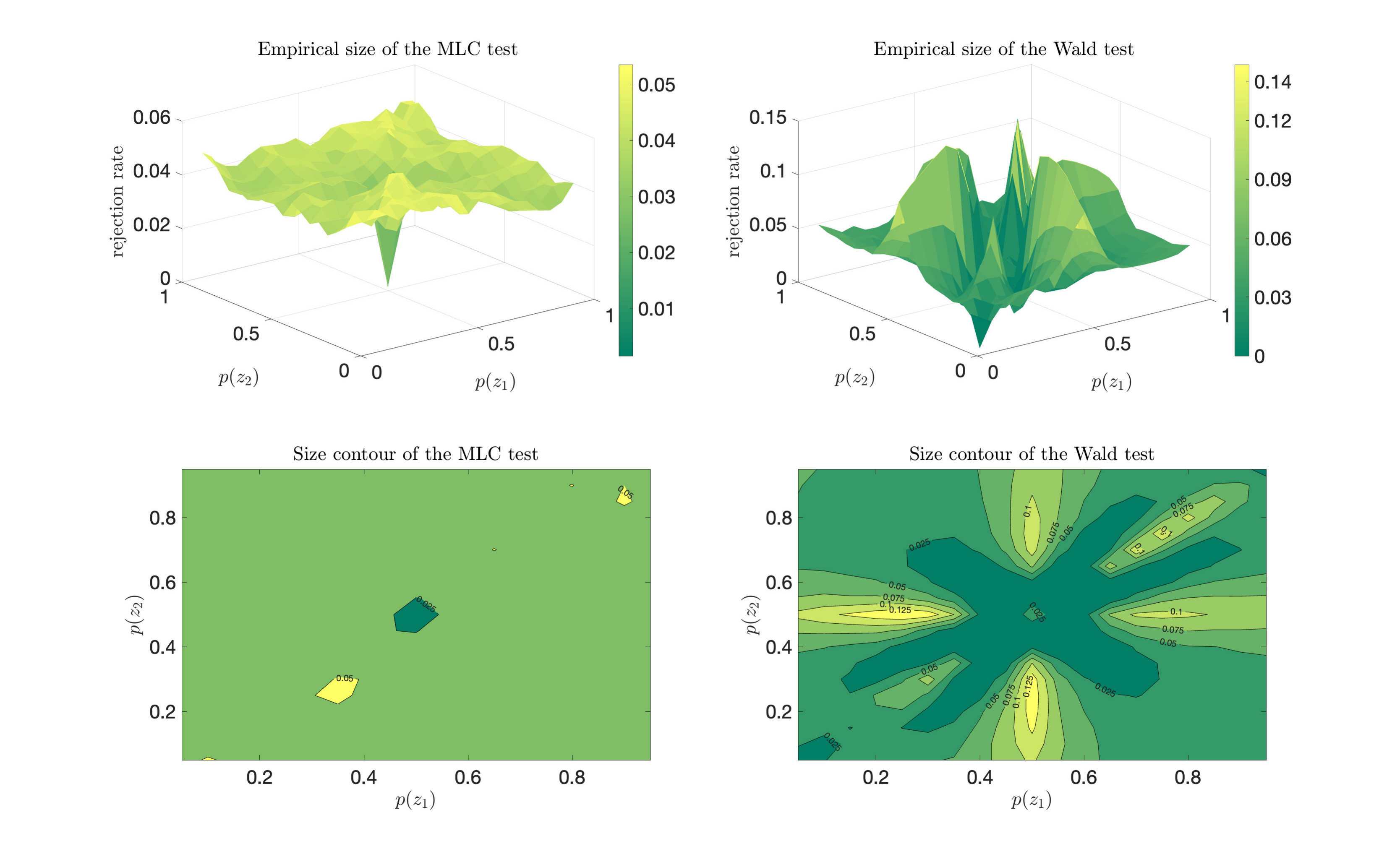}

\begin{tablenotes}
    \footnotesize \raggedright
    Note: The plots show empirical rejection rates for testing the null hypothesis $H_0: \text{ATE} = 0$ at the 5\% significance level. The upper panel presents three-dimensional surfaces of rejection rates, while the lower panel shows the corresponding contour plots. Each plot varies $p(z_1)$ and $p(z_2)$ across $(0.05,0.95)$ with $p(z_0)$ fixed at 0.5. The contour lines correspond to rejection rates of 2.5\%, 5\%, 7.5\%, 10\%, and 12.5\%, with regions between these lines representing intermediate rejection rates. Darker regions indicate lower rejection rates.
\end{tablenotes}
    
\label{fig:3dsize}
    
\end{figure}

The null rejection rates reveal distinct patterns in tests performance across propensity score variation. The conventional Wald test exhibits under-rejection when the three propensity scores are similar, indicating trivial power in this weakly identified scenario to be shown below. When the propensity scores cluster at two points (i.e., when either $p(z_1)$ or $p(z_2)$ equals $p(z_0) = 0.5$), the ATE parameter becomes partially identified, and the Wald test's rejection rates reach to 14\%, substantially exceeding the nominal 5\% significance level. This demonstrates the Wald test's invalidity under identification failure. In contrast, the proposed MLC test maintains proper size control across all propensity score combinations, demonstrating its robustness to weak identification.

\begin{figure}[htbp]
  \centering
  
  \caption{Power Curves of the MLC and Wald Test}
  
  \begin{subfigure}{0.95\textwidth}
  	\centering
  	\caption{$p(z) = [0.2,0.5,0.8]$}
    \includegraphics[width=0.5\linewidth]{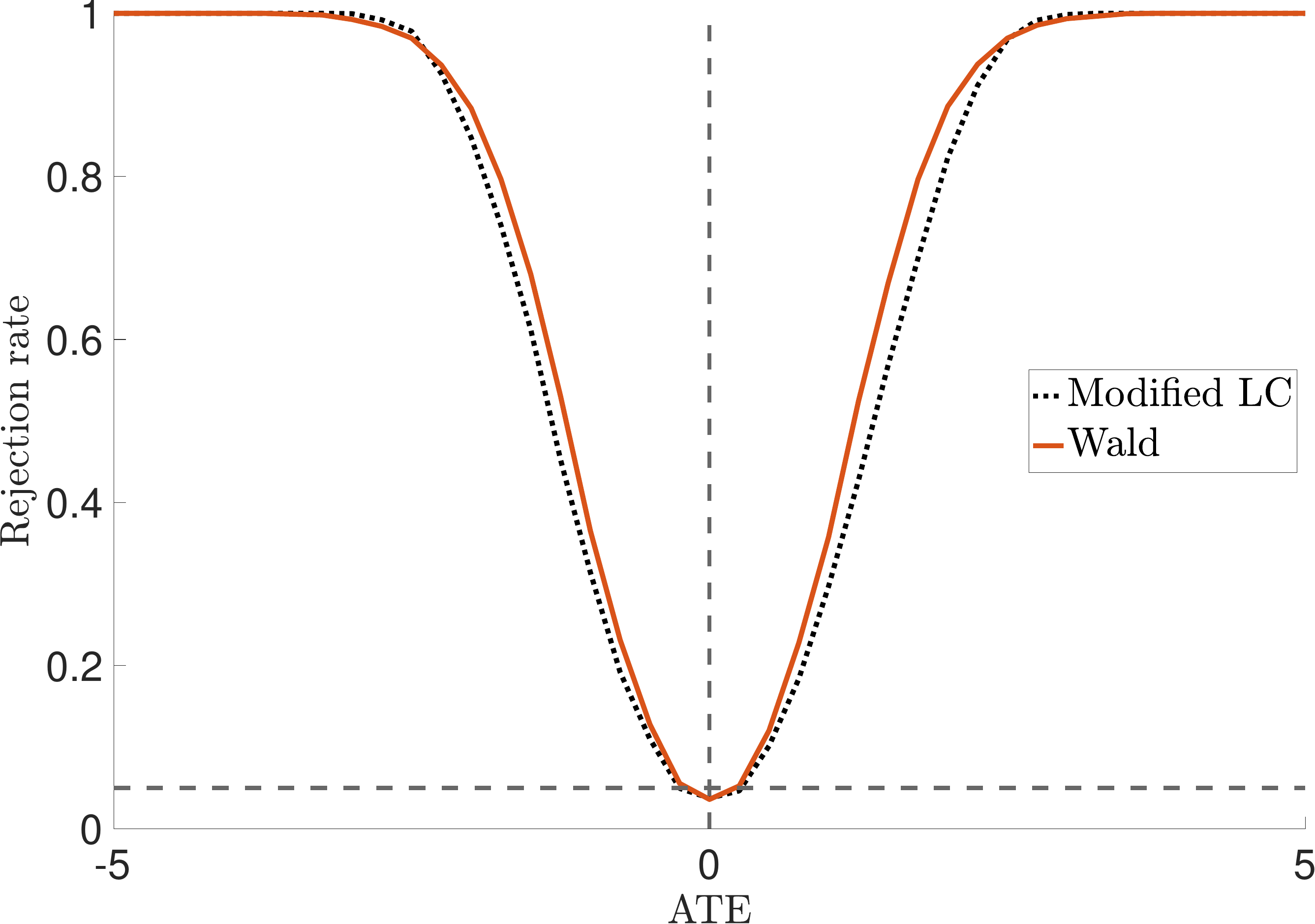}
    \label{fig:subplot1}
  \end{subfigure}
  
  \bigskip
  
  \begin{subfigure}{0.95\textwidth}
  	\centering	
  	\caption{$p(z) = [0.2,0.5,0.5]$}
    \includegraphics[width=0.5\linewidth]{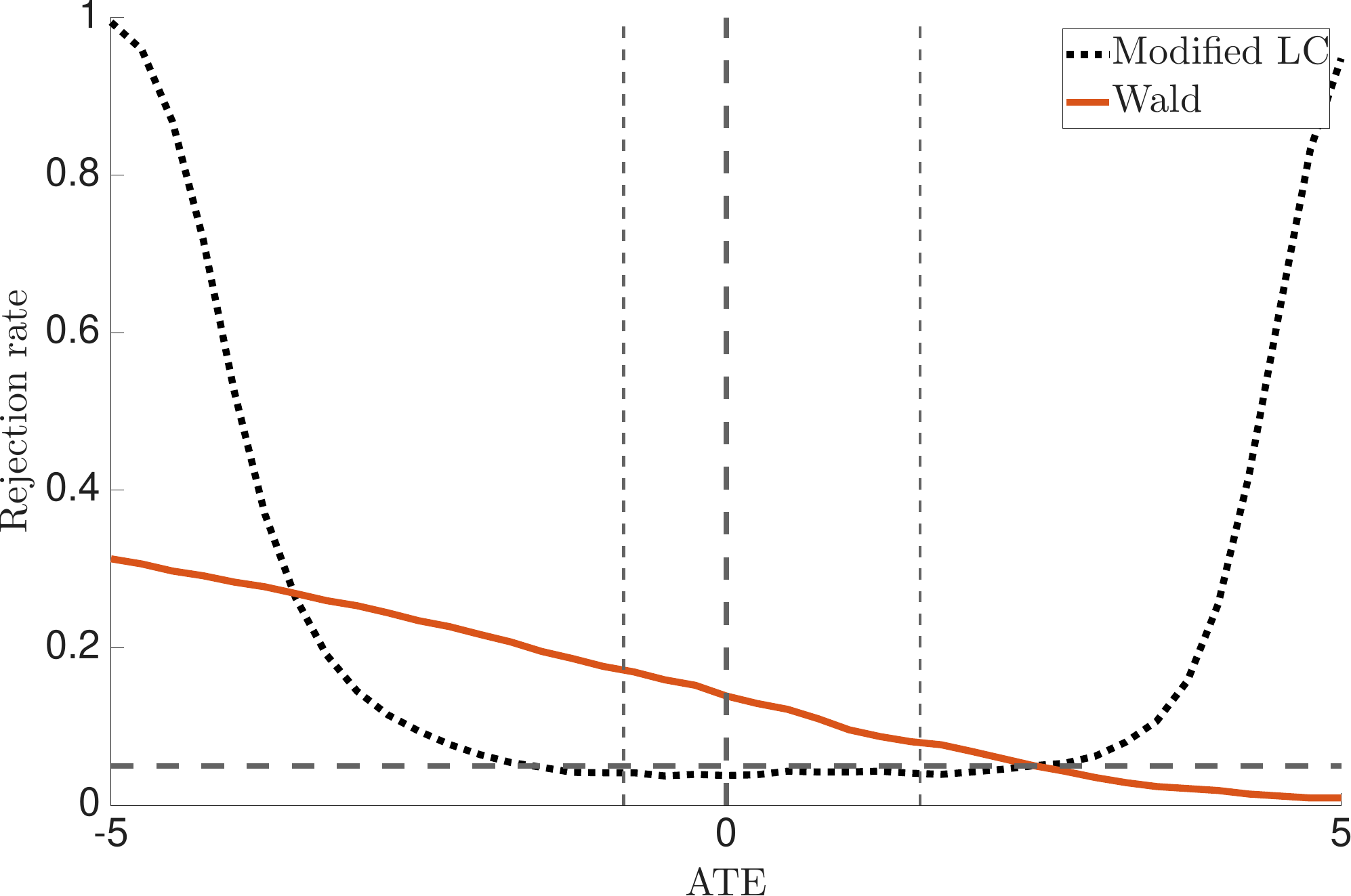}
    \label{fig:subplot2}
  \end{subfigure}
  
  \bigskip
  
  \begin{subfigure}{0.95\textwidth}
  	\centering
  	\caption{$p(z) = [0.4,0.5,0.6]$}
    \includegraphics[width=0.5\linewidth]{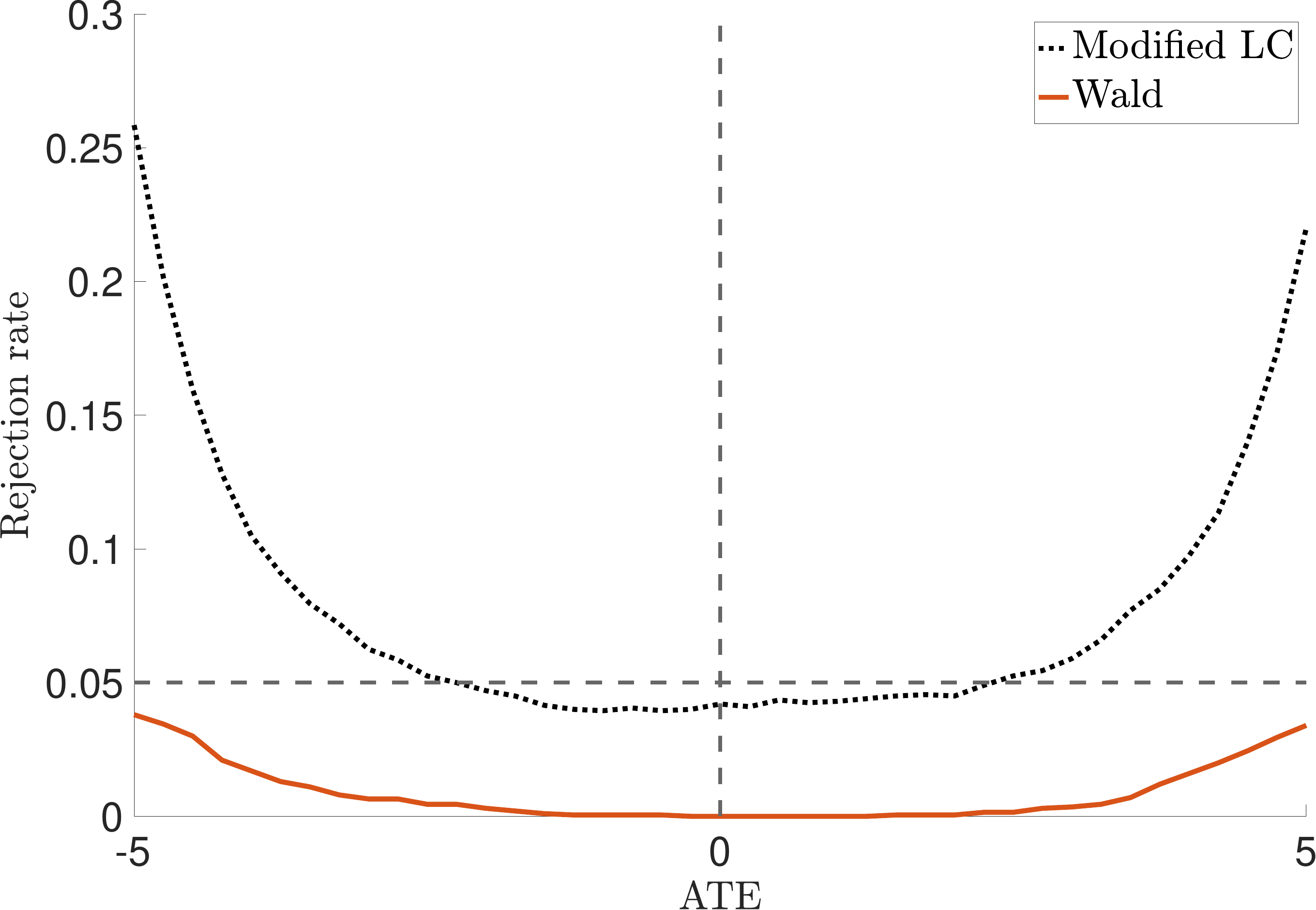}
    \label{fig:subplot3}
  \end{subfigure}
  
  \begin{tablenotes}
  \footnotesize \raggedright
      Note: {Testing ATE at values on $[-5,5]$ with the true effects fixed at zero. The significance level is set at 5\%. The sample size equals 2,000 and the average rejection rates are computed with 2,000 independent Monte-Carlo simulations. The two dashed lines around zero in subfigure (b) characterize the identified set of ATE by imposing a parameter space $[-10,10]$ for each element of $\theta$.}
  \end{tablenotes}
  
  \label{fig:power_curve_ATE}
  
\end{figure}
  
Figure \ref{fig:power_curve_ATE} compares the power of the MLC and Wald tests across different identification scenarios. In Panel (a), where propensity scores are well-separated (strong identification), the Wald test achieves asymptotic efficiency, and the MLC test delivers comparable rejection power. In Panel (b), where the instrument takes only two distinct values (partial identification), the Wald test exhibits not only size distortion at the null but also power loss at alternatives. In contrast, the MLC test maintains correct size and almost surely rejects distant alternatives outside the identified set. In Panel (c), where propensity scores are clustered together (weak identification), the Wald test's rejection rates fall consistently below the nominal level, indicating negligible power, while the MLC test retains nontrivial power at distant alternatives.

\section{Empirical Application to Misdemeanor Prosecution}
\label{sec:empirical}
In this section, I revisit the empirical analysis by \cite{agan/doleac/harvey:2023}, who examined the causal effects of misdemeanor prosecution on defendants' subsequent criminal activity. Based on the quasi-randomized assignment of nonviolent misdemeanor cases to assistant district attorneys (ADAs), their LATE and MTE estimates demonstrate that nonprosecution leads to a large reduction in the likelihood of defendants' future criminal involvement over the next two years. To study the policy effects of increasing nonprosecution, they analyzed the impact of imposing a presumption of misdemeanor nonprosecution by relying on an existing policy change issued by a new district attorney. In this section, I use their data to answer some policy-relevant questions concerning the exogenous change on ADA nonprosecution rates. My approach does not require the additional data or information related to an actual policy that is already implemented, but instead extrapolates causal effects to address this issue. 

Similar to their analysis, I use the MTE framework to extrapolate treatment effects outside compliers. However, my goal is to analyze causal effects of implementing several counterfactual policies on ADA leniency, which were not explored in their empirical studies. To adapt their data into the framework considered in this paper, I implement the proposed inference procedure conditional on each court and use ADAs' identity as the discrete instrument since ADAs are randomly assigned conditional on courts and time. To avoid bias introduced by many instruments (see the references in \cite{mikusheva/sun:2022} for further discussions), I combine ADAs into a total of 15 groups for each court, excluding the smaller courts (BRI, CHE, and CHA) due to insufficient ADA counts.

Let the treatment $D$ be the nonprosecution status of defendants (that takes value 1 if the defendant is not prosecuted) and the outcome $Y$ be the indicator of subsequent criminal complaints within two years postarraignment. To validate the strong overlap condition on propensity scores, I focus on ADAs that handle more than 50 cases and have nonprosecution rate at least 0.025 within each court. The summary statistics of ADAs' nonprosecution rates conditional on each court are provided in Table \ref{tab:sum_stat} below. From this table we can see that the range of propensity scores varies from 0.21 to 0.50. When employing a high-order polynomial MTE model such as the cubic specification in Figure IV of \cite{agan/doleac/harvey:2023} in this setup, the finite-sample performance of confidence sets may be too poor to guarantee valid coverage. Specifically, in Online Appendix \ref{appendix:iv_strength}, I show the evidence of weak instruments in cubic and quartic MTE models, while the classical $F$ test (with threshold 10) does not deliver the same conclusion. The reason is that $F$ statistic aims to detect deviations from the null where all propensity scores are equal to each other. However, this is not sufficient to strongly identify MTE models with flexible structure that require three or more propensity score to be well separated from each other (see Figure \ref{fig:intuition_plot}).

\begin{table}[htbp]
  \centering
  \caption{Summary Statistics of Nonprosecution Rates across Courts}
    \begin{tabular}{lccccc}
    \toprule
    \multirow{2}[2]{*}{Court} & \multirow{2}[2]{*}{Number of ADA} & \multicolumn{3}{c}{Nonprosecution Rate } & \multirow{2}[2]{*}{Sample Size} \\
          &       & Min   & Mean  & Max   &  \\
    \midrule
    SBO   & 16    & 0.04  & 0.24  & 0.54  & 3,921 \\
    EBOS  & 22    & 0.08  & 0.32  & 0.53  & 7,566 \\
    WROX  & 43    & 0.05  & 0.37  & 0.55  & 8,905 \\
    BMC   & 65    & 0.04  & 0.17  & 0.4   & 9,593 \\
    ROX   & 66    & 0.06  & 0.16  & 0.33  & 13,333 \\
    DOR   & 73    & 0.04  & 0.13  & 0.43  & 13,523 \\
    CHA   & 4     & 0.09  & 0.24  & 0.42  & 362 \\
    CHE   & 12    & 0.07  & 0.16  & 0.3   & 872 \\
    BRI   & 5     & 0.14  & 0.29  & 0.48  & 885 \\
          &       &       &       &       &  \\
    Total & 262   & 0.04  & 0.22  & 0.55  & 58,960 \\
    \bottomrule
    \end{tabular}%
    
    \begin{tablenotes}
        \footnotesize\raggedright
        Notes: Courts include South Boston (SBO), East Boston (EBOS), West Roxbury (WROX), Central (BMC), Roxbury (ROX), Dorchester (DOR), Charlestown (CHA), Chelsea (CHE), and Brighton (BRI).
    \end{tablenotes}
  \label{tab:sum_stat}%
\end{table}%

For all the counterfactual experiments, I consider an exogenous change to the ADA nonprosecution rates. Let $p^\epsilon(z)$ be the counterfactual propensity score of not prosecuting defendants, where $\epsilon$ is a nonnegative scalar (or vector) denoting the degree of deviation from status quo. I compare the induced outcome $Y^\epsilon$, defined as 
\[
		Y^{\epsilon} = Y_1 \indicator[U \leq p^{\epsilon}(Z)] + Y_0 \indicator[U > p^{\epsilon}(Z)]
\]
with the observed outcome $Y$ in the data.

Similar to the policy invariance assumption imposed by \cite{heckman/vytlacil:2005}, suppose that this counterfactual change on propensity score does not shift the distribution of potential outcomes and unobserved cost $U$ of being selected into treatment, therefore omitting any ``general equilibrium'' effects that may arise after the change of prosecution rates. I consider three different comparisons between the counterfactual outcome $Y^\epsilon$ and the observed outcome $Y$:
\begin{enumerate}
\item Non-normalized policy effects:
\begin{equation}
\label{eq:policy_effects}
	\alpha(\epsilon) \equiv \Exp[Y^\epsilon - Y].
\end{equation} 
This criterion follows from \cite{heckman/vytlacil:2001}. The term ``non-normalized'' distinguishes these effects from the policy-relevant treatment effects discussed below.

\item Policy-relevant treatment effects (PRTE)
\begin{equation}
\label{eq:norm_policy_effects}
	 \bar{\alpha}(\epsilon) \equiv \frac{ \Exp[Y^\epsilon - Y]}{\Exp[p^{\epsilon}(Z) - p(Z)]}.
\end{equation}
This criterion follows from \cite{heckman/vytlacil:2005}. This quantity can be interpreted as the treatment effect for units shifted into treatment via the counterfactual policy. Note that the additive/proportional PRTEs suggested by \cite{carneiro/heckman/vytlacil:2010,carneiro/heckman/vytlacil:2011} and \cite{mogstad/torgovitsky:2018} are special cases of $\bar{\alpha}(\epsilon)$ for particular choices of $p^{\epsilon}$, as described in equations \eqref{eq:PRTE-additive} and \eqref{eq:PRTE-prop} below. 

\item Marginal policy-relevant treatment effects (MPRTE)
\begin{equation}
\label{eq:marginal_policy_effects}
	\alpha_{+}(0) \equiv \lim_{\epsilon \searrow 0} \frac{ \Exp[Y^\epsilon - Y]}{\Exp[p^{\epsilon}(Z) - p(Z)]}.
\end{equation}
This criterion follows from \cite{carneiro/heckman/vytlacil:2010}. When $p^{\epsilon}(z) \geq p(z)$ for all $z\in\supp(Z)$, this parameter can be interpreted as the average treatment effects for marginal defendants at the edge of being prosecuted. The equivalence between average MTE and MPRTE $\alpha_+(0)$ has been established by \cite{carneiro/heckman/vytlacil:2010}. 
\end{enumerate}

Next, I discuss the counterfactual propensity scores of interests and present confidence sets for these policy effects.

\subsection{Marginal policy relevant treatment effects}
\label{sec:PRTE}
First, suppose policymakers increase the ADAs' nonprosecution rates up to a positive constant $\epsilon$ or to a proportion $\epsilon$, where $\epsilon > 0$.  Consider the additive or proportional marginal PRTE $\alpha_+(0)$ under the new policy:
\begin{align}
	\label{eq:PRTE-additive}
	\text{Additive change:} \quad & p^{\epsilon}(z) = p(z) + \epsilon, \\
	\label{eq:PRTE-prop}
	\text{Proportional change:} \quad  & p^{\epsilon}(z) = (1 + \epsilon) p(z).
\end{align}

\begin{table}[h!]
  \centering
	\caption{95\% (top) and 90\% (bottom) Confidence Sets for Additive MPRTE ${\alpha}_+(0)$}
  \begin{tabular}{lccc}
  \toprule 
		      &	MTE polynomial & Wald & MLC \\
		      \midrule
    Uncond.     & Linear & [-0.19, -0.11] & [-0.17, -0.13] \\
    Uncond.     & Quadratic & [-0.19, -0.11] & [-0.18, -0.13] \\
    Uncond.     & Cubic & [-0.17, -0.06] & [-0.15, -0.08] \\
    Uncond.     & Quartic & [-0.18, -0.05] & [-0.17, -0.06] \\
                &       &       &  \\
    Average     & Linear & [-0.24, -0.03] & [-0.23, -0.03] \\
    Average     & Quadratic & [-0.29, -0.01] & [-0.33, 0.09] \\
    Average     & Cubic & [-0.32, 0.03] & [-0.79, 0.59] \\
    Average     & Quartic & [-0.39, 0.01] & [-0.76, 0.61] \\
		      \midrule
    Uncond.     & Linear & [-0.18, -0.12] & [-0.16, -0.14] \\
    Uncond.     & Quadratic & [-0.19, -0.12] & [-0.17, -0.14] \\
    Uncond.     & Cubic & [-0.17, -0.07] & [-0.14, -0.09] \\
    Uncond.     & Quartic & [-0.17, -0.06] & [-0.13, -0.08] \\
                &       &       &  \\
    Average     & Linear & [-0.22, -0.04] & [-0.21, -0.06] \\
    Average     & Quadratic & [-0.27, -0.04] & [-0.28, 0.01] \\
    Average     & Cubic & [-0.29, -0.00] & [-0.60, 0.34] \\
    Average     & Quartic & [-0.35, -0.02] & [-0.63, 0.38] \\
		\bottomrule
	\end{tabular}
    
    \begin{tablenotes}
    \footnotesize \raggedright
      Note: {Results for the classical Wald and MLC confidence sets under polynomial specifications up to the fourth order. The counterfactual policy considered is a uniformly additive increase in nonprosecution rates by a marginal amount. The top panel reports the 95\% confidence sets, and the bottom panel reports the 90\% confidence sets. The “Uncond.” confidence sets are obtained without controlling for court identities, whereas the “Average’’ confidence sets are weighted averages of the court-specific confidence sets, with weights proportional to court size.}
    \end{tablenotes}
    \label{tab:Avg_A-MPRTE}
\end{table}

Table \ref{tab:Avg_A-MPRTE} collects the 95\% and 90\% confidence sets on marginal policy effects $\alpha_+(0)$ for the additive leniency increase in equation \eqref{eq:PRTE-additive} by letting $\epsilon$ approach zero. The ``unconditional'' confidence sets are obtained without controlling for court identities, whereas the ``average'' confidence sets are weighted average of court-specific confidence sets weighted by the size of courts. The differences between these two sets of results highlight the importance of adjusting for court identity as a confounding variable. Incorporating court identity generally increases the uncertainty of causal effect predictions. 

Comparing average Wald confidence sets with average MLC confidence sets, the results show that weak identification is a potential concern for models with polynomial orders higher than cubic. In such cases, robust confidence sets may not be informative about the sign of causal effects due to the inability to estimate a highly flexible model by using limited variation of propensity scores. Although the ranges of both confidence sets are similar under the quadratic specification, Wald confidence sets remain negatively significant, whereas MLC confidence sets lose such significance due to weak identification. In Online Appendix \ref{appendix:iv_strength} (Table \ref{tab:IVstrength_AMPRTE}), I evaluate the strength of identification for the marginal PRTEs based on the additive change in \eqref{eq:PRTE-additive}. The findings further confirm the presence of weak identification for MTE models with cubic or quartic orders.

In addition to these aggregated results, Table \ref{tab:A-MPRTE} in Online Appendix \ref{appendix:add_empirical_results} provides confidence sets conditional on each court, revealing significant heterogeneity in the causal impacts of increasing ADAs' leniency across courts. It is worth noting that increasing ADA leniency might be harmful to court ROX, indicating that defendants may be more likely to be deterred from increasing punishment after prosecution, rather than committing to further crimes. Similar results for proportional marginal PRTE based on \eqref{eq:PRTE-prop} can be found in Table \ref{tab:P-MPRTE} in Online Appendix \ref{appendix:add_empirical_results}.

\subsection{Quota}
Instead of mandating all ADAs to increase their nonprosecution rates simultaneously, it might be more convenient for policymakers to set up lower and upper bounds for ADA nonprosecution rates. For example, let $\epsilon = (\underline{\epsilon}, \bar{\epsilon}) \in [0,1]^2$ with $\underline{\epsilon} \leq \bar{\epsilon}$ denoting the lower and upper bounds of ADAs' nonprosecution rate, respectively. Then the counterfactual propensity score becomes
\[
	p^\epsilon(z) = \min(\max\{\underline{\epsilon}, p(z)\},  \bar{\epsilon})
\]
Since this counterfactual propensity score is non-smooth in $p(z)$ which may invalidate the inferential results, I employ a smooth approximation as follows:
\[
	p^\epsilon(z; \phi) = -\frac{1}{\phi}\log\left(\frac{1}{e^{\phi p(z)} + e^{\phi \underline{\epsilon}}} + \frac{1}{e^{\phi \bar{\epsilon}}}\right).
\]
One can show that $p^\epsilon(z; \phi) \to p^\epsilon(z)$ as $\phi \to \infty$. In practice, I set $\phi = 30$ to approximate the counterfactual propensity score $p^\epsilon(z)$. I analyze the (non-)normalized policy effects when policymakers implement a lower bound $\underline{\epsilon} \in [0.05, 0.3]$ while maintaining $\bar{\epsilon} = 1$ for each court. For notational consistency, I use $\epsilon$ to denote the value of this lower bound instead of $\underline{\epsilon}$.

\begin{figure}[h!]
\centering

\caption{95\% (top) and 90\% (bottom) Average Confidence Sets for Non-normalized Policy Effects $\alpha(\epsilon)$ when Setting a Lower Bound $\underline{\epsilon}$ for Nonprosecution Rate}
	\centering
	\includegraphics[width = \textwidth]{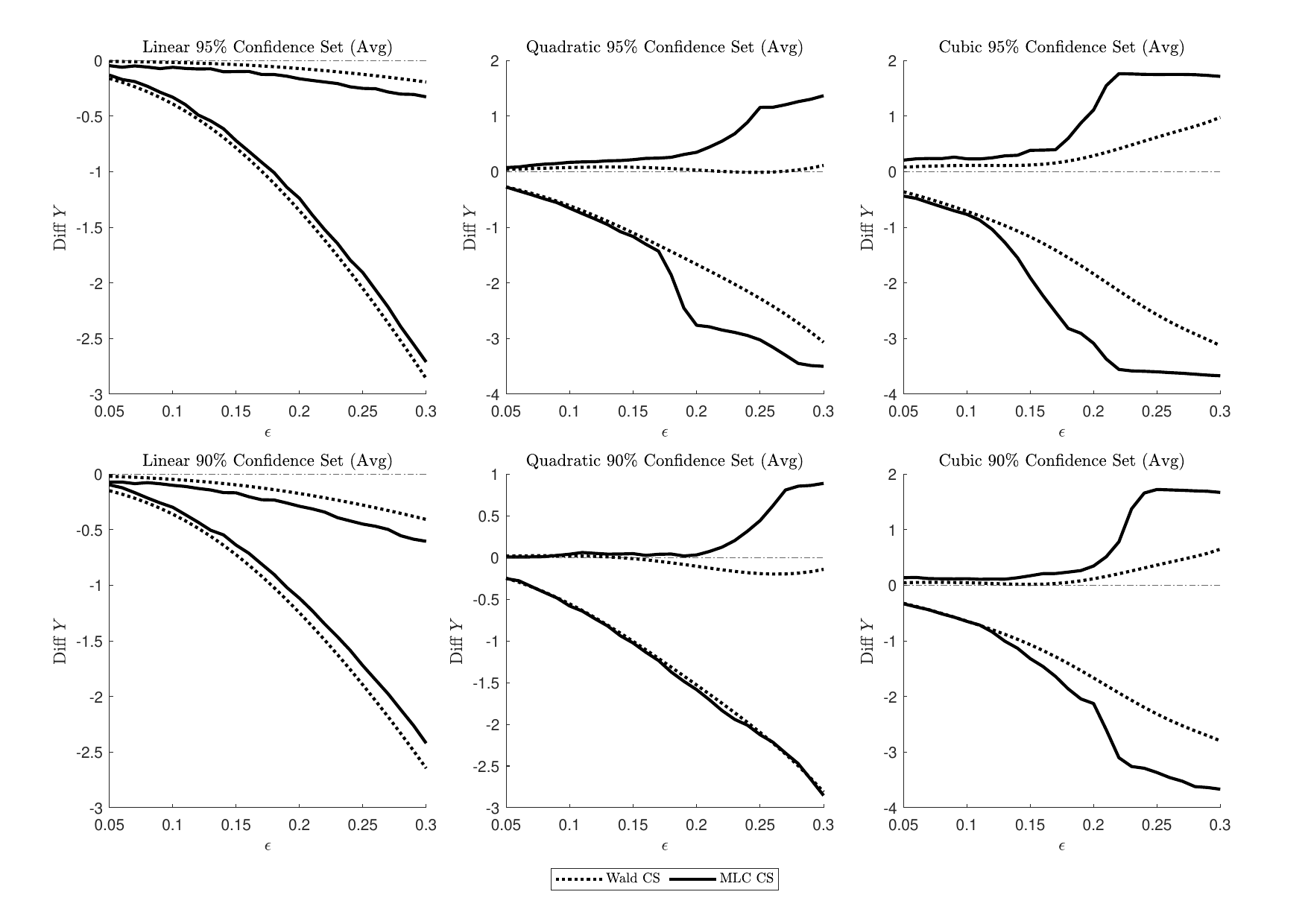}

    \begin{tablenotes}
    \footnotesize \raggedright
      Note: {Results for the classical Wald and MLC confidence sets under polynomial specifications up to the third order. The counterfactual policy under consideration is a universal lower bound on the nonprosecution rates shown on the $x$-axis. The $y$-axis depicts changes in recidivism rates (in percentage points). The top row presents the 95\% confidence sets, and the bottom row presents the 90\% confidence sets. All confidence sets are computed as weighted averages of the court-specific confidence sets, with weights proportional to court size.}
    \end{tablenotes}
    
\label{fig:quota_level_average}

\end{figure}

Figure \ref{fig:quota_level_average} displays the average confidence sets for the non-normalized policy effects $\alpha(\epsilon)$ of implementing a lower bound on nonprosecution rates. Under linear extrapolation, the results suggest that imposing a lower quota on nonprosecution is beneficial. However, this evidence becomes less conclusive with quadratic and cubic specifications as the lower bound ${\epsilon}$ increases, reflecting limitations of available exogenous variation at higher extrapolation levels. The Wald confidence sets appear ``spuriously precise'' by failing to account for this limited IV variation. Consequently, policy recommendations derived from the robust MLC approach may differ from those based on the classical Wald approach. For instance, based on the 90\% confidence sets from the quadratic MTE model, policymakers concerned with robustness to weak IVs may prefer setting the minimum nonprosecution rate below 20\%, since the reductions in recidivism become insignificant above this threshold under robust confidence sets. This sign reversal could potentially be driven by the fact that defendants with greater risk of recidivism are more likely to be released as the lower bound of nonprosecution rate increases.

\section{Conclusion}
\label{sec:conclusion}
In this paper, I propose two robust inference procedures for a class of causal effects identified by MTEs with discrete instruments. For the linear MTE model, I introduce a conditional Wald test, which is simple to implement and asymptotically similar regardless of identification strength. For a broader class of MTE models that nest polynomial specifications, I propose a modified linear combination test that achieves uniform validity against weak identification and has satisfactory power properties under strong identification. Finally, I use the proposed methods to investigate the counterfactual effects of manipulating ADAs' leniency in the study of misdemeanor prosecution by \cite{agan/doleac/harvey:2023}.

There are several avenues for future research. First, while this paper focuses on using discrete variation from instruments, extending the weak IV analysis to a semiparametric MTE model with continuous propensity scores would be empirically relevant. In addition, the MLC test requires the choice of tuning parameter on the weight of AR statistics, which trades off the power of test under different identification strengths (see more discussion in Online Appendix \ref{appendix:power_analysis_MLC}). It would be interesting to investigate the optimal choice of this tunning parameter following the regret analysis of \cite{andrews:2016}. Finally, extending robust inference techniques to accommodate many weak instruments would be valuable, especially considering the typically large number of judges involved in empirical studies \citep{jochmans:2023}.

\putbib
\end{bibunit}

\appendix

\begin{bibunit}
\section{Proofs for results in main text}
\label{sec:main_proofs}

\subsection{Notation}
Throughout the appendix, we employ the notation in Table \ref{tab:notation}, which was not necessarily introduced in the text.  In particular, the subscript $F$ emphasizes dependence on the distribution $F$, and the numerical subscripts (such as $d$, $\ell$, and $k$) following the subscript $F$ refer to the corresponding index of the parameter vector.

\begin{table}[htbp]
\centering
\caption{Important notation}
\label{tab:notation}
\begin{tabular}{p{3cm}p{0.7\textwidth}}
\toprule
$q_F(z_\ell)$ 						& $\Prob_F(Z=z_\ell)$ \\
$q_F(d, z_\ell)$ 					& $\Prob_F(D=d, Z=z_\ell)$ \\
$p_F(z_\ell)$ 						& $\Prob_F(D=1\mid Z=z_\ell)$ \\
$p_F$								& $(p_F(z_0), p_F(z_1), \ldots, p_F(z_K))'$ \\
$A_F$								& The matrix of propensity scores, defined in equations \eqref{eq:mateq} and \eqref{eq:linsystem} \\
$\beta_{F,d\ell}$					& $\Exp_F[Y\mid D=d, Z=z_\ell]$ \\
$\beta_{F,d}$						& $(\beta_{F,d0}, \ldots, \beta_{F,dK})'$ \\
$\beta_F$							& $(\beta_{F,1}', \beta_{F,0}')'$ \\
$\sigma_{F,d\ell}^2$				& $\var_F(Y\mid D=d, Z=z_\ell)$ \\
$\{\tau_{j,F}\}_{j=1}^{2(M+1)}$		& The singular values of $A_F$ (in a descending order) \\
$\hat{Z}_p, \hat{Z}_{\beta_d}, \hat{Z}_{q}$ & The normalized estimators defined in \eqref{eq:clt_p_beta} \\
$\mathcal{Z}_p, \mathcal{Z}_{\beta_d}, \mathcal{Z}_q$ & The asymptotic distribution of the normalized statistics \\
\bottomrule
\end{tabular}
\end{table}

In addition, let $\mathbb{S}_{++}^{k}$ denote the space of positive definite square matrices with $k$ rows (columns). Throughout the proof, we abbreviate ``weak law of large numbers'' as ``WLLN'' and ``central limit theorem'' as ``CLT'' for triangular arrays. Both theorems are stated in Lemmas \ref{lem:wlln_array} and \ref{lem:clt_array}.

\subsection{Proof of Proposition \ref{prop:AR_validity} and Theorem \ref{thm:validity_cond_wald}}
The proof of asymptotic similarity of the AR and conditional tests in Proposition \ref{prop:AR_validity} and Theorem \ref{thm:validity_cond_wald} uses the sub-sequencing techniques from \cite{andrews/cheng/guggenberger:2020}. Specifically, I verify their Assumption B\textsuperscript{*} in the following Proposition \ref{prop:subsequence_AR_validity}. 

\begin{proposition}
\label{prop:subsequence_AR_validity}
For any subsequence $\{p_n\}$ of $\{n\}$ and any sequence $\{(\lambda_{p_n}, F_{p_n}) \in \mathcal{P}_0\}$ for which 
\begin{enumerate}
	\item $\theta_{F_{p_n}} \to \theta_\infty \in \Theta$, where $\theta_{F_{p_n}}$ is a sequence of parameters such that $(\theta_{F_{p_n}}, F_{p_n}) \in \mathcal{P}$ and $\lambda_{{p_n}} = c'\theta_{F_{p_n}}$;
	\item $\beta_{F_{p_n}, d\ell} \to \beta_{\infty,d\ell}$ for $d = 0,1$ and $\ell = 0,1,\ldots,K$;
	\item $\sigma^2_{F_{p_n}, d\ell} \to \sigma^2_{\infty, d\ell}$ for $d = 0,1$ and $\ell = 0,1,\ldots,K$;
	\item $p_{F_{p_n}}(z_\ell) \to p_\infty(z_\ell)$ for all $\ell = 0,1,\ldots,K$;
	\item $q_{F_{p_n}}(z_\ell) \to q_{\infty}(z_\ell)$ for all $\ell = 0,1,\ldots,K$;
\end{enumerate}
The following result holds:
\begin{enumerate}
\item[(a)] the convergence of the type-I error rate of AR test:
\begin{equation*}
    \lim_{n\to\infty} \Prob_{F_{p_n}}\left(\text{AR}_{p_n,k}(\lambda_{p_n}) > q_{\chi_1^2}(1-\alpha)\right) = \alpha, \text{ for } k = 1,\ldots,K,
\end{equation*}
\item[(b)] 
Suppose further that 
\[
    \sqrt{p_n}\|\pi_{F_{p_n}}\| \to \psi_\infty\in [0,\infty], \text{~and~} \pi_{F_{p_n}}/\|\pi_{F_{p_n}}\| \to \iota_\infty,
\] %
where $\iota_\infty$ is a unit vector, then we have the convergence of the type-I error rate of the conditional test
\begin{equation*}
    \lim_{n\to\infty} \Prob_{F_{p_n}}\left(W_{p_n}(\lambda_{p_n}) > q_{W^*}(1-\alpha)\right) = \alpha.
\end{equation*}
\end{enumerate}
\end{proposition}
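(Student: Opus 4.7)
\textbf{Proof plan for Proposition \ref{prop:subsequence_AR_validity}.}

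The overall strategy is to verify the subsequence-based sufficient condition of \citet[Assumption B*]{andrews/cheng/guggenberger:2020} by computing the pointwise limit of the rejection probability along an arbitrary subsequence in the parameter space that satisfies the stated convergence conditions. The structure will mirror the two parts of the statement: a standard CLT-plus-Slutsky argument for the AR test, and a conditioning argument (in the style of \citet{moreira:2003} and \citet{andrews/mikusheva:2016b}) for the conditional Wald test.

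For part (a), the plan is to apply a first-order asymptotic linearization to $\hat{g}_k(\lambda_{p_n})$. Since $\lambda_{p_n} = c'\theta_{F_{p_n}}$ satisfies $g_k(\lambda_{p_n}) = 0$ at the truth, the centered sample moment admits the expansion
\[
\sqrt{p_n}\,\hat{g}_k(\lambda_{p_n}) = \partial_{p'}g_k(\lambda_{p_n})\sqrt{p_n}(\hat{p}-p_{F_{p_n}}) + \sum_{d=0,1}\partial_{\beta_d'}g_k(\lambda_{p_n})\sqrt{p_n}(\hat{\beta}_d - \beta_{F_{p_n},d}) + o_{p}(1),
\]
where the coefficients are given in \eqref{eq:fixed_weight_expansion}. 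Lemma \ref{lem:convergence_p_beta}(a) supplies a joint CLT for $(\hat{p},\hat{\beta})$ along the subsequence, and the stated convergence conditions 2--5 pin down limits for the gradient coefficients via continuity. Combining these yields $\sqrt{p_n}\,\hat{g}_k(\lambda_{p_n}) \xrightarrow{d} \mathcal{N}(0, s_{\infty,k}^2)$. Lemma \ref{lem:convergence_p_beta}(b) and Lemma \ref{lem:pd_variance} then give $\hat{s}_k^2(\lambda_{p_n}) \xrightarrow{p} s_{\infty,k}^2$ with $s_{\infty,k}^2 > 0$ (the nondegeneracy uses the parameter space restrictions $\epsilon \le q_F, p_F(z) \le 1-\epsilon$ and $\sigma^2_{F,d\ell}\ge\epsilon$, together with $c\neq 0$). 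Slutsky and the continuous mapping theorem then give $\text{AR}_{p_n,k}(\lambda_{p_n}) \xrightarrow{d} \chi_1^2$, yielding the claim.

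For part (b), the key is to work out the joint asymptotic distribution of $(\sqrt{p_n}\hat{g}(\lambda_{p_n}), \sqrt{p_n}(\hat{\pi} - \pi_{F_{p_n}}))$ and then separately trace out the nuisance parameter $\pi_\infty^\star \equiv \lim \sqrt{p_n}\pi_{F_{p_n}} = \psi_\infty\iota_\infty$ (allowing $\psi_\infty = \infty$). By the same CLT machinery, this pair converges jointly to a centered Gaussian vector $(G,H)$ with a covariance that Lemma \ref{lem:convergence_p_beta} lets us estimate consistently via $\hat{S}(\lambda_{p_n})$ and the cross-covariance between $\sqrt{p_n}\hat{g}$ and $\sqrt{p_n}\hat{\pi}$. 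Writing $\hat{h}(\lambda_{p_n})$ in the form given in the text and inserting the linearization, one finds that $\hat{h}(\lambda_{p_n}) - \sqrt{p_n}\pi_{F_{p_n}}$ is an orthogonalization of $\sqrt{p_n}(\hat{\pi}-\pi_{F_{p_n}})$ with respect to $\sqrt{p_n}\hat{g}(\lambda_{p_n})$, so in the limit $\hat{h} - \sqrt{p_n}\pi_{F_{p_n}}$ is asymptotically independent of $\sqrt{p_n}\hat{g}(\lambda_{p_n})$. Substituting $\sqrt{p_n}\hat{\pi} = \hat{h} + [\partial_p\pi]\hat{\Sigma}_p[\partial_{p'}\hat{g}]\hat{S}^{-1}\sqrt{p_n}\hat{g}$ into $W_{p_n}(\lambda_{p_n})$ shows that, conditional on $\hat{h}(\lambda_{p_n})$, the limiting distribution of $W_{p_n}$ depends on $\pi_\infty^\star$ only through $\hat{h}$.

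The final step is to argue that $W_{p_n}^*(\lambda_{p_n})$ mimics this conditional law. By construction, $\eta^* \perp \text{data}$ is standard normal, so conditional on the data $W_{p_n}^*$ is obtained from $W_{p_n}$ by replacing $(\sqrt{p_n}\hat{g},\sqrt{p_n}\hat{\pi})$ with $(\hat{S}(\lambda_{p_n})^{1/2}\eta^*, \pi_s)$. Along the subsequence, $\hat{S}(\lambda_{p_n}) \xrightarrow{p} S_\infty$ and $\hat{h}(\lambda_{p_n})$ has the same limit $h_\infty$ as in the data generating mechanism, so the data-conditional law of $W_{p_n}^*$ converges to the same conditional law as $W_{p_n}$ given $\hat{h}$. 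Applying a conditional continuous mapping argument plus the continuity of the resulting conditional CDF at its $(1-\alpha)$-quantile (this continuity must be checked separately, and is where the perturbation structure matters --- but in the linear MTE setting with a nonzero limit of $H$, the conditional distribution has a density with respect to Lebesgue measure), the conditional quantile $\hat{q}_{W^\star}(1-\alpha)$ converges to the true $(1-\alpha)$-quantile of the limit conditional distribution. Hence $\Prob_{F_{p_n}}(W_{p_n} > \hat{q}_{W^\star}(1-\alpha)) \to \alpha$.

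\textbf{Main obstacle.} The hardest step is part (b), specifically handling the boundary case $\psi_\infty = \infty$ uniformly with the finite case. When $\psi_\infty = \infty$, the quantity $\hat{h}(\lambda_{p_n})/\|\hat{h}(\lambda_{p_n})\|$ must be shown to converge in probability to $\iota_\infty$ so that the projection structure in $W_{p_n}^*$ collapses in the right way and reproduces the strong-identification $\chi_1^2$ limit of $W_{p_n}$. This requires a careful rescaling argument, dividing both numerator and denominator of $W_{p_n}$ and $W_{p_n}^*$ by $\|\sqrt{p_n}\pi_{F_{p_n}}\|^2$ and using the stated convergence of $\pi_{F_{p_n}}/\|\pi_{F_{p_n}}\|$ to $\iota_\infty$ to obtain a well-defined joint limit; the continuity of the resulting conditional quantile in the limiting direction $\iota_\infty$ then closes the argument.
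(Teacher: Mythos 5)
Your plan is correct and follows essentially the same route as the paper's proof: a Delta-method/CLT-plus-Slutsky argument with Lemma \ref{lem:convergence_p_beta} and Lemma \ref{lem:pd_variance} for part (a), and for part (b) a Moreira-style conditioning on $\hat{h}(\lambda)$ with a case split on whether $\psi_\infty$ is finite or infinite, including the rescaling by $\|\sqrt{p_n}\pi_{F_{p_n}}\|$ in the divergent case. The two items you flag as needing separate verification are exactly where the paper does the remaining work: continuity of the conditional quantile map is established on an explicit set $\mathcal{U}$ via \citet[Lemma 5]{andrews/guggenberger:2010}, and membership of the limiting conditioning statistic in $\mathcal{U}$ is the content of Lemma \ref{lem:NS_cond}.
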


\bigskip

\begin{proof}[Proof of Proposition \ref{prop:AR_validity}]
This proposition is a consequence from Corollary 2.1(c) of \citet{andrews/cheng/guggenberger:2020} after replacing “CP (Coverage Probability)” with “RP (Null Rejection Probability)” as noted in Comment 4 below their Theorem 2.1.

Assumption B\textsuperscript{*} in their paper is verified by Proposition~\ref{prop:subsequence_AR_validity}(a) by taking their parameter $\lambda$ to be $(\lambda, F)$ in our notation, and by letting their function $h(\lambda)$ correspond to $(\theta, \beta, {\sigma^2_{d\ell}}_{d,\ell}, p, q)$, where $\theta$ is an element of $\Theta$ satisfying $c'\theta = \lambda$ and $(\theta, F) \in \mathcal{P}$. (By the definition of $\mathcal{P}_0$, such a $\theta$ exists.)
\end{proof}

\bigskip

\begin{proof}[Proof of Theorem \ref{thm:validity_cond_wald}]
This proposition is a consequence from Corollary 2.1(c) of \citet{andrews/cheng/guggenberger:2020} after replacing “CP (Coverage Probability)” with “RP (Null Rejection Probability)” as noted in Comment 4 below their Theorem 2.1.

Assumption B\textsuperscript{*} in their paper is verified by Proposition~\ref{prop:subsequence_AR_validity}(b) 
by taking their parameter $\lambda$ as $(\lambda, F)$ in our notation, 
and their function $h_n(\lambda)$ to correspond to $(\theta, \beta, \{\sigma^2_{d\ell}\}_{d,\ell}, p, q, \sqrt{n}\|\pi\|, \pi/\|\pi\|)$, 
where $\theta$ is an element of $\Theta$ satisfying $c'\theta = \lambda$ and $(\theta, F) \in \mathcal{P}$.
\end{proof}

\bigskip

\begin{proof}[Proof of Proposition \ref{prop:subsequence_AR_validity}]
For simplicity of notations, the proof is shown for the full sequence $\{n\}$. Then, we note that the same proof goes through with $p_n$ in place of $n$.

\medskip
\underline{Part (a)}: 
Note that $g_k(\lambda)$ is a continuously differentiable function of $\lambda$, $p$, $\beta_1$, and $\beta_0$. Applying Delta method and the convergence in Lemma \ref{lem:convergence_p_beta}(a) to the sample moment $\hat{g}_k(\lambda)$ yields
\begin{align*}
    \sqrt{n}\hat{g}_k(\lambda_{n})
    & = [\partial_{p'}{g}_k(\lambda_n)] \hat{Z}_p + [\partial_{\beta_1'}{g}_k(\lambda_n)] \hat{Z}_{\beta_1} + [\partial_{\beta_0'}{g}_k(\lambda_n)] \hat{Z}_{\beta_0} + o_p(1) \\
    & \xrightarrow{d} \normal(0, s_k^2)
\end{align*}
where 
\[
    s_k^2 = [\partial_{p'}{g}_k(\lambda_\infty)]\Sigma_{p,\infty}[\partial_{p}{g}_k(\lambda_\infty)] + [\partial_{\beta_1'}{g}_k(\lambda_\infty)]\Sigma_{\beta_1,\infty}[\partial_{p}{g}_k(\lambda_\infty)] + [\partial_{\beta_0'}{g}_k(\lambda)]\Sigma_{\beta_0,\infty}[\partial_{\beta_0}{g}_k(\lambda_\infty)],
\]
$\lambda_\infty$ is the limit of $\lambda_n = c'\theta_n$ as $n$ goes to infinity, and $g_k(\lambda_\infty)$ equals $g_k(\lambda)$ by sending $\lambda$, $p$, $\beta_1$, and $\beta_0$ to their limiting values.

Again by Lemma \ref{lem:convergence_p_beta}(b), $\Sigma_{p,\infty}$, $\Sigma_{\beta_1,\infty}$, and $\Sigma_{\beta_0,\infty}$ can be consistenly estimated by their sample analogs $\hat{\Sigma}_p$, $\hat{\Sigma}_{\beta_1}$, and $\hat{\Sigma}_{\beta_0}$ along the drifting sequence, respectively. The continuous mapping theorem then implies that 
\[
    \hat{s}_{k}^2(\lambda_{n}) 
    = \partial_{p'}\hat{g}_k(\lambda_{n}) \hat{\Sigma}_p \partial_{p}\hat{g}_k(\lambda_{n}) +  \partial_{\beta_1'}\hat{g}_k(\lambda_{n}) \hat{\Sigma}_{\beta_1} \partial_{\beta_1}\hat{g}_k(\lambda_{n}) +  \partial_{\beta_0'}\hat{g}_k(\lambda_{n}) \hat{\Sigma}_{\beta_0} \partial_{\beta_0}\hat{g}_k(\lambda_{n}) ~~\xrightarrow{p}~~ s_k^2.
\]
Note that Lemma \ref{lem:pd_variance} implies $s_k^2 \in (0,\infty)$. By Slutsky's theorem, we have
\[
    \frac{\sqrt{n}\hat{g}_k(\lambda_{n})}{\hat{s}_k(\lambda_{n})} ~~\xrightarrow{d}~~ \normal(0,1),
\]
which implies the desired result.

\medskip
\underline{Part (b)}: Following similar arguments in part (a), Lemma \ref{lem:convergence_p_beta} implies that
\begin{equation}
\label{eq:limit_baseline}
\begin{aligned}
     \sqrt{n}\hat{g}(\lambda_{n}) ~~ &\xrightarrow{d} ~~ \mathcal{Z}_{\mathfrak{g}} \equiv \partial_p g(\lambda_\infty) \mathcal{Z}_p  + \partial_{\beta_1} g(\lambda_\infty) \mathcal{Z}_{\beta_1} + \partial_{\beta_0} g(\lambda_\infty) \mathcal{Z}_{\beta_0} \sim \normal(0_{K\times 1}, S_\infty) \\
     \partial_{x} \hat{g}(\lambda_{n}) ~~ &\xrightarrow{p}~~ \partial_x g(\lambda_{\infty}) \quad \text{ for } x \in \{p, \beta_1, \beta_0\} \\
     \hat{\Sigma}_x ~~ &\xrightarrow{p}~~ \Sigma_{x,\infty} \quad \text{ for } x \in \{p, \beta_1, \beta_0\} \\
     \hat{S}(\lambda_n) ~~&\xrightarrow{p}~~ S_\infty
\end{aligned}
\end{equation}
where
\[
    S_\infty  \equiv \partial_p g(\lambda_\infty) \Sigma_{p,\infty} \partial_{p'} g(\lambda_\infty) + \partial_{\beta_1} g(\lambda_\infty) \Sigma_{\beta_1,\infty} \partial_{\beta_1'} g(\lambda_\infty) + \partial_{\beta_0} g(\lambda_\infty) \Sigma_{\beta_0,\infty} \partial_{\beta_0'} g(\lambda_\infty).
\]
and $S_\infty$ is bounded and positive definite by Lemma \ref{lem:pd_variance}. 

\medskip
\textbf{Case 1:} $\psi_\infty < \infty$.
\medskip

In this case, we have
\begin{equation}
\label{eq:limit_2}
\begin{aligned}
\sqrt{n} \hat{\pi} 
    &= \sqrt{n}\pi_{F_n} + [\partial_p \pi] \hat{Z}_p \\
    &\xrightarrow{d} \psi_\infty \iota_\infty + [\partial_p \pi] \mathcal{Z}_p\\
    &\equiv \mathcal{Z}_\pi
\end{aligned}
\end{equation}
and 
\begin{align*}
    \hat{h}(\lambda_n) 
    &= \sqrt{n}\hat{\pi} - [\partial_p \pi] \hat{\Sigma}_p [\partial_{p'}\hat{g}(\lambda_n)] \hat{S}(\lambda_n)^{-1} \sqrt{n}\hat{g}(\lambda_n) \\
    &\xrightarrow{d} \psi_\infty\iota_\infty + [\partial_p \pi] \mathcal{Z}_p - [\partial_p \pi]\Sigma_{p,\infty} [\partial_{p'}g(\lambda_\infty)]S_\infty^{-1} \mathcal{Z}_{\mathfrak{g}}\\
    &\equiv \mathcal{Z}_h
\end{align*}
From the independence $(\mathcal{Z}_{\beta_1}, \mathcal{Z}_{\beta_0}) \indep \mathcal{Z}_p$, it follows that $\mathcal{Z}_h \indep \mathcal{Z}_{\mathfrak{g}}$ since they are jointly normal and uncorrelated.

By continuous mapping theorem, we obtain the asymptotic distribution of the Wald statistic under the null:
\begin{align*}
    W_n(\lambda_n) 
    &= \frac{n\hat{g}(\lambda_n)'\hat{S}(\lambda_n)^{-1} \hat{\pi}\hat{\pi}'\hat{S}(\lambda_n)^{-1}\hat{g}(\lambda_n)}{\hat{\pi}'\hat{S}(\lambda_n)^{-1}\hat{\pi}} \\
    &\xrightarrow{d} \frac{\mathcal{Z}_{\mathfrak{g}}' S_\infty^{-1} \mathcal{Z}_\pi \mathcal{Z}_\pi' S_\infty^{-1} \mathcal{Z}_{\mathfrak{g}}}{\mathcal{Z}_\pi S_\infty^{-1} \mathcal{Z}_\pi}.
\end{align*}
On the other hand, we note that the simulated statistic $W_n^*(\lambda_n)$ can be written as a function of standard normal draw $\eta^*$ and a conditioning statistic 
\begin{equation}
\label{eq:Upsilon_hat}
    \hat{\Upsilon} \equiv (\hat{h}(\lambda_n), ~\hat{S}(\lambda_n), ~\hat{\Sigma}_p, ~\partial_p\hat{g}(\lambda_n)).
\end{equation}
Specifically, we write
\begin{equation}
\label{eq:defn_W_eta_C}
    W_n^*(\lambda_n) = W(\eta^*, \hat{\Upsilon}) \equiv \frac{(\eta^*)'\hat{S}(\lambda_n)^{-1/2} \pi_s(\eta^*,\hat{\Upsilon})\pi_s(\eta^*,\hat{\Upsilon})'\hat{S}(\lambda_n)^{-1/2}\eta^*}{\pi_s(\eta^*,\hat{\Upsilon})'\hat{S}(\lambda_n)^{-1}\pi_s(\eta^*,\hat{\Upsilon})} 
\end{equation}
where 
\[
    \pi_s(\eta^*, \hat{\Upsilon}) \equiv \hat{h}(\lambda_n) + [\partial_p \pi] \hat{\Sigma}_p [\partial_{p'}\hat{g}(\lambda_n)] \hat{S}(\lambda_n)^{-1/2} \eta^*.
\]
Therefore, $\hat{\Upsilon}$ is the source of sampling uncertainty on the simulated statistic $W_n^*(\lambda_n)$.

Let $G(\cdot\mid \hat{\Upsilon})$ denote the CDF of the simulated Wald statistic $W_n^*(\lambda_n)$ conditional on data, i.e.,
\begin{align*}
    G(x\mid \hat{\Upsilon}) 
    &\equiv \Prob(W_n^*(\lambda_n) \leq x \mid \{Y_i, D_i, Z_i\}_{i=1}^n) \\
    &= \Prob(W(\eta^*, \hat{\Upsilon}) \leq x\mid \hat{\Upsilon}).
\end{align*}
Define $q(1-\alpha,\Upsilon)$ as the $(1-\alpha)$-quantile of $G(\cdot \mid \Upsilon)$ for a quartet of elements $\Upsilon$ such as $\hat{\Upsilon}$ in \eqref{eq:Upsilon_hat}. By this definition, the critical value can be expressed as 
\[
    \hat{q}_{W^*}(1-\alpha) = q(1-\alpha,\hat{\Upsilon}).
\]
Next we show that $q(1-\alpha,\Upsilon)$
is a continuous function of $\Upsilon$ on the set $\mathcal{U}$, where
\begin{equation}
    \label{eq:defn_set_C}
    \begin{aligned}
    \mathcal{U} = 
    \{(u_1, u_2, u_3, u_4):~ 
    & u_1 + [\partial_p \pi]u_3u_4'u_2^{-1/2}\eta^* \neq 0_{K\times 1} \text{ a.s.},  \\
    & u_1\in\R^K, u_2 \in \mathbb{S}^{K}_{++}, u_3 \in \mathbb{S}^{K+1}_{++}, u_4 \in \R^{K\times (K+1)}\}.
    \end{aligned}
\end{equation}
Let $\{\Upsilon_n\}$ be a sequence in $\mathcal{U}$ such that $\Upsilon_n \to \Upsilon \in \mathcal{U}$ as $n\to \infty$. Given zero probability of discontinuity in the limit by the definition of $\mathcal{U}$, $\{W(\eta^*,\Upsilon_n)\}_{n \geq 1}$ converges almost surely to $W(\eta^*,\Upsilon)$. This implies the convergence in distribution for any continuity point $x$ of $G(\cdot\mid \Upsilon)$:
\[
    G(x\mid \Upsilon_n) = \Prob_{\eta^*}(W(\eta^*,\Upsilon_n) \leq x) ~~\to~~ \Prob_{\eta^*}(W(\eta^*,\Upsilon) \leq x) = G(x \mid \Upsilon).
\]
The distribution function $G(\cdot\mid \Upsilon)$ is increasing at its $(1-\alpha)$-quantile $q(1-\alpha,\Upsilon)$ because the random variable $W(\eta^*, \Upsilon)$ is continuously distributed. By \citet[Lemma 5]{andrews/guggenberger:2010}, it follows that $q(1-\alpha,\Upsilon_n) \to q(1-\alpha,\Upsilon)$. This establishes continuity of quantile function on the set $\mathcal{U}$.

The convergence results in \eqref{eq:limit_baseline} and \eqref{eq:limit_2} give  
\[
    \hat{\Upsilon} ~~\xrightarrow{d}~~ \Upsilon_\infty \equiv (\mathcal{Z}_h, ~S_\infty, ~\Sigma_{p,\infty}, ~\partial_p g(\lambda_\infty)),
\]
and by Lemma \ref{lem:NS_cond}, we have $\Prob(\Upsilon_\infty \in \mathcal{U}) = 1$. From the continuity of $q(1-\alpha,\Upsilon)$, it follows by continuous mapping theorem that
\[
    W_n(\lambda_n) - q(1-\alpha, \hat{\Upsilon}) ~~\xrightarrow{d}~~ \frac{\mathcal{Z}_{\mathfrak{g}}' S_\infty^{-1} \mathcal{Z}_\pi \mathcal{Z}_\pi' S_\infty^{-1} \mathcal{Z}_{\mathfrak{g}}}{\mathcal{Z}_\pi S_\infty^{-1} \mathcal{Z}_\pi} - q(1-\alpha, \Upsilon_\infty),
\]
which implies
\[
    \Prob_{F_n}\left(W_n(\lambda_n) > q(1-\alpha, \hat{\Upsilon})\right) ~~\to~~ \Prob\left(\frac{\mathcal{Z}_{\mathfrak{g}}' S_\infty^{-1} \mathcal{Z}_\pi \mathcal{Z}_\pi' S_\infty^{-1} \mathcal{Z}_{\mathfrak{g}}}{\mathcal{Z}_\pi S_\infty^{-1} \mathcal{Z}_\pi} > q(1-\alpha, \Upsilon_\infty)\right).
\]
Next, we show that the limit probability on the right-hand side equals $\alpha$. We first examine the conditional probability as follows.
\begin{align*}
    &\Prob\left(\frac{\mathcal{Z}_{\mathfrak{g}}' S_\infty^{-1} \mathcal{Z}_\pi \mathcal{Z}_\pi' S_\infty^{-1} \mathcal{Z}_{\mathfrak{g}}}{\mathcal{Z}_\pi S_\infty^{-1} \mathcal{Z}_\pi} > q(1-\alpha, \Upsilon_\infty) \mid \Upsilon_\infty\right) \\
    &= \Prob\left(\frac{(\eta^*)' S_\infty^{-1/2} \pi_s(\eta^*,\Upsilon_\infty) \pi_s(\eta^*,\Upsilon_\infty)' S_\infty^{-1/2} \eta^*}{\pi_s(\eta^*,\Upsilon_\infty) S_\infty^{-1} \pi_s(\eta^*,\Upsilon_\infty)} > q(1-\alpha, \Upsilon_\infty) \mid \Upsilon_\infty\right) \\
    &= \Prob\left(W(\eta^*,\Upsilon_\infty) > q(1-\alpha, \Upsilon_\infty) \mid \Upsilon_\infty\right) \\
    &= \alpha\quad \text{a.s.}
\end{align*}
The second line holds by the observation that $\eta^*$ and $S_\infty^{-1/2}\mathcal{Z}_{\mathfrak{g}}$ are both independent of $\Upsilon_\infty$, and that $\mathcal{Z}_\pi = \pi_s(S_\infty^{-1/2}\mathcal{Z}_{\mathfrak{g}},~\Upsilon_\infty)$, which has the same distribution as $\pi_s(\eta^*, \Upsilon_\infty)$. The third line holds by the definition of $W(\eta^*,\Upsilon_\infty)$. The last line holds by the definition of $q(1-\alpha,\Upsilon)$, and the fact that $W(\eta^*,\Upsilon)$ is continuously distributed for any $\Upsilon \in \mathcal{U}$. By taking the law of total probability, it follows that the unconditional rejection probability equals $\alpha$ as well. This completes the proof for the case $\psi_\infty < \infty$.

\medskip
\textbf{Case 2:} $\psi_\infty = \infty$.
\medskip

In this case, by the assumption $\sqrt{n}\|\pi_{F_n}\| \to \infty$, we have
\begin{equation}
\label{eq:limit_3}
\begin{aligned}
    \frac{\hat{\pi}}{\|\pi_{F_n}\|} 
    &= \frac{\pi_{F_n}}{\|\pi_{F_n}\|} + O_p(n^{-1/2}\|\pi_{F_n}\|^{-1}) \\
    &= \iota_\infty + o_p(1).
\end{aligned}
\end{equation}
and 
\begin{equation}
\label{eq:h_sqrt_limit}
\begin{aligned}
    \frac{\hat{h}(\lambda_n)}{\sqrt{n}\|\pi_{F_n}\|} 
    &= \frac{\hat{\pi}}{\|\pi_{F_n}\|} - [\partial_p \pi]\hat{\Sigma}_{p}[\partial_{p'} \hat{g}(\lambda_n)]\hat{S}^{-1}\sqrt{n}\hat{g}(\lambda_n) \|\sqrt{n}\pi_{F_n}\|^{-1} \\
    &= \iota_\infty + o_p(1).
\end{aligned}
\end{equation}

From the condition $\iota_\infty \neq 0_{K\times 1}$, \eqref{eq:limit_baseline}, and \eqref{eq:limit_3}, it follows that 
\begin{equation}
\label{eq:test_stat_converge}
    W_n(\lambda_n) = \frac{n\hat{g}(\lambda_n)'\hat{S}(\lambda_n)^{-1} (\hat{\pi}/\|\pi_{F_n}\|)(\hat{\pi}'/\|\pi_{F_n}\|)\hat{S}(\lambda_n)^{-1}\hat{g}(\lambda_n)}{(\hat{\pi}'/\|\pi_{F_n}\|)\hat{S}(\lambda_n)^{-1}(\hat{\pi}/\|\pi_{F_n}\|)} ~~\xrightarrow{d}~~ \chi_1^2.    
\end{equation}
Now we examine the stochastic behavior of critical value $\hat{q}_{W^*}(1-\alpha)$, defined as $(1-\alpha)$-quantile of $W_n^*(\lambda_n)$ conditional on data. First, define a normalized conditioning statistic in the construction of $W_n^*(\lambda_n)$:
\[
    \bar{\Upsilon} \equiv \left(\frac{\hat{h}(\lambda_n)}{\sqrt{n}\|\pi_{F_n}\|}, ~\hat{S}(\lambda_n), ~\hat{\Sigma}_p,  ~\frac{\partial_p\hat{g}(\lambda_n)}{\sqrt{n}\|\pi_{F_n}\|}\right).
\]
By the definition of $q(1-\alpha,\Upsilon)$ introduced in Case 1 and the fact that
\[
    W_n^*(\lambda_n) \equiv W_n(\eta^*, \hat{\Upsilon}) = W_n(\eta^*, \bar{\Upsilon}),
\]
we have 
\[
    \hat{q}_{W^*}(1-\alpha) = q(1-\alpha, \bar{\Upsilon}).
\]
It therefore suffices to obtain the limit of $\bar{\Upsilon}$ as the sample size diverges. By \eqref{eq:limit_baseline}, \eqref{eq:h_sqrt_limit}, and the condition that $\sqrt{n}\|\pi_{F_n}\| \to \infty$, we obtain
\[
    \bar{\Upsilon}~~\xrightarrow{p}~~\bar{\Upsilon}_{\infty} \equiv (\iota_\infty, S_\infty, \Sigma_{p,\infty}, 0_{K \times (K+1)}) ~\in~ \mathcal{U},
\]
where the inclusion in $\mathcal{U}$ follows from $\iota_\infty \neq 0_{K\times 1}$. 
This and the continuity of the quantile function $q(1-\alpha,\cdot)$ give $q(1-\alpha, \bar{\Upsilon}) \xrightarrow{p} q(1-\alpha, \bar{\Upsilon}_{\infty})$. This shows that the $(1-\alpha)$-conditional quantile of $W_n^*(\lambda_n)$ converges in probability to $q(1-\alpha, \bar{\Upsilon}_{\infty})$, which equals the $(1-\alpha)$-quantile of $\chi_1^2$ distribution by replacing $\hat{\Upsilon}$ with $\bar{\Upsilon}_\infty$ in \eqref{eq:defn_W_eta_C}. So we conclude that 
\begin{equation}
\label{eq:critical_value_converge}
        \hat{q}_{W^*}(1-\alpha) ~~\xrightarrow{p}~~ q_{\chi_1^2}(1-\alpha). 
\end{equation}
By \eqref{eq:test_stat_converge} and \eqref{eq:critical_value_converge}, we have $W_n(\lambda_n) - \hat{q}_{W^*}(1-\alpha) \xrightarrow{d} \chi_1^2 - q_{\chi_1^2}(1-\alpha)$. Then the desired conclusion follows by the definition of convergence in distribution.
\end{proof}

\subsection{Proof of Theorem \ref{thm:uniform_validity}}
The proof of uniform size control in Theorem \ref{thm:uniform_validity} uses the sub-sequencing techniques from \cite{andrews/cheng/guggenberger:2020}. Specifically, we verify part of their Assumption B in the following Proposition \ref{prop:subsequence_MLC_validity}.

\begin{proposition}
\label{prop:subsequence_MLC_validity}
For any subsequence $\{p_n\}$ of $\{n\}$ and any sequence $\{(\lambda_{p_n}, F_{p_n}) \in \mathcal{P}_0\}$ for which 
\begin{enumerate}
	\item $\theta_{F_{p_n}} \to \theta_\infty \in \Theta$, where $\theta_{F_{p_n}}$ is a sequence of parameters such that $(\theta_{F_{p_n}}, F_{p_n}) \in \mathcal{P}$ and $\lambda_{{p_n}} = c'\theta_{F_{p_n}}$;
	\item $\beta_{F_{p_n}, d\ell} \to \beta_{\infty,d\ell}$ for $d = 0,1$ and $\ell = 0,1,\ldots,K$;
	\item $\sigma^2_{F_{p_n}, d\ell} \to \sigma^2_{\infty, d\ell}$ for $d = 0,1$ and $\ell = 0,1,\ldots,K$;
	\item $p_{F_{p_n}}(z_\ell) \to p_\infty(z_\ell)$ for all $\ell = 0,1,\ldots,K$;
	\item $q_{F_{p_n}}(z_\ell) \to q_{\infty}(z_\ell)$ for all $\ell = 0,1,\ldots,K$;
	\item $B_{F_{p_n}} \to B_{\infty}$ and $C_{F_{p_n}} \to C_{\infty}$;
	\item $\sqrt{p_n}\tau_{j,F_{p_n}} \to t_j \in [0,\infty]$ for all $j = 1, \ldots, 2(M+1)$;
	\item $\iota_{n} \equiv \frac{\Psi_{F_{p_n}} B_{F_{p_n}}'c}{\|\Psi_{F_{p_n}} B_{F_{p_n}}'c\|}\to \iota_\infty \in \R^{2(M+1)}$, where $\Psi_{F_{p_n}} \in \R^{2(M+1)\times 2(M+1)}$ is a normalizing matrix defined in equation \eqref{eq:defn_S},\footnote{Note that the denominator $\|\Psi_{F_n}B_{F_n}'c\| \neq 0$ since $\Psi_{F_n}$ and $B_{F_n}$ are both full-rank matrices and the weight $c$ is nonzero. Hence $\iota_n$ is properly defined for each $n \geq 1$ and satisfies $\|\iota_n\| = 1$.}
\end{enumerate}
we have 
\[
	\limsup_{n\to\infty} \Prob_{F_{p_n}}\left(\inf_{c'\theta = \lambda_{p_n}} \text{MLC}_{p_n}(\theta) > q_{(1+a)\chi_1^2 + a\chi_{2K+1}^2}(1-\alpha)\right) \leq \alpha.
\]
\end{proposition}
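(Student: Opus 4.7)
My plan is to reduce the profiled infimum to evaluating $\text{MLC}_{p_n}$ at the true parameter and then compute the limiting distribution of that statistic using the SVD-based normalization described in the excerpt. Since $c'\theta_{F_{p_n}} = \lambda_{p_n}$ by Condition 1, the point $\theta_{F_{p_n}}$ is feasible in the profile minimization, so
\[
\inf_{c'\theta = \lambda_{p_n}} \text{MLC}_{p_n}(\theta) \leq \text{MLC}_{p_n}(\theta_{F_{p_n}}),
\]
and it suffices to show $\text{MLC}_{p_n}(\theta_{F_{p_n}}) \xrightarrow{d} (1+a)\chi_1^2 + a\chi_{2K+1}^2$ with the two factors independent, after which the portmanteau theorem yields the desired bound. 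Using Lemma \ref{lem:convergence_p_beta}, the continuous mapping theorem, and the parameter-space bounds encoded in $(\lambda_{p_n}, F_{p_n})\in\mathcal{P}_0$, Conditions 2--5 imply that $\hat{\Omega}(\theta_{F_{p_n}}) \xrightarrow{p} \Omega_\infty$ with $\Omega_\infty$ uniformly positive definite, and $u_n \equiv \hat{\Omega}(\theta_{F_{p_n}})^{-1/2}\sqrt{p_n}(\hat{A}\theta_{F_{p_n}}-\hat{\beta}) \xrightarrow{d} u \sim \normal(0, I_{2(K+1)})$.

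I then normalize $\widetilde{D}(\theta_{F_{p_n}})$ by the matrix $\Lambda_{1,n} = B_{F_{p_n}}\Psi_{F_{p_n}}$ described in the excerpt, whose diagonal entries are $\tau_{F_{p_n},j}^{-1}$ for strongly identified directions ($t_j = \infty$) and $\sqrt{p_n}$ for weakly identified directions ($t_j < \infty$). Using Conditions 6 and 7, each column of $\widetilde{D}(\theta_{F_{p_n}})\Lambda_{1,n}$ converges jointly: strongly identified columns converge to deterministic unit directions taken from $C_\infty$, and weakly identified columns converge to Gaussian limits whose mean equals $t_j$ times the corresponding column of $C_\infty$, with noise contributed by both the Kleibergen-style orthogonalization error and the $\kappa p_n^{-1/2}\xi$ perturbation. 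The perturbation ensures that the limit matrix has full column rank almost surely, so $(\widetilde{D}(\theta_{F_{p_n}})'\hat{\Omega}(\theta_{F_{p_n}})^{-1}\widetilde{D}(\theta_{F_{p_n}}))^{-1}$ is asymptotically well-defined. By construction of $\hat{D}$, $\widetilde{D}(\theta_{F_{p_n}})\Lambda_{1,n}$ is asymptotically independent of $u_n$, and Condition 8 pins down the limiting direction of $\Lambda_{1,n}'c/\|\Lambda_{1,n}'c\|$. Together these yield the weak convergence of the normalized projection direction
\[
q_n \equiv \frac{\hat{\Omega}(\theta_{F_{p_n}})^{-1/2}\widetilde{D}(\theta_{F_{p_n}})(\widetilde{D}(\theta_{F_{p_n}})'\hat{\Omega}(\theta_{F_{p_n}})^{-1}\widetilde{D}(\theta_{F_{p_n}}))^{-1}c}{\|\hat{\Omega}(\theta_{F_{p_n}})^{-1/2}\widetilde{D}(\theta_{F_{p_n}})(\widetilde{D}(\theta_{F_{p_n}})'\hat{\Omega}(\theta_{F_{p_n}})^{-1}\widetilde{D}(\theta_{F_{p_n}}))^{-1}c\|}
\]
to a random unit vector $q_\infty$ that is independent of $u$.

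Decomposing the AR statistic along $q_n$ and its orthogonal complement gives $\text{AR}_{p_n}(\theta_{F_{p_n}}) = u_n'u_n + o_p(1)$ and $\text{MRLM}_{p_n}(\theta_{F_{p_n}}) = u_n' P_{q_n} u_n + o_p(1)$, so
\[
\text{MLC}_{p_n}(\theta_{F_{p_n}}) \xrightarrow{d} (1+a)\, u' P_{q_\infty} u + a\, u'(I_{2(K+1)} - P_{q_\infty}) u.
\]
Conditional on $q_\infty$, the two quadratic forms are independent $\chi_1^2$ and $\chi_{2K+1}^2$ variates; since $q_\infty \indep u$, marginalization preserves both the distributions and the independence, yielding the target mixture. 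The main obstacle lies in the second paragraph: the normalized matrix $\widetilde{D}(\theta_{F_{p_n}})\Lambda_{1,n}$ can have a rank-deficient limit in the absence of the perturbation whenever some $t_j$ vanish, and the uniform analysis over partially-, weakly-, and strongly-identified regimes (indexed jointly by $\{t_j\}_j \in [0,\infty]^{2(M+1)}$) requires extending the SVD bookkeeping of \cite{andrews/guggenberger:2017}, developed there for the full-vector RLM test, to inference on a scalar linear functional of $\theta$.
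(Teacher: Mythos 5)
Your proposal is correct and follows essentially the same route as the paper's proof: bound the profiled statistic by its value at the true $\theta_{F_{p_n}}$, establish asymptotic normality of the normalized moment, use the Andrews--Guggenberger SVD normalization together with the $\kappa n^{-1/2}\xi$ perturbation to obtain an a.s.\ full-rank limit of the normalized $\widetilde{D}$ that is independent of the moment vector, and then derive the mixture $(1+a)\chi_1^2 + a\chi_{2K+1}^2$ limit by conditioning on the limiting projection direction. The only cosmetic difference is that you normalize the projection direction to unit length while the paper rescales $\hat{Q}(\theta_{F_{p_n}})$ by $\gamma_n = \|\sqrt{p_n}\Psi_{F_{p_n}}B_{F_{p_n}}'c\|^{-1}$; since the projection matrix is scale-invariant, the two are equivalent.
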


\bigskip

\begin{proof}[Proof of Theorem \ref{thm:uniform_validity}]
This proposition is a consequence from Corollary 2.1(a) of \citet{andrews/cheng/guggenberger:2020} after replacing “CP (Coverage Probability)” with “RP (Null Rejection Probability)” as noted in Comment 4 below their Theorem 2.1.

Assumption B in their paper is verified by Proposition~\ref{prop:subsequence_MLC_validity}
by taking their parameter $\lambda$ as $(\lambda, F)$ in our notation, 
and their function $h_n(\lambda)$ to correspond to $(\theta, \beta, \{\sigma^2_{d\ell}\}_{d,\ell}, p, q,B,C,\{\sqrt{n}\tau_j\}_{j=1}^{2(M+1)}, \iota_n)$, 
where $\theta$ is an element of $\Theta$ satisfying $c'\theta = \lambda$ and $(\theta, F) \in \mathcal{P}$.
\end{proof}

\bigskip

\begin{proof}[Proof of Proposition \ref{prop:subsequence_MLC_validity}]
For simplicity of notations, the proof is shown for the full sequence $\{n\}$. Then, we note that the same proof goes through with $p_n$ in place of $n$.

By Lemma \ref{lem:convergence_p_beta}(a), we have 
\[
	\sqrt{n}
	\begin{pmatrix}
		\hat{p} - p_{F_n} \\
		\hat{\beta} - \beta_{F_n}
	\end{pmatrix}
	~~=~~
	\begin{bmatrix}
		\hat{Z}_p \\
		\hat{Z}_\beta
	\end{bmatrix}
	~~\xrightarrow{d}~~
	\normal\left(0_{3(K+1)\times 1},  \quad 
	\begin{bmatrix}
		\Sigma_{p,\infty} & 0_{(K+1)\times 2(K+1)} \\
		0_{2(K+1)\times (K+1)} & \Sigma_{\beta,\infty}
	\end{bmatrix}\right)
\]
This implies
	\begin{equation}
	\label{eq:A_mat_asymptotics}
		\sqrt{n}(\hat{A} - A_{F_n}) 
		= \begin{bmatrix}
			\begin{pmatrix}
				L_0(p_{F_n})\hat{Z}_p, & \ldots & L_M(p_{F_n})\hat{Z}_p  
			\end{pmatrix}
			& 0_{(K+1)\times(M+1)} \\
			0_{(K+1)\times (M+1)} & 
			\begin{pmatrix}
				R_0(p_{F_n})\hat{Z}_p, & \ldots & R_M(p_{F_n})\hat{Z}_p  
			\end{pmatrix}
		\end{bmatrix}
		+o_p(1)
	\end{equation}
	where
	\begin{align*}
		& L_m(p) = \diag\{\lambda_{1m}'(p(z_0)), \ldots,\lambda_{1m}'(p(z_K))\} \\
		& R_m(p) = \diag\{\lambda_{0m}'(p(z_0)), \ldots, \lambda_{0m}'(p(z_K))\}.
	\end{align*}

By continuous mapping theorem, we have 
\begin{align*}
	\sqrt{n}(\hat{A}\theta_{F_n} - \hat{\beta})
	&= \sqrt{n}(\hat{A} - A_{F_n})\theta_{F_n} - \sqrt{n}(\hat{\beta} - \beta_{F_n}) \\
	&= \begin{bmatrix}
		\begin{pmatrix}
			L_0({p}_{F_n})\hat{Z}_p, & \ldots & L_M({p}_{F_n})\hat{Z}_p  
		\end{pmatrix}
		& 0_{(K+1)\times(M+1)} \\
		0_{(K+1)\times (M+1)} & 
		\begin{pmatrix}
			R_0({p}_{F_n})\hat{Z}_p, & \ldots & R_M({p}_{F_n})\hat{Z}_p  
		\end{pmatrix}
	\end{bmatrix}
	\begin{bmatrix}
	\theta_{1,F_n} \\
	\theta_{0,F_n}
	\end{bmatrix} \\
	  & \quad - \hat{Z}_\beta + o_p(1) \\
	&\xrightarrow{d}
	\begin{bmatrix}
		\diag\left\{\sum_{m=0}^M \theta_{1m,\infty} \lambda_{1m}'({p}_{\infty}(z_\ell)): \ell = 0,1,\ldots, K\right\} \\
		\diag\left\{\sum_{m=0}^M \theta_{0m,\infty} \lambda_{0m}'({p}_{\infty}(z_\ell)): \ell = 0,1,\ldots, K\right\}
	\end{bmatrix} 
	\mathcal{Z}_p - \mathcal{Z}_\beta\\
	&= H(p_\infty,\theta_\infty)\mathcal{Z}_p - \mathcal{Z}_\beta.
\end{align*}
where we define
	\[
		H(p,\theta) 
		\equiv
		\begin{bmatrix}
			\diag\left\{\sum_{m=0}^M \theta_{1m} \lambda_{1m}'({p}(z_{\ell})): \ell = 0,1,\ldots, K\right\} \\
			\diag\left\{\sum_{m=0}^M \theta_{0m} \lambda_{0m}'({p}(z_{\ell})): \ell = 0,1,\ldots, K\right\}
		\end{bmatrix}.
	\]
Let 
\[
	\mathcal{Z}_{\mathfrak{m}} \equiv H(p_\infty,\theta_\infty)\mathcal{Z}_p - \mathcal{Z}_\beta.
\] 
Following Lemma \ref{lem:convergence_p_beta}, the condition $\theta_{F_n} \to \theta_{\infty}$, and note that $\{\lambda_{dm}'(\cdot)\}$ are continuouly differentiable by the continuity of $h_m(\cdot)$, continuous mapping theorem implies
\begin{align}
	\hat{\Omega}(\theta_{F_n}) 
	&~~=~~ H(\hat{p}, \theta_{F_n}) \hat{\Sigma}_p H(\hat{p}, \theta_{F_n})' + \hat{\Sigma}_\beta \notag \\
	&~~\xrightarrow{p}~~
	H(p_{\infty}, \theta_{\infty}) {\Sigma}_{p,\infty} H(p_{\infty}, \theta_{\infty})' + {\Sigma}_{\beta,\infty} \notag \\
	&~~=~~ \var(\mathcal{Z}_{\mathfrak{m}}) \label{eq:convergence_Omega} 
	\end{align}
and
\begin{align}
	\hat{\Gamma}_j(\theta_{F_n})
	&~~=~~ M_j(\hat{p})\hat{\Sigma}_p H(\hat{p}, \theta_{F_n})' \notag \\
	&~~\xrightarrow{p}~~
	M_j(p_{\infty}){\Sigma}_{p,\infty} H(p_{\infty}, \theta_{\infty})' \notag \\
	&~~=~~ \cov(M_j(p_\infty)\mathcal{Z}_p, \mathcal{Z}_{\mathfrak{m}}). \label{eq:convergence_Gamma}
\end{align}
where $M_j(p)$ is defined in \eqref{eq:defn_M_mat}. 
Note that $\var(\mathcal{Z}_{\mathfrak{m}})$ is positive definite by the positive definiteness of $\Sigma_{\beta,\infty}$, which is in turn implied by the parameter space restriction on $\mathcal{P}$. For each $j = 1,\ldots,2(M+1)$, then we have
\begin{align*}
	\sqrt{n}(\hat{d}_j(\theta_{F_n}) - a_{j,F_n}) 
	&= \sqrt{n}(\hat{a}_j - a_{j,F_n}) - \hat{\Gamma}_j(\theta_{F_n})\hat{\Omega}(\theta_{F_n})^{-1}\sqrt{n}(\hat{A}\theta_{F_n} - \hat{\beta}) \\
	&\xrightarrow{d} \underbrace{M_j(p_{\infty}) \mathcal{Z}_p - \cov(M_j(p_\infty)\mathcal{Z}_p, \mathcal{Z}_{\mathfrak{m}})\var(\mathcal{Z}_{\mathfrak{m}})^{-1}\mathcal{Z}_{\mathfrak{m}}}_{\equiv \mathcal{Z}_{d_j}},
\end{align*}
where the second line holds by combining results from \eqref{eq:A_mat_asymptotics}, \eqref{eq:convergence_Omega}, and \eqref{eq:convergence_Gamma}. Then we observe that
\[
	\cov(\mathcal{Z}_{d_j}, \mathcal{Z}_{\mathfrak{m}})
	= 0_{2(K+1)\times 2(K+1)}.
\]
This shows
\begin{equation}
\label{eq:convergence_moment}
	\sqrt{n}
	\begin{pmatrix}
		\hat{A}\theta_{F_n} - \hat{\beta} \\
		\operatorname*{vec}(\hat{D}(\theta_{F_n})) - \operatorname*{vec}(A_{F_n})
	\end{pmatrix}
	~~\xrightarrow{d}~~
	\begin{pmatrix}
		\mathcal{Z}_{\mathfrak{m}} \\
		\operatorname*{vec}(\mathcal{Z}_D)
  \end{pmatrix},
\end{equation}
where $\mathcal{Z}_D = (\mathcal{Z}_{d_1}, \ldots, \mathcal{Z}_{d_{2(M+1)}})$ is independent of $\mathcal{Z}_{\mathfrak{m}}$.

Based on \eqref{eq:convergence_moment}, applying Lemma \ref{lem:new_cond_stat} yields
\[
	n^{1/2}\widetilde{D}(\theta_{F_n}) B_{F_n} \Psi_{F_n} 
	~~\xrightarrow{d}~~
	\mathcal{D}_\xi
\]
where $\mathcal{D}_\xi$ is of full column rank with probability one and is also independent of $\mathcal{Z}_{\mathfrak{m}}$, and $\Psi_{F_n}$ is a diagonal matrix with positive diagonal elements 
\begin{equation}
\label{eq:defn_S}
		\Psi_{F_n} = 
		\begin{bmatrix}
		\begin{pmatrix}
		(\sqrt{n}\tau_{1,F_n})^{-1} & & \\
		& \ddots & \\
		& & (\sqrt{n}\tau_{q,F_n})^{-1} 
		\end{pmatrix} & 0_{q\times (2(M+1) - q)} \\
		0_{(2(M+1) - q)\times q} & I_{2(M+1) - q}
		\end{bmatrix}
\end{equation}
with $q \equiv \max\{j: t_j = \infty\}$.

	Define a normalizing constant $\gamma_n \equiv \|\sqrt{n} \Psi_{F_n} B_{F_n}'c\|^{-1}$. 
	 By continuous mapping theorem and given that $\mathcal{D}_\xi$ is of full rank with probability one, we have
	\begin{align}
		\hat{Q}(\theta_{F_n}) \gamma_n
		&~~=~~ 
		\hat{\Omega}(\theta_{F_n})^{-1/2}\left(\sqrt{n}\widetilde{D}(\theta_{F_n})B_{F_n} \Psi_{F_n} \right) \notag \\
		&~~\quad \left[\left(\sqrt{n}\widetilde{D}(\theta_{F_n})B_{F_n} \Psi_{F_n}\right)'\hat{\Omega}(\theta_{F_n})^{-1} \left(\sqrt{n}\widetilde{D}(\theta_{F_n})B_{F_n} \Psi_{F_n}\right)\right]^{-1}\underbrace{\frac{\sqrt{n} \Psi_{F_n} B_{F_n}' c}{\|\sqrt{n} \Psi_{F_n} B_{F_n}'c\|}}_{\iota_n} \notag \\
		&~~\xrightarrow{d}~~
		\mathcal{Q} ~~\equiv~~
			 \var(\mathcal{Z}_{\mathfrak{m}})^{-1/2}\mathcal{D}_\xi [\mathcal{D}_\xi'\var(\mathcal{Z}_{\mathfrak{m}})^{-1}\mathcal{D}_\xi ]^{-1} \iota_\infty.
		\label{eq:convergence_Q}
	\end{align}
Since $\iota_\infty \neq 0_{2(M+1)\times 1}$ and $\mathcal{D}_\xi$ has full rank with probability one, we conclude that $\mathcal{Q}$ is nonzero almost surely and  independent of $\mathcal{Z}_{\mathfrak{m}}$ by the independence between $\mathcal{D}_\xi$ and $\mathcal{Z}_{\mathfrak{m}}$.
	
	Then it follows that 
	\begin{align*}
	\text{MRLM}_n(\theta_{F_n}) 
	&~~=~~ \sqrt{n}(\hat{A}\theta_{F_n} - \hat{\beta})'\hat{\Omega}(\theta_{F_n})^{-1/2} P_{\hat{Q}(\theta_{F_n})}\hat{\Omega}(\theta_{F_n})^{-1/2}\sqrt{n}(\hat{A}\theta_{F_n} - \hat{\beta}) \\
	&~~=~~ \sqrt{n}(\hat{A}\theta_{F_n} - \hat{\beta})'\hat{\Omega}(\theta_{F_n})^{-1/2} P_{\hat{Q}(\theta_{F_n})\gamma_n}\hat{\Omega}(\theta_{F_n})^{-1/2}\sqrt{n}(\hat{A}\theta_{F_n} - \hat{\beta}) \\
	&~~\xrightarrow{d}~~ \mathcal{Z}_{\mathfrak{m}}'\var(\mathcal{Z}_{\mathfrak{m}})^{-1/2}P_{\mathcal{Q}}\var(\mathcal{Z}_{\mathfrak{m}})^{-1/2} \mathcal{Z}_{\mathfrak{m}} \\
	&~~\sim~~ \chi_1^2
	\end{align*}
	The third line holds by continuous mapping theorem based on \eqref{eq:convergence_Omega},  \eqref{eq:convergence_moment}, and \eqref{eq:convergence_Q}. The last line follows by that $\mathcal{Q}$ is nonzero almost surely and $\mathcal{Q} \indep \mathcal{Z}_{\mathfrak{m}}$, thus 
	\[
		\mathcal{Z}_{\mathfrak{m}}'\var(\mathcal{Z}_{\mathfrak{m}})^{-1/2}P_{\mathcal{Q}}\var(\mathcal{Z}_{\mathfrak{m}})^{-1/2} \mathcal{Z}_{\mathfrak{m}} ~~\sim~~ \chi_1^2
		\quad 
		\text{conditional on $\mathcal{Q}$}
	\]
	implies the unconditional distribution
	\[
		\mathcal{Z}_{\mathfrak{m}}'\var(\mathcal{Z}_{\mathfrak{m}})^{-1/2}P_{\mathcal{Q}}\var(\mathcal{Z}_{\mathfrak{m}})^{-1/2} \mathcal{Z}_{\mathfrak{m}} ~~\sim~~ \chi_1^2.
	\]

	We apply similar arguments to the difference between AR and MRLM statistics, yielding
	\[
	\begin{bmatrix}
	\text{AR}_n(\theta_{F_n}) - \text{MRLM}_n(\theta_{F_n}) \\
	\text{MRLM}_n(\theta_{F_n})
	\end{bmatrix}
 ~~\xrightarrow{d}~~ 
	\begin{bmatrix}
	\mathcal{Z}_{\mathfrak{m}}'\var(\mathcal{Z}_{\mathfrak{m}})^{-1/2}M_{\mathcal{Q}} \var(\mathcal{Z}_{\mathfrak{m}})^{-1/2}\mathcal{Z}_{\mathfrak{m}}  \\
	\mathcal{Z}_{\mathfrak{m}}'\var(\mathcal{Z}_{\mathfrak{m}})^{-1/2}P_{\mathcal{Q}} \var(\mathcal{Z}_{\mathfrak{m}})^{-1/2}\mathcal{Z}_{\mathfrak{m}} 
	\end{bmatrix}
	~~\sim~~
	\begin{bmatrix}
	\chi_{2K + 1}^2 \\
	\chi_{1}^2
	\end{bmatrix},
	\]
	where $M_\mathcal{Q} \equiv I - P_\mathcal{Q}$ denotes the annihilator operator. Note that $\chi_{2K + 1}^2$ is independent of $\chi_1^2$ because $(P_{\mathcal{Q}}\var(\mathcal{Z}_{\mathfrak{m}})^{-1/2}\mathcal{Z}_{\mathfrak{m}}, \quad M_{\mathcal{Q}}\var(\mathcal{Z}_{\mathfrak{m}})^{-1/2}\mathcal{Z}_{\mathfrak{m}})$ are uncorrelated:
	\begin{align*}
	\cov(P_{\mathcal{Q}}\var(\mathcal{Z}_{\mathfrak{m}})^{-1/2}\mathcal{Z}_{\mathfrak{m}}, M_{\mathcal{Q}}\var(\mathcal{Z}_{\mathfrak{m}})^{-1/2}\mathcal{Z}_{\mathfrak{m}})
	&= \Exp\left(\cov(P_{\mathcal{Q}}\var(\mathcal{Z}_{\mathfrak{m}})^{-1/2}\mathcal{Z}_{\mathfrak{m}}, M_{\mathcal{Q}}\var(\mathcal{Z}_{\mathfrak{m}})^{-1/2}\mathcal{Z}_{\mathfrak{m}} \mid \mathcal{Q})\right) \\
	&= \Exp(P_{\mathcal{Q}}M_{\mathcal{Q}}) \\
	&= 0_{2(K+1)\times 2(K+1)}.
	\end{align*}
	The first line holds by the law of total covariance and the fact that $\Exp[\mathcal{Z}_{\mathfrak{m}}\mid \mathcal{Q}] = \Exp[\mathcal{Z}_{\mathfrak{m}}] = 0$. The second line holds by the independence between $\mathcal{Q}$ and $\mathcal{Z}_{\mathfrak{m}}$. The third line holds by the definition $M_{\mathcal{Q}} = I - P_{\mathcal{Q}}$. So we have
	\[
		\text{MLC}_n(\theta_{F_n}) = \text{MRLM}_n(\theta_{F_n}) + a\cdot\text{AR}_n(\theta_{F_n}) ~~\xrightarrow{d}~~ (1+a)\chi_1^2 + a\chi_{2K+1}^2 
	\]
	by continuous mapping theorem.

	As a result, we show that
	\begin{align*}
		& \limsup_{n\to\infty} \Prob_{F_{n}}\left(\inf_{c'\theta = \lambda_{n}} \text{MLC}_n(\theta) > q_{(1+a)\chi_1^2 + a\chi_{2K+1}^2 }(1-\alpha)\right) \\
		& \leq 
		\lim_{n\to\infty} \Prob_{F_{n}}\left( \text{MLC}_n(\theta_{F_n}) > q_{(1+a)\chi_1^2 + a\chi_{2K+1}^2 }(1-\alpha)\right) \\
		& = \alpha.
	\end{align*}
	Then the proof is complete.
\end{proof}

\clearpage

\subsection{Proof of bias formula in section \ref{sec:bias_add_sep}}

\begin{proof}[Proof of Lemma \ref{lem:bias_formula_general}]
Applying FWL theorem to partial out the linear effects of $W$ in the misspecified regression \eqref{eq:shortreg}, we have
\begin{align*}
    &Y^{\perp W\mid D=1} = \tilde{\rho}_1 \lambda_1(P)^{\perp W\mid D=1} + Y^{\perp W, \lambda_1(P)\mid D=1}  &\text{conditional on } D = 1\\
    &Y^{\perp W\mid D=0} = \tilde{\rho}_0 \lambda_0(P)^{\perp W\mid D=0} + Y^{\perp W, \lambda_0(P)\mid D=0} &\text{conditional on } D = 0
\end{align*}
Conditional on $D=d$, simple OLS regression gives 
\begin{align*}
    \tilde{\rho}_d
    &= \frac{\cov(Y^{\perp W\mid D=d}, \lambda_1(P)^{\perp W\mid D=d}\mid D=d)}{\var(\lambda_d(P)^{\perp W\mid D=d}\mid D=d)} \\
    &= \frac{\cov((\rho_d \lambda_d(P) + W'\eta_d \lambda_d(P))^{\perp W\mid D=d}, \lambda_d(P)^{\perp W\mid D=d}\mid D=d)}{\var(\lambda_d(P)^{\perp W\mid D=d}\mid D=d)} \\
    &= \rho_d + \frac{\cov((W'\lambda_d(P))^{\perp W\mid D=d}, \lambda_d(P)^{\perp W\mid D=d} \mid D=d)}{\var(\lambda_d(P)^{\perp W\mid D=d}\mid D=d)} ~~\eta_d \\
    &= \rho_d + 
    \frac{\cov(W'\lambda_d(P), \lambda_d(P)^{\perp W\mid D=d} \mid D=d)}{\var(\lambda_d(P)^{\perp W\mid D=d}\mid D=d)} ~~\eta_d 
\end{align*}
The second line holds by noting that correct regression equation \eqref{eq:longreg} gives
\[
    Y = \mu_d + W'\tau_d + \rho_d \lambda_d(P) + W'\eta_d \lambda_d(P) + \epsilon_d \quad \text{conditional on } D = d.
\]
where $\epsilon_d \equiv Y - \Exp[Y\mid D=d, W, P]$. This implies 
\[
    Y^{\perp W\mid D=d} = \left(\rho_d \lambda_d(P) + W'\eta_d \lambda_d(P)\right)^{\perp W\mid D=d}
\]
since $\epsilon_d$ is uncorrelated with $W$ conditional on $D=d$. So we  establish the bias formula for $\tilde{\rho}_d - \rho_d$.

Alternatively, we can also apply the FWL theorem to partial out the linear effects of $\lambda_d(W)$ in the misspecified regression, giving 
\begin{align*}
    &Y^{\perp \lambda_1(P)\mid D=1} = \tilde{\tau}_1'W^{\perp \lambda_1(P)\mid D=1} + Y^{\perp W, \lambda_1(P)\mid D=1}  &\text{conditional on } D = 1\\
    &Y^{\perp \lambda_0(P)\mid D=0} = \tilde{\tau}_0'W^{\perp \lambda_0(P)\mid D=0} + Y^{\perp W, \lambda_0(P)\mid D=0} &\text{conditional on } D = 0
\end{align*}
In this way, $\tilde{\tau}_d$ is the OLS coefficient on $W$ in a regression of $Y^{\perp \lambda_d(P)\mid D=d}$ on $W^{\perp \lambda_d(P)\mid D=d}$: 
\[
    \tilde{\tau}_d = \Exp[(W^{\perp \lambda_d(P)\mid D=d})(W^{\perp \lambda_d(P)\mid D=d})'\mid D=d]^{-1}\Exp[(W^{\perp \lambda_d(P)\mid D=d})(Y^{\perp \lambda_d(P)\mid D=d})\mid D=d].
\]
The correct regression \eqref{eq:longreg} implies that  
$Y^{\perp \lambda_d(P)\mid D=d} = (W'\tau_d + W'\eta_d\lambda_d(P))^{\perp \lambda_d(P)\mid D=d}$. Plugging it into the OLS estimand $\tilde{\tau}_d$ then gives
\begin{align*}
    \tilde{\tau}_d - \tau_d 
    &= \Exp[(W^{\perp \lambda_d(P)\mid D=d})(W^{\perp \lambda_d(P)\mid D=d})'\mid D=d]^{-1}\Exp[(W^{\perp \lambda_1(P)\mid D=d})(W'\lambda_d(P))^{\perp \lambda_d(P)\mid D=d}\mid D=d]~\eta_d \\
    &= \Exp[(W^{\perp \lambda_d(P)\mid D=d})(W^{\perp \lambda_d(P)\mid D=d})'\mid D=d]^{-1}\Exp[(W^{\perp \lambda_1(P)\mid D=d})(W'\lambda_d(P))\mid D=d]~\eta_d 
\end{align*}
which establishes the bias formula for $\tilde{\tau}_d - \tau_d$.
\end{proof}

\begin{proof}[Proof of Theorem \ref{thm:bias_formula_specific}]
We begin by noting that $\cov(W, \lambda_d(P)\mid D=d) = 0_{L\times 1}$ implies 
\begin{equation}
\label{eq:reg_resid_1}
    W^{\perp \lambda_d(P)\mid D=d} = W - \Exp[W\mid D=d] \quad \text{and} \quad \lambda_d(P)^{\perp W\mid D=d} = \lambda_d(P) - \Exp[\lambda_d(P)\mid D=d].
\end{equation}
First, we simplify the bias formula in Lemma \ref{lem:bias_formula_general}, which would lead us to compute the bias on causal parameter of interests. The general bias formula on $\tau_d$ can be simplified by the following derivations:
\begin{align}
    \tilde{\tau}_d
    &= \tau_d + \Exp[(W^{\perp \lambda_d(P)\mid D=d})(W^{\perp \lambda_d(P)\mid D=d})'\mid D=d]^{-1}\Exp[(W^{\perp \lambda_d(P)\mid D=d})(W'\lambda_d(P))\mid D=d]~\eta_d  \notag \\
    &= \tau_d + \var(W\mid D=d)^{-1}\Exp[(W-\Exp[W\mid D=d])W'\lambda_d(P)\mid D=d] ~\eta_d \notag \\
    &= \tau_d + \var(W\mid D=d)^{-1}\left(\Exp[WW'\lambda_d(P)\mid D=d] - \Exp[W\mid D=d]\Exp[W'\lambda_d(P)\mid D=d]\right) \eta_d \notag \\
    &= \tau_d + \var(W\mid D=d)^{-1}\var(W\mid D=d) \Exp[\lambda_d(P)\mid D=d] ~\eta_d \notag \\
    &= \tau_d + \Exp[\lambda_d(P)\mid D=d] \eta_d. \label{eq:bias_tau}
\end{align}
The first equality is given by the Lemma \ref{lem:bias_formula_general}. The second equality holds by \eqref{eq:reg_resid_1}. The fourth equality holds by assumption $\cov(WW',\lambda_d(P)\mid D=d) = 0_{L\times L}$ and $\cov(W,\lambda_d(P)\mid D=d) = 0_{L\times 1}$. 

On the other hand, the general bias formula on $\rho_d$ can be simplified by the following derivations:
\begin{align}
    \tilde{\rho}_d
    &= \rho_d + \frac{\cov(W'\lambda_d(P), \lambda_d(P)^{\perp W\mid D=d} \mid D=d)}{\var(\lambda_d(P)^{\perp W\mid D=d}\mid D=d)} ~~\eta_d \notag \\
    &= \rho_d + \frac{\cov(W'\lambda_d(P), \lambda_d(P) - \Exp[\lambda_d(P)\mid D=d]\mid D=d)}{\var(\lambda_d(P)\mid D=d)}~~\eta_d \notag \\
    &= \rho_d + \frac{\Exp[W'\lambda_1(P)^2\mid D=d] - \Exp[W'\mid D=d]\Exp[\lambda_d(P)\mid D=d]^2}{\var(\lambda_d(P)\mid D=d)}~~\eta_d \notag \\
    &= \rho_d + \Exp[W'\eta_d\mid D=d]. \label{eq:bias_rho}
\end{align}
The first equality is given by the Lemma \ref{lem:bias_formula_general}. The second equality holds by \eqref{eq:reg_resid_1}. The third equality holds by assumption that $\cov(W, \lambda_d(P)\mid D=d) = 0_{L\times 1}$. The final equality holds by the assumption that $\cov(W, \lambda_d(P)^2) = 0_{L\times 1}$.

Having simplified the bias formulae in Lemma \ref{lem:bias_formula_general}, now we can derive the bias formula for $\mu_d$ as follows:
\begin{align}
    \tilde{\mu}_d
    &= \Exp[Y - W'\tilde{\tau}_d - \tilde{\rho}_d\lambda_d(P)\mid D=d] \notag \\
    &= \Exp[\mu_d + W'\tau_d + \rho_d\lambda_d(P) + [W\lambda_d(P)]'\eta_d - W'\tilde{\tau}_d - \tilde{\rho}_d\lambda_d(P)\mid D=d] \notag \\
    &=\Exp[\mu_d + [W\lambda_d(P)]'\eta_d - W'\eta_d \Exp[\lambda_d(P)\mid D=d] - \Exp[W'\eta_d\mid D=d]\lambda_d(P)\mid D=d] \notag \\
    &= \mu_d - \Exp[W'\eta_d\mid D=d]\Exp[\lambda_d(P)\mid D=d] \label{eq:bias_mu}
\end{align}
The first equality follows by misspecified regression \eqref{eq:shortreg}. The second equality follows by the correctly specified model \eqref{eq:longreg}. The third equality follows from the bias formula in \eqref{eq:bias_tau} and \eqref{eq:bias_rho}. The last equality holds by the assumption $\cov(W,\lambda_d(P)\mid D=d) = 0_{L\times 1}$.

Given the bias formula in \eqref{eq:bias_tau}, \eqref{eq:bias_rho}, and  \eqref{eq:bias_mu}, now we compute the bias on estimating the slope of MTE curve:
\begin{align*}
    \widetilde{\text{Slope}} - \text{Slope}
    &= [\tilde{\rho}_1 - \tilde{\rho}_0] - \Exp[\rho_1(W) -\rho_0(W)]\\
    &= \Exp[W\mid D=1]'\eta_1 - \Exp[W\mid D=0]'\eta_0 - \Exp[W]'(\eta_1 - \eta_0) \\
    &= \left(\Exp[W\mid D=1] - \Exp[W\mid D=0]\right)'(\Prob(D=0) ~ \eta_1 + \Prob(D=1) ~ \eta_0).
\end{align*}
The second line holds by \eqref{eq:bias_rho}. This establishes the bias formula for the slope of MTE curve.

Next note that \eqref{eq:bias_tau} and \eqref{eq:bias_mu} imply
\begin{align*}
    \widetilde{\text{CATE}} - \text{CATE}
    &= [\tilde{\mu}_1 - \tilde{\mu}_0] + w'(\tilde{\tau}_1 - \tilde{\tau}_0) - [\mu_1 - \mu_0] - w'(\tau_1 - \tau_0) \\
    &= (w - \Exp[W\mid D=1])'\eta_1\times \Exp[\lambda_1(P)\mid D=1] \\
    & \quad - (w - \Exp[W\mid D=0])'\eta_0 \times \Exp[\lambda_0(P)\mid D=0].
\end{align*}
and 
\begin{align*}
    \widetilde{\text{ATE}} - \text{ATE}
    &= [\tilde{\mu}_1 - \tilde{\mu}_0] + \Exp[W]'(\tilde{\tau}_1 - \tilde{\tau}_0) - [\mu_1 - \mu_0] - \Exp[W]'(\tau_1 - \tau_0) \\
    &= (\Exp[W] - \Exp[W\mid D=1])'\eta_1\times \Exp[\lambda_1(P)\mid D=1] \\
    & \quad - (\Exp[W] - \Exp[W\mid D=0])'\eta_0 \times \Exp[\lambda_0(P)\mid D=0].
\end{align*}

So we have established the bias formula for ATE, CATE, and slope of MTE curve under Assumption \ref{asp:weak_exogeneous_covariates}.1. From this, note that the bias on ATE and slope would vanish if we additionally impose Assumption \ref{asp:weak_exogeneous_covariates}.2, i.e., $\Exp[W] = \Exp[W\mid D=1] = \Exp[W\mid D=0]$, so the proof is complete.
\end{proof}

\subsection{Lemmas for main results}
\label{sec:main_lemmas}

In this section, I provide the statement of supporting lemmas used in  Appendix \ref{sec:main_proofs}. The proofs of these lemmas can be found in Online Appendix \ref{appendix:lemma_proofs}.

\begin{lemma}
\label{lem:convergence_p_beta}
	For a sequence of distributions $\{(\theta_{F_n}, F_n)\}_{n=1}^\infty \subset \mathcal{P}$, suppose 
	\begin{enumerate}
		\item $q_{F_n}(z_\ell) \to q_{\infty}(z_\ell)$ for each $\ell = 0,1,\ldots,K$,
		\item $p_{F_n} \to p_\infty$,
		\item $\beta_{F_n} = (\beta_{F_n,1}', \beta_{F_n,0}')' \to \beta_{\infty}$,
		\item $\sigma^2_{F_n,d\ell} \to \sigma_{\infty,d\ell}^2$ for $d = 0,1$ and $\ell = 0,1,\ldots,K$.
	\end{enumerate}
	Then we have 
	\begin{enumerate}
	\item[(a)] The following convergence holds
	\begin{equation}
	\label{eq:clt_p_beta}
            \begin{pmatrix}
            \hat{Z}_p \\
            \hat{Z}_{\beta_1} \\
            \hat{Z}_{\beta_0} \\
            \hat{Z}_{q}
            \end{pmatrix}
            ~~\equiv~~
		\sqrt{n}
		\begin{pmatrix}
			\hat{p} - p_{F_n} \\
			\hat{\beta}_1 - \beta_{F_n, 1} \\
			\hat{\beta}_0 - \beta_{F_n, 0} \\
			\hat{q} - q_{F_n}
		\end{pmatrix}
		~~\xrightarrow{d}~~
		\begin{bmatrix}
			\mathcal{Z}_p \\
			\mathcal{Z}_{\beta_1} \\
			\mathcal{Z}_{\beta_0} \\
			\mathcal{Z}_q
		\end{bmatrix},
	\end{equation}
	where 
	\begin{align*}
            (\mathcal{Z}_p', \mathcal{Z}_{\beta_1}', \mathcal{Z}_{\beta_0}', \mathcal{Z}_q')' 
            ~~&\sim~~
		\normal\left(0_{3(K+1)\times 1},  \quad 
		\diag\{\Sigma_{p,\infty}, \Sigma_{\beta_1,\infty}, \Sigma_{\beta_0,\infty},\Sigma_{q,\infty}\}\right), \\
		\Sigma_{p,\infty} 
            &\equiv \diag\left\{\frac{p_\infty(z_\ell) (1-p_\infty(z_\ell))}{q_\infty(z_\ell)}: \ell = 0,1,\ldots,K\right\} \\
		\Sigma_{\beta_{1},\infty} 
            &\equiv
			\diag\left\{\frac{\sigma_{\infty,1\ell}^2}{q_\infty(z_\ell)p_\infty(z_\ell)}:\ell = 0,1,\ldots,K\right\} \\
		\Sigma_{\beta_0, \infty} 
            &\equiv 
			\diag\left\{\frac{\sigma_{\infty,0\ell}^2}{q_\infty(z_\ell)(1-p_\infty(z_\ell))}: \ell = 0,1,\ldots,K\right\} \\
		\Sigma_{q,\infty} 
            &\equiv \{\Sigma_{q,\infty}[i,j]\}_{i,j = 0,1,\ldots,K},
	\end{align*}
        with
        \[
            \Sigma_{q,\infty}[i,j] = 
            \begin{cases}
            p_\infty(z_i) (1-p_\infty(z_i)) \\
            -p_\infty(z_i)p_\infty(z_j).
            \end{cases}
        \]
	\item[(b)] We have consistent estimators for the asymptotic variance:
	\begin{align*}
		&\hat{\Sigma}_{p}  ~~\xrightarrow{p}~~ \Sigma_{p,\infty} \\
		&\hat{\Sigma}_\beta ~~\xrightarrow{p}~~ \Sigma_{\beta,\infty} \equiv \diag\{\Sigma_{\beta_1,\infty}, \Sigma_{\beta_0, \infty}\} \\
		&\hat{\Sigma}_q ~~\xrightarrow{p}~~ \Sigma_{q,\infty}.
	\end{align*}
        \item[(c)] The convergence results in part (a) and (b) also hold for any subsequence $\{p_n\}$ provided that the conditions given in the lemma hold along $\{p_n\}$.
	\end{enumerate}
\end{lemma}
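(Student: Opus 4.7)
The plan is to establish part (a) by applying a triangular-array CLT (Lemma \ref{lem:clt_array}) to the raw sample means underlying the estimators, and then to transfer asymptotic normality to the ratio-type statistics $\hat{p}$ and $\hat{\beta}$ via the delta method. First I would introduce the auxiliary sample means
\[
\hat{q}(d,z_\ell) = \frac{1}{n}\sum_{i=1}^n \indicator[D_i=d, Z_i=z_\ell],\qquad \hat{m}(d,z_\ell) = \frac{1}{n}\sum_{i=1}^n Y_i\indicator[D_i=d, Z_i=z_\ell],
\]
so that $\hat{p}(z_\ell) = \hat{q}(1,z_\ell)/\hat{q}(z_\ell)$ and $\hat{\beta}_{d\ell} = \hat{m}(d,z_\ell)/\hat{q}(d,z_\ell)$. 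Stacking the centered versions of $\hat{q}(z_\ell)$, $\hat{q}(d,z_\ell)$, and $\hat{m}(d,z_\ell)$ gives a vector of i.i.d.\ sample means of bounded or $(2+\delta)$-integrable functionals of $(Y,D,Z)$. Definition \ref{def:parameter_space}.2 together with overlap gives $\Exp_{F_n}[|Y\indicator[D=d, Z=z_\ell]|^{2+\delta}]\leq \zeta$, so Liapunov's condition holds uniformly along the sequence $\{F_n\}$, yielding joint asymptotic normality of these raw sample means.

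Next I would apply the delta method to the smooth maps $(\hat{q},\hat{q}(d,\cdot),\hat{m}(d,\cdot))\mapsto (\hat{p},\hat{\beta})$. The overlap conditions in Definition \ref{def:parameter_space}.3--4 keep the denominators $\hat{q}(z_\ell)$ and $\hat{q}(d,z_\ell)$ bounded away from zero in probability, so the remainders are $o_p(1)$ and the influence functions are
\[
\psi_{p,\ell} = \frac{\indicator[Z=z_\ell]}{q_{F_n}(z_\ell)}(D-p_{F_n}(z_\ell)),\qquad \psi_{\beta,d\ell}= \frac{\indicator[D=d, Z=z_\ell]}{q_{F_n}(d,z_\ell)}(Y-\beta_{F_n,d\ell}),
\]
together with $\psi_{q,\ell}= \indicator[Z=z_\ell]-q_{F_n}(z_\ell)$. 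Direct variance calculations with these influence functions reproduce the stated diagonal blocks. For block-diagonality of the limiting covariance matrix, two observations do all the work: (i) products of indicators across distinct $(d,\ell)$ pairs vanish deterministically, killing most cross-covariances; (ii) any remaining same-index cross-covariance (e.g.\ $\cov(\psi_{p,\ell},\psi_{\beta,d\ell})$) reduces by the law of iterated expectations to the conditional mean of $Y-\beta_{F_n,d\ell}$ or $D-p_{F_n}(z_\ell)$ given $(D,Z)=(d,z_\ell)$, which is zero by construction.

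For part (b), I would apply the triangular-array WLLN (Lemma \ref{lem:wlln_array}) together with the continuous mapping theorem at the non-degenerate limits, noting that $\hat{\sigma}^2_{d\ell}$ involves a mean-correction term that is $o_p(1)$ by the same CLT expansion. Part (c) is automatic because both the CLT and WLLN are triangular-array statements, so restricting to any subsequence $\{p_n\}$ leaves the array (and the verified hypotheses) unchanged.

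The main obstacle is bookkeeping --- verifying uniformly along any drifting sequence that Liapunov's condition holds and that the delta-method denominators stay bounded away from zero. Both are purchased by the parameter-space restrictions: condition~2 of Definition \ref{def:parameter_space} supplies the extra $(2+\delta)$-th moment beyond what WLLN needs, while the strong-overlap conditions~3--4 keep $q_{F_n}(z_\ell)$ and $q_{F_n}(d,z_\ell)$ (and their sample analogs) uniformly positive. Once these are in place, block-diagonality of the asymptotic variance follows mechanically from the mutual exclusivity of the underlying indicator events and the law of iterated expectations.
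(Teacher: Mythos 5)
Your proposal is correct and follows essentially the same route as the paper's proof: both reduce the ratio estimators to the identical influence functions $\psi_{p,\ell}$ and $\psi_{\beta,d\ell}$, invoke the triangular-array Lyapunov CLT with the $(2+\delta)$-moment and strong-overlap restrictions of Definition \ref{def:parameter_space} doing the work, and obtain block-diagonality from disjoint indicators plus iterated expectations. The only difference is organizational — you apply a joint CLT to the raw sample means and then the delta method, whereas the paper linearizes the ratios by hand (using the WLLN for the denominators) before applying the CLT to the stacked influence functions — and this changes nothing of substance.
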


\begin{lemma}
\label{lem:pd_variance}
Suppose $c_\mu \neq 0$ or $c_\rho \neq 0$ are fixed weights and let $\lambda = c'\theta$. For any $(\theta,F)\in\mathcal{P}$, the asymptotic variance matrix 
\[
    S \equiv [\partial_p g(\lambda)]\Sigma_p[\partial_{p'} g(\lambda)] + [\partial_{\beta_1} g(\lambda)]\Sigma_{\beta_1}[\partial_{\beta_1'} g(\lambda)] + [\partial_{\beta_0} g(\lambda)]\Sigma_{\beta_0}[\partial_{\beta_0'} g(\lambda)]
\]
is uniformly bounded and positive definite. That is, 
\begin{equation}
\label{eq:bdd_above_zero}
    \inf_{(\theta,F)\in \mathcal{P}} \lambda_{\min}(S) > 0,
\end{equation}
and 
\begin{equation}
\label{eq:bdd_below_inf}
    \sup_{(\theta,F)\in \mathcal{P}} \lambda_{\max}(S) < \infty,
\end{equation}
where $\lambda_{\min}(S)$ and $\lambda_{\max}(S)$ denote the smallest and largest eigenvalues (or singular value) of $S$, respectively.
\end{lemma}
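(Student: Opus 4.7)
The plan is to decompose $S$ into contributions from each observable $(\hat{p}(z_\ell), \hat{\beta}_{1\ell}, \hat{\beta}_{0\ell})$ for $\ell = 0, 1, \ldots, K$, and exploit the special sparsity pattern of the gradient vectors displayed in \eqref{eq:fixed_weight_expansion}.

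First, I would write
\[
S = \sum_{\ell=0}^{K}\!\left[\Sigma_{p}[\ell,\ell]\, w_{p,\ell} w_{p,\ell}' + \Sigma_{\beta_1}[\ell,\ell]\, w_{\beta_1,\ell} w_{\beta_1,\ell}' + \Sigma_{\beta_0}[\ell,\ell]\, w_{\beta_0,\ell} w_{\beta_0,\ell}'\right],
\]
where $w_{x,\ell} \in \mathbb{R}^K$ collects the $(\ell+1)$-th coordinate of $\partial_x g_k(\lambda)$ across $k = 1, \ldots, K$. Inspection of \eqref{eq:fixed_weight_expansion} shows that for each $\ell \geq 1$, the vectors $w_{p,\ell}, w_{\beta_1,\ell}, w_{\beta_0,\ell}$ are all proportional to the standard basis vector $e_\ell \in \mathbb{R}^K$ (only the $k=\ell$ coordinate of $\partial_x g_k$ is nonzero among $k \geq 1$). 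Hence the sum of the $\ell \geq 1$ contributions is a diagonal matrix $D$, while the $\ell = 0$ contribution is a positive semidefinite matrix $R$, so $S = D + R \succeq D$.

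Second, I would establish that $D$ is uniformly positive definite. Its $k$-th diagonal entry equals
\[
D_{kk} = \tfrac{p(z_k)(1-p(z_k))}{q(z_k)}[\lambda - (\beta_{10}-\beta_{00})c_\mu]^2 + \tfrac{\sigma_{1k}^2}{q(z_k)p(z_k)}[(1-p(z_0))c_\mu + 2c_\rho]^2 + \tfrac{\sigma_{0k}^2}{q(z_k)(1-p(z_k))}[-p(z_0)c_\mu + 2c_\rho]^2.
\]
The conditions in Definition \ref{def:parameter_space} give uniform lower bounds on the three scalar prefactors in terms of $\epsilon$ and $\zeta$. The crux is then to bound the last two brackets below. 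Setting $a = (1-p(z_0))c_\mu + 2c_\rho$ and $b = -p(z_0)c_\mu + 2c_\rho$, one has $a - b = c_\mu$, so the elementary identity
\[
a^2 + b^2 = \tfrac{1}{2}(a+b)^2 + \tfrac{1}{2}(a-b)^2 \geq \tfrac{c_\mu^2}{2}
\]
handles the case $c_\mu \neq 0$. When $c_\mu = 0$ (so $c_\rho \neq 0$), $a^2 + b^2 = 8c_\rho^2$. Combining these and absorbing the prefactors yields $D_{kk} \geq \kappa(\epsilon, \zeta, c_\mu, c_\rho) > 0$ uniformly over $\mathcal{P}$, which establishes \eqref{eq:bdd_above_zero} via $\lambda_{\min}(S) \geq \lambda_{\min}(D)$.

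Third, for the upper bound \eqref{eq:bdd_below_inf}, I would argue that every entry of $S$ is uniformly bounded. The compactness of $\Theta$ bounds $|\lambda| = |c'\theta|$; conditional Jensen's inequality combined with Definition \ref{def:parameter_space}.2 bounds $|\beta_{d\ell}|$ and $\sigma_{d\ell}^2$ by powers of $\zeta$; Definition \ref{def:parameter_space}.3--\ref{def:parameter_space}.4 keep the denominators $q(z_\ell)$, $p(z_\ell)$, and $1-p(z_\ell)$ bounded below by $\epsilon$. All gradient coordinates in \eqref{eq:fixed_weight_expansion} are therefore uniformly bounded, whence so are the entries of $S$, yielding a uniform bound on $\lambda_{\max}(S)$. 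The main obstacle is the positive-definiteness step; once the reduction to the diagonal $D$ is recognized, the algebraic identity for $a^2 + b^2$ dispatches it cleanly.
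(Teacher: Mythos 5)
Your proof is correct and follows essentially the same route as the paper's: both exploit the sparsity of the gradient vectors to split $S$ into a positive semidefinite piece from the $\ell=0$ column plus a diagonal matrix from the columns $\ell\geq 1$, and both obtain positive definiteness of the diagonal from $a-b=c_\mu$ via $a^2+b^2\geq \tfrac{1}{2}(a-b)^2$ together with the uniform lower bounds on $\sigma^2_{dk}/q(d,z_k)$ from Definition \ref{def:parameter_space}. The only cosmetic differences are that the paper discards the $\partial_p$ term at the outset via $\lambda_{\min}(P+Q)\geq\lambda_{\min}(P)+\lambda_{\min}(Q)$ (you discard its possibly-vanishing diagonal contribution later), and it covers the $c_\rho$ direction with a single unified bound $\max\{c_\mu^2/2,\,32\epsilon(1-\epsilon)c_\rho^2\}$ rather than your two-case split, which is equally valid since $c$ is fixed.
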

\begin{remark}
This lemma shows that any choice of nonzero linear weights $c_\mu$ and $c_\rho$ would not entail an asymptotic singular covariance matrix for AR test, which is essential for establishing uniform validity.
\end{remark}
\begin{remark}
Indeed, we can relax the restriction that $\lambda = c'\theta$ and show this result for all $\lambda$ that belongs to a compact subset on the real line, which will be useful to establish the power property of the test. Since the above statement is sufficient to prove the uniform validity of the conditional Wald test, I do not pursue this extension here.
\end{remark}
\begin{lemma}
\label{lem:NS_cond}
    Suppose the conditions in Proposition \ref{prop:subsequence_AR_validity} hold for a sequence $(\lambda_n, F_n)$ for which $\sqrt{n}\|\pi_{F_{n}}\| \to s_\infty < \infty$. Then $\Upsilon_\infty = (\mathcal{Z}_h, S_\infty, \Sigma_{p,\infty}, \partial_p g(\lambda_\infty))$ belongs to $\mathcal{U}$ defined in \eqref{eq:defn_set_C} almost surely. That is, 
    \[
        \Prob_{\eta^*}\left(\mathcal{Z}_h + [\partial_p \pi] \Sigma_{p,\infty} \partial_{p'} g(\lambda_\infty) S_\infty^{-1/2}\eta^* \neq 0_{K\times 1} \mid \mathcal{Z}_h \right) = 1 \quad \text{a.s.}
    \]
\end{lemma}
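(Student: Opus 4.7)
The plan is to verify that the random tuple $\Upsilon_\infty = (\mathcal{Z}_h, S_\infty, \Sigma_{p,\infty}, \partial_p g(\lambda_\infty))$ almost surely satisfies each of the three conditions defining $\mathcal{U}$ in \eqref{eq:defn_set_C}. Membership of $S_\infty$ in $\mathbb{S}^{K}_{++}$ follows from Lemma \ref{lem:pd_variance}, and $\Sigma_{p,\infty} \in \mathbb{S}^{K+1}_{++}$ follows from the strong overlap conditions in Definition \ref{def:parameter_space}, which bound each diagonal entry $p_\infty(z_\ell)(1-p_\infty(z_\ell))/q_\infty(z_\ell)$ away from zero; these are immediate. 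The substantive work is to establish
\[
\Prob_{\eta^*}\!\left(\mathcal{Z}_h + A\eta^* = 0_{K\times 1} \,\big|\, \mathcal{Z}_h\right) = 0 \quad \text{a.s.,}
\]
where $A \equiv [\partial_p \pi]\,\Sigma_{p,\infty}\,[\partial_{p'} g(\lambda_\infty)]\,S_\infty^{-1/2}$.

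To this end, I would reduce the conditional statement to an unconditional one via the tower property: since $\Prob_{\eta^*}(\mathcal{Z}_h + A\eta^* = 0 \mid \mathcal{Z}_h)$ is nonnegative and its expectation over $\mathcal{Z}_h$ equals $\Prob(\mathcal{Z}_h + A\eta^* = 0)$, it suffices to prove that this unconditional probability is zero. Since $\eta^* \sim \normal(0_{K\times 1}, I_{K\times K})$ is independent of $\mathcal{Z}_h$, the vector $\mathcal{Z}_h + A\eta^*$ is jointly Gaussian with mean $\psi_\infty \iota_\infty$ and covariance $\var(\mathcal{Z}_h) + AA'$. Any Gaussian with positive-definite covariance admits a Lebesgue density on $\R^K$ and therefore assigns probability zero to any singleton, so the problem reduces to showing that $\var(\mathcal{Z}_h) + AA'$ is positive definite.

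The key step is an algebraic cancellation that yields a clean identity. Using the explicit representation $\mathcal{Z}_h = \psi_\infty \iota_\infty + [\partial_p \pi]\mathcal{Z}_p - [\partial_p \pi]\Sigma_{p,\infty}[\partial_{p'}g(\lambda_\infty)] S_\infty^{-1}\mathcal{Z}_{\mathfrak{g}}$ from the proof of Proposition \ref{prop:subsequence_AR_validity}, together with $\cov(\mathcal{Z}_p,\mathcal{Z}_{\mathfrak{g}}) = \Sigma_{p,\infty}[\partial_{p'} g(\lambda_\infty)]$, a direct expansion gives
\[
\var(\mathcal{Z}_h) = [\partial_p \pi]\Sigma_{p,\infty}[\partial_{p'}\pi] - [\partial_p \pi]\Sigma_{p,\infty}[\partial_{p'}g] S_\infty^{-1} [\partial_p g]\Sigma_{p,\infty}[\partial_{p'}\pi].
\]
Combined with $AA' = [\partial_p \pi]\Sigma_{p,\infty}[\partial_{p'}g] S_\infty^{-1}[\partial_p g]\Sigma_{p,\infty}[\partial_{p'}\pi]$, the cross terms cancel, leaving the clean identity $\var(\mathcal{Z}_h) + AA' = [\partial_p \pi]\Sigma_{p,\infty}[\partial_{p'}\pi]$. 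This is precisely the asymptotic variance of $\sqrt{n}\hat\pi$, reflecting that $\hat h$ is constructed to isolate the component of $\hat\pi$ orthogonal to the moment vector while the perturbation $A\eta^*$ restores what was projected out.

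It remains to verify positive definiteness of $[\partial_p \pi]\Sigma_{p,\infty}[\partial_{p'}\pi]$, which I would do by direct calculation: with $[\partial_p\pi] = (-\mathbf{1}_K, I_K)$ and $\Sigma_{p,\infty}$ diagonal with strictly positive entries $\sigma_\ell^2 \equiv p_\infty(z_\ell)(1-p_\infty(z_\ell))/q_\infty(z_\ell)$, the matrix simplifies to $\sigma_0^2\,\mathbf{1}_K\mathbf{1}_K' + \diag(\sigma_1^2,\ldots,\sigma_K^2)$, which is positive definite as the sum of a positive semidefinite rank-one matrix and a strictly positive definite diagonal matrix. Recognizing the variance identity is the main obstacle; the remaining reductions (tower property, nonsingular Gaussians have densities, direct algebra) are routine.
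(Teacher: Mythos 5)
Your proposal is correct, and it takes a genuinely different route from the paper's. The paper splits into two cases according to whether $\mathbb{H} \equiv [\partial_p \pi]\Sigma_{p,\infty}[\partial_{p'}g(\lambda_\infty)]$ has positive rank: when $\rank(\mathbb{H})>0$ it proves the stronger claim that $h+\mathbb{H}S_\infty^{-1/2}\eta^*\neq 0$ a.s.\ for \emph{every} fixed $h$, by projecting onto a direction $a$ with $\mathbb{H}'a\neq 0$ and noting the resulting scalar is a nondegenerate Gaussian; when $\mathbb{H}=0$ the perturbation vanishes and the paper instead appeals to the nondegeneracy of $\mathcal{Z}_h = s_\infty\iota_\infty + [\partial_p\pi]\mathcal{Z}_p$ itself. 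You avoid the case split entirely by integrating out $\mathcal{Z}_h$ via the tower property and computing the full covariance of $\mathcal{Z}_h + A\eta^*$; the cancellation $\var(\mathcal{Z}_h)+AA' = [\partial_p\pi]\Sigma_{p,\infty}[\partial_{p'}\pi]$ checks out (the cross terms $-2\,\mathbb{H}S_\infty^{-1}\mathbb{H}'$ from $\var(\mathcal{Z}_h)$ are exactly offset by $AA'=\mathbb{H}S_\infty^{-1}\mathbb{H}'$ plus the quadratic term), and positive definiteness of $\sigma_0^2\mathbf{1}_K\mathbf{1}_K' + \diag(\sigma_1^2,\ldots,\sigma_K^2)$ is immediate under the overlap conditions. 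Your identity is also the more informative one: it makes transparent that the simulated $\pi_s$ reproduces the asymptotic variance of $\sqrt{n}\hat{\pi}$, which is the design logic behind the conditioning statistic. The only thing the paper's argument buys that yours does not is the uniform-in-$h$ statement in the nondegenerate case, but that extra strength is not needed for the lemma as stated.
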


\begin{lemma}
\label{lem:cond_stat}[\citet[Lemma 10.3]{andrews/guggenberger:2017}]
Assume the following conditions hold under a sequence of DGPs $\{(\theta_{F_n}, F_n)\}_{n\geq 1}$:
\begin{enumerate}
\item The scaled singular values converge
\[
	\sqrt{n}\tau_{j,F_n} \to 
	\begin{cases}
		\infty & \text{if } j \leq q \\
		t_j \in [0,\infty) & \text{if } j > q
	\end{cases}
\]
\item $C_{F_n}$ and $B_{F_n}$ converge to their limits $C_{\infty}$ and $B_{\infty}$, respectively.
\item The moment condition and the normalized $\hat{D}(\theta_{F_n})$ converge jointly:
\begin{align*}
	\sqrt{n}
	\begin{pmatrix}
		\hat{A}\theta_{F_n} - \hat{\beta} \\
		\operatorname*{vec}(\hat{D}(\theta_{F_n})) - \operatorname*{vec}(A_{F_n})
	\end{pmatrix}
	~~\xrightarrow{d}~~
	\begin{pmatrix}
		\mathcal{Z}_{\mathfrak{m}} \\
		\operatorname{vec}(\mathcal{Z}_D)
	\end{pmatrix}
\end{align*}
where $\mathcal{Z}_D = (\mathcal{Z}_{d_1}, \ldots, \mathcal{Z}_{d_{2(M+1)}})$ is independent of $\mathcal{Z}_{\mathfrak{m}}$.
\end{enumerate}
Then we have
\[
	n^{1/2}\hat{D}(\theta_{F_n}) B_{F_n} \Psi_{F_n} ~~\xrightarrow{d}~~ \mathcal{D} = O_p(1)
\]
where
\[
	\Psi_{F_n} \equiv \diag\{(\sqrt{n}\tau_{1,F_n})^{-1},\ldots, (\sqrt{n}\tau_{q,F_n})^{-1}, 1,\ldots, 1\} \in \R^{2(M+1)\times 2(M+1)}.
\]
Moreover, $\mathcal{D}$ is independent of $\mathcal{Z}_{\mathfrak{m}}$.
\end{lemma}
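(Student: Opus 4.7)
The plan is to apply \citet[Lemma 10.3]{andrews/guggenberger:2017} after translating the result from their generic GMM framework into the MTE minimum-distance setup of this paper. In their formulation $\hat{D}$ is a Kleibergen-style orthogonalized gradient estimator; here, the matrix $\hat{D}(\theta)$ built via \eqref{eq:asy_cov_est} plays exactly that role, with $\hat{A}\theta - \hat{\beta}$ acting as the sample moment function. So the bulk of the work is checking that (i) the joint convergence in condition~3 supplies the correct Gaussian limits for $\hat{A}\theta_{F_n} - \hat{\beta}$ and $\hat{D}(\theta_{F_n}) - A_{F_n}$ with the required independence, (ii) the SVD scaling in conditions~1 and 2 matches the form used in Andrews--Guggenberger, and (iii) $\Psi_{F_n}$ is assembled in the same way.

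Concretely, I would start with the decomposition
\[
    n^{1/2}\hat{D}(\theta_{F_n})B_{F_n}\Psi_{F_n}
    = n^{1/2}\bigl(\hat{D}(\theta_{F_n}) - A_{F_n}\bigr)B_{F_n}\Psi_{F_n}
    + n^{1/2} A_{F_n} B_{F_n}\Psi_{F_n}.
\]
Using $A_{F_n} = C_{F_n}\Pi_{F_n}B_{F_n}'$ together with $B_{F_n}'B_{F_n} = I_{2(M+1)}$, the second (deterministic) term simplifies to $C_{F_n}\cdot n^{1/2}\Pi_{F_n}\Psi_{F_n}$. The scaled matrix $n^{1/2}\Pi_{F_n}\Psi_{F_n}$ has $(j,j)$ entry equal to $1$ for $j \leq q$ and $\sqrt{n}\tau_{j,F_n} \to t_j$ for $j > q$, with zeros elsewhere. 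Combined with $C_{F_n} \to C_\infty$, this yields a deterministic matrix limit.

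For the stochastic term I would split by columns. The first $q$ columns of $B_{F_n}\Psi_{F_n}$ are multiplied by the vanishing factors $(\sqrt{n}\tau_{j,F_n})^{-1}$, so their contribution is $O_p(1)\cdot o(1) = o_p(1)$. The remaining $2(M+1)-q$ columns keep the identity scaling, and condition~3 gives $n^{1/2}(\hat{D}(\theta_{F_n}) - A_{F_n}) \xrightarrow{d} \mathcal{Z}_D$ jointly with $\mathcal{Z}_{\mathfrak{m}}$. By continuous mapping and Slutsky, the full expression converges to a random matrix $\mathcal{D}$ equal to the deterministic population piece plus the weak-column columns of $\mathcal{Z}_D B_\infty$, hence $O_p(1)$. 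Independence $\mathcal{D} \indep \mathcal{Z}_{\mathfrak{m}}$ is immediate from $\mathcal{Z}_D \indep \mathcal{Z}_{\mathfrak{m}}$ in condition~3, since the deterministic part contributes nothing to the joint law.

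The main obstacle is bookkeeping across the partition $\{j \leq q\}$ versus $\{j > q\}$: making sure the column-by-column scaling is applied consistently to both the population and the fluctuation terms, and that the $o_p(1)$ bound on the strong-identification columns combines cleanly with the weak-identification Gaussian limit through Slutsky. Once those indices are kept straight, the argument is a direct transcription of the cited reference result.
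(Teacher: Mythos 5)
Your proposal is correct and follows essentially the same route as the paper's proof: the same decomposition of $n^{1/2}\hat{D}(\theta_{F_n})B_{F_n}\Psi_{F_n}$ into a deterministic part handled via the SVD $A_{F_n}=C_{F_n}\Pi_{F_n}B_{F_n}'$ (giving identity entries on the strongly identified block and $t_j$ limits on the weak block) and a stochastic part in which the first $q$ columns vanish as $O_p(1)\cdot o(1)$ while the remaining columns converge to the corresponding columns of $\mathcal{Z}_D B_{\infty}$, with independence inherited from $\mathcal{Z}_D\indep\mathcal{Z}_{\mathfrak{m}}$. The only difference is cosmetic ordering of the column split versus the deterministic/stochastic split.
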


\begin{lemma}
	\label{lem:new_cond_stat}
	Suppose the assumptions in Lemma \ref{lem:cond_stat} hold. Let $\xi\in\R^{2(K+1)\times 2(M+1)}$ be a matrix of i.i.d. standard normal random variables and $\kappa > 0$. Then 
	\[
		n^{1/2} (\hat{D}(\theta_{F_n}) + \kappa n^{-1/2}\xi) B_{F_n} \Psi_{F_n} 
		~~\xrightarrow{d}~~
		\mathcal{D}_\xi	
	\]
	where $\mathcal{D}_\xi$ has full rank with probability one and is independent of $\mathcal{Z}_{\mathfrak{m}}$.
\end{lemma}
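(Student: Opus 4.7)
The plan is to write
\[
    n^{1/2}\bigl(\hat{D}(\theta_{F_n}) + \kappa n^{-1/2}\xi\bigr) B_{F_n} \Psi_{F_n}
    = n^{1/2}\hat{D}(\theta_{F_n}) B_{F_n} \Psi_{F_n} + \kappa \xi B_{F_n} \Psi_{F_n},
\]
apply Lemma~\ref{lem:cond_stat} to the first summand, and handle the perturbation separately. Since $B_{F_n}\to B_{\infty}$ by assumption and $\Psi_{F_n}\to\Psi_{\infty}\equiv\diag\{0,\ldots,0,1,\ldots,1\}$ (the first $q$ diagonal entries vanish because $\sqrt{n}\tau_{j,F_n}\to\infty$ for $j\leq q$), the continuous mapping theorem gives $\kappa\xi B_{F_n}\Psi_{F_n}\xrightarrow{d}\kappa\xi B_{\infty}\Psi_{\infty}$. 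Because $\xi$ is drawn independently of the data, the two summands converge jointly with independent limits, so the weak limit is $\mathcal{D}_\xi \equiv \mathcal{D} + \kappa\xi B_{\infty}\Psi_{\infty}$, and $\mathcal{D}_\xi\indep\mathcal{Z}_{\mathfrak{m}}$ follows from the independence of each summand from $\mathcal{Z}_{\mathfrak{m}}$.

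To analyze the column structure of $\mathcal{D}_\xi$, I would use the SVD $A_{F_n}=C_{F_n}\Pi_{F_n}B_{F_n}'$ to decompose
\[
    n^{1/2}\hat{D}(\theta_{F_n})B_{F_n}\Psi_{F_n}
    = \sqrt{n}\,C_{F_n}\Pi_{F_n}\Psi_{F_n} + \sqrt{n}\bigl(\hat{D}(\theta_{F_n})-A_{F_n}\bigr)B_{F_n}\Psi_{F_n}.
\]
A direct column-by-column calculation shows that the $j$-th column of $\sqrt{n}\,C_{F_n}\Pi_{F_n}\Psi_{F_n}$ converges to $c_{j,\infty}$ for $j\leq q$ and to $t_j c_{j,\infty}$ for $j>q$, where $c_{j,\infty}$ denotes the $j$-th column of $C_{\infty}$. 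The second summand contributes only to columns $j>q$ because $\Psi_{\infty}[j,j]=0$ for $j\leq q$. Hence the first $q$ columns of $\mathcal{D}_\xi$ are the orthonormal vectors $c_{1,\infty},\ldots,c_{q,\infty}$, while each column $j>q$ contains the independent Gaussian noise $\kappa(\xi B_{\infty})_j$.

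The main obstacle is proving full column rank almost surely. Because $B_{\infty}$ is orthogonal, rotation invariance of the standard Gaussian implies $\xi B_{\infty}$ still has i.i.d.\ $N(0,1)$ entries, so its columns are independent $N(0,I)$ vectors in $\R^{2(K+1)}$. Let $V=\mathrm{span}\{c_{1,\infty},\ldots,c_{q,\infty}\}$; by $K\geq M$, the complement $V^{\perp}$ has dimension $2(K+1)-q\geq 2(M+1)-q$. For each $j>q$, the projection onto $V^{\perp}$ of column $j$ of $\mathcal{D}_\xi$ is, conditional on everything that precedes it, Gaussian with covariance $\kappa^2 P_{V^{\perp}}$, which is nondegenerate on $V^{\perp}$. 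A routine induction over $j=q+1,\ldots,2(M+1)$ then shows that each such projection lies outside the span of its predecessors in $V^{\perp}$ (a proper subspace, hence Lebesgue null), giving linear independence of the projections and therefore full column rank of $\mathcal{D}_\xi$ with probability one.
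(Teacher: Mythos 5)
Your proposal is correct, and its first half (the decomposition into the unperturbed statistic handled by Lemma \ref{lem:cond_stat} plus the perturbation $\kappa\xi B_{F_n}\Psi_{F_n}$, the limits of the two pieces, joint convergence via independence of $\xi$ from the data, and the resulting independence of $\mathcal{D}_\xi$ from $\mathcal{Z}_{\mathfrak{m}}$) is essentially identical to the paper's argument, which treats the cases $q = 2(M+1)$ and $q < 2(M+1)$ separately but arrives at the same limit. Where you genuinely diverge is the almost-sure full-rank step. The paper left-multiplies $\mathcal{D}_\xi$ by $C_\infty'$ to obtain a block upper-triangular matrix, reduces the problem to showing that the $(2(K+1)-q)\times(2(M+1)-q)$ block $C_{d_m-q,\infty}'C_{d_\theta-q,\xi}$ has full rank, and then invokes Lemma 16.1 of \cite{andrews/guggenberger:2017}, which requires verifying that the covariance matrix of the vectorized block, $(B_{d_\theta-q,\infty}'\otimes C_{d_m-q,\infty}')\var(\operatorname{vec}(\mathcal{Z}_D+\kappa\xi))(\cdot)'$, is positive definite; positive definiteness comes from the $\kappa^2 I$ contribution of the perturbation. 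You instead give a direct, self-contained induction: rotation invariance makes the columns of $\xi B_\infty$ independent standard Gaussians, and conditioning on $\mathcal{Z}_D$ and the preceding noise columns, each projection onto $V^\perp$ is a nondegenerate Gaussian on $V^\perp$ that avoids the (Lebesgue-null) span of its predecessors. Your route avoids the external lemma and the Kronecker-product rank computation at the cost of a slightly longer hands-on argument; both are valid. One small presentational caveat: your phrase ``conditional on everything that precedes it'' should be made precise to mean conditioning on $\mathcal{Z}_D$ in its entirety together with $\xi b_{q+1,\infty},\ldots,\xi b_{j-1,\infty}$, since $\mathcal{Z}_D$ enters every column $j>q$, not only the preceding ones; with that reading the conditional covariance $\kappa^2 P_{V^\perp}$ and the measurability of the predecessor span are exactly as you state.
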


\begin{lemma}[The $c_r$ inequality]
\label{lem:cr_ineq}
	Let $X$ and $Y$ be random variables with $\Exp|X|^r < \infty$ and $\Exp|Y|^r < \infty$ for some $r > 0$. Then 
	\[
		\Exp|X+Y|^r \leq c_r\left(\Exp|X|^r + \Exp|Y|^r\right)	
	\]  
	where $c_r = 1$ if $r \leq 1$ and $c_r = 2^{r-1}$ if $r > 1$.
\end{lemma}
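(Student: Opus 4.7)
The plan is to reduce the problem to a pointwise inequality $|x+y|^r \leq c_r(|x|^r + |y|^r)$ for arbitrary real $x,y$ and then take expectations. By the usual triangle inequality and the fact that $t\mapsto t^r$ is monotone increasing on $[0,\infty)$, it suffices to establish $(a+b)^r \leq c_r(a^r + b^r)$ for all $a,b\geq 0$, since we may set $a = |x|$ and $b = |y|$.

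For the case $r \in (0,1]$, the first step is to show that the function $f(t) = t^r$ is subadditive on $[0,\infty)$, i.e., $(a+b)^r \leq a^r + b^r$. One clean way is to note that $f$ is concave with $f(0) = 0$, which implies subadditivity (for $a,b > 0$, write $a = \frac{a}{a+b}(a+b) + \frac{b}{a+b}\cdot 0$ and use concavity, and symmetrically for $b$, then sum). This yields $c_r = 1$ for $r\leq 1$.

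For the case $r > 1$, I would invoke convexity of $f(t) = t^r$ on $[0,\infty)$. Applying Jensen's inequality to the two-point average gives
\[
\left(\tfrac{a+b}{2}\right)^r \leq \tfrac{1}{2}\left(a^r + b^r\right),
\]
which rearranges to $(a+b)^r \leq 2^{r-1}(a^r + b^r)$, so $c_r = 2^{r-1}$ suffices.

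Combining the two cases and taking expectations (justified since the right-hand side is finite by the hypothesis $\Exp|X|^r, \Exp|Y|^r < \infty$, which also ensures $\Exp|X+Y|^r < \infty$) yields the stated bound. There is no substantive obstacle here; this is a classical result, and the only care required is in invoking the correct convexity/concavity property depending on the range of $r$.
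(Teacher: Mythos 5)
Your proof is correct. The paper does not actually prove this lemma; it simply cites Proposition 3.8 of White (1999), so your self-contained argument is a genuine addition rather than a reproduction of the paper's route. Your reduction to the pointwise inequality $(a+b)^r \leq c_r(a^r+b^r)$ for $a,b\geq 0$ via the triangle inequality and monotonicity of $t\mapsto t^r$ is the standard one, and both cases are handled cleanly: subadditivity of $t^r$ for $r\leq 1$ follows from concavity together with $f(0)=0$ exactly as you describe, and the $r>1$ case follows from convexity via the two-point Jensen inequality $\left(\tfrac{a+b}{2}\right)^r \leq \tfrac{1}{2}(a^r+b^r)$. The finiteness of $\Exp|X+Y|^r$ is itself a consequence of the pointwise bound and the finiteness of the right-hand side, so taking expectations is justified. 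Nothing is missing; this is precisely the argument one would find in the cited reference.
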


\begin{lemma}[WLLN for $L^{1+\delta}$ triangular array]
\label{lem:wlln_array}
Let $\{X_{ni}\}$ be a row-wise i.i.d. triangular array of random variables. Suppose
\begin{equation}
\label{eq:wlln}
	\sup_{n}\Exp |X_{ni}|^{1+\delta} < \infty
\end{equation}
for some $\delta > 0$. 
Then 
\[
	\frac{1}{n}\sum_{i=1}^n X_{ni} - \Exp[X_{ni}] ~~\xrightarrow{p}~~ 0.
\]
\end{lemma}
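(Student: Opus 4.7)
The plan is a standard triangular-array truncation argument, with the extra care that both the centering constant $\mu_n \equiv \Exp X_{ni}$ and the law of $X_{ni}$ may drift with $n$. Throughout, let $C \equiv \sup_n \Exp|X_{ni}|^{1+\delta} < \infty$. Without loss of generality I will assume $\delta \in (0,1]$: if $\delta > 1$, then Jensen's inequality gives $\sup_n \Exp X_{ni}^{2} \leq C^{2/(1+\delta)} < \infty$, and Chebyshev's inequality applied to row sums immediately yields the conclusion.

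\textbf{Step 1 (Centering).} Define $\widetilde{X}_{ni} = X_{ni} - \mu_n$. Using Lemma \ref{lem:cr_ineq} and Jensen's inequality,
\[
\Exp|\widetilde{X}_{ni}|^{1+\delta} \leq c_{1+\delta}\bigl(\Exp|X_{ni}|^{1+\delta} + |\mu_n|^{1+\delta}\bigr) \leq 2 c_{1+\delta} C,
\]
so the centered array inherits a uniform $(1+\delta)$-moment bound. It now suffices to show $n^{-1}\sum_{i=1}^{n}\widetilde{X}_{ni} \xrightarrow{p} 0$.

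\textbf{Step 2 (Truncation).} Pick any $\alpha \in (1/(1+\delta), \, 1/(1-\delta))$ (with $1/(1-\delta) \equiv \infty$ when $\delta = 1$), and set $Y_{ni} = \widetilde{X}_{ni}\,\indicator[|\widetilde{X}_{ni}| \leq n^{\alpha}]$. The choice of $\alpha$ will balance the truncation bias against the variance growth in Step 4. I then decompose
\[
\frac{1}{n}\sum_{i=1}^{n}\widetilde{X}_{ni} = \underbrace{\frac{1}{n}\sum_{i=1}^{n}\bigl(Y_{ni} - \Exp Y_{ni}\bigr)}_{\mathrm{(I)}} + \underbrace{\Exp Y_{n1}}_{\mathrm{(II)}} + \underbrace{\frac{1}{n}\sum_{i=1}^{n}\bigl(\widetilde{X}_{ni}-Y_{ni}\bigr)}_{\mathrm{(III)}},
\]
and handle the three pieces separately.

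\textbf{Step 3 (Truncation remainder (III)).} By the union bound and Markov's inequality,
\[
\Prob\Bigl(\mathrm{(III)} \neq 0\Bigr) \leq n\,\Prob\bigl(|\widetilde{X}_{n1}| > n^{\alpha}\bigr) \leq \frac{n\, \Exp|\widetilde{X}_{n1}|^{1+\delta}}{n^{\alpha(1+\delta)}} \leq \frac{2 c_{1+\delta} C}{n^{\alpha(1+\delta) - 1}} \to 0,
\]
since $\alpha(1+\delta) > 1$ by the choice of $\alpha$.

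\textbf{Step 4 (Variance term (I)).} Since $\Exp|\widetilde{X}_{n1}|^{1+\delta}$ is uniformly bounded and $|Y_{n1}|^{1-\delta} \leq n^{\alpha(1-\delta)}$ on the truncation event,
\[
\Exp Y_{n1}^{2} = \Exp\bigl[|\widetilde{X}_{n1}|^{1+\delta} \cdot |\widetilde{X}_{n1}|^{1-\delta}\,\indicator[|\widetilde{X}_{n1}| \leq n^{\alpha}]\bigr] \leq n^{\alpha(1-\delta)}\,\Exp|\widetilde{X}_{n1}|^{1+\delta} \leq 2c_{1+\delta} C \cdot n^{\alpha(1-\delta)}.
\]
Row-wise independence and Chebyshev then give, for any $\varepsilon > 0$,
\[
\Prob\bigl(|\mathrm{(I)}| > \varepsilon\bigr) \leq \frac{\Exp Y_{n1}^{2}}{n\varepsilon^{2}} \leq \frac{2 c_{1+\delta} C}{\varepsilon^{2}\, n^{1 - \alpha(1-\delta)}} \to 0,
\]
since $\alpha(1-\delta) < 1$ by the choice of $\alpha$.

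\textbf{Step 5 (Bias term (II)).} Because $\Exp \widetilde{X}_{n1} = 0$,
\[
|\Exp Y_{n1}| = \bigl|\Exp\bigl[\widetilde{X}_{n1}\,\indicator[|\widetilde{X}_{n1}| > n^{\alpha}]\bigr]\bigr| \leq n^{-\alpha\delta}\,\Exp|\widetilde{X}_{n1}|^{1+\delta} \leq 2c_{1+\delta} C \cdot n^{-\alpha\delta} \to 0.
\]

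Combining Steps 3--5 yields $n^{-1}\sum_{i=1}^{n}\widetilde{X}_{ni} \xrightarrow{p} 0$, which by Step 1 is the claim.

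The only delicate point is the simultaneous balance in Step 2 — the tail probability requires $\alpha > 1/(1+\delta)$ while the Chebyshev bound requires $\alpha < 1/(1-\delta)$; the assumption $\delta > 0$ is precisely what makes this interval non-empty, and this is where the argument would break if only $\Exp|X_{ni}| < \infty$ uniformly were assumed.
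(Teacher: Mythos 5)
Your proof is correct, but it takes a genuinely different route from the paper's. The paper proves the lemma in one display: it applies Markov's inequality to the $(1+\delta)$-th absolute moment of the centered average and then bounds $\Exp\bigl|\sum_{i}(X_{ni}-\Exp X_{ni})\bigr|^{1+\delta}$ by $\sum_{i}\Exp|X_{ni}-\Exp X_{ni}|^{1+\delta}$, which yields a rate $n^{-\delta}$ directly. (The paper attributes that step to the "triangular inequality," but for $1<1+\delta\le 2$ it is really the von Bahr--Esseen moment inequality for sums of independent mean-zero variables, valid up to a constant factor of $2$ that is immaterial for the limit; this is the one place where the paper's own write-up is looser than yours.) You instead run the classical truncation scheme: split off the event that any summand exceeds $n^{\alpha}$, control the truncated part by Chebyshev using the $L^{2}$ bound $\Exp Y_{n1}^{2}\le n^{\alpha(1-\delta)}\Exp|\widetilde X_{n1}|^{1+\delta}$, and kill the truncation bias using mean-zero centering, with the window $\alpha\in(1/(1+\delta),1/(1-\delta))$ reconciling the two requirements. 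Each step of your argument checks out, including the reduction to $\delta\le 1$ and the handling of the drifting law and centering constant. What each approach buys: the paper's is shorter and gives an explicit $O(n^{-\delta})$ decay of the tail probability, at the cost of invoking a moment inequality for independent sums; yours is longer but entirely elementary (only Markov, Chebyshev, the union bound, and the $c_r$ inequality), and it makes transparent exactly why $\delta>0$ is needed, namely to open the interval for $\alpha$.
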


\begin{lemma}[Multivariate Lyapunov CLT]
\label{lem:clt_array}
	Let $\{X_{ni}\}$ be a row-wise i.i.d. triangular array of random vectors in $\R^k$ with expectations $\Exp[X_{ni}] = 0$ and covariance matrices $\Sigma_{n} = \Exp[X_{ni}X_{ni}']$.  Suppose 
\begin{equation}
\label{eq:lyapunov_var}
	\Sigma_n \to \Sigma \geq 0
\end{equation}
and for some $\delta > 0$
\begin{equation}
\label{eq:lyapunov_moment}
	\sup_{n} \Exp\|X_{ni}\|^{2+\delta} < \infty.
\end{equation}
Then as $n\to \infty$,
\[
	\frac{1}{\sqrt{n}}\sum_{i=1}^n X_i ~~\xrightarrow{d}~~ \normal(0,\Sigma).
\]
\end{lemma}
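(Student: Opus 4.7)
The plan is to reduce the multivariate statement to a univariate triangular-array CLT via the Cramér--Wold device, and then verify Lyapunov's condition using the uniform $(2+\delta)$-moment bound in \eqref{eq:lyapunov_moment}. Specifically, for an arbitrary $\lambda \in \R^k$, I would define the scalar row-wise i.i.d.\ triangular array $Y_{ni} = \lambda' X_{ni}$ and show that $n^{-1/2}\sum_{i=1}^n Y_{ni} \xrightarrow{d} \normal(0, \lambda'\Sigma\lambda)$; the multivariate conclusion then follows from the Cramér--Wold theorem since joint convergence to a Gaussian limit is equivalent to convergence of all linear combinations.

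First I would record the basic moment computations: $\Exp[Y_{ni}] = 0$, $\var(Y_{ni}) = \lambda'\Sigma_n \lambda$, and by Cauchy--Schwarz applied component-wise (or the trivial bound $|Y_{ni}| \le \|\lambda\|\,\|X_{ni}\|$),
\begin{equation*}
    \sup_{n}\Exp|Y_{ni}|^{2+\delta} \;\le\; \|\lambda\|^{2+\delta}\sup_{n}\Exp\|X_{ni}\|^{2+\delta} \;<\; \infty,
\end{equation*}
using \eqref{eq:lyapunov_moment}. By \eqref{eq:lyapunov_var} and continuity, $\var(Y_{ni}) \to \lambda'\Sigma\lambda$.

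Next I would split into two cases. If $\lambda'\Sigma\lambda > 0$, then for all large $n$ the variance $s_n^2 \equiv \var(Y_{ni})$ is bounded away from zero, so the Lyapunov ratio for the i.i.d.\ row satisfies
\begin{equation*}
    \frac{\sum_{i=1}^{n}\Exp|Y_{ni}|^{2+\delta}}{(n s_n^2)^{(2+\delta)/2}} \;=\; \frac{\Exp|Y_{ni}|^{2+\delta}}{n^{\delta/2}\, s_n^{2+\delta}} \;\longrightarrow\; 0,
\end{equation*}
so the standard (Lindeberg--)Lyapunov CLT for triangular arrays delivers $(n s_n^2)^{-1/2}\sum_{i=1}^n Y_{ni} \xrightarrow{d} \normal(0,1)$, and Slutsky's theorem converts this into $n^{-1/2}\sum_{i=1}^n Y_{ni} \xrightarrow{d} \normal(0,\lambda'\Sigma\lambda)$. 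If instead $\lambda'\Sigma\lambda = 0$, the desired limit is the point mass at zero, and this follows directly from Chebyshev's inequality since $\var(n^{-1/2}\sum_i Y_{ni}) = s_n^2 \to 0$, giving convergence in probability and hence in distribution to $0$.

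Finally, combining the two cases shows that for every $\lambda \in \R^k$, $\lambda' n^{-1/2}\sum_i X_{ni} \xrightarrow{d} \normal(0,\lambda'\Sigma\lambda)$, which by Cramér--Wold implies the joint statement $n^{-1/2}\sum_i X_{ni} \xrightarrow{d} \normal(0,\Sigma)$. The only mildly delicate step is handling the possibility that $\Sigma$ is singular (so that some $\lambda'\Sigma\lambda = 0$), but this is cleanly dispatched by the Chebyshev argument; the substantive work is entirely in verifying Lyapunov's condition via the uniform moment bound, which is immediate from \eqref{eq:lyapunov_moment}.
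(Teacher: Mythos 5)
Your proof is correct. The paper itself does not give an argument: it simply notes that the lemma is a special case of the heterogeneous multivariate CLT in Hansen (2022, Theorem 9.4), specialized to rows with identically distributed observations. You instead prove the statement from scratch by the Cram\'er--Wold device plus the scalar Lyapunov CLT, and every step checks out: the bound $\Exp|\lambda'X_{ni}|^{2+\delta}\le\|\lambda\|^{2+\delta}\Exp\|X_{ni}\|^{2+\delta}$ gives the uniform $(2+\delta)$-moment control, the Lyapunov ratio $\Exp|Y_{n1}|^{2+\delta}/(n^{\delta/2}s_n^{2+\delta})\to 0$ is immediate once $s_n^2\to\lambda'\Sigma\lambda>0$, Slutsky restores the unstandardized limit, and the Chebyshev argument correctly disposes of directions in the null space of $\Sigma$ (the limit law $\normal(0,\lambda'\Sigma\lambda)$ is then the point mass at zero, which is exactly the distribution of $\lambda'S$ for $S\sim\normal(0,\Sigma)$, so Cram\'er--Wold applies without issue). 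What the paper's one-line citation buys is brevity; what your argument buys is self-containedness and an explicit demonstration that only $\Sigma_n\to\Sigma\ge 0$ (not positive definiteness) and the uniform $(2+\delta)$-moment bound are needed, which is precisely the generality the lemma claims.
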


\putbib
\end{bibunit}

\begin{bibunit} %
\begin{titlepage}
  \begin{center}
  \vskip 60pt
  \Large \textbf{Online appendices for 
``Extrapolating LATE with Weak IVs''} \par
  \vskip 1em
  \large Muyang Ren\textsuperscript{$\dagger$}\customfootnotetext{$\dagger$}{Department of Economics, University of Tennessee, \href{mailto:mren7@utk.edu}{\texttt{mren7@utk.edu}}} 
  
  \vskip 1em

  \today

  \vskip 1em

    \begin{abstract}
        This document contains a set of results that accompany the main text. Appendix \ref{appendix:lemma_proofs} provides proofs of the lemmas stated in Appendix \ref{sec:main_lemmas}. Appendix \ref{appendix:power_analysis_MLC} establishes consistency and conducts local power analysis for the MLC test introduced in section \ref{sec:polyMTE_inference}. Appendix \ref{appendix:estimated_weights} shows how to extend the proposed methods to settings with estimated weights. Appendix \ref{appendix:inference_covariates} discusses robust inference with additional covariates. Appendix \ref{appendix:add_empirical_results} presents the court-specific confidence sets omitted from the section \ref{sec:empirical}. Appendix \ref{appendix:iv_strength} discusses issues related to testing IV strength in the empirical application. Appendix \ref{appendix:numerical_example} provides a numerical example illustrating the bias of the ATE estimand when additive separability is misspecified. Appendix \ref{appendix:litreview} provides further discussion of the weak IV literature on subvector inference and inference with covariates. Finally, Appendix \ref{appendix:control_function} shows that an alternative two-stage regression approach can lead to a (nearly) singular variance of moment conditions under weak identification.
    \end{abstract}
    
  \end{center}
    
  \part{}
  \vspace{-1.5cm}
  \setcounter{parttocdepth}{1}  %
  \parttoc %
\end{titlepage}

\section{Proofs of lemmas in Appendix \ref{sec:main_lemmas}}
\label{appendix:lemma_proofs}

\begin{proof}[Proof of Lemma \ref{lem:convergence_p_beta}]
    \underline{Part (a)}:
	We first show the influence function representation of $\sqrt{n}(\hat{\beta} - \beta_{F_n})$. Recall that
	\begin{align*}
		q_{F}(d,z_\ell) 
		& = \Prob_F(D=d, Z=z_\ell) \\
		& = \begin{cases}
			q_{F}(z_\ell)p_F(z_\ell) & \text{ if } d = 1\\
			q_F(z_\ell)(1-p_F(z_\ell)) & \text{ if } d = 0.
		\end{cases} 
	\end{align*} 
	Denote $q_\infty(d,z_\ell) \equiv \lim_{n\to\infty} q_{F_n}(d,z_\ell) \in [\epsilon^2, (1-\epsilon)^2]$ under the parameter space restriction.
	Note that for each $d = 0,1$ and $\ell = 0,1,\ldots,K$,
	{\allowdisplaybreaks
	\begin{align*}
		\sqrt{n}(\hat{\beta}_{d\ell} - \beta_{F_n,d\ell})
		&= \sqrt{n}\left(\frac{\sum_{i=1}^n Y_i\indicator[D_i = d, Z_i = z_\ell]}{\sum_{i=1}^n \indicator[D_i = d, Z_i = z_\ell]} - \beta_{F_n,d\ell} \right) \\
		&= \sqrt{n}\left(\frac{q_{F_n}(d,z_\ell)}{\hat{q}(d,z_\ell)}\cdot \frac{1}{n}\sum_{i=1}^n \frac{Y_i \indicator[D_i = d, Z_i = z_\ell]}{q_{F_n}(z_\ell)} - \beta_{F_n,d\ell} \right) \\
		&= \frac{1}{\sqrt{n}}\sum_{i=1}^n \left(\frac{Y_i\indicator[D_i=d, Z_i = z_\ell]}{q_{F_n}(d, z_\ell)} - \beta_{F_n,d\ell} \right) \\
		&\quad + \frac{1}{n}\sum_{i=1}^n \frac{Y_i\indicator[D_i=d, Z_i=z_\ell]}{q_{F_n}(d,z_\ell)}\cdot \sqrt{n}\left(\frac{q_{F_n}(d,z_\ell)}{\hat{q}(d,z_\ell)} - 1\right) \\
		&= \frac{1}{\sqrt{n}}\sum_{i=1}^n \left(\frac{Y_i\indicator[D_i=d, Z_i = z_\ell]}{q_{F_n}(d, z_\ell)} - \beta_{F_n,d\ell} \right) \\
		&\quad - \frac{1}{\sqrt{n}}\sum_{i=1}^n \left(\indicator[D_i = d, Z_i = z_\ell] - q_{F_n}(d,z_\ell)\right) \cdot \left(\frac{\beta_{F_n,d\ell}}{q_{F_n}(d, z_\ell)} + o_p(1)\right) \\
		&= \frac{1}{\sqrt{n}} \sum_{i=1}^n \frac{Y_i\indicator[D_i = d, Z_i = z_\ell] - \indicator[D_i=d, Z_i = z_\ell]\beta_{F_n, d\ell}}{q_{F_n}(d,z_\ell)} + o_p(1) \\
		&= \frac{1}{\sqrt{n}} \sum_{i=1}^n  (Y_i - \Exp_{F_n}(Y\mid D=d, Z=z_\ell)) \frac{\indicator[D_i = d, Z_i = z_\ell]}{\Prob_{F_n}(D_i = d, Z_i=z_\ell)} + o_p(1).
	\end{align*}}
	The fourth equality holds by applying the WLLN (Lemma \ref{lem:wlln_array}) to $\{Y_i \indicator[D_i = d, Z_i = z_\ell]/q_{F_n}(d,z_\ell)\}_{i=1}^n$ and to $\{\indicator[D_i=d, Z_i = z_\ell]\}_{i=1}^n$, which yields
	\begin{align*}
		& \frac{1}{n}\sum_{i=1}^n \frac{Y_i\indicator[D_i = d, Z_i=z_\ell]}{q_{F_n}(d,z_\ell)} = \beta_{F_n, d\ell} + o_p(1) \\
		& \hat{q}(d,z_\ell) = q_{F_n}(d,z_\ell) + o_p(1).
	\end{align*}
	Note that $L^{1+\delta}$-integrability \eqref{eq:wlln} required by WLLN holds here since $\{\indicator[D_i=d, Z_i = z_\ell]\}_{i=1}^n$ are uniformly bounded and for $\{Y_i \indicator[D_i = d, Z_i = z_\ell]/q_{F_n}(d,z_\ell)\}_{i=1}^n$, we have 
	\begin{align*}
		\Exp_{F_n}\left|\frac{Y_i\indicator[D_i = d, Z_i = z_\ell]}{q_{F_n}(d,z_\ell)}\right|^{1+\delta}
		&= \frac{\Exp_{F_n}(|Y_i|^{1+\delta} \mid D_i = d, Z_i = z_\ell)}{q_{F_n}(d,z_\ell)^{\delta}}	\\
		&\leq \frac{\Exp_{F_n}(|Y_i|^{2+\delta} \mid D_i = d, Z_i = z_\ell)^{\frac{1+\delta}{2+\delta}}}{q_{F_n}(d,z_\ell)^{\delta}} \quad \text{by H\"{o}lder's inequality}\\
		&<
		\frac{\zeta^{\frac{1+\delta}{2+\delta}}}{\epsilon^{2\delta}} < \infty.
	\end{align*}
	The fifth equality holds by applying Lyapunov CLT to $\{n^{-1/2}(\indicator[D_i = d, Z_i = z_\ell] - \Exp_{F_n}(D_i = d, Z_i = z_\ell))\}_{i=1}^n$:
	\[
		\frac{1}{\sqrt{n}}\sum_{i=1}^n (\indicator[D_i = d, Z_i = z_\ell] - q_{F_n}(d,z_\ell)) = O_p(1)
	\]
	and by noting that $O_p(1)o_p(1) = o_p(1)$.

	Likewise, we can derive the influence function represetntation for $\sqrt{n}(\hat{p} - p_{F_n})$. For each $\ell = 0,1,\ldots,K$, we have 
	\[
		\sqrt{n}(\hat{p}(z_\ell) - p_{F_n}(z_\ell)) = \frac{1}{\sqrt{n}} \sum_{i=1}^n (D_i - \Prob_{F_n}(D=1\mid Z=z_\ell))\frac{\indicator[Z_i = z_\ell]}{\Prob_{F_n}(Z = z_\ell)} + o_p(1).
	\]

	Therefore, 
	\begin{align*}
		\sqrt{n}
		\begin{pmatrix}
			\{\hat{p}(z_\ell) - p_{F_n}(z_\ell)\}_{\ell} \\
			\{\hat{\beta}_{1\ell} - \beta_{F_n,1\ell}\}_{\ell} \\
			\{\hat{\beta}_{0\ell} - \beta_{F_n,0\ell}\}_{\ell} \\
			\{\hat{q}(z_\ell) - q_{F_n}(z_\ell)\}_{\ell}
		\end{pmatrix}	
		&= 
		\frac{1}{\sqrt{n}} \sum_{i=1}^n 
		\begin{pmatrix}
			\left\{(D_i - \Prob_{F_n}(D=1\mid Z=z_\ell))\frac{\indicator[Z_i = z_\ell]}{\Prob_{F_n}(Z = z_\ell)}\right\}_{\ell} \\
			\left\{(Y_i - \Exp_{F_n}(Y\mid D=1, Z=z_\ell)) \frac{\indicator[D_i = 1, Z_i = z_\ell]}{\Prob_{F_n}(D_i = 1, Z_i=z_\ell)}\right\}_{\ell} \\
			\left\{(Y_i - \Exp_{F_n}(Y\mid D=0, Z=z_\ell)) \frac{\indicator[D_i = 0, Z_i = z_\ell]}{\Prob_{F_n}(D_i = 0, Z_i=z_\ell)}\right\}_{\ell} \\
			\left\{(\indicator[Z_i = z_\ell] - \Prob_{F_n}(Z=z_\ell)\right\}_{\ell}
		\end{pmatrix} + o_p(1) \\
		&\equiv
		\frac{1}{\sqrt{n}} \sum_{i=1}^n \phi_n(Y_i, D_i, Z_i) + o_p(1).
	\end{align*}
As $n$ goes to infinity, 
	\[
		\var_{F_n}(\phi_n(Y_i,D_i,Z_i))
		=
		\diag\{\Sigma_{p,F_n}, \Sigma_{\beta_1,F_n}, \Sigma_{\beta_0,F_n}, \Sigma_{q,F_n}\}	
		\to
		\diag\{\Sigma_{p,\infty}, \Sigma_{\beta_1,\infty}, \Sigma_{\beta_0,\infty}, \Sigma_{q,\infty}\}
	    \geq 
            0
        \]
	where 
	\begin{align*}
		& \Sigma_{p,F_n} = \diag\left\{\frac{p_{F_n}(z_\ell)(1-p_{F_n}(z_\ell))}{q_{F_n}(z_\ell)}: \ell = 0,1,\ldots,K\right\} \\
		& \Sigma_{\beta_1, F_n} = \diag\left\{\frac{\sigma^2_{F_n, 1\ell}}{q_{F_n}(z_\ell)p_{F_n}(z_\ell)}: \ell = 0,1,\ldots, K\right\} \\
		& \Sigma_{\beta_0, F_n} = \diag\left\{\frac{\sigma^2_{F_n, 0\ell}}{q_{F_n}(z_\ell)(1- p_{F_n}(z_\ell))}: \ell = 0,1,\ldots, K\right\} \\
		& \Sigma_{q,F_n} = \{\Sigma_{q,F_n}[i,j]\}_{i,j = 0,1,\ldots,K}
	\end{align*}
        with
        \[
            \Sigma_{q,F_n}[i,j] = 
            \begin{cases}
                p_{F_n}(z_i) (1 - p_{F_n}(z_i)) & \text{ if } i = j \\
                -p_{F_n}(z_i)p_{F_n}(z_j)       & \text{ if } i \neq j.
            \end{cases}
        \]
	Here the positive semi-definiteness of $\diag\{\Sigma_{p,\infty}, \Sigma_{\beta_1,\infty}, \Sigma_{\beta_0,\infty}, \Sigma_{q,\infty}\}$ is implied by the positive semi-definiteness of each covariance matrix in the diagonal position.

	To apply the Lyapunov CLT (Lemma \ref{lem:clt_array}) to $\phi_{n}(Y_i, D_i, Z_i)$, it suffices to verify the Lyapunov condition \eqref{eq:lyapunov_moment}. We note that
	 $\left\{(D_i - p_{F_n}(z_\ell))\frac{\indicator[Z_i = z_\ell]}{q_{F_n}(z_\ell)}\right\}_{\ell}$ and 
	 $\left\{\indicator[Z_i = z_\ell] - q_{F_n}(z_\ell)\right\}_{\ell}$
	are uniformly bounded, and
	\begin{align*}
		&\Exp_{F_n}\left|(Y_i - \Exp_{F_n}(Y\mid D=d, Z=z_\ell)) \frac{\indicator[D_i = d, Z_i = z_\ell]}{\Prob_{F_n}(D = d, Z=z_\ell)}\right|^{2+\delta} \\
		&= \frac{\Exp_{F_n}[|Y_i - \Exp_{F_n}(Y_i\mid D=d, Z=z_\ell)|^{2+\delta}\mid D=d, Z=z_\ell]}{\Prob_{F_n}(D = d, Z=z_\ell)^{1+\delta}} \\
		&< \frac{2^{2+\delta}\zeta}{\epsilon^{2(1+\delta)}} < \infty,
	\end{align*}
where the last line holds by the $L^{2+\delta}$-integrability of the centralized moment:
\begin{equation}
\label{eq:bdd_central_moment}
\begin{aligned}
	& \Exp[|Y - \Exp[Y\mid D=d, Z=z]|^{2+\delta}\mid D=d, Z=z] \\
	& \leq 2^{1+\delta}\left(\Exp[|Y|^{2+\delta}\mid D=d, Z=z] + |\Exp[Y\mid D=d, Z=z]|^{2+\delta}\right) \\
	& \leq 2^{2+\delta}\Exp[|Y|^{2+\delta}\mid D=d, Z=z] \\
	& \leq 2^{2+\delta} \zeta.
\end{aligned}
\end{equation}
Note that the second line of \eqref{eq:bdd_central_moment} follows by $c_r$ inequality given in Lemma \ref{lem:cr_ineq}, and the last line holds by the parameter space restriction on the existence of $(2+\delta)$'th moment of outcomes.

Hence the desired convergence \eqref{eq:clt_p_beta} follows by Lyapunov CLT.

    \medskip
    \underline{Part (b)}: 
	Next, we establish the consistency of the variance estimators. First note that 
	\[
		\hat{\Sigma}_{p} - \Sigma_{p,F_n}
		= 
		\diag\left\{\frac{\hat{p}(z_\ell) (1-\hat{p}(z_\ell))}{\hat{q}(d,z_\ell)} - \frac{p_{F_n}(z_\ell)(1-p_{F_n}(z_\ell))}{q_{F_n}(d,z_\ell)}\right\}
		= o_p(1)
	\]
	since $\hat{p} - p_{F_n} = o_p(1)$ by the convergence \eqref{eq:clt_p_beta} and for $d = 0,1$ and $\ell = 0,1,\ldots,K$,
	\begin{equation}
	\label{eq:aux_var_1}
		\hat{q}(d, z_\ell) - q_{F_n}(d, z_\ell) = o_p(1),
	\end{equation}
	which is implied by WLLN for bounded arrays.
 Therefore, $\hat{\Sigma}_{p}$ is consistent to $\Sigma_{p,\infty}$ since $\Sigma_{p,F_n} \to \Sigma_{p,\infty}$ as $n \to \infty$. Applying the similar arguments to $\hat{\Sigma}_q$ also yields the consistency of $\hat{\Sigma}_q$ to $\Sigma_{q,\infty}$.

	Now we examine the variance estimators $\hat{\sigma}_{d\ell}^2$:
	\begin{align*}
		\hat{\sigma}_{d\ell}^2 
		& = \frac{1}{n} \sum_{i=1}^n \frac{(Y_i - \hat{\beta}_{d\ell})^2\indicator[D_i=d, Z_i=z_\ell]}{\hat{q}(d,z_\ell)} \\
		& = \frac{1}{\hat{q}(d,z_\ell)} \left[\frac{1}{n}\sum_{i=1}^n \left((Y_i - \beta_{F_n, d\ell})^2 + 2(Y_i - \beta_{F_n,d\ell})(\hat{\beta}_{d\ell} - \beta_{F_n, d\ell}) + (\hat{\beta}_{d\ell} - \beta_{F_n, d\ell})^2 \right)\indicator[D_i =d, Z_i = z_\ell]\right] \\
		& = \frac{q_{F_n}(d,z_\ell)}{\hat{q}(d,z_\ell)} \left[ \frac{1}{n}\sum_{i=1}^n (Y_i - \beta_{F_n, d\ell})^2 \frac{\indicator[D_i = d, Z_i = z_\ell]}{q_{F_n}(d,z_\ell)}  \right] \\
		&\quad  + 2(\hat{\beta}_{d\ell} - \beta_{F_n, d\ell})\frac{q_{F_n}(d,z_\ell)}{\hat{q}(d,z_\ell)}\left[\frac{1}{n}\sum_{i=1}^n (Y_i - \beta_{F_n, d\ell})\frac{\indicator[D_i = d, Z = z_\ell]}{q_{F_n}(d,z_\ell)}\right] \\
		& \quad + (\hat{\beta}_{d\ell} - \beta_{F_n,d\ell})^2
	\end{align*}
	Note that equation \eqref{eq:bdd_central_moment} implies that $L^{1+\delta}$-integrability \eqref{eq:wlln} holds for 
	\[
		X_{ni} \in \left\{(Y_i - \beta_{F_n,d\ell})^2 \frac{\indicator[D_i=d, Z_i = z_\ell]}{q_{F_n}(d,z_\ell)}, ~~ (Y_i - \beta_{F_n, d\ell})\frac{\indicator[D_i=d, Z_i = z_\ell]}{q_{F_n}(d,z_\ell)}\right\}
	\]
for appropriately chosen $\delta$.

	By WLLN, we have
	\begin{equation}
	\label{eq:aux_var_2}
	\begin{aligned}
		& \frac{1}{n}\sum_{i=1}^n (Y_i - \beta_{F_n, d\ell})^2 \frac{\indicator[D_i = d, Z_i = z_\ell]}{q_{F_n}(d,z_\ell)}   - \sigma^2_{F_n, d\ell}  = o_p(1) \\
		& \frac{1}{n}\sum_{i=1}^n (Y_i - \beta_{F_n, d\ell})\frac{\indicator[D_i = d, Z = z_\ell]}{q_{F_n}(d,z_\ell)} = o_p(1) 
	\end{aligned}
	\end{equation}
	As implied by the convergence \eqref{eq:clt_p_beta}, we have $\hat{\beta}_{d\ell} - \beta_{F_n, d\ell} = o_p(1)$. This combined with equations \eqref{eq:aux_var_1} and \eqref{eq:aux_var_2} then yield 
	\[
		\hat{\sigma}_{d\ell}^2 - \sigma^2_{F_n, d\ell} ~~\xrightarrow{p}~~ 0.
	\]
	Hence we have 
	\[
		\hat{\Sigma}_{\beta} - \Sigma_{\beta,F_n} = \diag\left\{\frac{\hat{\sigma}_{d\ell}^2}{\hat{q}(d,z_\ell)} - \frac{\sigma_{F_n, d\ell}^2}{q_{F_n}(d,z_\ell)}: d=1,0; \ell = 0,1,\ldots,K\right\} = o_p(1).
	\]
	This establish the consistency of $\hat{\Sigma}_{\beta}$ since $\Sigma_{\beta,F_n} \to \Sigma_{\beta,\infty}$ as $n \to \infty$. Finally, note that part (c) can be proved by replacing $n$ with the subsequence $p_n$ throughout the arguments given above. Then the proof is complete.
\end{proof}

\bigskip

\begin{proof}[Proof of Lemma \ref{lem:pd_variance}]
First, we show the inequality \eqref{eq:bdd_above_zero}.
Note that $S$ is a sum of three positive semi-definite matrices. 
Define
\[
    \bar{S} \equiv [\partial_{\beta_1} g(\lambda)] \Sigma_{\beta_1} [\partial_{\beta_1'} g(\lambda)]
    + [\partial_{\beta_0} g(\lambda)] \Sigma_{\beta_0} [\partial_{\beta_0'} g(\lambda)].
\]
Given the decomposition $S = \bar{S} + \partial_p\hat{g}(\lambda)\hat{\Sigma}_p\partial_{p'}\hat{g}(\lambda)$, the eigenvalue inequality\footnote{This inequality holds by noting that $\lambda_{\min}(P+Q) = \min_{\|x\| = 1}x'(P+Q)x \geq \min_{\|x\| = 1}x'Px + \min_{\|x\| = 1}x'Qx = \lambda_{\min}(P) + \lambda_{\min}(Q)$.} 
\[
    \lambda_{\min}(P+Q) \ge \lambda_{\min}(P) + \lambda_{\min}(Q), \quad \text{ for positive semi-definite } P,Q 
\]
implies $\lambda_{\min}(S) \ge \lambda_{\min}(\bar{S})$, since $\lambda_{\min}(\partial_p\hat{g}(\lambda)\hat{\Sigma}_p\partial_{p'}\hat{g}(\lambda)) \ge 0$. Hence, to bound $\lambda_{\min}(S)$ away from zero, it suffices to show that the minimal eigenvalue of $\bar{S}$ is bounded away from zero.

Next, note that $\Sigma_{\beta_d}$ is a diagonal matrix with positive diagonal elements $\Sigma_{\beta_d}[\ell,\ell] = \frac{\sigma^2_{d\ell}}{q(d,z_\ell)}$ for $d = 0,1$ and $\ell = 0,1,\ldots,K$. Let $\nu_d$ denote the first column of $[\partial_{\beta_d}g(\lambda)]$, whose $k$-th row ($1\leq k \leq K$) of its sample analog version is defined in equation \eqref{eq:fixed_weight_expansion}, then it follows that 
\[
    [\partial_{\beta_1}g(\lambda)] \Sigma_{\beta_1} [\partial_{\beta_1'}g(\lambda)]
    =
    \Sigma_{\beta_1}[0,0]\cdot \nu_1'\nu_1 + 
    \diag
    \begin{pmatrix}
        \Sigma_{\beta_1}[1,1]([1 - p(z_0)]c_\mu + 2c_\rho)^2 \\
        \vdots \\
        \Sigma_{\beta_1}[K,K]([1 - p(z_0)]c_\mu + 2c_\rho)^2
    \end{pmatrix}
\]
and 
\[
    [\partial_{\beta_0}g(\lambda)] \Sigma_{\beta_0} [\partial_{\beta_0'}g(\lambda)]
    =
    \Sigma_{\beta_0}[0,0]\cdot \nu_0'\nu_0 + 
    \diag
    \begin{pmatrix}
        \Sigma_{\beta_0}[1,1](- p(z_0)c_\mu + 2c_\rho)^2 \\
        \vdots \\
        \Sigma_{\beta_0}[K,K](- p(z_0)c_\mu + 2c_\rho)^2
    \end{pmatrix}.
\]
Hence we have
\begin{align*}
    \bar{S} 
    &= \left(\Sigma_{\beta_1}[0,0] \cdot \nu_1'\nu_1 + \Sigma_{\beta_0}[0,0] \cdot \nu_0'\nu_0\right) \\
    &\quad + ~ 
    \diag
    \begin{pmatrix}
        \Sigma_{\beta_1}[1,1]([1 - p(z_0)]c_\mu + 2c_\rho)^2 + \Sigma_{\beta_0}[1,1](- p(z_0)c_\mu + 2c_\rho)^2 \\
        \vdots \\
        \Sigma_{\beta_1}[K,K]([1 - p(z_0)]c_\mu + 2c_\rho)^2 +
        \Sigma_{\beta_0}[K,K](- p(z_0)c_\mu + 2c_\rho)^2
    \end{pmatrix}
\end{align*}
We now show that each diagonal element in the second term has a positive lower bound that does not depend on the distribution of data. For each $\ell = 0,1,\ldots,K$, we have $\Sigma_{\beta_d}[\ell,\ell] \geq \epsilon/(1-\epsilon)^2 \equiv \underline{\delta}(\epsilon) > 0$ implied by the parameter space restriction in section \ref{sec:parameter_space}. This implies the $k$-th diagonal element is bounded below by
\begin{align*}
    & ~ \Sigma_{\beta_1}[k,k]([1 - p(z_0)]c_\mu + 2c_\rho)^2 + \Sigma_{\beta_0}[k,k](- p(z_0)c_\mu + 2c_\rho)^2 \\
    & > \underline{\delta}(\epsilon) \underbrace{\left(([1-p(z_0)]c_\mu + 2c_\rho)^2 + (-p(z_0)c_\mu + 2c_\rho)^2\right)}_{\text{LB}} 
\end{align*}
On the one hand, 
\begin{equation}
\label{eq:lower_bd_1}
    \text{LB} \geq \frac{1}{2} c_\mu^2 
\end{equation}
by Cauchy-Schwarz inequality: $a^2 + b^2 \geq \frac{1}{2}(a-b)^2$. On the other hand,
\begin{equation}
\label{eq:lower_bd_2}
\begin{aligned}
    \text{LB} 
    &= 8c_\rho^2+ \left[p(z_0)^2 + (1-p(z_0))^2\right]c_\mu^2 + 4c_\rho c_\mu[1-2p(z_0)] \\
    &\geq 8c_\rho^2+ \frac{1}{2} c_\mu^2 + 4c_\rho c_\mu[1-2p(z_0)] \\
    &= \left(2\sqrt{2}(1-2p(z_0))c_\rho + \frac{1}{\sqrt{2}}c_\mu\right)^2 + 8\left[1-(1-2p(z_0))^2\right]c_\rho^2 \\
    &\geq 32 p(z_0)\left(1 - p(z_0)\right)c_\rho^2 \\
    &\geq 32\epsilon(1-\epsilon) c_\rho^2,
\end{aligned}
\end{equation}
where the second line holds by Cauchy-Schwarz inequality. 
Combining inequalities \eqref{eq:lower_bd_1} and \eqref{eq:lower_bd_2} then yields
\[
    \text{LB} \geq \max\left\{\frac{1}{2}c_\mu^2, ~~32\epsilon(1-\epsilon)c_\rho^2\right\} > 0
\]
since $c_\mu \neq 0$ or $c_\rho \neq 0$. This implies that
\begin{align*}
\lambda_{\min}(\bar{S}) 
    &\geq 
    \min
    \begin{pmatrix}
        \Sigma_{\beta_1}[1,1]([1 - p(z_0)]c_\mu + 2c_\rho)^2 + \Sigma_{\beta_0}[1,1](- p(z_0)c_\mu + 2c_\rho)^2 \\
        \vdots \\
        \Sigma_{\beta_1}[K,K]([1 - p(z_0)]c_\mu + 2c_\rho)^2 +
        \Sigma_{\beta_0}[K,K](- p(z_0)c_\mu + 2c_\rho)^2
    \end{pmatrix} \\
    &>
    \underline{\delta}(\epsilon) \max\left\{\frac{1}{2}c_\mu^2, ~~32\epsilon(1-\epsilon)c_\rho^2\right\},    
\end{align*}
where the first line uses eigenvalue inequality. 
This establishes the inequality \eqref{eq:bdd_above_zero}, as desired. 

Next, we show the inequality \eqref{eq:bdd_below_inf}. Following the eigenvalue inequality
\[
    \lambda_{\max}(P+Q) \leq \lambda_{\max}(P) + \lambda_{\max}(Q)
\]
given that $A$ and $B$ are both positive semi-definite, it suffices to show that
\[
    \lambda_{\max} \left([\partial_p g(\lambda)]\Sigma_p[\partial_{p'} g(\lambda)]\right), ~~
    \lambda_{\max} \left([\partial_{\beta_1} g(\lambda)]\Sigma_{\beta_1}[\partial_{\beta_1} g(\lambda)]\right), 
    ~~ \text{and} ~~
    \lambda_{\max} \left([\partial_{\beta_0} g(\lambda)]\Sigma_{\beta_0}[\partial_{\beta_0} g(\lambda)]\right)
\]
are bounded above by some positive constant that does not depend on the distribution of data. For simplicity, we establish this statement for $\lambda_{\max} \left([\partial_p g(\lambda)]\Sigma_p[\partial_{p'} g(\lambda)]\right)$, and the remaining cases follow by similar arguments. Let $\nu_p$ denote the first column of $[\partial_p g(\lambda)]$. For each $\ell = 0,1,\ldots,K$, we have $\Sigma_{p}[\ell,\ell] = \frac{p(z_\ell)(1-p(z_\ell))}{q(z_\ell)} \leq \frac{1}{4\epsilon} \equiv \bar{\delta}(\epsilon) < \infty$. Then it follows that
\begin{align*}
    \lambda_{\max}([\partial_p g(\lambda)] \Sigma_p [\partial_p' g(\lambda)])
    ~~&\leq~~
    \Sigma_p[0,0] \nu_p'\nu_p + 
    \left(\max_{1\leq k \leq K}\Sigma_p[k,k]\right)|\lambda - (\beta_{10} - \beta_{00}) c_\mu|^2 \\
    ~~&<~~
    \bar{\delta}(\epsilon)\sum_{\ell = 0}^K |-\lambda + (\beta_{1\ell} - \beta_{0\ell})c_\mu|^2.
\end{align*}
The parameter space $\mathcal{P}$ requires $\lambda$ and $\beta$ to be constrained by a compact space that depends only through $\Theta$ and $\zeta > 0$ in Definition \ref{def:parameter_space}, respectively. Thus the right-hand side can be bounded above by a positive constant that is independent of data distribution, which completes the proof.
\end{proof}

\bigskip

\begin{proof}[Proof of Lemma \ref{lem:NS_cond}]
Let $\mathbb{H} = [\partial_p \pi] \Sigma_{p,\infty} \partial_{p'}g(\lambda_\infty) \in \R^{K\times K}$. We divide the proof into two cases: 

\medskip
\textbf{Case 1:} $\rank(\mathbb{H}) > 0$. 
\medskip

We show a stronger conclusion: $h + \mathbb{H}S_{\infty}^{-1/2} \eta^* \neq 0_{K\times 1}$ a.s. for all $h\in\R^K$ and prove it by contradiction. Suppose there exists $h_0 \in \R^K$ such that $\Prob(h_0 + \mathbb{H}S_{\infty}^{-1/2}\eta^* = 0_{K\times 1}) > 0$. By the condition that $\rank(\mathbb{H}) > 0$, there exists a vector $a \in \R^K$ such that $\mathbb{H}'a \neq 0_{K\times 1}$. Therefore, the variance of the normal random variable $a'(h_0 + \mathbb{H} S_{\infty}^{-1/2} \eta^*)$ is nonzero:
\[
    a'\mathbb{H} S_{\infty}^{-1} \mathbb{H}' a \neq 0.
\]
This shows $a'(h_0 + \mathbb{H} S_{\infty}^{-1/2} \eta^*)$ is a nondegenerate normal random variable, which implies 
\[
    \Prob(a'(h_0 + \mathbb{H} S_{\infty}^{-1/2} \eta^*) = 0) = 0.
\]
However, this contradicts with the claim that $\Prob(h_0 + \mathbb{H} S_{\infty}^{-1/2} \eta^* = 0_{K\times 1}) > 0$ since 
\[
    \Prob(h_0 + \mathbb{H} S_{\infty}^{-1/2} \eta^* = 0_{K\times 1}) ~~\leq~~ \Prob(a'(h_0 + \mathbb{H} S_{\infty}^{-1/2} \eta^*) = 0) ~~=~~0.
\]
So the desired conclusion is established.

\medskip
\textbf{Case 2:} $\rank(\mathbb{H}) = 0$.
\medskip

Note that $\rank(\mathbb{H}) = 0$ gives $\mathbb{H} = 0_{K\times K}$, which implies
\begin{align*}
    \mathcal{Z}_h + \mathbb{H} S_\infty^{-1/2}\eta^*   
    &= \mathcal{Z}_h \\
    &= s_\infty\iota_\infty + [\partial_p \pi] \mathcal{Z}_p - \mathbb{H} S_\infty^{-1} \mathcal{Z}_{\mathfrak{g}} \\
    &= s_\infty\iota_\infty + [\partial_p \pi] \mathcal{Z}_p.
\end{align*}
Note that the variance of the random vector $s_\infty\iota_\infty + [\partial_p \pi] \mathcal{Z}_p$ equals a full-rank matrix $[\partial_p \pi] \Sigma_{p,\infty} [\partial_{p'} \pi]$ since $\partial_p \pi$ and $\Sigma_{p,\infty}$ have full rank. This implies $a'( s_\infty\iota_\infty + [\partial_p \pi] \mathcal{Z}_p)$ is a non-degenerate normal random variable for all $a\neq 0_{K\times 1}$. Following similar arguments in Case 1, we conclude that $s_\infty\iota_\infty + [\partial_p \pi] \mathcal{Z}_p \neq 0_{K\times 1}$ almost surely. Thus the desired conclusion has been established.
\end{proof}

\bigskip

\begin{proof}[Proof of Lemma \ref{lem:cond_stat}]
This lemma is a simplified version of Lemma 10.3(d) in \citet{andrews/guggenberger:2017}. For completeness of exposition, I provide a proof below that closely follows their arguments. For notational simplicity, denote $d_\theta \equiv 2(M+1)$ and $d_m = 2(K+1)$, which represent the numbers of columns and rows of matrix $\hat{D}(\theta)$, respectively. Let $\Psi_q$ denote the upper $q$-block of $\Psi_{F_n}$ defined in equation \eqref{eq:defn_S}. Then
	\begin{equation}
	 	\Psi_q = \diag\{(\sqrt{n}\tau_{1,F_n})^{-1},\ldots, (\sqrt{n}\tau_{q,F_n})^{-1}\} 
		\quad\text{and}\quad
		\Psi_{F_n} = 
		\begin{bmatrix}
		\Psi_q & 0_{q\times (d_\theta - q)} \\
		0_{(d_\theta - q)\times q} & I_{d_\theta - q}
		\end{bmatrix}.
	\end{equation}
	Let $B_{F_n} \equiv (B_q ~\vdots~ B_{d_\theta - q})$ and $C_{F_n} \equiv (C_q ~\vdots~ C_{d_m - q})$.
	Then 
	\[
	\sqrt{n}\hat{D}(\theta_{F_n}) B_{F_n} \Psi_{F_n}
	 = (\sqrt{n} \hat{D}(\theta_{F_n}) B_q \Psi_q, \sqrt{n} \hat{D}(\theta_{F_n}) B_{d_\theta-q}).
	\]
	Now we analyze the asymptotic behavior of each part. For the first part, note that 
	\begin{align*}
		\sqrt{n}\hat{D}(\theta_{F_n}) B_q \Psi_q 
		& = \sqrt{n}(\hat{D}(\theta_{F_n}) - A_{F_n})B_q \Psi_q +  A_{F_n} B_q (\sqrt{n}\Psi_q) \\
		& = o_p(1) + C_{F_n}\Pi_{F_n} (B_q, B_{d_\theta - q})' B_q (\sqrt{n}\Psi_q) \\
		& = o_p(1) + C_{F_n}\Pi_{F_n} (I_q, 0_{q \times (d_\theta - q)})' (\sqrt{n}\Psi_q) \\
		& = o_p(1) + C_{F_n}(I_q, 0_{q\times (d_m - q)})' \\
		& \xrightarrow{p} C_{q,\infty}
	\end{align*}
	where $C_{q,\infty}$ denotes the first $q$-columns of matrix $C_\infty$. The second line holds by $\sqrt{n}(\hat{D}(\theta_{F_n}) - A_{F_n}) = O_p(1)$, $\Psi_q = o(1)$, and recalling that $A_{F_n} = C_{F_n} \Pi_{F_n} B_{F_n}'$ via singular value decomposition. The third line holds by the construction that $B_{F_n}$ is an orthogonal matrix. 

	For the second part, note that 
	\begin{align*}
		\sqrt{n} A_{F_n} B_{d_\theta - q} 
		&= 
		\sqrt{n} C_{F_n} \Pi_{F_n} (B_q, B_{d_\theta - q})' B_{d_\theta - q} \\
		&=
		C_{F_n} (\sqrt{n}\Pi_{F_n}) (0_{(d_\theta - q)\times q}, I_{d_\theta - q})' \\
		&\to
		C_\infty
		\begin{bmatrix}
			0_{q\times(d_\theta - q)} \\
			\diag\{t_{q+1}, \ldots, t_{d_\theta}\} \\
			0_{(d_m-d_\theta)\times (d_\theta - q) }
		\end{bmatrix}.
	\end{align*}
	On the other hand, by continuous mapping theorem,
	\[
		\sqrt{n} (\hat{D}(\theta_{F_n}) - A_{F_n}) B_{d_\theta - q} ~~\xrightarrow{d}~~ \mathcal{Z}_D
		B_{d_\theta - q, \infty}
	\]
	where $B_{d_\theta-q,\infty}$ denotes the last $(d_\theta - q)$-columns of $B_\infty$.
	Therefore, we have
	\[
		\sqrt{n}\hat{D}(\theta_{F_n})B_{d_\theta - q} ~~\xrightarrow{d}~~ C_\infty 
		\begin{bmatrix}
			0_{q\times(d_\theta - q)} \\
			\diag\{t_{q+1}, \ldots, t_{d_\theta}\} \\
			0_{(d_m-d_\theta)\times (d_\theta - q) }
		\end{bmatrix} 
		+
		\mathcal{Z}_D
		B_{d_\theta - q, \infty}
	\]
	Combine the above results, we have 
	\[
		\sqrt{n}\hat{D}(\theta_{F_n})B_{F_n}\Psi_{F_n} ~~\xrightarrow{d}~~ \mathcal{D} \equiv \left(C_{q,\infty}, \quad C_\infty 
		\begin{bmatrix}
			0_{q\times(d_\theta - q)} \\
			\diag\{t_{q+1}, \ldots, t_{d_\theta}\} \\
			0_{(d_m-d_\theta)\times (d_\theta - q) }
		\end{bmatrix} 
		+
		\mathcal{Z}_D
		B_{d_\theta - q, \infty}\right) 
	\]
	whose stochastic behavior only depends on $\mathcal{Z}_D$, thus is independent of $\mathcal{Z}_{\mathfrak{m}}$.
\end{proof}

\bigskip

\begin{proof}[Proof of Lemma \ref{lem:new_cond_stat}]
	For notational simplicity, we still maintain the notation $d_\theta = 2(M+1)$ and $d_m = 2(K+1)$ from Lemma \ref{lem:cond_stat}, which shows  
	\[
		n^{1/2} \hat{D}(\theta_{F_n}) B_{F_n} \Psi_{F_n} ~~\xrightarrow{d}~~ \mathcal{D},	
	\]
	where $\mathcal{D}$ is independent of $\mathcal{Z}_{\mathfrak{m}}$.

	When $q = d_\theta$, we have
	\begin{align*}
		n^{1/2} \kappa n^{-1/2} \xi B_{F_n} \Psi_{F_n} 
		&~=~ \kappa \xi B_{F_n} \Psi_{F_n} \\
		&~\xrightarrow{p}~ \kappa \xi B_{\infty} 0_{d_\theta \times d_\theta} \\
		&~=~ 0_{d_m \times d_\theta}.
	\end{align*}
	Then 
	\[
		n^{1/2} (\hat{D}(\theta_{F_n}) + \kappa n^{-1/2}\xi) B_{F_n} \Psi_{F_n} 
		~~\xrightarrow{p}~~
		C_\infty	
	\]
	which is the same limit as the $n^{1/2} \hat{D}(\theta_{F_n}) B_{F_n} \Psi_{F_n}$. Thus the introduction of this small perturbation $\xi$ would not harm the asymptotic behavior of the test under strong identification.

	When $q < d_\theta$, we have 
	\begin{align*}
		n^{1/2} \kappa n^{1/2} \xi B_{F_n} \Psi_{F_n}
		&~=~ \kappa \xi B_{F_n} \Psi_{F_n} \\
		&~\xrightarrow{p}~ \kappa \xi B_{\infty} \diag(0_{q\times q}, I_{d_\theta - q}) \\
		&~=~ (0_{q\times q}, \quad \kappa \xi B_{d_\theta - q,\infty}).
	\end{align*}

	Combining with the results from Lemma \ref{lem:cond_stat}, we have
	\[
		n^{1/2} (\hat{D}(\theta_{F_n}) + \kappa n^{-1/2}\xi) B_{F_n} \Psi_{F_n}
		~~\xrightarrow{d}~~
        \mathcal{D}_\xi \equiv
		\left(C_{q,\infty}, \quad C_\infty 
		\begin{bmatrix}
			0_{q\times(d_\theta - q)} \\
			\diag\{t_{q+1}, \ldots, t_{d_\theta}\} \\
			0_{(d_m-d_\theta)\times (d_\theta - q) }
		\end{bmatrix} 
		+
		 (\mathcal{Z}_D + \kappa\xi)
		B_{d_\theta - q, \infty}\right).
	\]
	Now we show that $\mathcal{D}_\xi$ has full rank of probability one. Define 
	\[
		C_{d_\theta - q, \xi} \equiv C_\infty 
		\begin{bmatrix}
			0_{q\times(d_\theta - q)} \\
			\diag\{t_{q+1}, \ldots, t_{d_\theta}\} \\
			0_{(d_m-d_\theta)\times (d_\theta - q) }
		\end{bmatrix} 
		+
		(\mathcal{Z}_D + \kappa\xi)
		B_{d_\theta - q, \infty}.
	\]
	Since $\mathcal{D}_\xi$ has full rank is equivalent to $C_{\infty}'\mathcal{D}_\xi$ has full rank, which equals
	\begin{align*}
		C_{\infty}'\mathcal{D}_\xi
		&=
		\begin{bmatrix}
			C_{q,\infty}' \\
			C_{d_m - q, \infty}'
		\end{bmatrix}
		(C_{q,\infty}, C_{d_\theta - q, \xi}) \\
		&=
		\begin{bmatrix}
			C_{q,\infty}' C_{q, \infty} & C_{q,\infty}'C_{d_\theta - q, \xi} \\
			C_{d_m-q,\infty}' C_{q,\infty} & C_{d_m-q,\infty}'C_{d_\theta - q, \xi}
		\end{bmatrix} \\
		&=
		\begin{bmatrix}
			I_q & C_{q,\infty}'C_{d_\theta - q, \xi} \\
			0_{(d_m-q)\times q} & C_{d_m-q,\infty}'C_{d_\theta - q, \xi}
		\end{bmatrix}.
	\end{align*}
	Hence, $\mathcal{D}_\xi$ has full rank is equivalent to $C_{d_m-q,\infty}'C_{d_\theta - q, \xi}$ having full rank. By Lemma 16.1 from \cite{andrews/guggenberger:2017}, it suffices to show that the variance of $\operatorname{vec}(C_{d_m-q,\infty}'C_{d_\theta - q, \xi})$ is positive definite. 
	
	Note that 
	\begin{align*}
		\var(\operatorname*{vec}(C_{d_m-q,\infty}'C_{d_\theta - q, \xi}))
		&= \var(\operatorname*{vec}(C_{d_m-q,\infty}'(\mathcal{Z}_D + \kappa\xi)B_{d_\theta - q, \infty})) \\
		&= \var\left((B_{d_\theta - q, \infty}' \otimes C_{d_m-q,\infty}')\operatorname*{vec}(\mathcal{Z}_D + \kappa\xi)\right) \\
		&= (B_{d_\theta - q, \infty}' \otimes C_{d_m-q,\infty}') \var\left(\operatorname*{vec}(\mathcal{Z}_D + \kappa\xi)\right) (B_{d_\theta - q, \infty}' \otimes C_{d_m-q,\infty}')'.
	\end{align*}
	By the property that $\rank(A\otimes B) = \rank(A)\rank(B)$, we have 
	\begin{align*}
		\rank(B_{d_\theta - q, \infty} \otimes C_{d_m-q,\infty}) 
		&= \rank(B_{d_\theta - q, \infty}) \rank(C_{d_m-q,\infty})  \\
		&= (d_\theta - q)(d_m - q),
	\end{align*}
	implying that $B_{d_\theta - q, \infty} \otimes C_{d_m-q,\infty}$ is a full-rank matrix. 
	On the other hand, 
	\begin{align*}
		\var\left(\operatorname*{vec}(\mathcal{Z}_D + \kappa \xi)\right)
		&= \var\left(\operatorname*{vec}(\mathcal{Z}_D)\right) + \kappa^2 \var\left(\operatorname*{vec}(\xi)\right)
	\end{align*}
	with the equality holding by the independence between $\xi$ and data. Note that $\var(\operatorname*{vec}(\xi)) = I_{d_md_\theta}$ by the assumption that $\xi$ is a matrix of i.i.d. standard normal variables.  Therefore, $\var\left(\operatorname*{vec}(\mathcal{Z}_D + \kappa\xi)\right)$ is the sum of two positive semi-definite matrices, with one of them being positive definite, so we conclude that it is positive definite.

	For every $x \neq 0_{(d_m-q)(d_\theta - q)\times 1}$, we have $(B_{d_\theta - q, \infty} \otimes C_{d_m-q,\infty})x \neq 0_{(d_\theta-q)(d_m-q)\times 1}$ by the full rank of $B_{d_\theta - q, \infty} \otimes C_{d_m-q,\infty}$, and thus 
	\[
		x'(B_{d_\theta - q, \infty}' \otimes C_{d_m-q,\infty}') \var(\operatorname*{vec}(\mathcal{Z}_D + \kappa\xi)) (B_{d_\theta - q, \infty}' \otimes C_{d_m-q,\infty}')'x > 0
	\]
	by the positive definiteness of $\var(\operatorname*{vec}(\mathcal{Z}_D + \kappa\xi))$. From this, we conclude that the covariance matrix of $\operatorname{vec}(C_{d_m-q,\infty}'C_{d_\theta - q, \xi})$ is positive definite. The proof is therefore complete.
\end{proof}

\begin{proof}[Proof of Lemma \ref{lem:cr_ineq}]
The proof of $c_r$ inequality can be found in various reference, for example, see \citet[Proposition 3.8]{white:1999}.
\end{proof}

\begin{proof}[Proof of Lemma \ref{lem:wlln_array}]
	For each $\epsilon > 0$, note that 
\begin{align*}
	\Prob\left(\left|\frac{1}{n} \sum_{i=1}^n X_{ni} - \Exp[X_{ni}]\right| > \epsilon \right)
	&\leq \frac{\Exp|n^{-1}\sum_{i=1}^n (X_{ni} - \Exp X_{ni})|^{1+\delta}}{\epsilon^{1+\delta}} \\
	&\leq \frac{\sum_{i=1}^n \Exp|X_{ni} - EX_{ni}|^{1+\delta}}{n^{1+\delta}\epsilon^{1+\delta}} \\
	&\leq \frac{2^{\delta} \left(\Exp|X_{ni}|^{1+\delta} + |\Exp X_{ni}|^{1+\delta}\right)}{n^{\delta} \epsilon^{1+\delta}} \\
	&\leq \frac{2^{1+\delta} \sup_{n}\Exp |X_{n,i}|^{1+\delta}}{n^{\delta}\epsilon^{1+\delta}} \\
	&\to 0
\end{align*}
as $n\to \infty$. The first line holds by Markov's inequality. The second line holds by triangular inequality. The third line holds by $c_r$ inequality. The fourth line holds by absolute inequality. Hence the desired conclusion holds.
\end{proof}

\begin{proof}[Proof of Lemma \ref{lem:clt_array}]
This central limit theorem for a triangular array of random vectors is a special case of heterogeneous CLT in \citet[Theorem 9.4]{hansen:2022} by restricting observations in each row are identically distributed.
\end{proof}

\newpage

\section{Power Analysis of MLC Tests}
\label{appendix:power_analysis_MLC}
\subsection{Main results}
In this section, I analyze the power of MLC test under strong identification. To define the sequences of DGPs that are strongly identified, I impose the following additional restrictions on the space $\mathcal{P}$. 

\begin{definition}[Parameter space under strong identification]
\label{def:strong_id_para_space}
For some $\delta, \zeta > 0$ and $\epsilon \in (0,1/2)$, define the set $\mathcal{P}_s$ of pairs $(\theta, F)$ that satisfy the following conditions:
\begin{enumerate}
    \item $(\theta, F) \in \mathcal{P}$. That is, $(\theta,F)$ satisfies the conditions imposed in Definition \ref{def:parameter_space} with the given $\delta, \zeta > 0$, and $\epsilon \in (0,1/2)$.
    \item Equation \eqref{eq:mateq} holds with functions $\{h_m(\cdot)\}_{m=1}^M$ satisfying Assumption \ref{asp:identification}.6.
    \item There exists a set of index $\mathcal{S} \subseteq \{0,1,\ldots,K\}$ with $|\mathcal{S}| = M+1$ such that 
    \[
        \min_{j,k\in\mathcal{S}, j \neq k} |p(z_j) - p(z_k)| \geq \epsilon.
    \]
\end{enumerate}
\end{definition}

Specifically, the second condition brings back the unisolvent property on the specified functions, and the third condition assumes that there is a set of isolated propensity scores sufficient to point identify primitive parameters in the MTE model. 

\begin{lemma}
\label{lem:full_rank_A}
With $\mathcal{P}_s$ defined in Definition \ref{def:strong_id_para_space}, We have 
\[
    \inf_{(\theta,F)\in \mathcal{P}_s} \lambda_{\min}(A_F) > 0.
\]
where $A_F$ is defined below equation \eqref{eq:linsystem} for some distribution $F$.
\end{lemma}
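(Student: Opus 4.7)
Since $A_F$ is rectangular (with $K \geq M$), $\lambda_{\min}(A_F)$ should be interpreted as the smallest singular value $\sigma_{\min}(A_F)$. The plan is to exploit the block-diagonal structure of $A_F$ to reduce the statement to a uniform lower bound on $\sigma_{\min}(A_d)$ for each $d \in \{0,1\}$, and then to obtain the latter via a compactness argument on the space of propensity score vectors $p = (p(z_0), \ldots, p(z_K))'$.

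First, because $A_F = \operatorname{diag}(A_1, A_0)$, for any unit vector $v = (v_1', v_0')'$ with $v_d \in \R^{M+1}$ one has $\|A_F v\|^2 = \|A_1 v_1\|^2 + \|A_0 v_0\|^2$, so standard manipulation gives $\sigma_{\min}(A_F) = \min\{\sigma_{\min}(A_1), \sigma_{\min}(A_0)\}$. It therefore suffices to bound $\sigma_{\min}(A_d)$ uniformly below for each $d$. Next, I will use the set $\mathcal{S}$ supplied by condition~3 of Definition~\ref{def:strong_id_para_space}: its $M+1$ indices single out a square submatrix $A_d^{\mathcal{S}}$ of $A_d$ whose entries are $\lambda_{dm}(p(z_\ell))$ for $\ell \in \mathcal{S}$ and $m = 0,1,\ldots,M$. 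By the unisolvence property in Assumption~\ref{asp:identification}.6 (which is inherited through condition~2 of Definition~\ref{def:strong_id_para_space}) and the fact that the propensity scores $\{p(z_\ell) : \ell \in \mathcal{S}\}$ are all distinct, $|\det A_d^{\mathcal{S}}| > 0$. Since selecting a subset of rows can only decrease singular values (for any unit vector $v$, $\|A_d v\|^2 \geq \|A_d^{\mathcal{S}} v\|^2$), one has $\sigma_{\min}(A_d) \geq \sigma_{\min}(A_d^{\mathcal{S}})$.

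For the uniformity step, fix a subset $\mathcal{S} \subseteq \{0,1,\ldots,K\}$ with $|\mathcal{S}| = M+1$ and define
\[
    \mathcal{V}_{\mathcal{S}} = \left\{ p \in [\epsilon, 1-\epsilon]^{K+1} : \min_{j,k\in\mathcal{S},\, j\neq k} |p(z_j) - p(z_k)| \geq \epsilon \right\}.
\]
This set is compact. Each $\lambda_{dm}$ is continuous on $(0,1)$ (hence on $[\epsilon, 1-\epsilon]$), so the map $p \mapsto A_d^{\mathcal{S}}(p)$ is continuous on $\mathcal{V}_{\mathcal{S}}$, and therefore so is $p \mapsto \sigma_{\min}(A_d^{\mathcal{S}}(p))$. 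By the argument above, this continuous function is strictly positive on the compact set $\mathcal{V}_{\mathcal{S}}$, so it attains a positive minimum $c_{\mathcal{S},d} > 0$. Taking $c = \min_{\mathcal{S},d} c_{\mathcal{S},d}$, where the minimum runs over the finite collection of admissible subsets $\mathcal{S}$ and $d\in\{0,1\}$, yields a uniform lower bound that applies to every $(\theta,F) \in \mathcal{P}_s$ (since condition~3 of Definition~\ref{def:strong_id_para_space} guarantees $p_F \in \mathcal{V}_{\mathcal{S}}$ for at least one such $\mathcal{S}$). Combining with the block-diagonal reduction gives $\inf_{(\theta,F)\in\mathcal{P}_s} \lambda_{\min}(A_F) \geq c > 0$.

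The argument is largely routine; the only minor subtlety is the dependence of the distinguished index set $\mathcal{S}$ on $F$, which is handled by taking the minimum over the finite collection of possible subsets. No delicate estimate on the functions $\lambda_{dm}$ is required beyond continuity plus the algebraic unisolvence condition already built into Assumption~\ref{asp:identification}.6.
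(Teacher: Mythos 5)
Your proof is correct and follows essentially the same route as the paper's: reduce to each block $A_d$ via the block-diagonal structure, pass to the square submatrix indexed by $\mathcal{S}$ (using that dropping rows can only decrease the smallest singular value), invoke unisolvence for strict positivity, and use continuity plus compactness for uniformity. The only cosmetic difference is that the paper parametrizes the compact set by the $(M{+}1)$-tuple of selected propensity scores directly (so the bound is automatically independent of which $\mathcal{S}$ is chosen), whereas you take a minimum over the finitely many admissible index sets — both handle the $F$-dependence of $\mathcal{S}$ equally well.
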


The parameter space $\mathcal{P}_s$ guarantees that the smallest singular value of $A_F$, denoted as $\tau_{F,2(M+1)}$, is uniformly bounded away from zero. Since $A_F$ has full rank for each $(\theta,F)\in\mathcal{P}_s$, $\theta = (A_F'A_{F})^{-1}A_{F}'\beta_F$ is uniquely determined by the data distribution $F$, so does $\lambda = c'\theta$. Therefore, $\mathcal{P}_s$ can be regarded as a collection of distribution $F$, and we denote $\theta_F \equiv (A_F'A_{F})^{-1}A_{F}'\beta_F$ and $\lambda_F \equiv c'\theta_F$ for each $F \in \mathcal{P}_s$. The next result establishes the consistency and local power property of the MLC test under strong identification:
\begin{proposition}
\label{prop:local_power_MLC}
Suppose $\{F_n: n \geq 1\} \subseteq \mathcal{P}_s$, where for some $F_0 \in \mathcal{P}_s$, assume 
\begin{enumerate}
    \item $p_{F_n}(z_\ell) \to p_{F_0}(z_\ell)$ for all $\ell = 0,1,\ldots,K$,
    \item $\beta_{F_n} \to \beta_{F_0}$,
    \item $\sigma^2_{F_n,d\ell} \to \sigma^2_{F_0,d\ell}$ for all $d = 0,1$ and $\ell = 0,1,\ldots,K$,
    \item $q_{F_n}(z_\ell) \to q_{F_0}(z_\ell)$ for all $\ell = 0,1,\ldots,K$.
\end{enumerate}
Consider a sequence of null values $\lambda_n^* = \lambda_{F_n} + bn^{-r}$ with $b \in \R$ and $b \neq 0$, then the following conclusion holds
\begin{enumerate}
    \item[(a)] If $r \in [0,1/2)$, then we have 
    \begin{equation}
    \label{eq:MLC_consistency}
        \lim_{n\to\infty} \Prob_{F_n}(\hat{\phi}_{\text{MLC}}(\lambda_n^*) = 1) = 1.
    \end{equation}
    \item[(b)] If $r = 1/2$, then we have
    \[
        \lim_{a\searrow 0}\limsup_{n\to\infty}\Exp_{F_n}\left[\hat{\phi}_{\text{Wald}}(\lambda_n^*) - \hat{\phi}_{\text{MLC}}(\lambda_n^*)\right] = 0.
    \]
\end{enumerate}
\end{proposition}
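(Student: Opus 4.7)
The plan is to exploit the fact that, under the parameter space $\mathcal{P}_s$, Lemma \ref{lem:full_rank_A} yields a uniform positive lower bound on the smallest singular value $\tau_{2(M+1),F_n}$, so $\sqrt{n}\tau_{j,F_n}\to\infty$ for every $j$; this places us in the ``strongly identified'' regime $q = 2(M+1)$ of Lemma \ref{lem:new_cond_stat}, where the random perturbation $\kappa n^{-1/2}\xi$ is asymptotically negligible and $\widetilde{D}(\theta_{F_n})\xrightarrow{p} A_{F_0}$. Combined with Lemma \ref{lem:convergence_p_beta}, this gives $\hat{A}\xrightarrow{p} A_{F_0}$ and $\hat{\Omega}(\theta)\xrightarrow{p}\Omega(\theta, F_0)$ uniformly on compact neighborhoods of $\theta_{F_0}$, with $\Omega(\theta, F_0)$ positive definite. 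Throughout, I will use the algebraic identity
\begin{equation*}
\text{MLC}_n(\theta) = (1+a)\,\text{MRLM}_n(\theta) + a\,\|M_{\hat{Q}(\theta)}\hat{\Omega}(\theta)^{-1/2}\sqrt{n}(\hat{A}\theta-\hat{\beta})\|^2,
\end{equation*}
which follows from $\text{AR}_n(\theta)=\|u\|^2 = \|P_{\hat Q}u\|^2+\|M_{\hat Q}u\|^2 = \text{MRLM}_n(\theta)+\|M_{\hat Q(\theta)}u\|^2$ with $u=\hat{\Omega}(\theta)^{-1/2}\sqrt{n}(\hat A\theta-\hat\beta)$.

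For part (a), fix $r\in[0,1/2)$. Any $\theta$ satisfying $c'\theta=\lambda_n^*$ obeys $c'(\theta-\theta_{F_n})=bn^{-r}$, hence by full column rank of $A_{F_n}$ with smallest singular value at least $\delta_0>0$,
\begin{equation*}
\|A_{F_n}(\theta-\theta_{F_n})\|\,\ge\,\delta_0\,\|\theta-\theta_{F_n}\|\,\ge\,\frac{\delta_0}{\|c\|}\,|b|\,n^{-r}.
\end{equation*}
Writing $\hat A\theta-\hat\beta=A_{F_n}(\theta-\theta_{F_n})+(\hat A-A_{F_n})\theta-(\hat\beta-\beta_{F_n})$ and bounding the stochastic terms by $O_p(n^{-1/2})$, together with uniform positive-definiteness of $\hat{\Omega}$, gives
\begin{equation*}
\inf_{c'\theta=\lambda_n^*}\text{AR}_n(\theta)\,\ge\,c_1\, n^{1-2r}\, b^2 - c_2\, n^{1/2-r}\, |b|-O_p(1)\,\xrightarrow{p}\,\infty,
\end{equation*}
since $r<1/2$. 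Because $\text{MLC}_n\ge a\cdot\text{AR}_n$, the profiled MLC statistic also diverges in probability and so exceeds the fixed critical value $q_{(1+a)\chi_1^2+a\chi_{2K+1}^2}(1-\alpha)$ with probability tending to one, proving \eqref{eq:MLC_consistency}.

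For part (b), reparametrize $\theta=\theta_{F_n}+n^{-1/2}v$ with $c'v=b$. Under strong identification, minimizing over $v$ in a compact ball (justified by a two-step argument: first show the minimizer is $O_p(1)$, then restrict) and using the expansions from the first paragraph yields the joint weak convergence
\begin{equation*}
\Bigl(\inf_{c'v=b}\text{MRLM}_n(\theta_{F_n}+n^{-1/2}v),\;\inf_{c'v=b}\|M_{\hat Q}u\|^2\Bigr)\xrightarrow{d}\bigl(\chi_1^2(\mu^2_b),\,W\bigr),
\end{equation*}
where $\mu_b^2=b^2/[c'(A_{F_0}'\Omega_0^{-1}A_{F_0})^{-1}c]$ is exactly the noncentrality that governs the limit of $\text{Wald}_n(\lambda_n^*)$ by the local equivalence of the subvector Wald and the modified RLM statistics under strong identification, and $W=O_p(1)$. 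In particular $\text{Wald}_n(\lambda_n^*)\xrightarrow{d}\chi_1^2(\mu_b^2)$. The profiled MLC then converges to $(1+a)\chi_1^2(\mu_b^2)+aW$, while the critical value $q_{(1+a)\chi_1^2+a\chi_{2K+1}^2}(1-\alpha)$ is continuous in $a$ with value $q_{\chi_1^2}(1-\alpha)$ at $a=0$. Both the test statistic and the critical value therefore converge to their Wald counterparts as $a\searrow 0$; bounded convergence applied to the indicator difference $\hat\phi_{\text{Wald}}(\lambda_n^*)-\hat\phi_{\text{MLC}}(\lambda_n^*)$ then delivers the claim.

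The hard part will be the joint analysis of the infimum in part (b): $\text{MRLM}_n$ and $\|M_{\hat Q}u\|^2$ are not minimized at the same $v$, so establishing the stated joint limit requires first showing that the common minimizer lies in an $O_p(1)$ neighborhood of $v_b^{\ast}=b\,(A_{F_0}'\Omega_0^{-1}A_{F_0})^{-1}c/[c'(A_{F_0}'\Omega_0^{-1}A_{F_0})^{-1}c]$ and then expanding both quadratic forms around this point; the perturbation from $\widetilde{D}(\theta)-\hat{D}(\theta)=\kappa n^{-1/2}\xi$ must also be absorbed into the $o_p(1)$ remainder, which is precisely where the small tuning constant $\kappa$ in \eqref{eq:D_tilde} becomes useful for avoiding any loss of efficiency relative to Wald.
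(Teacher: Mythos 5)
Your part (a) is correct and is essentially the paper's argument: lower-bound the distance from $\theta_{F_n}$ to the constraint hyperplane by $|b|n^{-r}/\|c\|$, convert this into divergence of $\inf_{c'\theta=\lambda_n^*}\|\sqrt{n}(\hat A\theta-\hat\beta)\|$ via $\lambda_{\min}(\hat A)\to\lambda_{\min}(A_{F_0})>0$ (Lemma \ref{lem:full_rank_A}), and conclude through $\text{MLC}_n\ge a\cdot\text{AR}_n$ together with the uniform bounds on $\hat\Omega$ from Lemma \ref{lem:consistent_cov_strong_id}.

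Part (b), however, has a genuine gap that you yourself flag but do not close. You assert that the profiled statistic converges to $(1+a)\chi_1^2(\mu_b^2)+aW$, but $\inf_v\bigl[(1+a)\text{MRLM}_n+a\|M_{\hat Q}u\|^2\bigr]$ is not the sum of the separate infima, and the joint limit you invoke is never established; your closing paragraph concedes that locating the common minimizer and expanding both quadratic forms around it is "the hard part." The paper avoids this entirely with a sandwich. The key observation you are missing is an \emph{invariance} property: the first-order expansion of $\text{MRLM}_n(\theta)$ at any feasible $\theta$ with $\|\theta-\theta_{F_n}\|=O_p(n^{-1/2})$ depends on $\theta$ only through $\sqrt{n}\,c'(\theta-\theta_{F_n})=b$, which is constant on the constraint manifold; hence $\text{MRLM}_n(\theta_n^*)$ evaluated at the (unknown) MLC minimizer $\theta_n^*$ has the \emph{same} limit $(\mathcal{Z}+\nu)'P_\nu(\mathcal{Z}+\nu)$ as at any explicit feasible point, and this limit coincides with the Wald limit of Lemma \ref{lem:wald_local_power}. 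One then bounds
\[
\text{MRLM}_n(\theta_n^*)\;\le\;\inf_{c'\theta=\lambda_n^*}\text{MLC}_n(\theta)\;\le\;\text{MLC}_n(\bar\theta_n),
\qquad
\bar\theta_n=\theta_{F_n}+\frac{(A_{F_0}'\Omega_{F_0}^{-1}A_{F_0})^{-1}c}{c'(A_{F_0}'\Omega_{F_0}^{-1}A_{F_0})^{-1}c}\cdot\frac{b}{\sqrt{n}},
\]
(after first showing $\|\theta_n^*-\theta_{F_n}\|=O_p(n^{-1/2})$ from $a\cdot\text{AR}_n(\theta_n^*)\le\text{MLC}_n(\bar\theta_n)=O_p(1)$), which traps the rejection probability between $L(a)$ and $U(a)$; the limit $a\searrow 0$ is then taken on these two continuous functions of $a$, not on the statistic itself, and $L(0)=U(0)$ equals the Wald power. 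Your strategy of establishing a joint weak limit for the profiled statistic could in principle work, but as written it is incomplete; the sandwich renders the joint analysis unnecessary.
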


It is worth noting that the proposed MLC test is approximately as powerful as the asymptotic efficient Wald test under strong identification if we set the weight assigned to AR statistic sufficiently small. \citet[Theorem 3]{andrews.i:2018} establishes a related result by showing that a $(1-\alpha-\gamma)$ LC confidence set is contained by a $(1-\alpha)$ Wald confidence set with probability approaching one for all $\gamma > 0$ based on a specific choice of weight function $a(\gamma)$ such that the critical values of the two tests are equal:  $q_{(1+a(\gamma))\chi_1^2 + a(\gamma)\chi_{2K+1}^2}(1-\alpha-\gamma) = q_{\chi_1^2}(1-\alpha)$. Nevertheless, it is less clear from his result that the MLC test would have similar local power as the Wald test at the \textit{same significance level, regardless of how we choose the AR weight}. This result is now formally established in Proposition \ref{prop:local_power_MLC}(b).

It is not recommended to set $a = 0$ for the MLC test. On the one hand, the AR statistic helps direct the optimizer of the profiled MLC statistic to converge within a small neighborhood of the true parameter $\theta$ under strong identification. Within this neighborhood, the MRLM statistic is first-order equivalent to the Wald statistic. However, the MRLM statistic alone cannot detect deviations from the true parameter except in the direction of the target parameter $c'\theta$. On the other hand, assigning a larger weight to the AR statistic increases power under weak identification, as the MRLM statistic might be small for distant alternatives due to the near-singularity of $\hat{A}$ \citep{kleibergen:2005}. While \cite{andrews:2016} discusses the optimal choice of $a$ for full vector inference problems, the optimal choice of $a$ for  inference of parameter functions remains an open question for future research.

\subsection{Lemmas for local power analysis}
The following lemmas are used to establish Proposition \ref{prop:local_power_MLC}.

\begin{lemma}
\label{lem:wald_local_power}
    Under the same assumptions in Proposition \ref{prop:local_power_MLC}, the local power of Wald statistic equals:
    \[
        \lim_{n\to\infty} \Exp_{F_n}[\hat{\phi}_{\text{Wald}}(\lambda_n^*)] = \Prob((\mathcal{Z} + \nu)'P_\nu (\mathcal{Z} + \nu) > q_{\chi_1^2}(1-\alpha)).
    \]
    where $\mathcal{Z} \sim \normal(0_{2(K+1)\times 1}, I_{2(K+1)})$ and $\nu$ is defined in equation \eqref{eq:defn_nu_vec}.
\end{lemma}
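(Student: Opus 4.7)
The plan is to leverage standard asymptotic theory for minimum distance estimators under strong identification and then translate the resulting non-central chi-squared limit into the projection form stated in the lemma.

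First, I would establish consistency and asymptotic normality of $\hat{\theta}^{\text{eff}}$ along the drifting sequence $\{F_n\}$. By Lemma \ref{lem:full_rank_A}, the smallest singular value of $A_{F_n}$ is uniformly bounded away from zero, so $A_{F_0} \equiv \lim_n A_{F_n}$ has full column rank. Combined with Lemma \ref{lem:convergence_p_beta}, this ensures that $\hat{A}'\hat{\Omega}(\theta)^{-1}\hat{A}$ is nonsingular with probability approaching one and converges in probability, uniformly over a neighborhood of $\theta_{F_0}$, to $A_{F_0}'\Omega_{F_0}^{-1}A_{F_0}$, where $\Omega_{F_0}$ is the limit of $\hat{\Omega}(\theta_{F_n})$ obtained as in equation \eqref{eq:convergence_Omega}. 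Standard minimum distance arguments then deliver the linear expansion
\[
\sqrt{n}(\hat{\theta}^{\text{eff}} - \theta_{F_n}) = -(A_{F_0}'\Omega_{F_0}^{-1}A_{F_0})^{-1}A_{F_0}'\Omega_{F_0}^{-1}\sqrt{n}(\hat{A}\theta_{F_n} - \hat{\beta}) + o_p(1).
\]

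Second, I would apply the joint CLT in Lemma \ref{lem:convergence_p_beta}(a), exactly as in the proof of Proposition \ref{prop:subsequence_MLC_validity}, to conclude $\sqrt{n}(\hat{A}\theta_{F_n} - \hat{\beta}) \xrightarrow{d} \mathcal{Z}_{\mathfrak{m}} \sim \normal(0, \Omega_{F_0})$, and hence
\[
\sqrt{n}(c'\hat{\theta}^{\text{eff}} - \lambda_{F_n}) \xrightarrow{d} \normal\bigl(0,\; c'(A_{F_0}'\Omega_{F_0}^{-1}A_{F_0})^{-1}c\bigr).
\]
Under the local alternative $\lambda_n^* = \lambda_{F_n} + b n^{-1/2}$, decomposing $\sqrt{n}(c'\hat{\theta}^{\text{eff}} - \lambda_n^*) = \sqrt{n}(c'\hat{\theta}^{\text{eff}} - \lambda_{F_n}) - b$ and combining with the consistency of the denominator $c'(\hat{A}'\hat{\Omega}(\hat{\theta}^{\text{eff}})^{-1}\hat{A})^{-1}c$, Slutsky's theorem and the continuous mapping theorem give that $\text{Wald}_n(\lambda_n^*)$ converges in distribution to a non-central $\chi^2_1$ with non-centrality parameter $b^2 / [c'(A_{F_0}'\Omega_{F_0}^{-1}A_{F_0})^{-1}c]$.

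Finally, I would translate this non-central chi-squared limit into the claimed form. Since $P_\nu = \nu\nu'/(\nu'\nu)$ is a rank-one projection, $(\mathcal{Z}+\nu)'P_\nu(\mathcal{Z}+\nu) = (\nu'\mathcal{Z}/\|\nu\| + \|\nu\|)^2$, which is a non-central $\chi^2_1$ with non-centrality $\|\nu\|^2$ when $\mathcal{Z} \sim \normal(0, I_{2(K+1)})$. Identifying $\|\nu\|^2$ with $b^2/[c'(A_{F_0}'\Omega_{F_0}^{-1}A_{F_0})^{-1}c]$ (which will match the definition of $\nu$ in equation \eqref{eq:defn_nu_vec} up to an orthogonal rotation that leaves the quadratic form invariant) yields the stated expression, and the claim follows by applying the continuous mapping theorem to the indicator $\indicator[\text{Wald}_n(\lambda_n^*) > q_{\chi_1^2}(1-\alpha)]$.

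The main obstacle will be making the linear expansion rigorous along the drifting sequence, since off-the-shelf minimum distance results are typically stated for a fixed DGP. This requires verifying uniform convergence of the sample objective function and its derivatives along $\{F_n\}$, which follows by combining Lemma \ref{lem:convergence_p_beta}(a)--(b) with the uniform lower bound on $\lambda_{\min}(A_{F_n})$ from Lemma \ref{lem:full_rank_A}, together with compactness of $\Theta$; care is needed to ensure that the $o_p(1)$ term in the expansion holds along $\{F_n\}$ rather than only pointwise.
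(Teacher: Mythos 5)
Your proposal is correct and follows essentially the same route as the paper's proof: both rely on the asymptotic normality and linear expansion of the efficient minimum distance estimator under the strong-identification parameter space, then recast the resulting noncentral $\chi_1^2$ limit (with noncentrality $b^2/[c'(A_{F_0}'\Omega_{F_0}^{-1}A_{F_0})^{-1}c] = \|\nu\|^2$) as the quadratic form $(\mathcal{Z}+\nu)'P_\nu(\mathcal{Z}+\nu)$. The paper simply cites the standard minimum distance result and substitutes directly, whereas you flag the (legitimate but routine) need to verify the expansion holds along the drifting sequence; this is handled in the paper by the same Lemmas \ref{lem:full_rank_A} and \ref{lem:consistent_cov_strong_id} you invoke.
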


For each population distribution $F$, let 
\begin{align*}
    & \Omega_{F}(\theta) \equiv H(p_F, \theta) \Sigma_{p,F} H(p_F,\theta) \\ %
    & \Gamma_{j,F}(\theta) \equiv M_j(p_F)\Sigma_{p,F} H(p_F,\theta)', %
\end{align*}
where $H(p,\theta)$ and $M_j(p)$ are defined in section \ref{sec:construct_MLC}.
\begin{lemma}
\label{lem:consistent_cov_strong_id}
    Suppose $\{F_n\}_{n\geq 1} \subseteq \mathcal{P}_s$ is a sequence of distributions satisfying the conditions in Proposition \ref{prop:local_power_MLC}. The following results hold:
    \begin{align}
    & 0 < \inf_{\theta \in\Theta} \lambda_{\min}(\Omega_{F_0}(\theta)) \leq 
    \sup_{\theta \in\Theta}\lambda_{\max}(\Omega_{F_0}(\theta)) < \infty \label{eq:lem_locpower_regular1} \\
    & \sup_{\theta \in\Theta}\|\hat{\Omega}(\theta) - \Omega_{F_0}(\theta)\| = o_p(1) 
    \label{eq:lem_locpower_regular2} \\
    & \sup_{\theta \in\Theta}\|\hat{\Gamma}_j(\theta) - \Gamma_{j,F_0}(\theta)\| = o_p(1).
    \label{eq:lem_locpower_regular3}
    \end{align}
\end{lemma}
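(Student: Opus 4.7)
The proof proceeds by exploiting the compactness of $\Theta$ together with the pointwise convergence of $\hat{p}$, $\hat{\Sigma}_p$, and $\hat{\Sigma}_\beta$ established in Lemma \ref{lem:convergence_p_beta}, and then upgrading to uniformity via continuity arguments. Throughout, I read $\Omega_F(\theta)$ as including the $\Sigma_{\beta,F}$ term consistent with the definition of $\hat{\Omega}(\theta)$.

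For \eqref{eq:lem_locpower_regular1}, observe that $\Omega_{F_0}(\theta) \succeq \Sigma_{\beta,F_0}$, whose diagonal entries are $\sigma^2_{F_0,d\ell}/q_{F_0}(d,z_\ell)$ and are bounded below by a positive constant depending only on $\epsilon$ via Definition \ref{def:parameter_space}. This furnishes a uniform-in-$\theta$ lower bound on $\lambda_{\min}(\Omega_{F_0}(\theta))$. For the upper bound, note that each entry of $H(p,\theta)$ is linear in $\theta$ with coefficients $\lambda'_{dm}(p(z_\ell))$ that are continuous in $p\in(0,1)$ (since $h_m$ is continuous by Assumption \ref{asp:identification}.5, so $\lambda_{dm}$ is $C^1$). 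Consequently $H$ is bounded in operator norm on the compact set $[\epsilon,1-\epsilon]^{K+1}\times \Theta$, and combined with the bounds on $\Sigma_{p,F_0}$ and $\Sigma_{\beta,F_0}$ from the parameter space, the maximal eigenvalue of $\Omega_{F_0}(\theta)$ is uniformly bounded above.

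For \eqref{eq:lem_locpower_regular2}, I would use the telescoping decomposition
\begin{align*}
\hat{\Omega}(\theta) - \Omega_{F_0}(\theta)
&= [H(\hat{p},\theta) - H(p_{F_0},\theta)]\,\hat{\Sigma}_p\, H(\hat{p},\theta)' \\
&\quad + H(p_{F_0},\theta)\,[\hat{\Sigma}_p - \Sigma_{p,F_0}]\, H(\hat{p},\theta)' \\
&\quad + H(p_{F_0},\theta)\,\Sigma_{p,F_0}\,[H(\hat{p},\theta) - H(p_{F_0},\theta)]' + [\hat{\Sigma}_\beta - \Sigma_{\beta,F_0}].
\end{align*}
Continuity of $H$ on the compact product domain upgrades to uniform continuity, so $\hat{p} \xrightarrow{p} p_{F_0}$ (from Lemma \ref{lem:convergence_p_beta}) yields $\sup_{\theta\in\Theta}\|H(\hat{p},\theta) - H(p_{F_0},\theta)\| = o_p(1)$. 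Combined with $\hat{\Sigma}_p \xrightarrow{p} \Sigma_{p,F_0}$ and $\hat{\Sigma}_\beta \xrightarrow{p} \Sigma_{\beta,F_0}$ (also from Lemma \ref{lem:convergence_p_beta}) and the uniform boundedness of $H$ from the previous paragraph, each of the four terms is $o_p(1)$ uniformly over $\theta$. The same telescoping argument applied to $\hat{\Gamma}_j(\theta) = M_j(\hat{p})\hat{\Sigma}_p H(\hat{p},\theta)'$, using continuity of $p\mapsto M_j(p)$, establishes \eqref{eq:lem_locpower_regular3}.

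The only nontrivial ingredient is the uniform continuity of $(p,\theta)\mapsto H(p,\theta)$ and $p\mapsto M_j(p)$ on their compact domains, which reduces to continuity of $\lambda'_{dm}(\cdot)$ on the closed subinterval of $(0,1)$ bounded away from $\{0,1\}$ by the strong overlap condition. All remaining steps are standard product-rule bookkeeping and direct invocations of Lemma \ref{lem:convergence_p_beta}, so I do not anticipate further difficulty.
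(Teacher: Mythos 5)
Your proposal is correct and follows essentially the same route as the paper's proof: the lower bound in \eqref{eq:lem_locpower_regular1} via $\lambda_{\min}(\Omega)\geq\lambda_{\min}(\Sigma_\beta)$ and the parameter-space restrictions, the upper bound via compactness of $\Theta$ and continuity of $H$, and the uniform convergence in \eqref{eq:lem_locpower_regular2}--\eqref{eq:lem_locpower_regular3} via a telescoping decomposition combined with Lemma \ref{lem:convergence_p_beta} (your three-term telescope is algebraically equivalent to the paper's use of $\|X'X-Y'Y\|\leq(\|X\|+\|Y\|)\|X-Y\|$). Your reading of $\Omega_F(\theta)$ as including the $\Sigma_{\beta,F}$ term matches what the paper's own proof uses, and the only detail you leave implicit --- that $\hat p$ lies in a compact subset of $(0,1)^{K+1}$ with probability approaching one, which the paper verifies explicitly via the event $\|\hat p - p_{F_n}\|<\epsilon/2$ --- is a one-line remark rather than a gap.
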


\subsection{Proofs in Appendix \ref{appendix:power_analysis_MLC}}

\begin{proof}[Proof of Lemma \ref{lem:full_rank_A}]
    Since the matrix $A_F$ is the block diagonal of $A_{1F}$ and $A_{0F}$ defined in \eqref{eq:mateq}, it suffices to show that the minimum singular value of $A_{dF}$ is bounded away from zero uniformly over $\mathcal{P}_s$ for both $d = 0,1$.  Fix an arbitrary $d = 0,1$, the proof is divided into three steps.

    \underline{Part (a)}: Establish the positive lower bound of the determinant of an alternant matrix over a compact subset. \par
    Suppose $\{h_m(\cdot)\}_{m=1}^M$ satisfies the Assumption \ref{asp:identification}.6. Define the function 
    \[
        f(p_0, \ldots, p_{M}) \equiv \lambda_{\min}(\mathbb{A}(p_0, \ldots, p_M)) \quad \text{for each }(p_0,\ldots, p_M) \in (0,1)^{M+1},
    \]
     where $\lambda_{\min}$ denotes the smallest singular value of an alternant matrix $\mathbb{A}(p_0, \ldots, p_M) \in \R^{(M+1) \times (M+1)}$ whose $(\ell,m)$'th element is defined as
    \[
        \mathbb{A}_{\ell m}(p_0, \ldots, p_M) \equiv \lambda_{dm}(p_\ell).
    \]
    By Assumption \ref{asp:identification}.6, $\{\lambda_{dm}(\cdot)\}_{m=0}^M$ are continuous functions on $(0,1)$. This implies that $f(p_0, \ldots, p_M)$ is also continuous on $(0,1)^{M+1}$. Consider a compact subset
    \begin{equation}
    \label{eq:compact_subset}
        \mathcal{E} \equiv \left\{(p_0,\ldots,p_M): p_\ell \in [\epsilon, 1-\epsilon] \text{ for all } \ell = 0,\ldots,M,~~ \min_{j \neq k}|p_j - p_k| \geq  \epsilon\right\}.
    \end{equation}
    Then $f$ is strictly positive and continuous on $\mathcal{E}$ by the unisolvent property\footnote{The unisolvent property implies that $\mathbb{A}(p_0,\ldots,p_M)$ has nonzero determinant, whose absolute value equals the product of all singular values of $\mathbb{A}(p_0,\ldots,p_M)$. From this, it implies that the smallest singular value should be positive.} imposed in Assumption \ref{asp:identification}.6. Therefore, there is a natural positive lower bound, denoted by $\epsilon_d$, such that
    \[
        \inf_{(p_0,\ldots, p_M) \in \mathcal{E}} f(p_0, \ldots, p_M) > \epsilon_d > 0.
    \]

    \underline{Part (b)}: Show that the smallest singular value of a submatrix of $A_{dF}$ is uniformly positive. \par
    By the third condition imposed in Definition \ref{def:strong_id_para_space} and the third condition imposed in Definition \ref{def:parameter_space}, for each $(\theta,F)\in\mathcal{P}$, there exists a subset of propensity scores $p_{\mathcal{S}}\equiv \{p_F(z_\ell): \ell \in \mathcal{S} \}$ such that $p_{\mathcal{S}} \in \mathcal{E}$ defined in \eqref{eq:compact_subset}. Then it follows from part (a) that 
    \[
        f(p_\mathcal{S}) =  \lambda_{\min}(\mathbb{A}(p_{\mathcal{S}}))> \epsilon_d > 0.
    \]

    \underline{Part (c)}: Conclude the proof. \par
    Note that $\mathbb{A}(p_\mathcal{S})$ is a submatrix of $A_{dF}$ for row index belonging to the set $\mathcal{S}$. Therefore, 
    \begin{align*}
        \lambda_{\min}(A_{dF}) 
        &= \min_{\|x\| = 1} \|(A_{dF}) x\| \\
        &\geq \min_{\|x\| = 1} \|\mathbb{A}(p_{\mathcal{S}}) x\| \\
        &= \lambda_{\min}(\mathbb{A}(p_{\mathcal{S}})) \\
        &> \epsilon_d > 0
    \end{align*}
    for all $(\theta,F)\in\mathcal{P}_s$. The first and third line holds by the min-max principle for singular values, and the second line follows by the fact that $\mathbb{A}(p_{\mathcal{S}})$ is a submatrix of $A_{dF}$. Hence the desired conclusion has been established.
\end{proof}

\begin{proof}[Proof of Proposition \ref{prop:local_power_MLC}]
    \underline{Part (a)}. First, we show that 
        \begin{equation}
        \label{eq:diverge_distant_alt}
            \inf_{c'\theta = \lambda_n^*} \|\sqrt{n}(\hat{A}\theta - \hat{\beta})\| \to \infty \quad \text{almost surely}.
        \end{equation}
    To prove this, fix an arbitrary $\theta_n^*$ that satisfies $c'\theta_n^* = \lambda_n^*$.\footnote{If such $\theta_n^*$ does not exist, then the test always rejects the null hypothesis by its construction in section \ref{sec:construct_MLC}. So the desired conclusion trivially holds.} Consider the following derivation
    \begin{align}
        \|\sqrt{n}(\hat{A}\theta_n^* - \hat{\beta})\|
        &= \|\sqrt{n}(\hat{A}\theta_{F_n} - \hat{\beta}) + \sqrt{n}\hat{A}(\theta_{n}^* - \theta_{F_n})\| \notag \\
        &\geq \sqrt{n}\|\hat{A}(\theta_n^* - \theta_{F_n})\| - \|\sqrt{n}(\hat{A}\theta_{F_n} - \hat{\beta})\| \notag \\
        &\geq \lambda_{\min}(\hat{A}) \sqrt{n}\|\theta_n^* - \theta_{F_n}\| - \|\sqrt{n}(\hat{A}\theta_{F_n} - \hat{\beta})\| \label{eq:bound_on_diff_theta} \\
        &\geq \lambda_{\min}(\hat{A}) \frac{|c'\theta_{F_n} - \lambda_{n}^*|}{\sqrt{c'c}} - \|\sqrt{n}(\hat{A}\theta_{F_n} - \hat{\beta})\| \notag \\
        &= \lambda_{\min}(A_{F_0}) \frac{|b|n^{-r+1/2}}{\sqrt{c'c}} + O_p(1). \notag
    \end{align}
    The second line follows by triangle inequality. The third line holds by taking $x = \theta_n^* - \theta_{F_n}$ in the following equality: 
    \begin{equation}
    \label{eq:norm_ineq}
        \|\hat{A}x\| \geq \|x\| \inf_{\|x\| \neq 0}\frac{\|\hat{A}x\|}{\|x\|} = \|x\| \lambda_{\min}(\hat{A}),
    \end{equation}
    where the equality in \eqref{eq:norm_ineq} holds by Courant–Fischer–Weyl min-max principle for singular values.
    The fourth line holds by the fact that the distance between $\theta_{F_n}$ and $\theta_{n}^*$ is bounded below by the distance between $\theta_{F_n}$ and the hyperplane $\{\theta: c'\theta = \lambda_n^*\}$, where the latter equals $|c'\theta_{F_n} - \lambda_n^*|/\sqrt{c'c}$. The last line holds by $\lambda_n^* = \lambda_{F_n} + bn^{-r}$, $\hat{A} \xrightarrow{p} A_{F_0}$, and $\sqrt{n}(\hat{A}\theta_{F_n} - \hat{\beta}) = O_p(1)$ as implied by Lemma \ref{lem:convergence_p_beta}. Finally, based the assumption $r \in [0,1/2)$ and $\lambda_{\min}(A_{F_0}) > 0$ from  Lemma \ref{lem:full_rank_A}, the desired intermediate step \eqref{eq:diverge_distant_alt} is established.
    
    Second, with a slight abuse of notation, let $\theta_n^*$ denote the minimizer in the profiled MLC test statistic:
    \begin{equation}
    \label{eq:minimize_MLC}
    \begin{aligned}
        \theta_n^* 
        &\in \argmin_{c'\theta = \lambda_n^*} \text{MLC}_n(\theta) \\
        & = \argmin_{c'\theta = \lambda_n^*} n(\hat{A}\theta - \hat{\beta})' \hat{\Omega}(\theta)^{-1/2} P_{\hat{Q}(\theta)} \hat{\Omega}(\theta)^{-1/2} (\hat{A}\theta - \hat{\beta}) + a\cdot n(\hat{A}\theta - \hat{\beta})'\hat{\Omega}(\theta)^{-1}(\hat{A}\theta - \hat{\beta}).
    \end{aligned}
    \end{equation}
    The first term of the objective function is always non-negative, we show that the second term diverges to infinity. Note that
    \begin{align}
    n(\hat{A}\theta_{n}^* - \hat{\beta})'\hat{\Omega}(\theta_{n}^*)^{-1}(\hat{A}\theta_{n}^* - \hat{\beta}) 
    &= \|\sqrt{n}\hat{\Omega}(\theta_n^*)^{-1/2}(\hat{A}\theta_n^* - \hat{\beta})\|^2  \notag \\
    &\geq \lambda_{\max}(\hat{\Omega}(\theta_n^*))^{-1} \|\sqrt{n}(\hat{A}\theta_n^* - \hat{\beta})\|^2  \notag \\ 
    &\geq \left[\lambda_{\max}(\Omega_{F_0}(\theta_n^*))^{-1} + o_p(1)\right] \|\sqrt{n}(\hat{A}\theta_n^* - \hat{\beta})\|^2 \label{eq:expand_AR_term}
    \end{align}
    The second inequality holds by applying the same arguments in \eqref{eq:norm_ineq}, and the third line holds by equation \eqref{eq:lem_locpower_regular2} from Lemma \ref{lem:consistent_cov_strong_id}. Again following this Lemma, inequality \eqref{eq:lem_locpower_regular1} implies $\lambda_{\max}(\Omega_{F_0}(\theta_n^*))^{-1} > 0$. Combining \eqref{eq:diverge_distant_alt} and \eqref{eq:expand_AR_term} implies that 
    \[
        \inf_{c'\theta = \lambda_n^*} \text{MLC}_n(\theta) = \text{MLC}_n(\theta_n^*) \to \infty \quad \text{almost surely}.
    \]
     As a result, the desired conclusion \eqref{eq:MLC_consistency} holds.

    \underline{Part (b)}. We divide the proof into several steps:
    
    \textbf{Step 1}: There exists a $\bar{\theta}_n$ that satisfies $c'\bar{\theta}_n = \lambda_n^*$ such that $\text{MLC}_n(\bar{\theta}_n)$ converges to a mixture of noncentral $\chi^2$ distributions.

    Consider 
    \[
        \bar{\theta}_n = \theta_{F_n} + \frac{(A_{F_0}'\Omega_{F_0}(\theta_{F_0})^{-1} A_{F_0})^{-1} c}{c'(A_{F_0}'\Omega_{F_0}(\theta_{F_0})^{-1} A_{F_0})^{-1} c} \cdot \frac{b}{\sqrt{n}},
    \]
    Note that $\bar{\theta}_n \in \Theta$ for sufficiently large $n$ since $\theta \in \operatorname{int}(\Theta)$, and 
    \[
        \|\Omega_{F_0}(\theta_{F_0})^{-1/2}A_{F_0}(\bar{\theta}_n - \theta_{F_n})\| = \frac{|b|n^{-1/2}}{\sqrt{c'(A_{F_0}'\Omega_{F_0}(\theta_{F_0})^{-1} A_{F_0})^{-1} c}} = O(n^{-1/2}).
    \]
    So we have $\|\bar{\theta}_n - \theta_{F_0}\| = o(1)$ as $\|\theta_{F_n} - \theta_{F_0}\| = o(1)$ holds by the given conditions. Then the $j$'th column of $\widetilde{D}(\bar{\theta}_n)$, denoted by $\widetilde{D}_j(\bar{\theta}_n)$, equals
    \begin{align*}
        \widetilde{D}_j(\bar{\theta}_n) 
        &= \hat{a}_j - \hat{\Gamma}_j(\bar{\theta}_n) \hat{\Omega} (\bar{\theta}_n)^{-1}(\hat{A}\bar{\theta}_n - \hat{\beta}) + n^{-1/2}\kappa\xi_j\\
        &= \hat{a}_j - {\Gamma}_{j,F_0}(\theta_{F_0}) {\Omega}_{F_0}(\theta_{F_0})^{-1}(\hat{A}\theta_{F_n} - \hat{\beta}) + o_p(1) \\
        &= \hat{a}_j + o_p(1)
    \end{align*}
     for each $j = 1,\ldots,2(M+1)$, where the second line follows by \eqref{eq:lem_locpower_regular3} from Lemma \ref{lem:consistent_cov_strong_id} and the fact that $\|\theta_{F_n} - \theta_{F_0}\| = o(1)$. This implies $\|\widetilde{D}(\bar{\theta}_n) - \hat{A}\| = o_p(1)$. Since $\hat{\Omega}(\bar{\theta}_n)^{-1} = \Omega_{F_0}(\theta_{F_0})^{-1} + o_p(1) = O_p(1)$ given by \eqref{eq:lem_locpower_regular1} and \eqref{eq:lem_locpower_regular2} from Lemma \ref{lem:consistent_cov_strong_id}, we have
     \begin{align*}
        \hat{Q}(\bar{\theta}_n) 
        &\equiv \hat{\Omega}(\bar{\theta}_n)^{-1/2}\widetilde{D}(\bar{\theta}_n)(\widetilde{D}(\bar{\theta}_n)'\hat{\Omega}(\bar{\theta}_n)^{-1}\widetilde{D}(\bar{\theta}_n))^{-1}c \\
        &= \hat{\Omega}(\bar{\theta}_n)^{-1/2}\hat{A}(\hat{A}'\hat{\Omega}(\bar{\theta}_n)^{-1}\hat{A})^{-1}c + o_p(1).
     \end{align*} %
    Plugging this into $\text{MRLM}_n(\bar{\theta}_n)$ yields
    \begin{equation}
    \label{eq:MRLM_local_asymtotic}
    \begin{aligned}
        \text{MRLM}_n(\bar{\theta}_n) 
        &= \frac{\left[\sqrt{n}(\hat{A}\bar{\theta}_n - \hat{\beta})'\hat{\Omega}(\bar{\theta}_n)^{-1}\widetilde{D}(\bar{\theta}_n)\left(\widetilde{D}(\bar{\theta}_n)'\hat{\Omega}(\bar{\theta}_n)^{-1}\widetilde{D}(\bar{\theta}_n)\right)^{-1}c\right]^2}{c'\left(\widetilde{D}(\bar{\theta}_n)'\hat{\Omega}(\bar{\theta}_n)^{-1}\widetilde{D}(\bar{\theta}_n)\right)^{-1}c} \\
        &= \frac{\left[\sqrt{n}(\hat{A}\theta_{F_n} - \hat{\beta})'\hat{\Omega}(\bar{\theta}_n)^{-1}\hat{A}\left(\hat{A}'\hat{\Omega}(\bar{\theta}_n)^{-1}\hat{A}\right)^{-1}c + \sqrt{n}(\bar{\theta}_n - \theta_{F_n})'c + o_p(1)\right]^2}{c'\left(\hat{A}'\hat{\Omega}(\bar{\theta}_n)^{-1}\hat{A}\right)^{-1}c + o_p(1)} \\
        &\xrightarrow{d} \left(\frac{\mathcal{Z}'\nu}{\|\nu\|} + \|\nu\|\right)^2  = (\mathcal{Z} + \nu)' P_\nu (\mathcal{Z} + \nu).
    \end{aligned}
    \end{equation}
    where
    \begin{equation}
    \label{eq:defn_nu_vec}
        \nu 
        \equiv {\Omega}_{F_0}({\theta}_{F_0})^{-1/2}{A}_{F_0}\sqrt{n}(\bar{\theta}_n - \theta_{F_n})
        = \frac{\Omega_{F_0}(\theta_{F_0})^{-1/2}{A}_{F_0}(A_{F_0}'\Omega_{F_0}(\theta_{F_0})^{-1} A_{F_0})^{-1} c}{c'(A_{F_0}'\Omega_{F_0}(\theta_{F_0})^{-1} A_{F_0})^{-1} c} \cdot b
    \end{equation}
    and $\mathcal{Z}$ is defined as the limit law of $\sqrt{n}\Omega_{F_n}(\theta_{F_n})^{-1/2}(\hat{A}\theta_{F_n} - \hat{\beta})$, following a normal distribution $\normal(0_{2(K+1)\times 1}, I_{2(K+1)})$.
    
    For the $\text{AR}_n(\bar{\theta}_n)$, observe that 
    \begin{align*}
        \text{AR}_n(\bar{\theta}_n)
        &= \|\sqrt{n}(\hat{A}\bar{\theta}_n - \hat{\beta})'\hat{\Omega}(\bar{\theta}_n)^{-1/2}\|^2 \\
        &= \|\sqrt{n}(\hat{A}\theta_{F_n} - \hat{\beta})'\hat{\Omega}(\theta_{F_n})^{-1/2} + \sqrt{n}(\bar{\theta}_n - \theta_{F_n})'\hat{A}'\hat{\Omega}(\theta_{F_n})^{-1/2} + o_p(1)\|^2 \\
        &\xrightarrow{d} \|\mathcal{Z} + \nu\|^2.
    \end{align*}

    Combining the results for MRLM and AR statistics, it follows that 
    \[
        \text{MLC}_n(\bar{\theta}_n) 
        ~~\xrightarrow{d}~~ 
        (1+a)\cdot (\mathcal{Z} + \nu)'P_\nu (\mathcal{Z} + \nu) + a\cdot(\mathcal{Z} + \nu)'M_{\nu}(\mathcal{Z} + \nu).
    \]
    
    \medskip
    \textbf{Step 2}: Show that the minimizer of the profiled MLC statistic is consistent to $\theta_{F_0}$.
    \medskip
    
    Let $\theta_n^*$ denote the minimizer of the problem \eqref{eq:minimize_MLC}. Then we have
    \[
        a\cdot \text{AR}_n(\theta_n^*) \leq \text{MLC}_n(\theta_n^*) \leq \text{MLC}_n(\bar{\theta}_n) = O_p(1),
    \]
    where the first inequality follows by the fact that MRLM statistic is always non-negative, the second inequality follows by the definition of $\theta_n^*$ and that $\bar{\theta}_n$ from step 1 is feasible under the constraint, and the last equality follows by the conclusion of step 1.

    This implies $\|\sqrt{n}(\hat{A}\theta_n^* - \hat{\beta})\| = O_p(1)$. Using inequality \eqref{eq:bound_on_diff_theta}, 
    \[  
        \|\theta_n^* - \theta_{F_n}\| \leq 
        \frac{\sqrt{n}(\|\hat{A}\theta_n^* - \hat{\beta}\| + \|\hat{A}\theta_{F_n} - \hat{\beta}\|)}{\sqrt{n}\lambda_{\min}(\hat{A})},
    \]
    we conclude $\|\theta_n^* - \theta_{F_n}\| = O_p(n^{-1/2})$. Given this result, the same arguments in Step 1 continue to hold after replacing $\theta_n^*$ with $\bar{\theta}_n$ in \eqref{eq:MRLM_local_asymtotic} while maintaining the same definition of $\nu$:
    \[
        \nu = \frac{\Omega_{F_0}(\theta_{F_0})^{-1/2}{A}_{F_0}(A_{F_0}'\Omega_{F_0}(\theta_{F_0})^{-1} A_{F_0})^{-1} c}{c'(A_{F_0}'\Omega_{F_0}(\theta_{F_0})^{-1} A_{F_0})^{-1} c} \cdot b.
    \] 
    Hence,
    \[
        \text{MRLM}_n(\theta_n^*) ~~\xrightarrow{d}~~ (\mathcal{Z}+ \nu)'P_\nu(\mathcal{Z} + \nu).
    \]
    
    \medskip
    \textbf{Step 3}: Conclude the proof \par
    \medskip
    To save notations, let $q(a)$ denote the $(1-\alpha)$-quantile of the mixture chi-square distributions $(1+a)\chi_1^2 + a\chi_{2K+1}^2$. 
    Note that the following inequality 
    \[
        \text{MRLM}_n(\theta_n^*) \leq \text{MLC}_n(\theta_n^*) \leq \text{MLC}_n(\bar{\theta}_n)
    \]
    implies the inequality on rejection probabilities:
    \[
        \Prob_{F_n}(\text{MRLM}_n(\theta_n^*) > q(a)) 
        ~~\leq~~ \Exp_{F_n}[\hat{\phi}_{\text{MLC}}(\lambda_n^*)] 
        ~~\leq~~ \Prob_{F_n}(\text{MLC}_n(\bar{\theta}_n) > q(a))
    \]
    Taking the limit on both sides gives
    \[
        L(a) \leq \liminf_{n\to\infty}\Exp_{F_n}[\hat{\phi}_{\text{MLC}}(\lambda_n^*)] \leq \limsup_{n\to\infty}\Exp_{F_n}[\hat{\phi}_{\text{MLC}}(\lambda_n^*)] \leq U(a)
    \]
    where 
    \[
        L(a) = \Prob((\mathcal{Z} + \nu)'P_\nu(\mathcal{Z} + \nu) > q(a))
    \]
    and
    \[
        U(a) = \Prob((1+a)\cdot (\mathcal{Z} + \nu)'P_\nu (\mathcal{Z} + \nu) + a\cdot(\mathcal{Z} + \nu)'M_{\nu}(\mathcal{Z} + \nu) > q(a)).
    \]
    Note that both $L(a)$ and $U(a)$ are continuous function of $a$, taking $a$ to zero from above then yields:
    \[
        L(0) \leq \liminf_{a\searrow 0} \liminf_{n\to\infty} \Exp_{F_n}[\hat{\phi}_{\text{MLC}}(\lambda_n^*)] \leq  \limsup_{a\searrow 0} \liminf_{n\to\infty} \Exp_{F_n}[\hat{\phi}_{\text{MLC}}(\lambda_n^*)] \leq U(0).
    \]
    Since $L(0) = U(0) = \lim_{n\to\infty} \Exp_{F_n}[\hat{\phi}_{\text{Wald}}(\lambda_n^*)]$ by Lemma \ref{lem:wald_local_power}, the desired conclusion has been established.
\end{proof}

\begin{proof}[Proof of Lemma \ref{lem:wald_local_power}]
    It suffices to show that 
    \[
        \text{Wald}_n(\lambda_n^*) ~~\xrightarrow{d}~~ (\mathcal{Z} + \nu)'P_\nu (\mathcal{Z} + \nu)
    \]
    under $\{F_n\}_{n\geq 1} \subseteq \mathcal{P}_s$. 
    Under strong identification such as the parameter space restriction imposed in $\mathcal{P}_s$, the efficient estimator $\hat{\theta}^{\text{eff}}$ (the continuously updated GMM or two-step GMM) is consistent and asymptotic normal with the asymptotic variance $\left[A_{F_0}'\Omega_{F_0}(\theta_{F_0})^{-1} A_{F_0}\right]^{-1}$ \citep[Theorem 5.2]{newey/mcfadden:1994}. Then it follows that
    \begin{align*}
        \text{Wald}_n(\lambda_n^*) 
        &= 
        n(c'\hat{\theta}^{\text{eff}} - \lambda_{n}^*)' \left[c'(\hat{A}'\hat{\Omega}(\hat{\theta}^{\text{eff}})^{-1}\hat{A})^{-1}c\right]^{-1}(c'\hat{\theta}^{\text{eff}} - \lambda_{n}^*) \\
        &=
        \left[\sqrt{n}(c'\hat{\theta}^{\text{eff}} - \lambda_{F_n}) - b\right]' \left[c'(\hat{A}'\hat{\Omega}(\hat{\theta}^{\text{eff}})^{-1}\hat{A})^{-1}c\right]^{-1}\left[\sqrt{n}(c'\hat{\theta}^{\text{eff}} - \lambda_{F_n}) - b\right] \\
        &= 
        \left(\frac{\sqrt{n}(c'\hat{\theta}^{\text{eff}} - \lambda_{F_n})}{\sqrt{c'(A_{F_0}'\Omega_{F_0}(\theta_{F_0})^{-1}A_{F_0})^{-1}c + o_p(1)}} - \|\nu\|\sign(b) + o_p(1)\right)^2
    \end{align*}
    For an efficient estimator, we have the following asymptotic representation:
    \[
        \sqrt{n}(\hat{\theta}^{\text{eff}} - \theta_{F_n}) = -(A_{F_n}'\Omega_{F_n}(\theta_{F_n})^{-1} A_{F_n})^{-1} A_{F_n}'\Omega_{F_n}(\theta_{F_n})^{-1} \sqrt{n}(\hat{A}\theta_{F_n} - \hat{\beta}) + o_p(1).
    \]
    which implies
    \[
        \frac{\sqrt{n}(c'\hat{\theta}^{\text{eff}} - \lambda_{F_n})}{\sqrt{c'(A_{F_0}'\Omega_{F_0}(\theta_{F_0})^{-1}A_{F_0})^{-1}c}} ~~\xrightarrow{d}~~ -\frac{\mathcal{Z}'\nu}{\|\nu\|} \cdot \sign(b).
    \]
    Plugging this into the Wald test statistic yields the desired result.
    \[
        \text{Wald}_n(\lambda_n^*)  ~~\xrightarrow{d}~~ \left(-\frac{\mathcal{Z}'\nu}{\|\nu\|} - \|\nu\|\right)^2 = (\mathcal{Z} + \nu) P_\nu (\mathcal{Z} + \nu).
    \]
\end{proof}

\begin{proof}[Proof of Lemma \ref{lem:consistent_cov_strong_id}]
For simplicity of notation, I leave the dependence on $F_0$ implicit in this proof. 
First, we show that the smallest and largest singular values of $\Omega(\theta)$ are finite and bounded away from zero uniformly across $\theta\in \Theta$ under distribution $F_0$. The first inequality can be established by noting that 
\begin{align*}
    \lambda_{\min}(\Omega(\theta)) 
    &= \lambda_{\min}(H(p,\theta)'\Sigma_{p}H(p,\theta) + \Sigma_{\beta}) \\ 
    &\geq \lambda_{\min}(\Sigma_\beta) \\
    &= \min_{d, \ell} \frac{\sigma_{d\ell}^2}{q(d,z_\ell)} \\
    &> 0,
\end{align*}
where the first inequality holds by eigenvalue inequality since $H(p,\theta)'\Sigma_p H(p,\theta)$ is positive semi-definite, and the second inequality follows by the parameter space restriction on $\mathcal{P}$. Therefore, the lower bound inequality is established. Regarding the upper bound. Note that 
\begin{align*}
    \lambda_{\max}(\Omega(\theta))
    &= \lambda_{\max}(H(p,\theta)'\Sigma_{p}H(p,\theta) + \Sigma_{\beta}) \\
    &\leq \lambda_{\max}(H(p,\theta)'\Sigma_{p}H(p,\theta)) + \lambda_{\max}(\Sigma_{\beta}) \\
    &= \lambda_{\max}(\Sigma_p^{1/2}H(p,\theta))^2 + \lambda_{\max}(\Sigma_{\beta}) \\
    &\leq \lambda_{\max}(\Sigma_p) \lambda_{\max}(H(p,\theta))^2 + \lambda_{\max}(\Sigma_\beta),
\end{align*}
where the first inequality holds by triangle inquality for spectral norm $\lambda_{\max}(\cdot)$, and the second inequality holds by the Cauchy-Schwarz inequality. Since each element of $H(p,\theta)$ is continuous in $\theta \in \Theta$, and $\Theta$ is a compact set, $\lambda_{\max}(H(p,\theta))$ is bounded uniformly over $\theta \in \Theta$. Note that 
\begin{align*}
     \lambda_{\max}(\Sigma_p) = \max_{\ell} \frac{p(z_\ell)(1-p(z_\ell))}{q(z_\ell)} < \infty 
    \quad \text{and} \quad 
     \lambda_{\max}(\Sigma_\beta) = \max_{d, \ell} \frac{\sigma_{d\ell}^2}{q(d,z_\ell)} < \infty
\end{align*}
as implied by the conditions imposed on $\mathcal{P}$. So the inequality \eqref{eq:lem_locpower_regular1} is established.

Next we establish the uniform consistent estimation on $\hat{\Omega}(\theta)$ in \eqref{eq:lem_locpower_regular2}. First note that the difference in spectral norm can be expressed as
\begin{align*}
    \|\hat{\Omega}(\theta) - \Omega(\theta)\|
    &= \|H(\hat{p},\theta)' \hat{\Sigma}_p H(\hat{p},\theta) - H(p,\theta)' \Sigma_{p} H(p, \theta) + \hat{\Sigma}_\beta - \Sigma_{\beta}\| \\
    &\leq \|H(\hat{p},\theta)' \hat{\Sigma}_p H(\hat{p},\theta) - H(\hat{p},\theta)' \Sigma_{p} H(\hat{p}, \theta)\| \\
    &\quad + \|H(\hat{p},\theta)' \Sigma_{p} H(\hat{p}, \theta) - H(p,\theta)' \Sigma_{p} H(p, \theta)\| \\
    &\quad + \|\hat{\Sigma}_\beta - \Sigma_{\beta}\|  \\
    &\leq \left(\|\hat{\Sigma}_p^{1/2}H(\hat{p},\theta)\| + \|\Sigma_{p}^{1/2}H(\hat{p},\theta)\|\right)\|H(\hat{p},\theta)\|\cdot \|\hat{\Sigma}_p^{1/2} - {\Sigma}_{p}^{1/2}\| \\
    &\quad + \left(\|\Sigma_{p}^{1/2} H(\hat{p},\theta)\| + \|\Sigma_{p}^{1/2} H(p,\theta)\|\right)\|\Sigma_{p}^{1/2}\| \cdot \|H(\hat{p},\theta) - H(p,\theta)\| \\
    &\quad + \|\hat{\Sigma}_\beta - \Sigma_{\beta}\|,
\end{align*}
where the first inequality follows by triangle inequality, and the second inequality holds by noting that 
\begin{align*}
    \|X'X - Y'Y\| 
    &= \|X'X - X'Y + X'Y - Y'Y\| \\
    &\leq \|X'(X-Y)\| + \|(X-Y)'Y\| \\
    &\leq (\|X\| + \|Y\|)\|X-Y\|,
\end{align*}
for any matrices $X$ and $Y$ that have the same number of columns. 

Next establish several auxiliary bounds that will be used to control these terms:
\begin{enumerate}
    \item $\sup_{\theta \in \Theta} \|H(p,\theta)\| < \infty$. This is implied by the fact that $H(p,\theta)$ is continuous in $\theta$, and $\Theta$ is compact and thus $H(p,\theta)$ is uniformly bounded above. 
    
    \item $\sup_{\theta \in \Theta}\|H(\hat{p},\theta)\| = O_p(1)$. This is implied by the following inequality:
    \begin{align*}
    &\Prob_{F_n}\left(\sup_{\theta\in\Theta}H(\hat{p},\theta) \leq \sup\{H(p,\theta): \epsilon/2 \leq p(z_\ell) \leq 1-\epsilon/2, ~ \forall \ell = 0,1,\ldots,K; ~ \theta\in\Theta\}\right) \\
    &\geq \Prob_{F_n}\left(\|\hat{p} - p_{F_n}\| < \epsilon/2\right) \\
    &\to 1,
    \end{align*}
    where the second line holds since $\epsilon\leq p_{F_n}(z_\ell) \leq 1-\epsilon$, and the last line holds by the consistency of $\hat{p}$ from Lemma \ref{lem:convergence_p_beta}. 
    
    \item $\|\hat{\Sigma}_p\| = O_p(1)$. This is implied by $\hat{\Sigma}_p \xrightarrow{p} \Sigma_{p,F_0}$ from Lemma \ref{lem:convergence_p_beta}, and $\Sigma_{p,F_0}$ being positive definite.
    
    \item $\sup_{\theta \in \Theta}\|H(\hat{p},\theta) - H(p,\theta)\| = o_p(1)$, which holds by the following inequality:
    \begin{align*}
        \|H(\hat{p},\theta) - H(p,\theta)\| \leq \sum_{m=0}^M \sum_{d=0,1}|\theta_{dm}|\times|\lambda'_{dm}(\hat{p}(z_\ell)) - \lambda'_{dm}(p(z_\ell))|,
    \end{align*}
    and given that $\lambda'_{dm}(\cdot)$ being continuous and $\hat{p}~~ \xrightarrow{p}~~p$. 
\end{enumerate}

Combining these results gives
\[
    \sup_{\theta\in\Theta}\|\hat{\Omega}(\theta) - \Omega(\theta)\| = O_p(1) \cdot \|\hat{\Sigma}_p^{1/2} - {\Sigma}_{p}^{1/2}\| + O_p(1) \cdot o_p(1) + O_p(1)\cdot \|\hat{\Sigma}_\beta - \Sigma_{\beta}\|.
\]
Then the desired result follows by the consistency of variance estimators from Lemma \ref{lem:convergence_p_beta}.

The proof of \eqref{eq:lem_locpower_regular3} follows similar arguments as the proof of \eqref{eq:lem_locpower_regular2} and thus omitted.
\end{proof}

\newpage

\section{Inference with Estimated Weights}
\label{appendix:estimated_weights}

Sometimes the weight $c$ needs to be estimated if researchers' interests focus on causal parameters such as ATT and LATE. In this case, the weight usually depends on the joint distribution of $(D,Z)$ as in Table \ref{tab:treatment-effects-weight}, therefore I make the following assumption:
\begin{assumption}
\label{asp:estimated_weight}
The weight $c = c(p, q)$ is a function of propensity scores $p = (p(z_0),\ldots,p(z_K))'$ and marginal distribution of instrument $q=(q(z_0), \ldots, q(z_K))'$, and this function is continuously differentiable.
\end{assumption}

Under this assumption, I develop inferential theory for causal parameter that depends on estimated weights. In section \ref{sec:cond_wald_est_weight}, I generalize the conditional Wald test proposed in section \ref{sec:linearMTE_inference} for linear MTE models and extend the theory for MLC test proposed in section \ref{sec:polyMTE_inference} for a general class of MTE models that include the polynomial structure.

\subsection{Linear MTE models}
\label{sec:cond_wald_est_weight}
I plug a consistent and asymptotic normal estimator $c(\hat{p}, \hat{q})$ into the construction of the moment function. This gives a modified sample moment function:
\[
    \hat{g}_k^\dagger(\lambda) = \left[\hat{p}(z_k) - \hat{p}(z_0)\right]\lambda - c_\mu(\hat{p},\hat{q})\hat{\Delta}_\mu(z_0,z_k) - c_\rho(\hat{p},\hat{q}) \hat{\Delta}_\rho(z_0,z_k)
\]
and 
\[
    \hat{g}^{\dagger}(\lambda) = (\hat{g}_1^\dagger(\lambda), \ldots, \hat{g}_K^\dagger(\lambda))'\in\R^K.
\]
Since the weight $c$ is estimated, its sampling uncertainty directly impacts  the asymptotic distribution of the moment function. Let 
\[
    {\Delta}_\mu = ({\Delta}_\mu(z_0,z_1), \ldots, {\Delta}_\mu(z_0,z_K))' \quad \text{and} \quad {\Delta}_\rho = ({\Delta}_\rho(z_0,z_1), \ldots, {\Delta}_\rho(z_0,z_K))'.
\]
Define $\hat{\Delta}_\mu$ and $\hat{\Delta}_\rho$ as their corresponding estimators, where each entry is obtained by replacing ${\Delta}_\mu(z_0,z_k)$ and ${\Delta}_\rho(z_0,z_k)$ with their respective estimators $\hat{\Delta}_\mu(z_0,z_k)$ and $\hat{\Delta}_\rho(z_0,z_k)$.

Applying the Delta method, we have the following first-order asymptotic expansion:
\[
    \hat{g}^\dagger(\lambda) - g(\lambda)
    = \begin{pmatrix}
	\partial_{p} {g}(\lambda)[\hat{p} - p] \\
	+ ~\partial_{\beta_1} {g}(\lambda)[\hat{\beta}_1 - \beta_1] \\
	+ ~\partial_{\beta_0} {g}(\lambda)[\hat{\beta}_0 - \beta_0] \\
        - ~\Delta_\mu(\partial_{p'}c_\mu[\hat{p} - p] + \partial_{q'}c_\mu[\hat{q} - q]) \\
        - ~\Delta_\rho(\partial_{p'}c_\rho[\hat{p} - p] + \partial_{q'}c_\rho[\hat{q} - q])
      \end{pmatrix}
	+ o_p(n^{-1/2}),
\]
where $\partial_p c$ and $\partial_q c$ denote the partial derivative vectors of $c$ with respect to vectors $p$ and $q$, respectively, for $c \in \{c_\mu(p,q), c_\rho(p,q)\}$.

The asymptotic variance can be consistently estimated by
\begin{align*}
    \hat{S}^\dagger(\lambda) 
    & = \left(\partial_{p}\hat{g}^\dagger(\lambda) - \hat{\Delta}_\mu [\partial_{p'} \hat{c}_\mu] - \hat{\Delta}_\rho [\partial_{p'} \hat{c}_\rho]\right)\hat{\Sigma}_p\left(\partial_{p}\hat{g}^\dagger(\lambda) - \hat{\Delta}_\mu [\partial_{p'} \hat{c}_\mu] - \hat{\Delta}_\rho [\partial_{p'} \hat{c}_\rho]\right)' \\
    & \quad ~+ \left(\hat{\Delta}_\mu [\partial_{q'}\hat{c}_\mu] + \hat{\Delta}_\rho [\partial_{q'}\hat{c}_\rho]\right)\hat{\Sigma}_q \left(\hat{\Delta}_\mu [\partial_{q'}\hat{c}_\mu] + \hat{\Delta}_\rho [\partial_{q'}\hat{c}_\rho]\right)' \\
    & \quad ~+ \partial_{\beta_0}\hat{g}^\dagger(\lambda) \hat{\Sigma}_{\beta_0}\partial_{\beta_0'}\hat{g}^\dagger(\lambda) + \partial_{\beta_1}\hat{g}^\dagger(\lambda) \hat{\Sigma}_{\beta_1}\partial_{\beta_1'}\hat{g}^\dagger(\lambda),
\end{align*}
where $\partial_x \hat{g}^\dagger(\lambda)$ is obtained from $\partial_x g(\lambda)$ by replacing $c$ with $\hat{c}$ for $x \in \{p,\beta_0,\beta_1\}$, and $\partial_x \hat{c}_\mu$ (or $\partial_x \hat{c}_\rho$) denotes the sample analog of $\partial_x c_\mu$ (or $\partial_x c_\rho$) evaluated at $(\hat{p}, \hat{q})$ for $x \in \{p,q\}$.
 
To achieve asymptotic independence from $\sqrt{n}\hat{g}^\dagger(\lambda)$, I modify the sufficient statistic $\hat{h}(\lambda)$ to
\[
    \hat{h}^\dagger(\lambda) = \sqrt{n}\hat{\pi} - [\partial_{p} {\pi}]\hat{\Sigma}_p \left(\partial_{p'} \hat{g}^\dagger(\lambda) - \hat{\Delta}_\mu [\partial_{p'} \hat{c}_\mu] - \hat{\Delta}_\rho [\partial_{p'} \hat{c}_\rho]\right) \hat{S}^\dagger(\lambda)^{-1}\sqrt{n}\hat{g}^\dagger(\lambda).
\]
The simulation counterpart of $\hat{\pi}_s$ can then be constructed as
\[
    {\pi}_s^\dagger = \hat{h}^\dagger(\lambda) + [\partial_{p} {\pi}]\hat{\Sigma}_p \left(\partial_{p'} \hat{g}^\dagger(\lambda) - \hat{\Delta}_\mu [\partial_{p'} \hat{c}_\mu] - \hat{\Delta}_\rho [\partial_{p'} \hat{c}_\rho]\right) \hat{S}^\dagger(\lambda)^{-1/2}\eta^*,
\]
where $\eta^* \sim \normal(0_{K\times 1},I_{K\times K})$ are simulated draws independent of data.
Under these modifications, the asymptotic distribution of the plug-in Wald statistic
\[
    W_n^\dagger(\lambda) = \frac{n\hat{g}^\dagger(\lambda)\hat{S}^\dagger(\lambda)^{-1}\hat{\pi}\hat{\pi}'\hat{S}^\dagger(\lambda)^{-1}\hat{g}^\dagger(\lambda)}{\hat{\pi}'\hat{S}^\dagger(\lambda)^{-1}\hat{\pi}}
\]
has the same conditional distribution as
\[
    \frac{[\eta^*]' \hat{S}^\dagger(\lambda)^{-1/2} [\pi_s^\dagger] [\pi_s^\dagger]' \hat{S}^\dagger(\lambda)^{-1/2} [\eta^*]}{[\pi_s^\dagger]'\hat{S}^\dagger(\lambda)^{-1}[\pi_s^\dagger]}
\]
when conditioning on $\hat{h}^\dagger(\lambda)$ as sample size diverges. This leads to the following corollary regarding the uniform validity of the modified testing procedure:
\begin{corollary}
\label{corollary:linear_MTE_estimated_weight}
    Consider the weight $c$ that depends on the data distribution $F$. Let Assumptions \ref{asp:iid}, \ref{asp:linear_MTE} and \ref{asp:estimated_weight} hold, and suppose that $\inf_{(\theta, F)\in\mathcal{P}} \|c(p_F,q_F)\| > 0$. Replacing $(\hat{g}(\lambda), \hat{S}(\lambda), \pi_s)$ with their counterparts $(\hat{g}^\dagger(\lambda), \hat{S}^\dagger(\lambda), \pi^\dagger_s)$ in Theorem \ref{thm:validity_cond_wald} yields the same conclusion.
\end{corollary}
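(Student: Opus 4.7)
The plan is to mirror the proof of Theorem \ref{thm:validity_cond_wald}, which itself was reduced to a subsequencing argument in the style of Proposition \ref{prop:subsequence_AR_validity}. So I would first state and verify an analogue of Proposition \ref{prop:subsequence_AR_validity}(b) for the modified statistics $(\hat{g}^\dagger, \hat{S}^\dagger, \pi_s^\dagger, \hat{h}^\dagger, W_n^\dagger)$, then invoke Corollary 2.1(c) of \citet{andrews/cheng/guggenberger:2020} exactly as in Theorem \ref{thm:validity_cond_wald}. The only new ingredient is that the sequence $h_n(\lambda)$ indexing the subsequencing argument must be augmented to include $(c_\mu(p_F,q_F), c_\rho(p_F,q_F), \partial_{p}c_\mu, \partial_{q}c_\mu, \partial_{p}c_\rho, \partial_{q}c_\rho)$ evaluated at the population propensity score and instrument distribution, so that along the subsequence these quantities converge to finite limits by the continuous differentiability of $c(\cdot,\cdot)$ and compactness of the parameter space for $(p_F,q_F)$.

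With that setup, the first substantive step is the asymptotic linearization. Applying the delta method to $\hat{g}^\dagger(\lambda_n) - g(\lambda_n)$ and using the joint CLT for $(\hat{p}, \hat{q}, \hat{\beta})$ from Lemma \ref{lem:convergence_p_beta}, I obtain $\sqrt{n}\hat{g}^\dagger(\lambda_n) \xrightarrow{d} \mathcal{Z}_{\mathfrak{g}}^\dagger$ where $\mathcal{Z}_{\mathfrak{g}}^\dagger$ is a normal vector with covariance $S^\dagger_\infty$ equal to the probability limit of $\hat{S}^\dagger(\lambda_n)$. The consistency of $\hat{S}^\dagger$ then follows from Lemma \ref{lem:convergence_p_beta}(b), the continuous differentiability assumed in Assumption \ref{asp:estimated_weight}, and continuous mapping. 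Second, I need the analogue of Lemma \ref{lem:pd_variance} showing $S^\dagger_\infty$ is uniformly bounded and positive definite. Since $\hat{S}^\dagger(\lambda)$ contains the term $\partial_{\beta_d}\hat{g}^\dagger(\lambda)\hat{\Sigma}_{\beta_d}\partial_{\beta_d'}\hat{g}^\dagger(\lambda)$, and $\partial_{\beta_d}\hat{g}^\dagger(\lambda)$ has the same structure as in \eqref{eq:fixed_weight_expansion} with fixed $c$ replaced by $c(\hat{p},\hat{q})$, the hypothesis $\inf_{(\theta,F)\in\mathcal{P}}\|c(p_F,q_F)\|>0$ plays exactly the role of $c\neq 0$ in Lemma \ref{lem:pd_variance}, so the bound from below follows by the same Cauchy–Schwarz argument applied to the limiting weights; the upper bound follows from compactness and the $L^{2+\delta}$-bounded moments in $\mathcal{P}$.

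Third, I verify the conditional independence structure. The modified $\hat{h}^\dagger(\lambda_n)$ subtracts from $\sqrt{n}\hat{\pi}$ precisely the projection of $\sqrt{n}\hat{g}^\dagger(\lambda_n)$ onto the span of $[\partial_p \pi]\hat{\Sigma}_p$ times the (enlarged) sensitivity matrix of the moment to $\hat{p}$; since the asymptotic covariance between $\sqrt{n}\hat{\pi}$ and $\sqrt{n}\hat{g}^\dagger(\lambda_n)$ is transmitted solely through $\mathcal{Z}_p$, this projection zeroes out the cross-covariance in the limit and $\hat{h}^\dagger(\lambda_n)$ is asymptotically uncorrelated with (hence independent of, by joint normality) $\sqrt{n}\hat{g}^\dagger(\lambda_n)$. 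From here the remainder of the proof is verbatim the argument in Proposition \ref{prop:subsequence_AR_validity}(b): in Case 1 ($\psi_\infty < \infty$), define $\Upsilon^\dagger \equiv (\hat{h}^\dagger, \hat{S}^\dagger, \hat{\Sigma}_p, \partial_p \hat{g}^\dagger - \hat{\Delta}_\mu\partial_{p'}\hat{c}_\mu - \hat{\Delta}_\rho\partial_{p'}\hat{c}_\rho)$, show continuity of the $(1-\alpha)$-conditional quantile in $\Upsilon^\dagger$ over an analogously defined set $\mathcal{U}^\dagger$, use Lemma \ref{lem:NS_cond} (whose conclusion carries over because the limiting covariance $[\partial_p\pi]\Sigma_{p,\infty}[\partial_{p}\pi]$ is unchanged) to conclude $\Upsilon^\dagger_\infty\in\mathcal{U}^\dagger$ almost surely, and close by continuous mapping; in Case 2 ($\psi_\infty=\infty$), normalize by $\|\pi_{F_n}\|$ and show the critical value converges to $q_{\chi_1^2}(1-\alpha)$ while $W_n^\dagger(\lambda_n)\xrightarrow{d}\chi_1^2$.

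The main obstacle, and the only place where one has to be genuinely careful rather than mechanical, is verifying that the limiting covariance $S^\dagger_\infty$ is uniformly bounded away from singularity across $\mathcal{P}_0$ once the $c$-estimation terms are added. Although the $\hat{\beta}_d$-channel alone delivers the positive-definite lower bound (provided the limiting weight is nonzero, which is exactly what the additional hypothesis ensures), one must also rule out pathological cancellation between the $\hat{p}$-channel of the original moment and the newly-introduced $\hat{p}$-channel coming through $c(\hat{p},\hat{q})$; this is handled by noting that any such cancellation contributes only a positive semi-definite term to $S^\dagger_\infty$, so the $\Sigma_\beta$-component continues to lower-bound the smallest eigenvalue uniformly.
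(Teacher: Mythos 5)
Your proposal is correct and follows essentially the same route as the paper: a delta-method expansion of $\hat{g}^\dagger$ that absorbs the weight-estimation terms through the $\hat{p}$ and $\hat{q}$ channels, uniform positive-definiteness of $S^\dagger_\infty$ inherited from the $\hat{\beta}$-channel bound in Lemma \ref{lem:pd_variance} (with $\inf\|c(p_F,q_F)\|>0$ standing in for $c\neq 0$, and the new terms contributing only positive semi-definite pieces), and then a verbatim rerun of the subsequencing argument of Proposition \ref{prop:subsequence_AR_validity}(b). Your closing observation about ruling out cancellation between the two $\hat{p}$-channels is precisely the point the paper makes, via the joint independence of $\mathcal{Z}_p$, $\mathcal{Z}_q$, and $\mathcal{Z}_\beta$.
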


\begin{remark}
    We can also modify the AR test described in Proposition \ref{prop:AR_validity} by replacing $\hat{g}_k(\lambda)$ with $\hat{g}_k^\dagger(\lambda)$ and replacing $\hat{s}_k(\lambda)$ with the $(k,k)$-th diagonal element of $\hat{S}^\dagger(\lambda)$ to make it asymptotically valid with the estimated weight.
\end{remark}

\subsection{Polynomial MTE models}

In this section, I maintain Assumption \ref{asp:estimated_weight} and consider modified linear combination test for the treatment effects parameters $\lambda = c(p,q)'\theta$. Naively plugging in the estimator $\hat{c} = c(\hat{p}, \hat{q})$ into the test statistic can introduce estimation errors in the constraint $\hat{c}'\theta = \lambda$, potentially leading to asymptotic bias in the limiting distribution of the test statistic. To address this, I modify the test statistic to account for the effects of these estimation errors.

One possible solution involves reparameterizing the model so that the target parameter $\hat{c}'\theta$ becomes an element of the reparameterized model parameter, e.g., see D. \citet{andrews.d:2017}. However, given the broad class of causal parameters of interest, finding a universal reparameterization rule that works for all these parameters is challenging, and such reparameterization would greatly complicates the asymptotic analysis.

As an alternative, I impose a normalization on the weights of the treatment effect parameters. This assumption is broadly applicable to various causal effects, including ATE, ATT, LATE, and the normalized counterfactual policy effects $\bar{\alpha}(\epsilon)$ considered in this paper. 

\begin{assumption}
\label{asp:weight_for_TE}
    The first and $(K+1)$-th element of the weight function $c(p,q)$ are 1 and $-1$, respectively, and weight is symmetric: $c = (c_1', c_0')'$ with $c_1 = -c_0$.
\end{assumption}

Under Assumption \ref{asp:weight_for_TE}, I consider a sequence of vectors $\tilde{\theta}_n = \theta_{F_n} + \delta_n$ such that $\hat{c}'\tilde{\theta}_n = \lambda_n = c(p_{F_n}, q_{F_n})' \theta_{F_n}$. Hence $\tilde{\theta}_n$ falls into the estimated constraint set. There are two choices of $\delta_n$ that satisfies this requirement:
\begin{align*}
    & \delta_{n,1} = (-[\hat{c} - c(p_{F_n},q_{F_n})]'\theta_{F_n}, 0_{1\times (2K+1)})' \\
    & \delta_{n,0} = (0_{1\times (K+1)}, [\hat{c} - c(p_{F_n},q_{F_n})]'\theta_{F_n}, 0_{1\times K})'.
\end{align*}
For any $r\in[0,1]$, define a sequence 
\begin{equation}
\label{eq:alt_theta_seq}
    \theta_{n,r} \equiv r(\theta_{F_n} + \delta_{n,1}) + (1-r)(\theta_{F_n} + \delta_{n,0}).
\end{equation} 
This construction generates a continuum of sequences $\{\theta_{n,r}\}_{n\geq 1}$ that satisfy the estimated constraint. By examining the asymptotic behavior of the test statistic under the sequence $\{\theta_{n,r}\}_{n \geq 1}$, we can establish an asymptotically valid test.

\begin{proposition}
\label{prop:MLC_validity_est_weight}
Let Assumption \ref{asp:iid}, \ref{asp:estimated_weight}, and \ref{asp:weight_for_TE} hold. Define $\xi_r \equiv (-r1_{1\times (K+1)}, (1-r)1_{1\times (K+1)})'$ with $r\in [0,1]$, where $1_{1\times (K+1)}$ denotes a $1\times (K+1)$ vector of ones. 
In the MLC testing procedures outlined in section \ref{sec:construct_MLC}, consider replacing $\hat{\Omega}(\theta)$ by 
\[
    \hat{\Omega}(\theta; r) \equiv \left(H(\hat{p},\theta) + \xi_r\theta' [\partial_p \hat{c}]\right) \hat{\Sigma}_p \left(H(\hat{p},\theta) + \xi_r\theta' [\partial_p \hat{c}]\right)' + (\xi_r\theta'[\partial_q \hat{c}])\hat{\Sigma}_q (\xi_r\theta'[\partial_q \hat{c}])' + \hat{\Sigma}_\beta,
\]
replacing $\hat{\Gamma}_{j}(\theta)$ by
\[
    \hat{\Gamma}_{j}(\theta; r) \equiv \hat{M}_j(\hat{p}) \hat{\Sigma}_p \left(H(\hat{p},\theta) + \xi_r'\theta [\partial_p \hat{c}]\right)',
\]
and replacing $c$ with $\hat{c}$ in the constraint set. With these modifications, the uniform validity of the MLC test is still maintained.
\end{proposition}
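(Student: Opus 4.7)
The plan is to adapt the sub-sequencing argument from the proof of Theorem \ref{thm:uniform_validity} (Proposition \ref{prop:subsequence_MLC_validity}) by replacing the fixed null point $\theta_{F_n}$ with a drifting sequence $\theta_{n,r}$ that satisfies the estimated constraint $\hat{c}'\theta_{n,r} = \lambda_n$, and to show that the MLC statistic evaluated at $\theta_{n,r}$ with the $r$-modified variance $\hat{\Omega}(\theta;r)$ still converges to the mixture $(1+a)\chi_1^2 + a\chi_{2K+1}^2$. The first step is to verify that under Assumption \ref{asp:weight_for_TE} (which pins down the first and $(K+1)$-th entries of $c$ at $\pm 1$), the sequence in \eqref{eq:alt_theta_seq} obeys $\hat{c}'\theta_{n,r} = c(p_{F_n},q_{F_n})'\theta_{F_n} = \lambda_n$ for every $r\in[0,1]$. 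This is immediate because $\delta_{n,1}$ and $\delta_{n,0}$ are supported on the coordinates where $\hat{c}$ equals $1$ and $-1$ respectively, so the $\Delta_n = [\hat{c}-c(p_{F_n},q_{F_n})]'\theta_{F_n}$ piece cancels out regardless of $r$.

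The second step is to derive the asymptotic linearization of $\sqrt{n}(\hat{A}\theta_{n,r} - \hat{\beta})$. Using the block structure of $\hat{A}$, we find $r\hat{A}\delta_{n,1} + (1-r)\hat{A}\delta_{n,0} = \Delta_n\,\xi_r$. Combining with the expansion $\sqrt{n}(\hat{A}\theta_{F_n} - \hat{\beta}) = H(p_{F_n},\theta_{F_n})\hat{Z}_p - \hat{Z}_\beta + o_p(1)$ established in the proof of Proposition \ref{prop:subsequence_MLC_validity}, and the Delta method expansion $\sqrt{n}\Delta_n = \theta_{F_n}'[\partial_p c]\hat{Z}_p + \theta_{F_n}'[\partial_q c]\hat{Z}_q + o_p(1)$, one obtains
\begin{equation*}
\sqrt{n}(\hat{A}\theta_{n,r} - \hat{\beta}) = \bigl(H(p_{F_n},\theta_{F_n}) + \xi_r\theta_{F_n}'[\partial_p c]\bigr)\hat{Z}_p + \xi_r\theta_{F_n}'[\partial_q c]\hat{Z}_q - \hat{Z}_\beta + o_p(1).
\end{equation*}
By Lemma \ref{lem:convergence_p_beta}, $(\hat{Z}_p,\hat{Z}_\beta,\hat{Z}_q)$ converge jointly to independent Gaussian limits, so the limit variance of the above expression is exactly $\Omega_\infty(\theta_\infty;r)$, the probability limit of $\hat{\Omega}(\theta_{n,r};r)$. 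Continuous mapping and the consistency arguments in Lemma \ref{lem:convergence_p_beta}(b) then yield $\hat{\Omega}(\theta_{n,r};r) \xrightarrow{p} \Omega_\infty(\theta_\infty;r)$, which is positive definite because $\hat{\Sigma}_\beta$ contributes a uniformly positive-definite piece.

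The third step rebuilds the orthogonalization machinery with the modified $\hat{\Gamma}_j(\theta;r)$. Since $\hat{\Gamma}_j(\theta;r)$ is precisely the sample analog of the covariance between $M_j(p)\mathcal{Z}_p$ and the new moment $\hat{A}\theta_{n,r} - \hat{\beta}$, the orthogonalized gradient $\hat{d}_j(\theta_{n,r};r) = \hat{a}_j - \hat{\Gamma}_j(\theta_{n,r};r)\hat{\Omega}(\theta_{n,r};r)^{-1}(\hat{A}\theta_{n,r}-\hat{\beta})$ remains asymptotically independent of the moment, as in the proof of Proposition \ref{prop:subsequence_MLC_validity}. Lemmas \ref{lem:cond_stat} and \ref{lem:new_cond_stat} then apply verbatim after this substitution (their proofs used only joint Gaussianity and the independence between the normalized gradient limit and the moment limit). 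Consequently, $\text{MRLM}_n(\theta_{n,r})$ converges in distribution to $\chi_1^2$ and the residual $\text{AR}_n(\theta_{n,r}) - \text{MRLM}_n(\theta_{n,r})$ converges to an independent $\chi_{2K+1}^2$, so $\text{MLC}_n(\theta_{n,r}) \xrightarrow{d} (1+a)\chi_1^2 + a\chi_{2K+1}^2$.

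The final step is the sandwich bound: because $\theta_{n,r}$ is feasible in the constraint $\hat{c}'\theta = \lambda_n$, the profiled statistic satisfies $\inf_{\hat{c}'\theta = \lambda_n}\text{MLC}_n(\theta) \leq \text{MLC}_n(\theta_{n,r})$, so the rejection probability of the modified MLC test is bounded above by $\Prob(\text{MLC}_n(\theta_{n,r}) > q_{(1+a)\chi_1^2+a\chi_{2K+1}^2}(1-\alpha))$, which tends to $\alpha$ along any converging sub-sequence. Invoking Corollary 2.1(a) of \cite{andrews/cheng/guggenberger:2020} exactly as in Theorem \ref{thm:uniform_validity} then delivers the uniform size bound over $\mathcal{P}_0$. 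The main obstacle I anticipate is the bookkeeping in step three: verifying that the extra term $\xi_r\theta'[\partial_q c]\hat{Z}_q$ does not disturb the asymptotic independence exploited in the orthogonalization, which requires a slight generalization of Lemma \ref{lem:cond_stat} to accommodate a moment whose variance mixes three independent Gaussian inputs rather than two. A secondary care point is showing that $\|\theta_{n,r} - \theta_{F_n}\| = O_p(n^{-1/2})$ uniformly in $r$, so that the remainder in the Delta method expansion is genuinely $o_p(1)$ under the drifting sequence; this follows from $\sqrt{n}\Delta_n = O_p(1)$ under Assumption \ref{asp:estimated_weight} together with the moment restriction in $\mathcal{P}$.
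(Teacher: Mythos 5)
Your proposal is correct and follows essentially the same route as the paper: construct the feasible drifting point $\theta_{n,r}$ from \eqref{eq:alt_theta_seq}, linearize $\sqrt{n}(\hat{A}\theta_{n,r}-\hat{\beta})$ via $r\hat{A}\delta_{n,1}+(1-r)\hat{A}\delta_{n,0}=\Delta_n\xi_r$ and the Delta method so that $\hat{\Omega}(\theta;r)$ and $\hat{\Gamma}_j(\theta;r)$ are the consistent (co)variance estimators, rerun the argument of Proposition \ref{prop:subsequence_MLC_validity} to get $\text{MLC}_n(\theta_{n,r})\xrightarrow{d}(1+a)\chi_1^2+a\chi_{2K+1}^2$, and conclude by the sandwich bound. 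The only bookkeeping item the paper handles that you elide is that the feasibility bound $\inf_{\hat{c}'\theta=\lambda_n}\text{MLC}_n(\theta)\leq \text{MLC}_n(\theta_{n,r})$ requires $\theta_{n,r}\in\Theta$, which is absorbed by adding the vanishing term $\Prob_{F_n}(\theta_{n,r}\notin\Theta)$ to the rejection-probability bound.
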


\subsection{Proofs in Appendix \ref{appendix:estimated_weights}}

\begin{proof}[Proof of Corollary \ref{corollary:linear_MTE_estimated_weight}]
Note that Delta method gives a different asymptotic expansion for the moment along a converging sequence:
\begin{align*}
    \sqrt{n}\hat{g}(\lambda_{n}) 
    ~~\xrightarrow{d}~~&  \left(\partial_pg(\lambda_\infty) - \Delta_\mu[\partial_{p'} c_\mu] - \Delta_\rho [\partial_{p'}c_\rho]\right) \mathcal{Z}_p  \\
    & \quad - \left(\Delta_\mu [\partial_{q'}c_\mu] + \Delta_{\rho}[\partial_{q'}c_\rho]\right)\mathcal{Z}_q \\
    & \quad + \partial_{\beta_1} g(\lambda_\infty) 
    \mathcal{Z}_{\beta_1} + \partial_{\beta_0} g(\lambda_\infty) \mathcal{Z}_{\beta_0} \\
    ~~\sim~~& \normal(0, S^\dagger(\lambda_\infty)).
\end{align*}
The asymptotic covariance matrix $S^\dagger(\lambda_\infty)$ is uniformly positive definite by noting that Lemma \ref{lem:pd_variance} has already established that $\partial_{\beta_1} g(\lambda_\infty) \mathcal{Z}_{\beta_1} + \partial_{\beta_0} g(\lambda_\infty)\mathcal{Z}_{\beta_0}$ is non-degenerate uniformly as long as $c$ is nonzero. Since $\mathcal{Z}_p$, $\mathcal{Z}_q$, and $\mathcal{Z}_\beta$ are jointly independent, $\hat{S}^\dagger(\lambda)$ consistently estimates the asymptotic variance $S^\dagger(\lambda_\infty)$. The rest of the proof follows by the same arguments in Proposition \ref{prop:subsequence_AR_validity} and Theorem \ref{thm:validity_cond_wald}.
\end{proof}

\begin{proof}[Proof of Proposition \ref{prop:MLC_validity_est_weight}]
Under the drifting sequence satisfying the conditions outlined in Proposition \ref{prop:subsequence_MLC_validity}, our goal is to establish
\[
    \limsup_{n\to\infty} \Prob\left(\inf_{\hat{c}'\theta = \lambda_n} \text{MLC}_n(\theta) > q_{(1+a)\chi_1^2 + a\chi_{2K+1}^2}(1-\alpha)\right) \leq \alpha,
\]
where the two chi-square distributions are independent. Then the rest of the proof follows the same arguments in the proof of Theorem \ref{thm:uniform_validity}.

Consider $\theta_{n,r}$ for $r\in [0,1]$ defined in \eqref{eq:alt_theta_seq}. Then we note that $\lim_{n\to\infty}\Prob(\tilde{\theta}_n \in \Theta) = 1$ since $\theta_{F_n} \in \operatorname{int}(\Theta)$ and the difference between $\theta_{n,r}$ and $\theta_{F_n}$ converges in probability to a zero vector. Next, it can be seen that
\begin{align*}
    \sqrt{n}(\hat{A}{\theta}_{n,r} - \hat{\beta})
    &= \sqrt{n}(\hat{A}\theta_{F_n} - \hat{\beta}) + r\cdot \sqrt{n}\hat{A}\delta_{n,1} + (1-r) \cdot \sqrt{n}\hat{A}\delta_{n,0} \\
    &\xrightarrow{d} \left(H(p_\infty,\theta_\infty) + \xi_r \theta_{F_n}' [\partial_{p} c]\right) \mathcal{Z}_p + (\xi_r  \theta_{F_n}' [\partial_{q} c]) \mathcal{Z}_q - \mathcal{Z}_\beta,
\end{align*} 
where $\partial_p c$ and  $\partial_q c$ denote the gradients of $c$ with respect to $p$ and $q$, respectively, by setting $p = p_\infty$ and $q = q_\infty$. 
Note that $\hat{\Omega}(\theta_{n,r};r)$ consistently estimate the asymptotic variance of the above moment condition. Moreover, 
\[
    \hat{\Gamma}_j(\theta_{n,r};r)
    ~~\xrightarrow{p}~~
    M_j(p_\infty) \Sigma_{p,\infty} \left(H(p_\infty, \theta_\infty) + \xi_r'\theta_{\infty}[\partial_p c]'\right),
\]
which equals the asymptotic covariance between $\sqrt{n}(\hat{A}{\theta}_{n,r} - \hat{\beta})$ and $\sqrt{n}(\hat{a}_j - a_{j,F_n})$, where $\hat{a}_{j,F_n}$ and $a_{j,F_n}$ denotes the $j$-th row of $\hat{A}$ and $A_{F_n}$, respectively. 

Following the same arguments in Proposition \ref{prop:subsequence_MLC_validity} but replacing $\theta_{F_n}$ with $\theta_{n,r}$ in the argument of test statistics, the consistency of the (co)variance estimators implies
\begin{equation}
\label{eq:convergence_MLC_altseq}
    \text{MLC}_n(\theta_{n,r}) ~~\xrightarrow{d}~~ (1+a)\chi_1^2 + a\chi_{2K+1}^2.
\end{equation}
Let $\mathcal{B}$ denote the event that $\inf_{\hat{c}'\theta = \lambda_n} \text{MLC}_n(\theta) > q_{(1+a)\chi_1^2 + a\chi_{2K+1}^2} (1-\alpha)$. 
Then we have 
\begin{align*}
    \Prob_{F_n}(\mathcal{B})
    & = 
    \Prob_{F_n}(\mathcal{B}, \theta_{n,r}\in\Theta) + \Prob_{F_n}(\mathcal{B}, \theta_{n,r}\not\in\Theta) \\
    & \leq
    \Prob_{F_n}(\text{MLC}(\theta_{n,r}) > q_{(1+a)\chi_1^2 + a\chi_{2K+1}^2}, \theta_{n,r}\in\Theta) + \Prob_{F_n}(\theta_{n,r}\not\in \Theta) \\
    & \leq 
    \Prob_{F_n}(\text{MLC}(\theta_{n,r}) > q_{(1+a)\chi_1^2 + a\chi_{2K+1}^2}) + \Prob_{F_n}(\theta_{n,r}\not\in \Theta),
\end{align*}
where the second line holds by the fact that $\theta_{n,r} \in \Theta$ and satisfies the constraint set under the first term of probability.  Taking the limit as $n \to \infty$ and noting $\Prob_{F_n}(\theta_{n,r}\not\in\Theta) \to 0$ along with \eqref{eq:convergence_MLC_altseq}, the desired result holds.
\end{proof}

\newpage

\section{Inference with Covariates}
\label{appendix:inference_covariates}
In this appendix, I describe two approaches for implementing the proposed inference procedure with covariates. Section \ref{sec:sidak_correction} presents a method that conducts the robust inference procedure conditional on a set of discrete covariates and then aggregates the causal parameters using the Šidák and Bonferroni's correction. This approach does not rely on the additive separability assumption imposed in Assumption \ref{asp:add_sep}. Section \ref{sec:inference_addsep} then explains how the proposed procedure can be adapted to settings in which researchers impose additive separability in the MTE model.

\subsection{\v Sidák-Bonferroni's correction}
\label{sec:sidak_correction}
In this section, I implement the proposed identification-robust inference procedures conditional on a set of discrete covariates $W$. Assumption \ref{asp:iid} is extended to incorporate additional covariates:
\begin{assumption}
\label{asp:iid_covariate}
The random vectors $(Y_i,D_i,Z_i,W_i)$ for $i = 1,\ldots,n$ are i.i.d. with distribution $F$, where $W_i$ has finite support.
\end{assumption}

Next I introduce a set of regularity conditions to be imposed on the distribution $F$. Let $\theta = \{\theta(w): w\in\supp(W)\}$. For some $\delta, \zeta > 0$ and $\epsilon \in (0,1/2)$, define the parameter space $\mathcal{P}$ as the set of pairs $(\theta,F)$ satisfying the following properties:
\begin{enumerate}
	\item Equation \eqref{eq:separate_reg} is satisfied with $K \geq M$, where $\theta(w) \in \operatorname{int}(\Theta) \subseteq \R^{2(M+1)}$ for some compact set $\Theta$, for all $w\in \supp(W)$,
	\item $\sup_{d=0,1}\sup_{(z,w) \in \supp(Z,W)}\Exp_F[|Y|^{2+\delta}\mid D=d, Z=z, W=w] \leq \zeta$,
	\item $\epsilon \leq \inf_{(z,w) \in \supp(Z,W)}\Prob_F(D=1\mid W=w,Z=z) \leq \sup_{(z,w) \in \supp(Z,W)}\Prob_F(D=1\mid W=w, Z=z) \leq 1-\epsilon$,
	\item $\epsilon \leq \inf_{(z,w) \in \supp(Z,W)}\Prob_F(Z=z \mid W=w) \leq \sup_{(z,w) \in \supp(Z,W)} \Prob_{F}(Z=z \mid W=w) \leq 1-\epsilon$,
	\item $\epsilon \leq \inf_{w\in\supp(W)} \Prob_F(W=w) \leq \sup_{w\in\supp(W)} \Prob_F(W=w) \leq 1-\epsilon$,
 	\item $\epsilon \leq \inf_{d=0,1}\inf_{(z,w) \in \supp(Z,W)}\var_F(Y\mid D=d, Z=z, W=w)$.
\end{enumerate}
In particular, conditions 3, 4, and 5 together imply that 
\[
	\supp(D,Z,W) = \{0,1\}\times \supp(Z) \times \supp(W)
\]
and strong overlap holds.

For a fixed $w \in \supp(W)$, let $\mathcal{P}(w)$ be the projection of $\mathcal{P}$ onto space of $\left(\theta(w),F_{Y,D,Z\mid W = w}\right)$. That is, $\left(\theta(w),F_{Y,D,Z\mid W = w}\right)$ belongs to $\mathcal{P}(w)$ if there exists a DGP $(\theta,F)\in\mathcal{P}$ that generates $\left(\theta(w),F_{Y,D,Z\mid W = w}\right)$. Note that elements in $\mathcal{P}(w)$ still satisfy conditions 1-6 but with a fixed $w\in\supp(W)$. Define $\mathcal{P}_0(w)$ as the space of the conditional causal effects $\lambda(w) = c(w)'\theta(w)$\footnote{The weight $c(w)$ depends on covariates $w$ if it contains unknown propensity scores, e.g., ATT, LATE, and PRTE.}:
\[
	\mathcal{P}_0(w) = \left\{(\lambda(w), F_{Y,D,Z\mid W=w}): \lambda(w) = c(w)'\theta(w), (\theta(w), F_{Y,D,Z\mid W=w})\in \mathcal{P}(w)\right\}.
\]
Note that the weight $c(w)$ can depend on the underlying distribution $F_{Y,D,Z\mid W = w}$, but I omit this dependence for the simplicity of notations.

The space for the aggregated effects $\lambda = \sum_{w\in\supp(W)}q(w)\lambda(w)$ is denoted by
\[
	\mathcal{P}_0 = \left\{(\lambda, F): \lambda = \sum_{w\in\supp(W)}q_F(w)c(w)'\theta(w), \quad (\theta, F) \in \mathcal{P} \right\}
\]
where $q_F(w)$ represents the marginal or conditional distribution of covariates, for example: 
\begin{enumerate}
	\item The marginal probability of covariates $\Prob(W=w)$, for ATE.
	\item The weighted marginal probability: $\Prob(W=w \mid D=d)$, for ATT and ATU.
	\item The weighted marginal probability: $\Prob(W=w \mid p(W,z_0) < U < p(W,z_k))$, for LATE.
\end{enumerate}

The methods described in sections \ref{sec:linearMTE_inference} and \ref{sec:polyMTE_inference} have led to a confidence set ${\mathcal{C}}(w)$ for the conditional effect $\lambda(w)$ with uniformly valid coverage requirement as below:
\begin{definition}
\label{def:valid_conditional_cs}
For $\omega \in \supp(W)$, a confidence set ${\mathcal{C}}(w)$ for the conditional causal effects $\lambda(w) = c(w)'\theta(w)$ is said to be uniformly valid with asymptotic level $1-\alpha$ if it depends on the samples $\{(Y_i,D_i,Z_i): W_i =w\}$ and satisfies
\[
	\liminf_{n\to\infty} \inf_{(\lambda(w),F_{Y,D,Z\mid W=w})\in\mathcal{P}_0(w)}\Prob_{F}\left(\lambda(w) \in {\mathcal{C}}(w)\right) \geq 1-\alpha.
\]
\end{definition}

The following lemma then shows that a valid confidence set for the unconditional effect $\lambda$ can be obtained by combining the confidence sets ${\mathcal{C}}(w)$ across the support of covariate $W$ using an adjusted critical value.
\begin{lemma}
\label{lem:inference_covariates}
Let Assumption \ref{asp:iid_covariate} hold. Suppose for each $w \in \supp(W)$, there exists a uniformly valid confidence set ${\mathcal{C}}(w)$ for $\lambda(w)$ with asymptotic level $(1-\alpha)^{1/|\supp(W)|}$. Define a confidence set ${\mathcal{C}}$ by taking a weighted average:
\[
	{\mathcal{C}} = \left\{\lambda = \sum_{w\in\supp(W)}q_F(w)\lambda(w): \lambda(w) \in {\mathcal{C}}(w)\right\}.
\]
Then we have
\[
	\liminf_{n\to\infty} \inf_{(\lambda, F)\in \mathcal{P}_0} \Prob_F\left(\lambda \in {\mathcal{C}} \right) \geq 1-\alpha.
\]
\end{lemma}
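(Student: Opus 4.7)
The argument rests on two observations: (i) by the way $\mathcal{C}$ is assembled from the conditional sets, any event under which each $\mathcal{C}(w)$ captures its conditional target also forces the unconditional aggregate to lie in $\mathcal{C}$; and (ii) under i.i.d.\ sampling, the conditional confidence sets are built from disjoint, independent subsamples, so the coverage events across $w$ are mutually independent, and Šidák's product inequality converts the $(1-\alpha)^{1/|\supp(W)|}$ conditional levels into an unconditional level of $1-\alpha$.

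Concretely, I would first note the containment
\[
    \Bigl\{\lambda\in\mathcal{C}\Bigr\}\;\supseteq\;\bigcap_{w\in\supp(W)}\bigl\{\lambda(w)\in\mathcal{C}(w)\bigr\},
\]
which follows directly from the definition of $\mathcal{C}$: whenever every $\lambda(w)\in\mathcal{C}(w)$, the weighted average $\lambda=\sum_w q_F(w)\lambda(w)$ lies in $\mathcal{C}$ by construction. Next, I would exploit the fact that, under Assumption~\ref{asp:iid_covariate}, partitioning the sample by the value of $W_i$ yields subsamples $\{(Y_i,D_i,Z_i):W_i=w\}$ that are, conditional on the counts $(n_w)_{w\in\supp(W)}$, mutually independent across $w$, with each conditional on $n_w$ being i.i.d.\ from $F_{Y,D,Z\mid W=w}$. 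Since $\mathcal{C}(w)$ is a measurable function of only the $w$-subsample, the events $\{\lambda(w)\in\mathcal{C}(w)\}$ are therefore independent conditional on $(n_w)_{w}$.

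Combining these pieces, for any $(\lambda,F)\in\mathcal{P}_0$ I would write
\[
    \Prob_F(\lambda\in\mathcal{C})\;\geq\;\Exp_F\!\Biggl[\,\prod_{w\in\supp(W)}\Prob_F\bigl(\lambda(w)\in\mathcal{C}(w)\,\big|\,n_w\bigr)\Biggr],
\]
and then push the $\liminf_n\inf_{(\lambda,F)\in\mathcal{P}_0}$ inside via Fatou's lemma. For each $w$, the parameter-space restrictions (in particular $\Prob_F(W=w)\geq\epsilon$) imply $n_w\to\infty$ in probability uniformly over $F\in\mathcal{P}$, and the induced conditional law $F_{Y,D,Z\mid W=w}$ together with $\lambda(w)$ ranges over $\mathcal{P}_0(w)$. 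Hence Definition~\ref{def:valid_conditional_cs} gives
\[
    \liminf_{n\to\infty}\inf_{(\lambda,F)\in\mathcal{P}_0}\Prob_F\bigl(\lambda(w)\in\mathcal{C}(w)\bigr)\;\geq\;(1-\alpha)^{1/|\supp(W)|}
\]
for every $w$. Multiplying these bounds across $w\in\supp(W)$ yields $1-\alpha$, completing the argument.

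The main obstacle I anticipate is the careful bookkeeping needed to pass from the uniform validity of each $\mathcal{C}(w)$, which is stated along subsequences of conditional distributions in $\mathcal{P}(w)$, to uniform validity along subsequences of joint distributions in $\mathcal{P}$. The cleanest route is to argue by contradiction along a worst-case sequence $\{(\lambda_{p_n},F_{p_n})\}$ in $\mathcal{P}_0$: for each such sequence, the induced conditional sequences $(\lambda_{p_n}(w),F_{p_n,Y,D,Z\mid W=w})$ lie in $\mathcal{P}_0(w)$ with $n_w\xrightarrow{p}\infty$ at rate $\Omega(n)$ uniformly, so Definition~\ref{def:valid_conditional_cs} applies with room to spare, and the product/Fatou step goes through. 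A minor technical point is that $q_F(w)$ appears inside the definition of $\mathcal{C}$; since the stated lemma treats $q_F(w)$ as a population quantity, no estimation error enters the argument, and the containment step is literally an algebraic identity.
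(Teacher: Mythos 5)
Your proposal is correct and follows essentially the same route as the paper: the event containment $\{\lambda\in\mathcal{C}\}\supseteq\bigcap_w\{\lambda(w)\in\mathcal{C}(w)\}$, independence of the coverage events across $w$ because each $\mathcal{C}(w)$ is built from the disjoint $w$-subsample, and the \v Sid\'ak product bound $\prod_w(1-\alpha)^{1/|\supp(W)|}=1-\alpha$. The only difference is that you condition on the subsample sizes $(n_w)_w$ and invoke Fatou, whereas the paper asserts the unconditional factorization $\Prob_F(\bigcap_w\{\lambda(w)\in\mathcal{C}(w)\})=\prod_w\Prob_F(\lambda(w)\in\mathcal{C}(w))$ directly; this is a technical refinement of the same argument rather than a different approach.
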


Consider a unknown $q_F(w)$ that needs to be estimated. Denote ${\mathcal{C}}_q$ the valid confidence set of $q_F = \{q_F(w): w\in\supp(W)\}$ with asymptotic level $1-\alpha_1$ ($0<\alpha_1<\alpha$), which satisfies
\[
	\liminf_{n\to\infty} \inf_{(\lambda, F) \in \mathcal{P}_0} \Prob_F\left(q_F \in {\mathcal{C}}_q\right) \geq 1-\alpha_1.
\]
One can construct such confidence set by inverting the Wald test based on the sample analog estimator of $q_F$. This leads to the following confidence set for $\lambda$ when $q(w)$ is unknown but needs to be estimated.

\begin{lemma}
\label{lem:inference_covariates_estimated_mass}
Let Assumption \ref{asp:iid_covariate} hold. Suppose for each $w \in \supp(W)$, there exists a uniformly valid confidence set ${\mathcal{C}}(w)$ for $\lambda(w)$ with asymptotic level $(1-\alpha_2)^{1/|\supp(W)|}$, and there exists a uniformly valid confidence set for $q_F$ with asymptotic level $1-\alpha_1$, where $\alpha_1 = \alpha - \alpha_2$. Define the confidence set $\mathcal{C}$ for the unconditional effects as below:
\[
	{\mathcal{C}} = \left\{\lambda = \sum_{w\in\supp(W)}q(w)\lambda(w): \lambda(w) \in {\mathcal{C}}(w), q \in {\mathcal{C}}_q\right\}.
\]
Then we have 
\[
	\liminf_{n\to\infty} \inf_{(\lambda, F)\in \mathcal{P}_0} \Prob_F\left(\lambda \in {\mathcal{C}} \right) \geq 1-\alpha.
\]
\end{lemma}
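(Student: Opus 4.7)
The plan is to combine the Šidák-style argument used in Lemma \ref{lem:inference_covariates} with a Bonferroni correction for the additional sampling uncertainty in $q_F$. First I would establish the elementary set containment
\[
    \left\{q_F \in \mathcal{C}_q\right\} \cap \left\{\lambda(w) \in \mathcal{C}(w) \text{ for all } w \in \supp(W)\right\} ~~\subseteq~~ \left\{\lambda \in \mathcal{C}\right\},
\]
which holds because any $(\lambda(w))_{w\in\supp(W)}$ and $q \in \mathcal{C}_q$ in the respective confidence sets produce a weighted average $\sum_w q(w)\lambda(w)$ that lies in $\mathcal{C}$ by its construction. Taking complements and applying the union bound (Bonferroni's inequality) then gives
\[
    \Prob_F(\lambda \notin \mathcal{C}) ~~\leq~~ \Prob_F\!\left(\exists w: \lambda(w) \notin \mathcal{C}(w)\right) + \Prob_F(q_F \notin \mathcal{C}_q).
\]

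Next I would control each term uniformly over $(\lambda,F)\in\mathcal{P}_0$. For the second term, the assumed uniform validity of $\mathcal{C}_q$ directly yields $\limsup_n \sup_{(\lambda,F)\in\mathcal{P}_0}\Prob_F(q_F\notin\mathcal{C}_q)\leq \alpha_1$. For the first term, the same Šidák-type argument as in the proof of Lemma \ref{lem:inference_covariates} applies: condition on the full vector $(W_1,\ldots,W_n)$. Because $(Y_i, D_i, Z_i, W_i)$ are i.i.d. under Assumption \ref{asp:iid_covariate}, the subsamples $\{(Y_i,D_i,Z_i): W_i = w\}$ for different $w\in\supp(W)$ are conditionally independent given $(W_1,\ldots,W_n)$. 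Since each $\mathcal{C}(w)$ is measurable with respect to the corresponding stratum, the coverage events across $w$ are conditionally independent, so
\[
    \Prob_F\!\left(\lambda(w)\in\mathcal{C}(w)\text{ for all }w\mid W_1,\ldots,W_n\right)
    = \prod_{w\in\supp(W)} \Prob_F\!\left(\lambda(w)\in\mathcal{C}(w)\mid W_1,\ldots,W_n\right).
\]
Each factor is, conditional on $(W_1,\ldots,W_n)$, a coverage probability from an i.i.d. sample of random size $n_w \equiv \sum_{i=1}^n \indicator[W_i=w]$, which diverges in probability since $\Prob_F(W=w)\geq\epsilon$ under the parameter space restriction. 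Applying the assumed uniform validity of $\mathcal{C}(w)$ along subsequences indexed by $n_w$, each factor is asymptotically at least $(1-\alpha_2)^{1/|\supp(W)|}$ uniformly, so the product is asymptotically at least $1-\alpha_2$. Taking unconditional expectations and using dominated convergence then gives
\[
    \liminf_{n\to\infty}\inf_{(\lambda,F)\in\mathcal{P}_0}\Prob_F\!\left(\lambda(w)\in\mathcal{C}(w)\text{ for all }w\right) \geq 1-\alpha_2.
\]
Combining the two bounds with $\alpha_1+\alpha_2 = \alpha$ delivers the claim.

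The main obstacle is the uniformity: the coverage statements for $\mathcal{C}(w)$ in Definition \ref{def:valid_conditional_cs} are formulated for the stratum-specific parameter space $\mathcal{P}_0(w)$, and I must argue that a sequence $\{(\lambda_n,F_n)\}\subseteq\mathcal{P}_0$ projects onto sequences in $\mathcal{P}_0(w)$ for each $w$ whose stratum sample sizes $n_w$ diverge. This follows from the bound $\Prob_{F_n}(W=w)\geq\epsilon$ together with a standard sub-sequencing/law-of-large-numbers argument, but the bookkeeping across random $n_w$ and the simultaneous limit in the Šidák product is the delicate step. The remaining manipulations — the set containment, the union bound, and the conditional independence — are routine given the i.i.d.\ sampling assumption.
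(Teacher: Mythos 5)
Your proposal is correct and follows essentially the same route as the paper: the set containment, a Bonferroni bound to combine the $q_F$ coverage with the joint coverage of the $\lambda(w)$'s, and a Šidák product across strata exactly as in Lemma \ref{lem:inference_covariates}. The only difference is one of care rather than substance — you justify the product by conditioning on $(W_1,\ldots,W_n)$ and tracking the random stratum sizes, whereas the paper asserts unconditional independence of the stratum-specific coverage events directly from Definition \ref{def:valid_conditional_cs}.
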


In practice, we usually can achieve precise estimation of $q(z)$ if the sample size conditional on covariates is considerably large. In this case, the standard errors from estimating $q(z)$ are often small, and therefore the corresponding impact can usually be ignored.

\subsection{Proofs in section \ref{sec:sidak_correction}}

\begin{proof}[Proof of Lemma \ref{lem:inference_covariates}]
For any fixed $(\lambda,F)\in\mathcal{P}_0$, we have 
	\begin{align*}
	\Prob_F\left(\lambda \in {\mathcal{C}}\right)
	&\geq
	\Prob_F\left(\bigcap_{w\in\supp(W)} \left\{\lambda(w) \in {\mathcal{C}}(w)\right\}\right) \\
	&=
	\prod_{w\in\supp(W)}\Prob_F\left(\lambda(w) \in {\mathcal{C}}(w)\right) 
	\end{align*}
The second line uses the fact that $\left\{\lambda(w) \in {\mathcal{C}}(w)\right\}$ is independent of $\left\{\lambda(w')\in\mathcal{C}(w')\right\}$ if $w\neq w'$ since confidence sets are estimated from independent samples following Definition \ref{def:valid_conditional_cs}. Taking infimum on both sides obtains
\begin{align*}
	\inf_{(\lambda, F)\in\mathcal{P}_0} \Prob_F\left(\lambda \in {\mathcal{C}}\right) 
	&= \inf_{(\lambda, F)\in\mathcal{P}_0} \prod_{w\in\supp(W)} \Prob_F(\lambda(w) \in {\mathcal{C}}(w)) \\
	&\geq \prod_{w\in\supp(W)} \inf_{(\lambda(w), F_{Y,D,Z\mid W=w}) \in \mathcal{P}_0(w)}\Prob_F(\lambda(w) \in {\mathcal{C}}(w))
\end{align*}
The second line holds since every element $(\lambda,F)\in\mathcal{P}_0$ corresponds to a valid element of $(\lambda(w), F_{Y,D,Z\mid W=w})$ contained in $\mathcal{P}_0(w)$ for each $w \in \supp(W)$. Let sample size $n$ go to infinity and then it follows that
\[
		\liminf_{n\to\infty}\inf_{(\lambda, F)\in\mathcal{P}_0} \Prob_F\left(\lambda \in {\mathcal{C}}\right) 
		\geq
		\prod_{w\in\supp(W)} \liminf_{n\to\infty} \inf_{(\lambda(w), F_{Y,D,Z\mid W=w}) \in \mathcal{P}_0(w)}\Prob_F(\lambda(w) \in {\mathcal{C}}(w)) \geq 1-\alpha.
\]
So the desired result is proved.
\end{proof}

\bigskip

\begin{proof}[Proof of Lemma \ref{lem:inference_covariates_estimated_mass}]
	For a fixed pair $(\lambda,F)\in\mathcal{P}_0$, we have 
\begin{align*}
	\Prob_F(\lambda\in {\mathcal{C}}) 
	&= \Prob_F\left(\sum_{w\in\supp(W)}q(w)\lambda(w) \in {\mathcal{C}}\right) \\
	&\geq \Prob_F\left(q\in\hat{\mathcal{C}}_q, \lambda(w) \in {\mathcal{C}}(w) \text{ for each } w\in\supp(W) \right) \\
	&\geq \Prob_F\left(q\in\hat{\mathcal{C}}_q\right) + \prod_{w\in\supp(W)} \Prob_F\left(\lambda(w)\in {\mathcal{C}}(w)\right) - 1
\end{align*}
Following the same arguments in the proof of Lemma \ref{lem:inference_covariates}, we have 
\begin{align*}
	\liminf_{n\to\infty}\inf_{(\lambda,F)\in\mathcal{P}_0} \Prob_F(\lambda \in {\mathcal{C}}) 
	& \geq \liminf_{n\to\infty}\inf_{(\lambda,F)\in\mathcal{P}_0}\Prob_F\left(q\in{\mathcal{C}}_q\right) \\
	& \quad + \prod_{w\in\supp(W)} \liminf_{n\to\infty}\inf_{(\lambda(w),F_{Y,D,Z\mid W=w})\in\mathcal{P}_0(w)} \Prob_F\left(\lambda(w)\in{\mathcal{C}}(w)\right) - 1 \\
	& = 1-\alpha_1-\alpha_2 \\
	& = 1-\alpha.
\end{align*}
So the desired conclusion is proved.
\end{proof}

\subsection{Discussion of inference under additive separability}
\label{sec:inference_addsep}
In this section, I briefly describe how to implement the proposed robust inference procedures if researchers still want to maintain additive separability.  Following the existing literature discussed in Appendix \ref{appendix:cov_weakIV},  additional functional restrictions should be imposed on both stages, such as the additive separability (Assumption \ref{asp:add_sep}) and a distributional assumption on selection unobservable $U$. To fix ideas, consider a model with covariates $W \in \R^{L}$ and a discrete instrument $Z$ below:
\begin{align*}
	& Y_d = \mu_d + W'\tau_d + V_d \\
	& D = \indicator[U \leq W'\pi + \delta(Z)] \\
        & U \indep (W,Z)\quad \text{and} \quad U\sim F_U(\cdot)
\end{align*}
where $F_U$ denotes the CDF for a known distribution (e.g., logit or normal specifications). Additionally, assume the additive separable structure holds:
\[
	\Exp[V_d\mid W=w, F_U(U)=u] = \Exp[V_d \mid F_U(U)=u]
\]
for which I maintain the parametric specification of control functions:
\[
	\Exp[V_d \mid F_U(U)=u] = \sum_{m=1}^M \rho_{dm} h_m(u).
\]
The separate regression approach gives the following separate regressions for identification of the structural parameters:
\begin{equation}
\label{eq:separate_regression}
\begin{aligned}
	\Exp[Y\mid D=1, W=w, Z=z] 
        & = \mu_1 +  w'\tau_1 + \sum_{m=1}^M \rho_{1m}\lambda_{1m}(w'\pi + \delta(z))  \\
        &\equiv \mu_1 + w'\tau_1 + \lambda_1(w'\pi+\delta(z))'\rho_1\\
	\Exp[Y\mid D=0, W=w, Z=z] 
        & = \mu_0 + w'\tau_0 + \sum_{m=1}^M \rho_{0m}\lambda_{0m}(w'\pi+\delta(z))  \\
        & \equiv \mu_0 + w'\tau_0 + \lambda_0(w'\pi+\delta(z))'\rho_0
\end{aligned}
\end{equation}
where $\rho_d = (\rho_{d1}, \ldots, \rho_{dM})'$ and $\lambda_d(t) = (\lambda_{d1}(t), \ldots, \lambda_{dM}(t))'$ for $t\in \R$, in which we  redefine the control functions as 
\begin{align*}
	&\lambda_{1m}(t) \equiv \frac{1}{F_U(t)}\int_0^{F_U(t)} h_m(u)du \\
	&\lambda_{0m}(t) \equiv \frac{1}{1-F_U(t)}\int_{F_U(t)}^1 h_m(u)du.
\end{align*}

With this additively separable structure, the instrument $Z$ is not necessarily required for point identification of the structural parameters $\mu, \tau$, and $\rho$ as shown by \citet{pan/wang/zhang/zhou:2024}. Consider the following example:

\begin{example}
In a Normal MTE model where $(U,V_d) \sim \normal(0,\Sigma_d)$, and
\[
	\Sigma_d = 
	\begin{pmatrix}
		1 & \rho_d \\
		\rho_d & \sigma_d^2
	\end{pmatrix},
\]
we have
\[
	\Exp[V_d\mid F_U(U) = u] = \rho_d \Phi^{-1}(u),
\]
where $\Phi^{-1}$ denotes the inverse of standard normal CDF. 
So this implies $M = 1$ with $h_1(u) = \Phi^{-1}(u)$ in the equation \eqref{eq:separate_regression}, 
where $\lambda_d(\cdot)$ denotes the inverse Mills' ratio:
\[
	\lambda_1(u) = - \frac{\phi(u)}{\Phi(u)} 
	\quad\text{and}\quad
	\lambda_0(u) = \frac{\phi(u)}{1-\Phi(u)}.
\]
In this model, the structural parameters are not identified when both $\pi = 0$ and $\delta(z) \equiv \delta$. If we only have $\delta(z) \equiv \delta$ for all $z\in \supp(Z)$ (i.e., there is no exogenous variation from IV at all), we can still point identify all parameters given that $\rank(1, W, \lambda_d(W'\pi + \delta)) = 2+L$ almost surely. This point identification is a joint consequence of the additively separable structure and nonlinearity of the transformation $\lambda_d(\cdot)$. It also illustrates how the variation in covariates $W$ can help identify the endogenous coefficients $(\rho_1, \rho_0)$ without instruments, even if $W$ are not excluded from outcome equation. 
\end{example}
From this example, weak identification under additive separability depends jointly on the coefficients of covariates and instruments in the first stage. Even if instruments are weak and cannot generate much variation, the structural coefficients might still be strongly identified if covariates strongly influence individuals' treatment decisions (that $\pi$ is sufficiently distant from zero).

Recall that the parameters of interest are linear functionals of the MTE function:
\begin{align*}
	\MTE(w,u) = \mu_1 - \mu_0 + w'(\tau_1 - \tau_0) + \sum_{m=1}^M (\rho_{1m} - \rho_{0m}) h_m(u).
\end{align*}
For example, consider the PRTE:
\begin{align*}
	\frac{\Exp [Y^\epsilon - Y]}{\Exp [D^\epsilon - D]}
	&= \left.\Exp_{Z,W}\left[\int_{p(W,Z)}^{p^\epsilon(W,Z)} \MTE(W,u) du\right]\right/\Exp[D^\epsilon - D] \\
	&= \mu_1 - \mu_0 + \Exp[W]'(\tau_1 - \tau_0) + \sum_{m=1}^M\frac{\rho_{1m} - \rho_{0m}}{\Exp[D^\epsilon - D]}\times \Exp\left[\int_{p(W,Z)}^{p^\epsilon(W,Z)}h_m(u) du\right]
\end{align*}
where we let 
\[
	\Exp[D^\epsilon - D] = \Exp[p^\epsilon(W,Z) - p(W,Z)] \quad
	\text{and} \quad p(w,z) = F_U(w'\pi + \delta(z)),
\]
where $p^\epsilon(w,z)$ is the counterfactual propensity score discussed in section \ref{sec:empirical} of the main text.

To proceed with inference on the target parameter, I suggest the following step:
\begin{enumerate}
	\item Estimate the first stage and obtain the estimated coefficients $\hat{\pi}, \hat{\delta}(\cdot)$.
	\item Fix the parameters of $\rho_d = \{\rho_{dm}\}_{m=1}^M$ and estimate $\tau_d$ in separate regressions \eqref{eq:separate_regression} by regressing $Y_i - \lambda_d(W_i'\hat{\pi} + \hat{\delta}(Z_i))'\rho_d$ on $(1,W_i')$. Under the assumption that $W$ has full rank (or $\lambda_{\min}(\var(W)) > \epsilon$ uniformly over the class of valid DGPs), we can obtain consistent estimators of $\mu_d$ and $\tau_d$ under a fixed value of $\rho_d$. Denote the estimators as $\hat{\mu}_d(\rho_d)$ and $\hat{\tau}_d(\rho_d)$, respectively.
	\item Following the previous step, we can construct the moment condition of $(\rho_0, \rho_1)$ as below:
	\begin{align*}
		& \hat{m}_1(\rho_1) = \sum_{i=1}^n f_1(W_i, Z_i) D_i\left(Y_i - \hat{\mu}_1(\rho_1) - W_i'\hat{\tau}_1(\rho_1) - \lambda_{1}(W_i'\hat{\pi} + \hat{\delta}(Z_i))'\rho_1\right) = 0_{p\times 1} \\
		& \hat{m}_0(\rho_0) = \sum_{i=1}^n f_0(W_i, Z_i) (1-D_i)\left(Y_i - \hat{\mu}_0(\rho_0) - W_i'\hat{\tau}_0(\rho_0) - \lambda_0(W_i'\hat{\pi} + \hat{\delta}(Z_i))'\rho_0\right) = 0_{p\times 1},
	\end{align*}
	where $f_d(w,z) \in \R^p$ is a measurable function of $(w,z)$ with $p \geq M$ for $d = 0,1$. The functions $\{f_d(w,z)\}_{d=0,1}$ are chosen to avoid the issue discussed in Appendix \ref{appendix:control_function}, such that the asymptotic variance of the moment functions is non-singular for all $\rho_d \in \R^M$ for $d = 0,1$.
	\item With these moment conditions, we can conduct inference on the target parameter\footnote{In the target parameter, $\mu_d$ and $\tau_d$ are replaced with their corresponding estimator $\hat{\mu}_d(\rho_d)$ and $\hat{\tau}_d(\rho_d)$, respectively. Other consistently estimable quantities are replaced with their corresponding estimators.} by applying the improved projection method from section \ref{sec:polyMTE_inference}.
\end{enumerate}

\newpage

\section{Power Comparison in Linear MTE Models}

\label{appendix:add_simulation}

In this appendix, I compare the power of the proposed two approaches, namely, the conditional Wald test proposed in section \ref{sec:linearMTE_inference} and modified linear combination test proposed in section \ref{sec:polyMTE_inference}, and contrast them with the conventional Wald test in a linear MTE model. The linear MTE model is generated by the DGP below: For each $d = 0,1,$
\begin{align*}
    & Y_d = \mu_d + V_d \\
    & D = \indicator[U \leq p(Z)] \\
    & V_d = \rho_{d}\left(U-\frac{1}{2}\right) + e_d,
\end{align*}
where $U$ is uniformly distributed over a unit interval $[0,1]$, $Z$ has the same distribution as the one used in section \ref{sec:simulation}, i.e., uniformly distributed over $\{z_0,z_1,z_2\}$ and independent of $(U,e_1,e_0)$, and $(e_1, e_0)$ follows the joint normal distribution with zero mean and covariance matrix $\Sigma_e = 0.5\cdot I_{2\times 2}$. Since instrument has ternary support, the linear MTE model is over-identified and is weakly identified if all propensity scores converge to a single point. I consider the following specification of the parameters in the Monte-Carlo simulation:
\begin{itemize}
    \item Mean potential outcomes: $\mu_1 = \mu_0 = 0$,
    \item Slope of MTR functions: $\rho_{0} = \rho_{1} = 5$
    \item Propensity scores: 
    \begin{itemize}
        \item[(i)] Strong identification: $p^s(z) = [0.2,0.5,0.8]$
        \item[(ii)] Weak identification: $p^w(z) = [0.4,0.5,0.6]$
    \end{itemize}
\end{itemize}
The data is generated with $n = 500$ units to evaluate the power of the conditional Wald test, modified linear combination test, and the classical Wald test. I repeat the Monte-Carlo experiments 2,000 times to compute the average rejection rates for testing ATE $H_0: \mu_1 - \mu_0 = \delta_\mu$ for $\delta_\mu \in [-5,5]$.

The power curves are plotted in Figure \ref{fig:power_curve_lmte} below under two different levels of identification strength. If the variation on the propensity score is strong enough, all considered approaches are valid under the true value of ATE and exhibit similar power when testing against fixed alternatives in finite samples. This result reveals that the proposed MLC test does not sacrifice too much power under strong identification when the weight assigned to AR statistic is small $(a=0.05)$, which coincides with our local power analysis. Under weak identification, both MLC test and conditional Wald test control the size under the true value at zero, whereas the non-robust Wald test over-rejects the null. Regarding the power of tests, the MLC test may have deficient power at distant alternatives as it exhibits a non-monotonic power curve similar to the RLM test proposed by \citet{kleibergen:2005} for full vector inference. On the other hand, the conditional Wald test has power against both positive and negative values of ATE, which is therefore recommended for practical implementation in linear MTE models.

\begin{figure}[htbp]
    \centering
    
    \caption{Power Curves of the Conditional Wald, MLC, and Wald Tests}
    
    \begin{subfigure}{\textwidth}
        \centering
        \caption{Strong identification: $p(z) = [0.2,0.5,0.8]$}
        \includegraphics[width=0.6\linewidth]{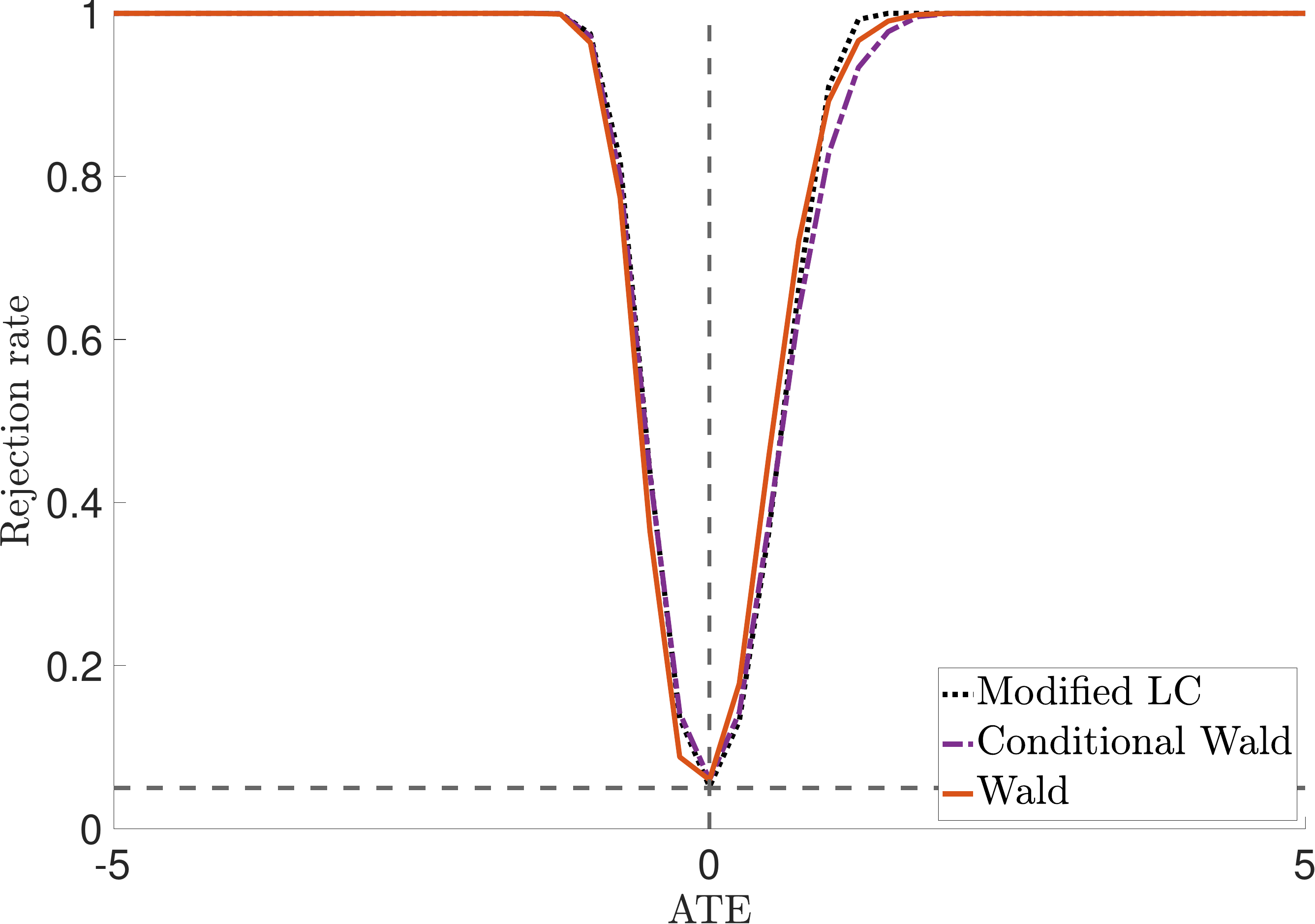}
        \label{fig:subplot_lmte_strong}
    \end{subfigure}
    
    \bigskip
    
    \begin{subfigure}{\textwidth}
        \centering
        \caption{Weak identification: $p(z) = [0.4,0.5,0.6]$}
        \includegraphics[width=0.6\linewidth]{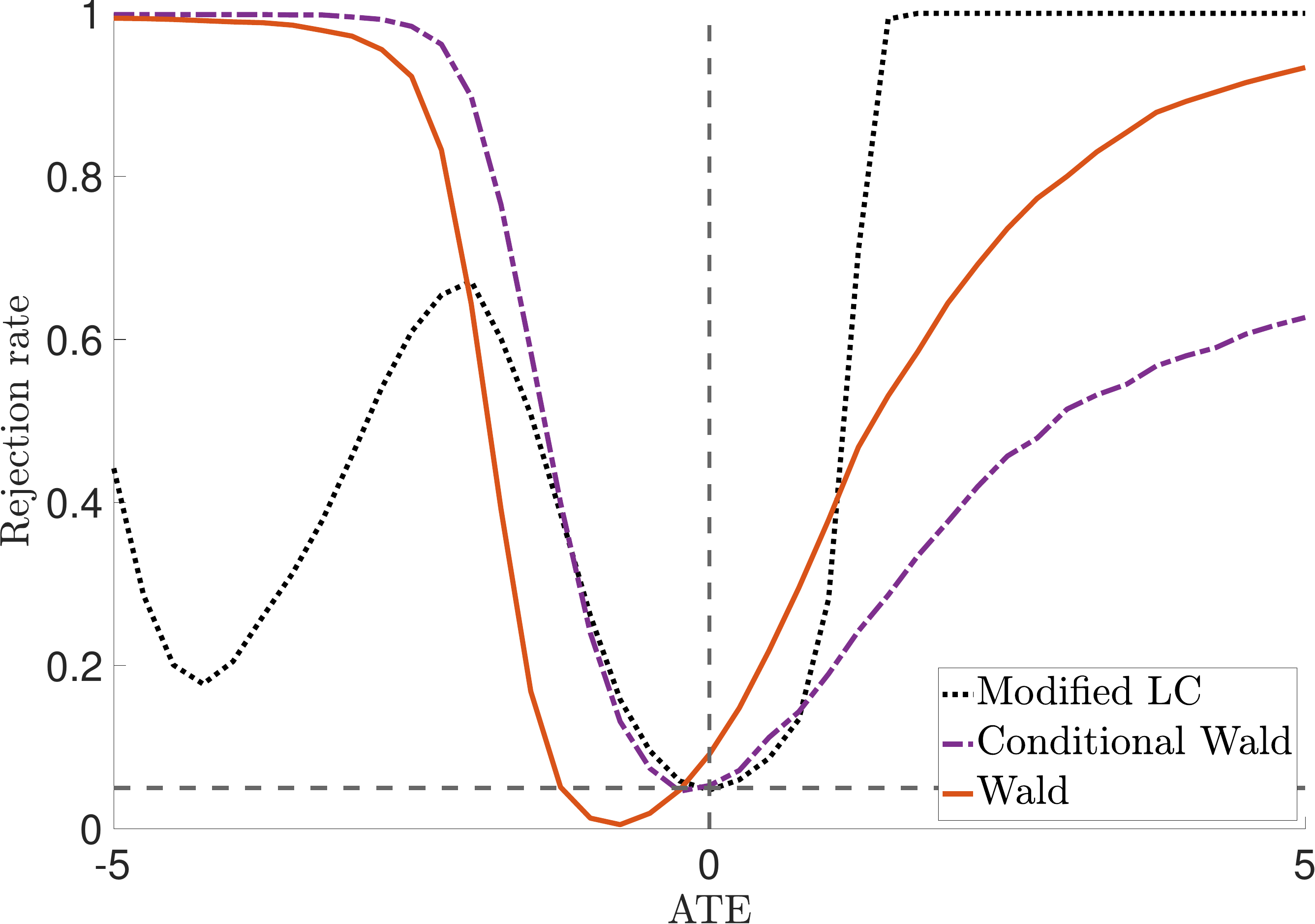}
        \label{fig:subplot_lmte_weak}
    \end{subfigure}

    \smallskip
    
    \begin{tablenotes}
        \footnotesize \raggedright
        {Note: Testing ATE at values on $[-5,5]$ with the true effects fixed at zero. The significance level is 5\%. The sample size equals 500. The average rejection rates are computed with 2,000 independent Monte-Carlo simulations.}
    \end{tablenotes}
    
    \label{fig:power_curve_lmte}
\end{figure}

\newpage

\section{Additional Empirical Results}
\label{appendix:add_empirical_results}

\begin{table}[htbp]
\scriptsize
  \centering
  \caption{95\% (left) and 90\% (right) Court-specific Confidence Sets for Additive MPRTE ${\alpha}_+(0)$}
    \begin{tabular}{llcc}
    \toprule
    Court & MTE Poly. & Wald  & MLC \\
    SBO   & Linear & [-0.17, -0.02] & [-0.16, -0.03] \\
    SBO   & Quadratic & [-0.17, 0.18] & [-0.32, 0.28] \\
    SBO   & Cubic & [-0.40, 0.45] & [-1.00, 0.96] \\
    SBO   & Quartic & [-0.64, 0.46] & [-1.00, 1.00] \\
          &       &       &  \\
    EBOS  & Linear & [-0.24, -0.10] & [-0.24, -0.10] \\
    EBOS  & Quadratic & [-0.22, -0.06] & [-0.25, -0.04] \\
    EBOS  & Cubic & [-0.34, -0.04] & [-0.95, 0.63] \\
    EBOS  & Quartic & [-0.36, 0.04] & [-0.96, 0.57] \\
          &       &       &  \\
    WROX  & Linear & [-0.32, -0.14] & [-0.32, -0.14] \\
    WROX  & Quadratic & [-0.33, -0.12] & [-0.36, -0.05] \\
    WROX  & Cubic & [-0.33, -0.13] & [-1.00, 0.68] \\
    WROX  & Quartic & [-0.33, -0.12] & [-1.00, 0.63] \\
          &       &       &  \\
    BMC   & Linear & [-0.24, 0.00] & [-0.22, -0.01] \\
    BMC   & Quadratic & [-0.28, -0.02] & [-0.26, 0.34] \\
    BMC   & Cubic & [-0.28, 0.00] & [-0.65, 0.79] \\
    BMC   & Quartic & [-0.37, -0.07] & [-0.62, 0.77] \\
          &       &       &  \\
    ROX   & Linear & [-0.03, 0.23] & [0.01, 0.22] \\
    ROX   & Quadratic & [-0.02, 0.28] & [-0.03, 0.30] \\
    ROX   & Cubic & [-0.02, 0.34] & [-0.37, 0.70] \\
    ROX   & Quartic & [-0.10, 0.31] & [-0.28, 0.77] \\
          &       &       &  \\
    DOR   & Linear & [-0.42, -0.18] & [-0.42, -0.20] \\
    DOR   & Quadratic & [-0.60, -0.26] & [-0.72, -0.20] \\
    DOR   & Cubic & [-0.60, -0.26] & [-1.00, 0.15] \\
    DOR   & Quartic & [-0.67, -0.29] & [-0.98, 0.25] \\
    \bottomrule
    \end{tabular}%
		\quad
    \begin{tabular}{llcc}
    \toprule
    Court & MTE Poly. & Wald  & MLC \\
    SBO   & Linear & [-0.16, -0.04] & [-0.15, -0.05] \\
    SBO   & Quadratic & [-0.14, 0.15] & [-0.26, 0.24] \\
    SBO   & Cubic & [-0.33, 0.38] & [-0.91, 0.78] \\
    SBO   & Quartic & [-0.55, 0.37] & [-0.98, 1.00] \\
          &       &       &  \\
    EBOS  & Linear & [-0.23, -0.11] & [-0.23, -0.11] \\
    EBOS  & Quadratic & [-0.21, -0.08] & [-0.22, -0.07] \\
    EBOS  & Cubic & [-0.31, -0.06] & [-0.77, 0.54] \\
    EBOS  & Quartic & [-0.32, 0.01] & [-0.84, 0.42] \\
          &       &       &  \\
    WROX  & Linear & [-0.31, -0.15] & [-0.31, -0.15] \\
    WROX  & Quadratic & [-0.31, -0.14] & [-0.33, -0.10] \\
    WROX  & Cubic & [-0.31, -0.14] & [-0.84, 0.23] \\
    WROX  & Quartic & [-0.31, -0.13] & [-0.97, 0.51] \\
          &       &       &  \\
    BMC   & Linear & [-0.22, -0.02] & [-0.19, -0.04] \\
    BMC   & Quadratic & [-0.26, -0.04] & [-0.23, 0.15] \\
    BMC   & Cubic & [-0.26, -0.02] & [-0.45, 0.45] \\
    BMC   & Quartic & [-0.35, -0.09] & [-0.49, 0.46] \\
          &       &       &  \\
    ROX   & Linear & [-0.01, 0.21] & [0.04, 0.19] \\
    ROX   & Quadratic & [0.00, 0.26] & [0.02, 0.26] \\
    ROX   & Cubic & [0.01, 0.31] & [-0.20, 0.51] \\
    ROX   & Quartic & [-0.06, 0.28] & [-0.06, 0.39] \\
          &       &       &  \\
    DOR   & Linear & [-0.40, -0.20] & [-0.40, -0.23] \\
    DOR   & Quadratic & [-0.57, -0.29] & [-0.60, -0.27] \\
    DOR   & Cubic & [-0.57, -0.28] & [-0.75, -0.09] \\
    DOR   & Quartic & [-0.64, -0.32] & [-0.85, 0.01] \\
    \bottomrule
    \end{tabular}%
  \label{tab:A-MPRTE}%

    \begin{tablenotes}
    \footnotesize \raggedright
      Note: {Results for classical Wald and MLC confidence sets under polynomial specifications up to the fourth order for courts of South Boston (SBO), East Boston (EBOS), West Roxbury (WROX), Central (BMC), Roxbury (ROX), and Dorchester (DOR). The counterfactual policy considered is a uniformly additive increase in nonprosecution rates by a marginal amount. The left panel reports the 95\% confidence sets, and the right panel reports the 90\% confidence sets.}
    \end{tablenotes}
    
\end{table}%

\begin{table}[htbp]
	\scriptsize
  \centering
  \caption{95\% (left) and 90\% (right) Court-specific Confidence Sets for Proportional MPRTE ${\alpha}_+(0)$}
    \begin{tabular}{llcc}
		\toprule
    Court & MTE Poly. & Wald  & MLC \\
    SBO   & Linear  & [-0.12, 0.04] & [-0.11, 0.03] \\
    SBO   & Quadratic  & [-0.29, 0.20] & [-0.52, 0.36] \\
    SBO   & Cubic  & [-0.45, 0.52] & [-1.00, 1.00] \\
    SBO   & Quartic  & [-0.46, 1.03] & [-1.00, 1.00] \\
          &       &       &  \\
    EBOS  & Linear  & [-0.26, -0.12] & [-0.27, -0.12] \\
    EBOS  & Quadratic  & [-0.26, -0.04] & [-0.27, 0.01] \\
    EBOS  & Cubic  & [-0.37, -0.01] & [-1.00, 1.00] \\
    EBOS  & Quartic  & [-0.48, 0.16] & [-1.00, 0.95] \\
          &       &       &  \\
    WROX  & Linear  & [-0.33, -0.14] & [-0.33, -0.14] \\
    WROX  & Quadratic  & [-0.34, -0.10] & [-0.38, -0.03] \\
    WROX  & Cubic  & [-0.36, -0.12] & [-1.00, 0.88] \\
    WROX  & Quartic  & [-0.37, -0.12] & [-1.00, 0.90] \\
          &       &       &  \\
    BMC   & Linear  & [-0.23, 0.01] & [-0.20, 0.00] \\
    BMC   & Quadratic  & [-0.22, 0.02] & [-0.19, 0.86] \\
    BMC   & Cubic  & [-0.20, 0.08] & [-0.74, 0.53] \\
    BMC   & Quartic  & [-0.47, 0.01] & [-0.72, 0.74] \\
          &       &       &  \\
    ROX   & Linear  & [-0.04, 0.21] & [0.00, 0.20] \\
    ROX   & Quadratic  & [-0.04, 0.21] & [-0.10, 0.23] \\
    ROX   & Cubic  & [-0.03, 0.27] & [-0.49, 0.86] \\
    ROX   & Quartic  & [-0.12, 0.42] & [-0.34, 0.97] \\
          &       &       &  \\
    DOR   & Linear  & [-0.42, -0.18] & [-0.42, -0.21] \\
    DOR   & Quadratic  & [-0.48, -0.23] & [-1.00, -0.10] \\
    DOR   & Cubic  & [-0.50, -0.23] & [-1.00, 0.47] \\
    DOR   & Quartic  & [-0.84, -0.28] & [-1.00, 0.47] \\
          &       &       &  \\
    Uncond.  & Linear  & [-0.19, -0.12] & [-0.17, -0.14] \\
    Uncond.  & Quadratic  & [-0.18, -0.10] & [-0.16, -0.11] \\
    Uncond.  & Cubic  & [-0.16, -0.04] & [-0.13, -0.05] \\
    Uncond.  & Quartic  & [-0.16, 0.04] & [-0.08, 0.02] \\
          &       &       &  \\
    Average & Linear  & [-0.24, -0.03] & [-0.23, -0.04] \\
    Average & Quadratic  & [-0.27, -0.01] & [-0.42, 0.20] \\
    Average & Cubic  & [-0.30, 0.04] & [-0.84, 0.74] \\
    Average & Quartic  & [-0.46, 0.11] & [-0.80, 0.80] \\
    \bottomrule
    \end{tabular}%
		\quad
    \begin{tabular}{llcc}
		\toprule
    Court & MTE Poly. & Wald  & MLC \\
    SBO   & Linear  & [-0.11, 0.03] & [-0.09, 0.02] \\
    SBO   & Quadratic  & [-0.25, 0.16] & [-0.43, 0.26] \\
    SBO   & Cubic  & [-0.37, 0.44] & [-1.00, 1.00] \\
    SBO   & Quartic  & [-0.34, 0.91] & [-1.00, 1.00] \\
          &       &       &  \\
    EBOS  & Linear  & [-0.25, -0.13] & [-0.25, -0.13] \\
    EBOS  & Quadratic  & [-0.24, -0.06] & [-0.25, -0.04] \\
    EBOS  & Cubic  & [-0.34, -0.04] & [-1.00, 0.96] \\
    EBOS  & Quartic  & [-0.43, 0.11] & [-1.00, 0.81] \\
          &       &       &  \\
    WROX  & Linear  & [-0.31, -0.16] & [-0.31, -0.16] \\
    WROX  & Quadratic  & [-0.32, -0.12] & [-0.34, -0.09] \\
    WROX  & Cubic  & [-0.34, -0.14] & [-1.00, 0.34] \\
    WROX  & Quartic  & [-0.35, -0.14] & [-1.00, 0.80] \\
          &       &       &  \\
    BMC   & Linear  & [-0.21, -0.01] & [-0.17, -0.03] \\
    BMC   & Quadratic  & [-0.20, 0.00] & [-0.16, 0.59] \\
    BMC   & Cubic  & [-0.17, 0.06] & [-0.52, 0.40] \\
    BMC   & Quartic  & [-0.43, -0.03] & [-0.37, 0.16] \\
          &       &       &  \\
    ROX   & Linear  & [-0.02, 0.19] & [0.02, 0.17] \\
    ROX   & Quadratic  & [-0.02, 0.19] & [0.00, 0.18] \\
    ROX   & Cubic  & [-0.01, 0.24] & [-0.37, 0.48] \\
    ROX   & Quartic  & [-0.07, 0.38] & [-0.18, 0.54] \\
          &       &       &  \\
    DOR   & Linear  & [-0.40, -0.20] & [-0.39, -0.23] \\
    DOR   & Quadratic  & [-0.46, -0.25] & [-0.49, -0.18] \\
    DOR   & Cubic  & [-0.48, -0.25] & [-0.82, 0.22] \\
    DOR   & Quartic  & [-0.79, -0.32] & [-0.81, 0.20] \\
          &       &       &  \\
    Uncond.  & Linear  & [-0.19, -0.13] & [-0.16, -0.15] \\
    Uncond.  & Quadratic  & [-0.17, -0.10] & [-0.15, -0.12] \\
    Uncond.  & Cubic  & [-0.15, -0.05] & [-0.11, -0.07] \\
    Uncond.  & Quartic  & [-0.15, 0.02] & [-0.07, -0.01] \\
          &       &       &  \\
    Average & Linear  & [-0.22, -0.05] & [-0.20, -0.06] \\
    Average & Quadratic  & [-0.25, -0.03] & [-0.26, 0.10] \\
    Average & Cubic  & [-0.27, 0.01] & [-0.73, 0.48] \\
    Average & Quartic  & [-0.41, 0.06] & [-0.66, 0.50] \\
    \bottomrule
    \end{tabular}%
  \label{tab:P-MPRTE}%

    \begin{tablenotes}
    \footnotesize \raggedright
      Note: {Results for classical Wald and MLC confidence sets under polynomial specifications up to the fourth order for courts of South Boston (SBO), East Boston (EBOS), West Roxbury (WROX), Central (BMC), Roxbury (ROX), Dorchester (DOR). The ``Uncond.'' confidence sets are obtained without controlling for court identities, whereas the ``Average'' confidence sets are weighted averages of the court-specific confidence sets with weights proportional to court size. The counterfactual policy considered is a uniformly proportional increase in nonprosecution rates by a marginal amount. The left panel reports the 95\% confidence sets, and the right panel reports the 90\% confidence sets.}
    \end{tablenotes}
    
\end{table}

\newpage

\section{Testing IV Strength}
\label{appendix:iv_strength}

Several studies have developed tests for IV strength \citep{stock/yogo:2005, montielolea/pflueger:2013, lewis/mertens:2022}. These tests typically focus on the bias of IV estimators relative to OLS estimators and the size distortion of conventional Wald and t-tests under their null hypotheses. This problem is well-studied in linear IV regressions with homoskedastic error structure \citep[Section 4]{andrews/stock/sun:2019}. However, little is known about nonlinear models with a focus on subvector inference. In this section, I evaluate the strength of the identification in the empirical application along multiple dimensions.

\subsection{Pre-testing weak identification by size distortion}
Following I. \cite{andrews.i:2018}, we regard the target parameter as being weakly identified if the Wald confidence set fails to include a locally equivalent robust confidence set with a strictly smaller coverage level.

Let $\gamma \in [0,1-\alpha)$. We choose the weight $a = a(\gamma)$ in the MLC test such that $a(\gamma)$ solves
\[
	q_{(1+a(\gamma))\chi_1^2 + a(\gamma)\chi_{2K+1}^2}(1 - \alpha - \gamma) = q_{\chi_1^2}(1-\alpha).
\]
With the choice of weight $a(\gamma)$ and the critical value $q_{\chi_1^2}(1-\alpha)$, the following set $\mathcal{C}_P(\gamma)$ achieves asymptotic coverage rate $1-\alpha-\gamma$ under weak identification:
\[
	\mathcal{C}_{P}(\gamma) = \left\{\lambda\in\R: \inf_{c'\theta = \lambda}\text{MRLM}_n(\theta) + a(\gamma)\cdot \text{AR}_n(\theta) < q_{\chi_1^2}(1-\alpha)\right\}.
\]
As shown by \citet[Theorem 3]{andrews.i:2018}, this set is contained by a non-robust $(1-\alpha)$ Wald confidence set $\mathcal{C}_{W}$ with probability approaching one under \textit{strong identification}. In other words, the failure of this containment relationship suggests evidence of weak identification. This leads to the following test of weak identification:
\[
	\phi_{ICS}(\gamma) = \indicator[\mathcal{C}_P(\gamma) \not\subseteq \mathcal{C}_W].
\]
The tuning parameter $\gamma$ can be regarded as the maximum amount of size distortion on Wald confidence set that researchers can tolerate. If $\phi_{ICS}(\gamma) = 1$, then researchers are willing to use the robust confidence set with a smaller coverage level ($1-\alpha-\gamma$) since it is always valid while the classical $(1-\alpha)$ Wald confidence set is unreliable if the test rejects. However, this type of test may not have sufficient power to detect weak identification, which implies the model might still be weakly identified even if we did not reject the test.

For the example of additive MPRTE in Table \ref{tab:A-MPRTE}, I report the Andrews' pretesting result in Table \ref{tab:IVstrength_AMPRTE} by specifying $\gamma = 10\%$ and $\alpha = 5\%$ (i.e., the researchers can tolerate at most 10\% size distortion). The result indicates that cubic and quartic MTE models lead to potential concerns of weak identification conditional on most courts, while linear MTE models are strongly identified. Quadratic MTE models are strongly identified for several courts, but not all of them. This result reveals the fact that the identification strength depends on both the variation of instruments and the flexibility of the model specified by researchers, while this information is not reflected by the conventional rule-of-thumb applied to $F$-statistics, since the latter rule only works under homoskedastic linear IV models with a scalar endogenous coefficient \citep{andrews/stock/sun:2019}.

\begin{table}[htbp]
  \centering
  \caption{Testing Weak Identification of Additive MPRTE at $\gamma = 10\%$ and $\alpha = 5\%$}
  \begin{threeparttable}
    \begin{tabular}{llccccc}
    \toprule
    Court & MTE Poly. & Valid Test & Wald (95\%) & MLC (85\%) & F-stat & RoT \\
    \midrule
    SBO   & Linear MTE & Possibly Strong & [-0.17, -0.02] & [-0.14, -0.06] & \multirow{4}[1]{*}{77.90} & \multirow{4}[1]{*}{Strong} \\
    SBO   & Quadratic MTE & Weak  & [-0.17, 0.18] & [-0.18, 0.20] &       &  \\
    SBO   & Cubic MTE & Weak  & [-0.40, 0.45] & [-0.67, 0.55] &       &  \\
    SBO   & Quartic MTE & Weak  & [-0.64, 0.46] & [-0.76, 0.75] &       &  \\
          &       &       &       &       &       &  \\
    EBOS  & Linear MTE & Possibly Strong & [-0.24, -0.10] & [-0.23, -0.12] & \multirow{4}[0]{*}{50.01} & \multirow{4}[0]{*}{Strong} \\
    EBOS  & Quadratic MTE & Possibly Strong & [-0.22, -0.06] & [-0.20, -0.08] &       &  \\
    EBOS  & Cubic MTE & Weak  & [-0.34, -0.04] & [-0.45, 0.03] &       &  \\
    EBOS  & Quartic MTE & Weak  & [-0.36, 0.04] & [-0.63, 0.37] &       &  \\
          &       &       &       &       &       &  \\
    WROX  & Linear MTE & Possibly Strong & [-0.32, -0.14] & [-0.30, -0.16] & \multirow{4}[0]{*}{28.15} & \multirow{4}[0]{*}{Strong} \\
    WROX  & Quadratic MTE & Possibly Strong & [-0.33, -0.12] & [-0.31, -0.13] &       &  \\
    WROX  & Cubic MTE & Weak  & [-0.33, -0.13] & [-0.40, 0.05] &       &  \\
    WROX  & Quartic MTE & Weak  & [-0.33, -0.12] & [-0.89, 0.44] &       &  \\
          &       &       &       &       & \textcolor[rgb]{ 0,  .114,  .784}{} &  \\
    BMC   & Linear MTE & Possibly Strong & [-0.24, 0.00] & [-0.16, -0.07] & \multirow{4}[0]{*}{31.47} & \multirow{4}[0]{*}{Strong} \\
    BMC   & Quadratic MTE & Weak  & [-0.28, -0.02] & [-0.21, 0.04] &       &  \\
    BMC   & Cubic MTE & Weak  & [-0.28, 0.00] & [-0.35, 0.12] &       &  \\
    BMC   & Quartic MTE & Weak  & [-0.37, -0.07] & [-0.30, 0.18] &       &  \\
          &       &       &       &       &       &  \\
    ROX   & Linear MTE & Possibly Strong & [-0.03, 0.23] & [0.06, 0.17] & \multirow{4}[0]{*}{34.68} & \multirow{4}[0]{*}{Strong} \\
    ROX   & Quadratic MTE & Possibly Strong & [-0.02, 0.28] & [0.08, 0.23] &       &  \\
    ROX   & Cubic MTE & Weak  & [-0.02, 0.34] & [0.05, 0.38] &       &  \\
    ROX   & Quartic MTE & Possibly Strong & [-0.10, 0.31] & [0.13, 0.24] &       &  \\
          &       &       &       &       &       &  \\
    DOR   & Linear MTE & Possibly Strong & [-0.42, -0.18] & [-0.37, -0.25] & \multirow{4}[1]{*}{43.60} & \multirow{4}[1]{*}{Strong} \\
    DOR   & Quadratic MTE & Possibly Strong & [-0.60, -0.26] & [-0.56, -0.36] &       &  \\
    DOR   & Cubic MTE & Weak  & [-0.60, -0.26] & [-0.56, -0.16] &       &  \\
    DOR   & Quartic MTE & Weak  & [-0.67, -0.29] & [-0.79, -0.08] &       &  \\
    \bottomrule
    \end{tabular}%
    \begin{tablenotes}
    \footnotesize Note: Valid Test is based on I.\cite{andrews.i:2018} first-stage pretesting procedure under 10\% size distortion. $F$-statistics are computed by regressing treatment on a set of saturated instruments, and the Rule of Thumb (RoT) suggests strong identification if the $F$-statistics exceed 10.
    \end{tablenotes}
    \end{threeparttable}
  \label{tab:IVstrength_AMPRTE}%
\end{table}%

\subsection{Testing exact under-identification}
Alternatively, researchers may want to test the full-rank property of the matrix $A$ in the moment condition, i.e., $H_0^{\text{rank}}: \rank(A) \leq 2M+1$. Rejections of such tests suggest that the propensity scores are sufficiently separate from each other so that the relevance condition holds under a \textit{pointwise asymptotic} framework. However, for a sequence of DGPs that leads to weak identification,  $H_0^{\text{rank}}$ may or may not be rejected since Assumption \ref{asp:identification}.7 holds but becomes nearly violated as sample size diverges. This implies a \textit{larger} critical value is needed to detect weak identification. As choice of valid critical values for testing weak identification is beyond the scope of this paper, I do not pursue this direction but instead comparing the existing test statistics and their corresponding critical values for testing $H_0^{\text{rank}}$. The model may be weakly identified if the test statistic falls below but is close to the relevant critical value.

The test for $H_0^{\text{rank}}$ considered here is based on the analytical bootstrap test from \cite{chen/fang:2019}. Their test focuses exactly on the null hypothesis $H_0^{\text{rank}}$ rather than testing the equality $\rank(A) = 2M+1$ as in earlier work by \citet{kleibergen/paap:2006}. They show that KP's approach can be size-incorrect to test for $H_0^{\text{rank}}$. Since IV relevance condition at population level is directly related to $H_0^{\text{rank}}$, I adopt \cite{chen/fang:2019}'s approach to study this problem. 

The test results are collected in Table \ref{tab:IVstrength_CF19}. From this table we see that the \textit{population} IV relevance condition holds for all courts and models we consider except for the SBO court based on quartic MTE models. As argued above, this result does not necessarily show evidence of strong identification. When looking at the ratios of test statistics and critical values, we observe that these quantities are close to one under cubic and quartic settings, indicating potential weak identification concerns in such scenarios. 

\begin{table}[htbp]
  \centering
  \caption{Analytical Bootstrap Test for $H_0^{\text{rank}}: \rank(A) \leq 2M+1$}
    \begin{tabular}{lcccc}
    \toprule
    Court & Linear & Quadratic & Cubic & Quartic \\
    \midrule
    \multicolumn{5}{c}{Ratio: Test Statistics/Critical Values} \\
    SBO   & 20.27 & 1.44  & 1.83 & 0.57 \\
    EBOS  & 17.47 & 3.01  & 2.19 & 1.11 \\
    WROX  & 23.29 & 11.41 & 3.68 & 2.46 \\
    BMC   & 16.47 & 7.54  & 3.66 & 1.20 \\
    ROX   & 16.73 & 6.08  & 2.70 & 1.01 \\
    DOR   & 27.58 & 7.72  & 2.90 & 1.32 \\
    \midrule
    \multicolumn{5}{c}{Test Statistics} \\
    SBO   & 469.57 & 1.02 & 9.02$\times 10^{-3}$ & 3.18$\times 10^{-5}$ \\
    EBOS  & 506.15 & 1.74 & 8.49$\times 10^{-3}$ & 4.80$\times 10^{-5}$ \\
    WROX  & 454.48 & 3.05 & 16.68$\times 10^{-3}$ & 12.18$\times 10^{-5}$ \\
    BMC   & 243.39 & 1.05 & 3.56$\times 10^{-3}$ & 0.86$\times 10^{-5}$ \\
    ROX   & 205.75 & 0.39 & 0.82$\times 10^{-3}$ & 0.12$\times 10^{-5}$ \\
    DOR   & 418.09 & 1.71 & 6.08$\times 10^{-3}$ & 1.50$\times 10^{-5}$ \\
    \midrule
    \multicolumn{5}{c}{Critical Values} \\
    SBO   & 23.17 & 0.71 & 4.92$\times 10^{-3}$ & 5.60$\times 10^{-5}$ \\
    EBOS  & 28.97 & 0.58 & 3.88$\times 10^{-3}$ & 4.34$\times 10^{-5}$ \\
    WROX  & 19.52 & 0.27 & 4.54$\times 10^{-3}$ & 4.94$\times 10^{-5}$ \\
    BMC   & 14.78 & 0.14 & 0.97$\times 10^{-3}$ & 0.72$\times 10^{-5}$ \\
    ROX   & 12.30 & 0.06 & 0.31$\times 10^{-3}$ & 0.12$\times 10^{-5}$ \\
    DOR   & 15.16 & 0.22 & 2.09$\times 10^{-3}$ & 1.14$\times 10^{-5}$ \\
    \bottomrule
    \end{tabular}%
  \label{tab:IVstrength_CF19}%
\end{table}%

\newpage

\section{Numerical Example of Additive Separability Bias}
\label{appendix:numerical_example}
In this appendix, we present a numerical example and show that  misspecification of additive separability structure can lead to potentially large bias of causal estimand even if covariates $W$ only exhibits binary variation.

As alluded in section \ref{sec:bias_add_sep}, the bias of ATE would depend both on the heterogeneity of the coefficient $\rho_1(W) - \rho_0(W)$ as well as how $W$ influence the selection into treatment. Consider a linear MTE model as follows:
\begin{align*}
    P & = \frac{\exp(bW + Z)}{1 + \exp(bW + Z)} \\
    Y_d & = \mu_d(W) + \rho_d(W)\left(U - \frac{1}{2}\right) + e_d \\
    D & = \indicator[P \leq U]
\end{align*}
where $\Prob(W = 1) = \Prob(W = 0) = 0.5$ and $F_Z(z) = (1+e^{-z})^{-1}$, following a standard logistic distribution. Let $U \indep (Z,W)$ and $Z\indep W$. For the parameter specification, I assume
\begin{itemize}
    \item $\mu_1(w) = 0.5$ and $\mu_0(w) = 0$ for $w = 0,1$.
    \item $\rho_1(0) = 1, \rho_1(1) = 1+\delta_\rho$.
    \item $\rho_0(0) = -1, \rho_0(1) = -1+\delta_\rho$.
    \item $(e_1, e_0) \sim \normal(0_{2\times 1}, I_2)$.
    \item $(b,\delta_\rho) \in [-5,5]^2$.
\end{itemize}
From the above specification, ATE equals 0.5 for all pairs of $(b,\delta_\rho)$. When $b=0$, the covariate $W$ does not shift the probability of being treated. When $\delta_\rho = 0$, there is no heterogeneous treatment effects of covariates varying with selection unobservable (i.e., the additive separability condition holds). Following the intuition provided by Theorem \ref{thm:bias_formula_specific}, we would expect bias of ATE estimand when $b\neq 0$ and $\delta_\rho \neq 0$.

The ATE estimand under additive separability and its deviation from the true ATE are shown in Figure \ref{fig:bias_plot}. This figure demonstrates that bias of ATE estimand is pronounced when both $b$ and $\delta_\rho$ are large, and there is no bias on ATE estimand if either $b$ or $\delta_\rho$ equals zero, which coincides the prediction by Theorem \ref{thm:bias_formula_specific}. Importantly, this bias has the potential to change the sign of estimand, therefore resulting in problematic policy recommendations.

\begin{figure}[htbp]
    \caption{Bias of ATE Estimand under Additive Separability}
    
    \centering
    
    \includegraphics[scale = 0.7]{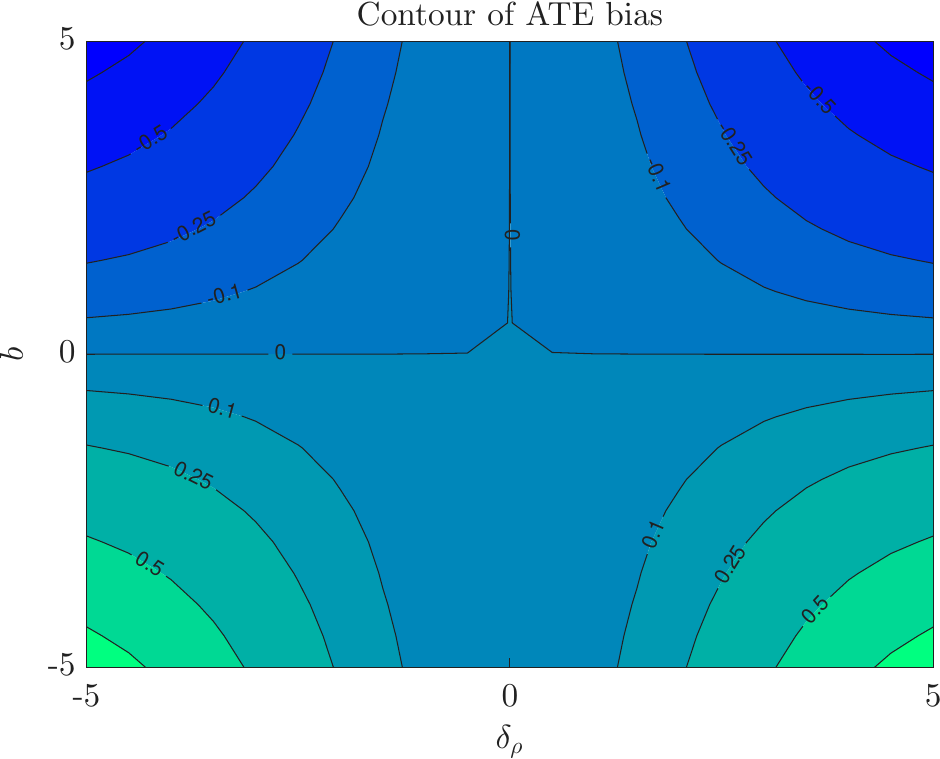} 
    
    \includegraphics[scale = 0.7]{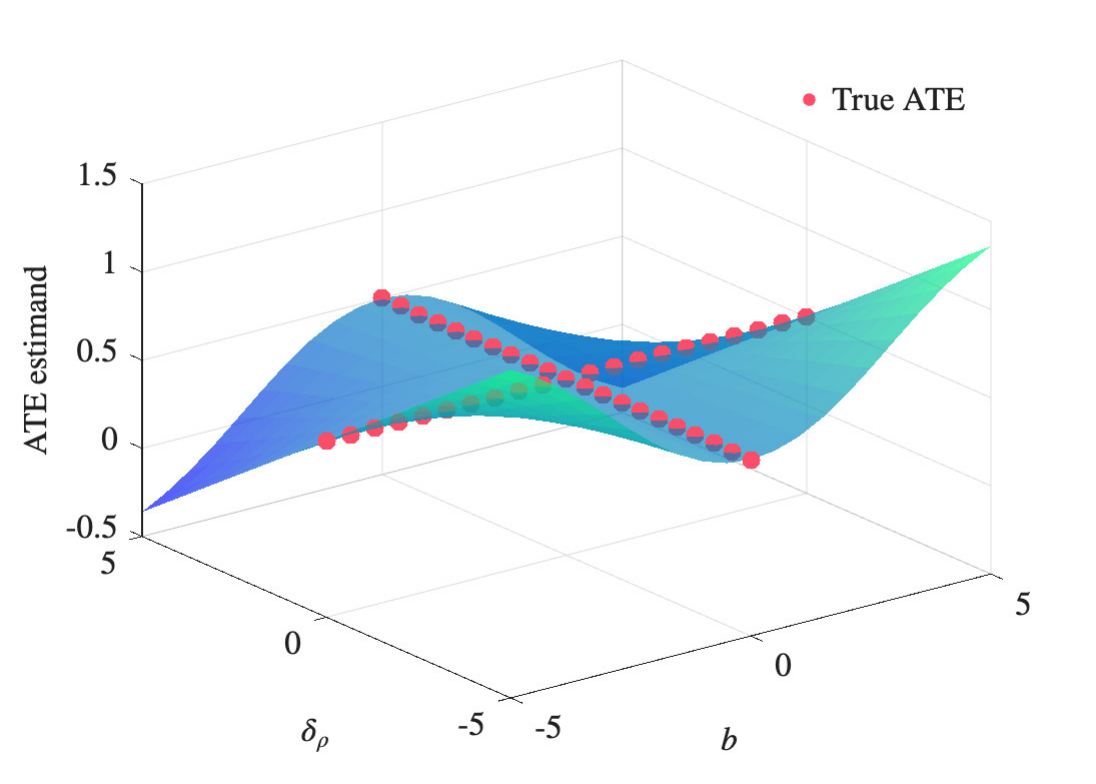}
    
    \label{fig:bias_plot}
    
    \begin{tablenotes}
    \footnotesize Top: Contour plot of ATE bias for $(b,\delta_\rho) \in [-5,5]^2$. Bottom: 3D plot of ATE estimand (under additive separability) over the same range, with scatter points showing where the estimand equals true ATE.
    \end{tablenotes}
    
\end{figure}

\newpage 

\section{Additional Literature Discussion}
\label{appendix:litreview}

In this appendix, I provide further discussion on other inference procedures in weakly identified models and review how covariates can be incorporated in these methods.

\subsection{Other approaches to inference on subvectors or functions of parameters}
\label{appendix:lit_subvector_inference}
\citet[Section 12 in supplementary appendix]{andrews/guggenberger:2019} and D. \citet[Section 2]{andrews.d:2017} give thorough literature reviews on inference for weakly identified models. Next I briefly discuss several other main approaches for inference on subvectors or functions of parameters and explain why they cannot be applied to MTE models considered in this paper.

\begin{enumerate}
\item Concentrating out strongly identified nuisance parameters \par
Sometimes researchers assume certain parameters are strongly identified under the null hypothesis. When this holds, identification-robust tests can be constructed by either concentrating out these strongly identified parameters or substituting them with consistent, asymptotically normal estimators. This approach has been adopted in many papers, including several influential work by \cite{stock/wright:2000}, \cite{kleibergen:2005}, \cite{andrews/mikusheva:2016b}, and \cite{andrews/guggenberger:2019}. The resulting confidence sets achieve asymptotic similarity. However, a key limitation is that this method cannot accommodate weakly identified nuisance parameters. This limitation is particularly relevant for the MTE model considered here, where elements of the parameter vector $\theta$ may be weakly identified when propensity scores exhibit limited variation (see footnote 4 on page 17).

\item Least-favorable critical values and identification-category-selection method \par
\citet{andrews/cheng:2012,andrews/cheng:2013,andrews/cheng:2014} and \citet{han/mccloskey:2019} propose uniformly valid inference results for inference on functions of weakly identified parameters based on conventional test statistics such as quasi-likelihood ratio and Wald statistics. These methods are developed for models for which the identification status is determined by whether a vector of strongly identified parameters is zero. When this vector equals zero, the sample objective function does not involve the weakly identified parameters \citep[Assumption A]{andrews/cheng:2012} or the Jacobian matrix of the objective function is approximately a singular matrix \citep[Assumption ID]{han/mccloskey:2019}. For the MTE models studied in this paper, the identification status is determined by the vector of propensity scores, with under-identification occurring if this propensity score vector approaches a set on which Assumption \ref{asp:identification}.7 fails, rather than a single point. As a result, failure of identification can happen along multiple directions in higher-order MTE models, rather than induced only by a vector approaching zero. 
While some studies have attempted to generalize the framework of \cite{andrews/cheng:2012}, they do not yet cover MTE models studied in this paper. For example, \cite{cheng:2015} considers nonlinear regression models with mixed identification strength where each structural parameter has a one-to-one correspondence with its identification strength parameter. \cite{cox:2025} develops identification-robust inference for a class of minimum-distance models with restricted parameter spaces, with a primary focus on low-dimensional factor models under non-negativity restrictions of variances. 

\item Profiling-based Anderson-Rubin test \par
\cite{gkmc:2012} and \cite{gkm:2019} propose using profiled Anderson-Rubin test to study inference on a subvector of endogenous coefficients in linear IV models. Their tests rely on the assumption that structural errors are homoskedastic, which is proved to be a key assumption for maintaining asymptotic validity by \cite{lee:2015}. In another subsequent work, \cite{gkm:2021} relax the homoskedasticity and develop an adaptive procedure by switching into the general subvector test in D. \cite{andrews.d:2017} if the structural residuals do not follow the approximate homoskedastic structure. Despite the simplicity and convenience of their tests, the assumption of homoskedasticity and linear IV structure do not apply to MTE models. In another related work by \cite{andrews/mikusheva:2016a}, the subvector test is constructed by profiling Anderson-Rubin-type statistics in a minimum distance model. However, as they note (page 1260), their method reduces to conservative projection inference in linear IV models. A similar limitation likely applies to MTE models, given the linear parametric structure of the moment function. 
\end{enumerate}

\subsubsection*{Asymptotic similarity of the subvector inference}

Asymptotic similarity guarantees that, in large samples, the confidence set coverage equals $1-\alpha$, or equivalently, the size of the test equals $\alpha$, regardless of identification strength (for a formal definition of similarity, see \cite{andrews/cheng/guggenberger:2020}). This property means that confidence sets are not conservative, i.e., unnecessarily wide with coverage exceeding the nominal level, in large samples. While \cite{andrews/guggenberger:2017} has established this property for many identification-robust inference methods on the full parameter vector, it rarely holds for subvector inference with weakly identified nuisance parameters. This failure occurs because weakly identified nuisance parameters usually cannot be consistently estimated under weak identification. Below, I review the main approaches in this literature and present evidence of their inability to achieve asymptotic similarity.

Methods following \cite{andrews/cheng:2012,andrews/cheng:2013,andrews/cheng:2014} determine critical values by considering the least-favorable case across all sequences of DGPs. By construction, this robust approach can yield critical values that exceed the true quantile of the test statistic's limiting distribution. Consequently, these tests are not asymptotically similar, as evidenced by the over-coverage probabilities in \citet[Figure 7]{andrews/cheng:2012}.

For linear IV models with homoskedastic errors, the methods of \cite{gkmc:2012} and \cite{gkm:2019} also lack asymptotic similarity. As shown in \citet[Figure 4]{gkm:2019}, null rejection probabilities fall below the significance level when the endogenous coefficient not under testing is weakly identified.

The methods of \cite{chaudhuri/zivot:2011}, D. \cite{andrews.d:2017}, and I. \cite{andrews.i:2018} also fail to achieve asymptotic similarity. These approaches minimize an identification-robust test statistic over a set of weakly identified nuisance parameters (either a confidence set or the full parameter domain), but derive critical values using the true parameter values. Under weak identification, the minimizing nuisance parameters may not converge to their true values, causing the profiled test statistics' distribution to be stochastically dominated by the test statistics evaluated at the true value. This leads to null rejection probabilities falling below the significance level, as demonstrated by the under-rejection of the null hypothesis in both \citet[Table 2]{chaudhuri/zivot:2011} and D. \citet[Table I]{andrews.d:2017} for AR/QLR1 tests.

It is worth noting that reparameterizing the model parameters can sometimes make simple plug-in tests feasible, since the new parameter vector may be separated into a strongly identified nuisance vector and a weakly identified scalar parameter whose value is uniquely determined under the null for subvector testing, see \cite{cox:2024} for examples of low-dimensional factor models. However, as I show in Appendix \ref{appendix:control_function}, applying such a reparameterization to the simple MTE model with a scalar endogenous coefficient ($M=1$) yields at least two weakly identified nuisance parameters, which are the slope coefficients of MTE functions. By contrast, the approach proposed in section \ref{sec:linearMTE_inference} involves only a single weakly identified parameter, namely, the causal parameter $c'\theta$ of interest.

\subsection{Covariates in weakly identified models}
\label{appendix:cov_weakIV}
The covariates are often ignored in the literature of weakly identified models, partly because the majority of studies has been focused on the weak identification issues in linear IV models. That is, 
\begin{align*}
	Y = X \beta + W \eta_Y + U \\
	X = Z \pi + W \eta_X + V.
\end{align*}
where $W$ denotes a set of covariates that possibly include a constant term, $X$ denotes a vector of endogenous variables, and $Z$ denotes a vector of instrument variables. 
In such case we can project out the linear effects of covariates $(W\eta_Y, W\eta_X)$ by the standard Frisch–Waugh–Lovell theorem so it reduces to the standard IV model without covariates:
\begin{align*}
	Y^{\perp W} = X^{\perp W}\beta + U^{\perp W} \\
	X^{\perp W} = Z^{\perp W}\pi + V^{\perp W}, 
\end{align*}
where $Y^{\perp W}$ denotes the residual of $Y$ from regressing on $W$. This enable us to discuss the weak identification-robust inference on $\beta$ by making assumptions on the joint distribution of $(Y^{\perp W}, X^{\perp W}, Z^{\perp W})$ instead of the distribution of $(Y, X, Z, W)$. This argument has been used in a sequence of literature.\footnote{For full vector inference, see \citet{stock/staiger:1997, kleibergen:2002, moreira:2003}, albeit \citet{moreira:2003} only considers the projection of the instrument $Z$ on exogenous covariates $W$. For subvector inference, see \citet{gkmc:2012, gkm:2019}.} With a set of high-dimensional covariates, \cite{ma:2023} considers identification-robust inference on the doubly-robust estimand of LATE.

In nonlinear models subject to weak identification, partialing out the effects of covariates becomes significantly more challenging. Specifically, inference on weakly identified target parameters becomes more complicated due to the nonstandard asymptotic behavior of estimators for covariate coefficients. To address identification-robust inference in models with covariates, the existing literature often assumes that the estimation of covariate effects is consistent and asymptotically normal (CAN) when the values of weakly identified coefficients are fixed under the null hypothesis. In other words, this means that covariate effects are assumed to be "strongly identified," while target parameters may be weakly identified. This property usually aligns with the assumption that covariates are "exogenous", in the sense that covariates are independent of structural unobservables but are not necessarily excluded from the outcome equation. 

Building on this framework, several studies have imposed the exogeneity of covariates to facilitate identification-robust inference in weakly identified models. For instance, in the consumption capital asset pricing model with constant relative risk aversion preferences, \cite{stock/wright:2000} treat the discount factor as strongly identified, while the utility parameter is potentially weakly identified. In the case of IV quantile regressions  \citep{chernozhukov/christian:2005}, \cite{andrews/mikusheva:2016b} assume that the estimation of covariate effects is CAN once the endogenous coefficients are known, allowing the use of a consistent null-imposed estimator of covariate coefficients. Similarly, in the endogenous Probit model,\footnote{In the same context,  \cite{andrews/cheng:2014} develop an identification-robust inference method by adjusting critical values for conventional tests (such as Wald and QLR). This approach also assumes that covariate effects can be consistently estimated under the null values of weakly identified parameters.} \cite{andrews/guggenberger:2019} explore robust inference for the scalar endogenous coefficient, assuming that other covariates are exogenous and their effects can be consistently estimated under the null hypothesis. In the context of MTE models, this assumption leads to an additive separable structure on the outcome equation $\Exp[Y_d \mid U, W]$ and a parametric specification for the first-stage regression. Relatedly, \cite{han/mccloskey:2019} analyze models with nearly singular Jacobians, which include the Normal MTE model \citep{bjorklund/moffitt:1987} as a special case.\footnote{However, the moment conditions formed by two-stage regressions \citep{han/mccloskey:2019} are not recommended, as they lead to a singular asymptotic variance matrix shown in Appendix \ref{appendix:control_function}.} While the Normal MTE model is not their primary focus, they also impose exogeneity and parametric assumptions to facilitate their analysis.

\newpage

\section{Two-stage Regression Approach}
\label{appendix:control_function}

In this appendix, I show that the moment function derived from the commonly used two-stage regression approach leads to singular variance under the failure of identification. Non-singularity of moment covariance matrix is crucial for conducting uniformly valid identification-robust inference on structural parameters. This assumption has been commonly imposed by various important strands of literature, for example, the conditional inference procedure \citep[Assumption 2]{andrews/mikusheva:2016b}, the least-favorable critical value approach \citep[Assumption GMM 4\textsuperscript{*}]{andrews/cheng:2014}, and the well-known Robust LM and CLR tests \citep[Eq. (3.3)]{andrews/guggenberger:2017}.

Consider a parametric MTE model proposed by \cite{kline/walters:2019}:
\begin{align*}
    \Exp[Y_d\mid U = u] &= \mu_d + \rho_d h(u) \\
    D &= \indicator[U \leq p(Z)],
\end{align*}
where $U$ is normalized to be uniformly distributed on the unit interval $[0,1]$, $Z$ is a discrete exogenous instrument independent of both $U$ and $Y_d$, and $h(u)$ is a strictly increasing continuous function with $\Exp[h(U)] = 0$. This specification incorporates  linear MTE model \citep{brinch/mogstad/wiswall:2017}, Heckman normal MTE model \citep{bjorklund/moffitt:1987}, and Logit selection model \citep{dubin/mcfadden:1984} as special examples. 

Under the MTE assumptions, we have 
\begin{equation}
\label{eq:CF_regression}
    \Exp[Y\mid Z, D=d] = \mu_d + \rho_d\lambda_d(p(Z))
\end{equation}
where $\lambda_0: (0,1)\to\R$ and $\lambda_1: (0,1)\to\R$ are the control functions defined as below:
\[
    \lambda_1(p) = \Exp[h(U)\mid U \leq p], ~~
    \lambda_0(p) = \Exp[h(U)\mid U > p].
\]
By the separate regression \eqref{eq:CF_regression}, we find that $\mu_d$ and $\rho_d$ are point identified once $p(Z)$ has nontrivial variation for both treated and control groups. Therefore, weak identification occurs when propensity scores $\{p(z): z = z_0,\ldots,z_K\}$ converge to a singleton under a sequence of DGPs.

Typically people estimate propensity score on the first stage and run the second-stage regression \eqref{eq:CF_regression} to obtain estimators of $\mu_d$ and $\rho_d$. Inference on those parameters can be achieved based on the following moment conditions:
\begin{align*}
    \Exp_{\theta^*}[g^{\text{TS}}(\theta)] = 
    0_{(K+5)\times 1} \quad \text{if~} \theta = \theta^*,
\end{align*}
where $\theta = (\mu_0, \mu_1, \rho_0, \rho_1, \{p(z): z = z_0, \ldots, z_K\})$ is an arbitrary vector of parameters, $\theta^*$ denotes the true value of parameters, and 
\begin{equation}
\label{eq:moment_two_stage}
    g^{\text{TS}}(\theta) \equiv 
    \begin{pmatrix}
        [Y - \mu_0 - \rho_0 \lambda_0(p(Z))] \times \indicator[D=0]\\
        [Y - \mu_0 - \rho_0 \lambda_0(p(Z))] \times \indicator[D=0] \times \lambda_0(p(Z)) \\
        [Y - \mu_1 - \rho_1 \lambda_1(p(Z))] \times \indicator[D=1] \\
        [Y - \mu_1 - \rho_1 \lambda_1(p(Z))] \times \indicator[D=1] \times \lambda_1(p(Z)) \\
        \indicator[Z=z_0] (p(z_0) - \indicator[D=1]) \\
        \vdots \\
        \indicator[Z=z_K] (p(z_K) - \indicator[D=1])
    \end{pmatrix}.
\end{equation}

\begin{proposition}
\label{prop:singular_Var}
Suppose the parameter $\theta$ satisfies $p(z_0) = p(z_1) = \ldots = p(z_K)$, then
\[
    \var_{\theta^*}(g^{\text{TS}}(\theta)) \equiv \Exp_{\theta^*}\left[\left\{g^{\text{TS}}(\theta) - \Exp_{\theta^*}(g^{\text{TS}}(\theta))\right\}\left\{g^{\text{TS}}(\theta) - \Exp_{\theta^*}(g^{\text{TS}}(\theta))\right\}'\right]
\]
is singular for all $\theta^*$ in the parameter space.
\end{proposition}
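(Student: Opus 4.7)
The plan is to exhibit an explicit nonzero vector $v \in \R^{K+5}$ such that $v'g^{\text{TS}}(\theta)$ is identically zero (not just mean zero) under the hypothesis $p(z_0)=\cdots=p(z_K)\equiv p^*$; this forces $v'\var_{\theta^*}(g^{\text{TS}}(\theta))v = 0$ and hence singularity of the variance matrix, for every $\theta^*$ in the parameter space.

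The first step is to observe that when all propensity scores coincide, the control functions $\lambda_0(p(Z))$ and $\lambda_1(p(Z))$ lose their dependence on $Z$ and reduce to deterministic constants $\lambda_0^* \equiv \lambda_0(p^*)$ and $\lambda_1^* \equiv \lambda_1(p^*)$. Consequently the second and fourth entries of $g^{\text{TS}}(\theta)$ in \eqref{eq:moment_two_stage} become exact scalar multiples of the first and third entries, respectively: the second entry equals $\lambda_0^*$ times the first, and the fourth equals $\lambda_1^*$ times the third, almost surely.

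Next I would take $v = (\lambda_0^{*},\,-1,\,0,\,0,\,0,\ldots,0)' \in \R^{K+5}$, which is nonzero (its second coordinate is $-1$ regardless of the value of $\lambda_0^{*}$). Direct inspection gives
\[
v'g^{\text{TS}}(\theta) \;=\; \lambda_0^{*}[Y-\mu_0-\rho_0\lambda_0^{*}]\indicator[D=0] \;-\; [Y-\mu_0-\rho_0\lambda_0^{*}]\indicator[D=0]\,\lambda_0^{*} \;=\; 0 \quad \text{a.s.}
\]
Therefore $\var_{\theta^{*}}(v'g^{\text{TS}}(\theta)) = v'\var_{\theta^*}(g^{\text{TS}}(\theta))v = 0$, which implies $\var_{\theta^*}(g^{\text{TS}}(\theta))$ is singular. (An entirely analogous vector built from the third and fourth coordinates provides a second null direction, confirming that the rank deficiency is not tied to a particular choice of subsystem.) Because the argument depends only on the structure of $g^{\text{TS}}(\theta)$ at a parameter with constant propensity scores, and never on the underlying $\theta^*$, the conclusion holds uniformly over the parameter space.

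There is essentially no analytical obstacle here; the only care required is to verify that $v$ is nonzero even in the degenerate case $\lambda_0^{*}=0$, which is handled automatically by the $-1$ in the second coordinate. The substantive content of the statement is conceptual rather than technical: it shows that the commonly used two-stage moment construction mechanically ties together its ``level'' and ``control function'' components whenever the instrument fails to generate variation in propensity scores, and thereby violates the non-singular variance conditions relied upon by standard identification-robust procedures, motivating the minimum-distance formulation used in the main text.
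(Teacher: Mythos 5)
Your proof is correct and follows essentially the same route as the paper: both arguments rest on the observation that when all propensity scores coincide, $\lambda_0(p(Z))$ becomes a constant, so the second component of $g^{\text{TS}}(\theta)$ is an exact scalar multiple of the first, which forces multicollinearity and hence a singular variance matrix. Your version is marginally more explicit in exhibiting the null vector $v$ and handling the case $\lambda_0^{*}=0$, but this is the same proof.
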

\begin{proof}[Proof of Proposition \ref{prop:singular_Var}]
    Since propensity scores all equal to $p(z_0)$, this implies the first moment condition can be written as
    \[
        g_1^{\text{TS}}(\theta) \equiv \indicator[D=1]\times [Y - \mu_0 - \rho_0 \lambda_0(p(Z))] =  \indicator[D=1]\times [Y - \mu_0 - \rho_0\lambda_0(p(z_0))]
    \]
    while the second moment condition gives
    \begin{align*}
        g_2^{\text{TS}}(\theta) 
        &\equiv  \indicator[D=1]\times [Y-\mu_0-\rho_0(p(Z))] \times \lambda_0(p(Z)) \\
        &= \indicator[D=1]\times [Y-\mu_0-\rho_0(p(z_0))] \times \lambda_0(p(z_0)) \\
        &= g_1^{\text{TS}}(\theta) \lambda_0(p(z_0)).
    \end{align*}
    The moment function $g^{\text{TS}}(\theta)$ is multicollinear for each $\theta^*$ in the parameter space. Therefore, its variance is singular.
\end{proof}

The singularity of moment variance for two-stage regressions also appears after reparametrizing the model along the lines of \cite{han/mccloskey:2019}\footnote{This paper derives the a useful reparametrization technique for models with nearly singular Jacobian, which includes the Heckman selection models and the parametric MTE models studied here (see their Example 2.1 and 2.2). The reparametrized model follows the structure posited by \cite{andrews/cheng:2012, andrews/cheng:2013, andrews/cheng:2014} and therefore allowing  inference on functions of model parameters. However, as argued in Appendix \ref{appendix:lit_subvector_inference}, their analysis cannot be extended to polynomial MTE models with higher-order polynomials with multiple selection coefficients.}. Applying their reparametrization technique to the moment condition $g^{\text{TS}}$ above gives the reparametrized moment:
\begin{equation}
\label{eq:moment_repara}
    g^{\text{RP}}(\tilde{\theta}) 
    =
    \begin{pmatrix}
    [Y - \alpha_0 - \rho_0\left\{\lambda_0(p(Z)) - \lambda_0(p(z_0))\right\}] \times \indicator[D=0] \\
    [Y - \alpha_0 - \rho_0\left\{\lambda_0(p(Z)) - \lambda_0(p(z_0))\right\}] \times \indicator[D=0] \times \lambda_0(p(Z)) \\
    [Y - \alpha_1 - \rho_1\left\{\lambda_1(p(Z)) - \lambda_1(p(z_0))\right\}] \times \indicator[D=1] \\
    [Y - \alpha_1 - \rho_1\left\{\lambda_1(p(Z)) - \lambda_1(p(z_0))\right\}] \times \indicator[D=1] \times \lambda_1(p(Z)) \\
    \indicator[Z=z_0] (p(z_0) - \indicator[D=1]) \\
    \vdots \\
    \indicator[Z=z_K] (p(z_K) - \indicator[D=1])
    \end{pmatrix}
\end{equation}
with the reparametrized parameter $\tilde{\theta} = (\alpha_0, \alpha_1, \rho_0, \rho_1, \{p(z): z = z_0, \ldots, z_K\})$ and $\alpha_d = \mu_d - \rho_d \lambda_d(p(z_0))$ for $d = 0,1$. It is clear that $\theta \mapsto \tilde{\theta}$ is a one-to-one mapping, and the reparametrized intercept coefficient $\alpha_d$ is strongly identified from this model for all possible values of propensity scores $\{p(z): z = z_0, z_1,\ldots, z_K\}$, whereas $(\rho_0,\rho_1)$ become weakly identified under limited variation of propensity scores. We can partial out the strongly-identified parameters (see Appendix \ref{appendix:lit_subvector_inference} for the discussion on this approach) to conduct inference on $(\rho_0, \rho_1)$ based on the reparameterized moment \eqref{eq:moment_repara}. However, the validity of such procedure requires the variance of moment condition $g^{\text{RP}}$ to be nonsingular, which does not hold as this moment condition is constructed by two-stage regressions:
\begin{corollary}
Suppose the parameter $\tilde{\theta}$ satisfies $p(z_0) = p(z_1) = \ldots = p(z_K)$, then 
\[
    \var_{\tilde{\theta}^*} (g^{\text{RP}}(\tilde{\theta})) \equiv 
    \Exp_{\tilde{\theta}^*}\left[\left\{g^{\text{RP}}(\tilde{\theta}) - \Exp_{\tilde{\theta}^*}(g^{\text{RP}}(\tilde{\theta}))\right\}\left\{g^{\text{RP}}(\tilde{\theta}) - \Exp_{\tilde{\theta}^*}(g^{\text{RP}}(\tilde{\theta}))\right\}'\right]
\]
is singular for all values of $\tilde{\theta}^*$ in the parameter space.
\end{corollary}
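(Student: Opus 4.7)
The plan is to mimic the proof of Proposition \ref{prop:singular_Var} almost verbatim, exploiting the fact that the reparametrization does not alter the multicollinearity structure that arises when propensity scores collapse. The intuition is that when $p(z_0)=p(z_1)=\cdots=p(z_K)=p$, the centered control function $\lambda_d(p(Z))-\lambda_d(p(z_0))$ identically vanishes, so the reparametrized intercept $\alpha_d$ absorbs exactly the piece $\mu_d+\rho_d\lambda_d(p(z_0))$ that made the original moment multicollinear, and the same linear dependence reappears.

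First, I would fix an arbitrary $\tilde\theta$ with equal propensity scores and write out the first two coordinates of $g^{\text{RP}}(\tilde\theta)$ in \eqref{eq:moment_repara}. Under this restriction the first coordinate simplifies to $g_1^{\text{RP}}(\tilde\theta)=[Y-\alpha_0]\indicator[D=0]$, and the second coordinate simplifies to
\[
g_2^{\text{RP}}(\tilde\theta)=[Y-\alpha_0]\indicator[D=0]\lambda_0(p)=\lambda_0(p)\, g_1^{\text{RP}}(\tilde\theta),
\]
so the second coordinate equals a deterministic scalar multiple of the first. Applying the same computation to the third and fourth coordinates with $\lambda_1(p)$ in place of $\lambda_0(p)$ yields a second exact linear dependence. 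This gives two linearly independent vectors in the kernel of the (uncentered) moment map at $\tilde\theta$.

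Next, I would note that since these equalities hold almost surely under any law indexed by $\tilde\theta^*$, they are preserved after subtracting $\Exp_{\tilde\theta^*}[g^{\text{RP}}(\tilde\theta)]$ from both sides. Therefore $\var_{\tilde\theta^*}(g^{\text{RP}}(\tilde\theta))$ annihilates the vectors $(\lambda_0(p),-1,0,\ldots,0)'$ and $(0,0,\lambda_1(p),-1,0,\ldots,0)'$, and is consequently singular for every $\tilde\theta^*$ in the parameter space.

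The argument is essentially cosmetic relative to Proposition \ref{prop:singular_Var}; no step is genuinely delicate. The one point that deserves care is the quantifier order: the conclusion asserts singularity for \emph{all} $\tilde\theta^*$, not just those with constant propensity scores. This is immediate from the argument above because the linear dependence is a deterministic property of the hypothesis value $\tilde\theta$ being inserted into the moment, independent of the true data-generating value $\tilde\theta^*$, so no obstacle arises there.
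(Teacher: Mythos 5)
Your proposal is correct and follows essentially the same route as the paper: with all propensity scores equal, $\lambda_d(p(Z))$ is almost surely constant, so the second (resp.\ fourth) coordinate of $g^{\text{RP}}(\tilde\theta)$ is a deterministic scalar multiple of the first (resp.\ third), and this exact linear dependence forces singularity of the variance under any $\tilde\theta^*$. Your explicit observation that the dependence is a property of the hypothesized $\tilde\theta$ alone, and hence holds under every data-generating $\tilde\theta^*$, is a welcome clarification of a point the paper leaves implicit.
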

\begin{proof}
Similar to the proof of Proposition \ref{prop:singular_Var}, we note that the second moment condition is a factor $\lambda_0(p(z_0))$ of the first moment condition. Therefore, the singularity follows by the multicollineairty. 
\end{proof}

Although the current setup focuses on discrete instrument without covariates, it is noteworthy that such singularity continues to exist even if instrument is continuous and we estimate this model under additive separability if covariates are included. The source of singularity arises from the zero variation of control function variables $\lambda_d(p(Z))$ under the failure of identification, regardless of how $p(z)$ is estimated from the data. In order to achieve robustness against weak instruments, we need to base on other valid moment conditions derived from the model rather than using the ones derived by two-step regression approach. Recently, \cite{andrews/guggenberger:2019} extend the robustness towards the singularity by selecting and rotating moment conditions that correspond to positive eigenvalues of covariance matrix. However, a simpler solution to address singularity in this setting is to replace the factor $\lambda_d(p(Z))$ multiplied by the residual of the second stage regressions in \eqref{eq:moment_two_stage} and \eqref{eq:moment_repara} with some other functions of instrument $Z$ that has positive variation under identification failure. This is employed in the moment conditions \eqref{eq:linsystem} constructed in this paper so that singularity problem does not occur.

\putbib
\end{bibunit}

\end{document}